\newtheorem{theorem}{Theorem}
\newtheorem{defi}{Theorem}
\newtheorem{definition}[defi]{Definition}
\newtheorem{exam}{Theorem}
\newtheorem{example}[exam]{Example}
\newtheorem{rema}{Theorem}
\newtheorem{remark}[rema]{Remark}
\newtheorem*{proo}{Theorem}
\newtheorem*{proof}[proo]{Proof}
\newtheorem{lemm}{Theorem}
\newtheorem{lemma}[lemm]{Lemma}
\newtheorem{syst}{Theorem}
\newtheorem{system}[syst]{System}
\title{Inductive Certificate Synthesis for Control Design}
\author{Hadi}{Ravanbakhsh}
\abstract{  \OnePageChapter
The focus of this thesis is developing a framework for designing correct-by-construction controllers using control certificates. We use nonlinear dynamical systems to model the physical environment (plants). The goal is to synthesize controllers for these plants while guaranteeing formal correctness w.r.t. given specifications. We consider different fundamental specifications including stability, safety, and reach-while-stay. Stability specification states that the execution traces of the system remain close to an equilibrium state and approach it asymptotically. Safety specification requires the execution traces to stay in a safe region. Finally, for reach-while-stay specification, safety is needed until a target set is reached.

The design task consists of two phases. In the first phase, the control design problem is reduced to the question of finding a control certificate. More precisely, the goal of the first phase is to define a class of control certificates with a specific structure. This definition should guarantee the following:  ``Having a control certificate, one can systematically design a controller and prove its correctness at the same time."
The goal in the second phase is to find such a control certificate. We define a potential control certificate space (hypothesis space) using parameterized functions. Next, we provide an inductive search framework to find proper parameters, which yield a control certificate. 

Finally, we evaluate our framework. We show that discovering control certificates is practically feasible and demonstrate the effectiveness of the automatically designed controllers through simulations and real physical systems experiments.
	}
\begin{document}

\usetikzlibrary{arrows,shapes,shapes.arrows,shadows,snakes,backgrounds,decorations,decorations.markings,decorations.pathmorphing,positioning,fit,automata,calc}

\newcommand{\todo}[1]{\textcolor{red}{\texttt{#1}}}

\newcommand{\diff}[2]{\frac{\partial #1}{\partial #2}}
\newcommand{\diffr}[1]{\diff{#1}{r}}
\newcommand{\diffth}[1]{\diff{#1}{\theta}}
\newcommand{\diffz}[1]{\diff{#1}{z}}

\newcommand{\vth}{V_{\theta}}

\newcommand{\twochoices}[2]{\left\{ \begin{array}{lcc}
        \displaystyle #1 \\ \vspace{-10pt} \\
        \displaystyle #2 \end{array} \right. } 

\newcommand{\threechoices}[3]{\left\{ \begin{array}{lcc}
        #1 \\ #2 \\ #3 \end{array} \right. }    

\newcommand{\fourchoices}[4]{\left\{ \begin{array}{lcc}
        #1 \\ #2 \\ #3 \\ #4 \end{array} \right. }      

\newcommand{\twovec}[2]{\left(\begin{array}{c} #1 \\ #2 \end{array}\right)}
\newcommand{\threevec}[3]{\left(\begin{array}{c} #1 \\ #2 \\ #3 \end{array}\right)}
\newcommand{\twomatrix}[4]{\left(\begin{array}{cc} #1 & #2 \\ #3 & #4 \end{array}\right)}

\newcommand\scr[1]{\mathcal{#1}}
\newcommand \reals {\ensuremath \mathbb{R}}
\newcommand \bools {\ensuremath \mathbb{B}}
\renewcommand{\vec}[1]{\mathbf{#1}}
\newcommand \vx {\vec{x}}
\newcommand \vy {\vec{y}}
\newcommand \vz {\vec{z}}
\newcommand \vu {\vec{u}}
\newcommand \vc {\vec{c}}
\newcommand \vd {\vec{d}}
\newcommand \ve {\vec{e}}
\newcommand \vf {\vec{f}}
\newcommand \vg {\vec{g}}
\newcommand \vh {\vec{h}}
\newcommand \va {\vec{a}}
\newcommand \vb {\vec{b}}
\newcommand \vr {\vec{r}}
\newcommand \vs {\vec{s}}
\newcommand \vq {\vec{q}}
\newcommand \vv {\vec{v}}
\newcommand \vw {\vec{w}}
\newcommand \vm {\vec{m}}
\newcommand \vlam {\boldsymbol{\lambda}}
\newcommand \K {\scr{K}}
\renewcommand \P {\scr{P}}
\newcommand \Tau {\scr{T}}
\newcommand \U {\scr{U}}
\newcommand \trX {\sigma_X}
\newcommand \trU {\sigma_U}
\newcommand \tr {\sigma}
\newcommand \dtrX {\dot{\sigma}_X}
\newcommand \trUp {\sigma^+_U}
\newcommand \trXs {\sigma^*_X}
\newcommand \vzero {\vec{0}}
\newcommand \veps {\boldsymbol{\epsilon}}

\newcommand \tupleof[1] { \left\langle #1 \right \rangle}

\newcommand \B {\scr{B}}
\newcommand \C {\scr{C}}
\newcommand \D {\scr{D}}
\newcommand \E {\scr{E}}
\newcommand \Hy {\scr{H}}
\newcommand \I {\scr{I}}
\newcommand \V {\scr{V}}
\newcommand \T {\scr{T}}
\newcommand \X {\scr{X}}
\newcommand \degr {D}

\newcommand \Vol {\mbox{Vol}}
\newcommand{\arginf}[1]{\underset{#1}{\mbox{arginf}}\ }
\newcommand{\inter}[1]{\overset{\circ}{#1}}

\newcommand \false {\mathit{false}}
\newcommand \true  {\mathit{true}}

\newcommand \ab[1]{\scr{A}(#1)}
\newcommand \ac[1]{\scr{F}(#1)}

\newcommand\cond{\mathsf{cond}}
\newcommand\RWS{reach-while-stay}
\newcommand\RS{region-stability}

\newcommand{\argmin}[1]{\underset{#1}{\mathrm{argmin}}}
\newcommand\SK{\mathcal{\kappa}}

\newcommand\tick{\ding{51}}
\newcommand\crossMark{\ding{54}}

\definecolor{myBlue}{rgb}{0.041536,0.306316,0.700569}
\definecolor{myGreen}{rgb}{0.061778,0.471387,0.126909}
\definecolor{myRed}{rgb}{0.729085,0.069717,0.037669}

\newcommand*{\QED}{\hfill\ensuremath{\blacksquare}}%

\newcommand{\hadi}[1]{\emph{\color[rgb]{0.041536,0.306316,0.700569}#1}}

\chapter{Introduction}\label{ch:intro}
A control system consists of a controller that interacts with its physical environment (plant) to perform specific tasks. For example, an artificial insulin delivery system includes a  plant, which is the patient's body, and a controller, which is the insulin infusion device~\cite{cobelli2011artificial}. An autonomous car is another example, where the vehicle moves in an environment and different controllers are responsible for various subsystems including powertrain, cutoff fuel injection, idle speed, cruise, and autonomous driving~\cite{Jin2014,balluchi2000,STURSBERG2004}. Humanoid robots also depend on controllers to accomplish complex tasks using different subtasks such as walking and object manipulation~\cite{Nguyen2015RobustCLF,Ames2013}. As these systems are becoming ubiquitous, the increasing need for analyzing these systems is undeniable. In fact, in specific safety-critical domains, a control system failure could have catastrophic consequences. 

Formal methods provide tools and techniques to study correctness of control systems through mathematical models. The goal of formal verification is to check whether the system works correctly, where correctness is expressed through some specifications. For example, for an artificial insulin delivery system, one should make sure the device keeps the glucose level of the patient in a proper range. The specification in this example can be expressed as $(\forall t) \ \ 70 < G(t) < 200 $, where $G(t)$ is the glucose level in the patient's blood as a function of time. 

The verification process can have two outcomes. Either the verification is successful, and correctness of the closed-loop system is proven, or the analysis declares the possibility of failure. In the latter case, usually, the verifier reveals a counterexample. This counterexample describes a possible behavior of the system, where the specification is violated. Failure in verification is not the last step. The system designer should update the controller by considering the faulty scenario. Then, the revised version of the system can be verified again to see if the update fixes the issue. For example, assume that verification of the artificial insulin delivery system fails. The verifier yields a scenario wherein the patient does not get a meal as expected, and after one hour, the glucose level increases to $230$ units ($ G(60) = 230 $). Then, the designer of the insulin infusion device tries to update the controller and remove such behaviors. Such update is achieved, perhaps, by reducing the maximum amount of insulin that can be given to the patient by the controller. However, such fixes can be quite hard to realize in practice since updating the controller may introduce other faulty scenarios. 

A more appealing approach involves the automatic synthesis of the controller from specifications rather than a manual-design/verification loop. Instead of verifying whether the closed-loop system behaves correctly w.r.t the specification, the goal is to design a controller for which the correctness of the closed-loop system is mathematically guaranteed. These controllers are referred to as ``correct-by-construction" controllers. Unfortunately, such synthesis procedures have higher complexity compared to verification procedures, and their development remains challenging.

In this thesis, the problem of synthesizing correct-by-construction controllers is investigated. We develop learning-based tools and methods for automated synthesis of such controllers. In particular, we are interested in nonlinear dynamical systems. Furthermore, providing a scalable solution is the primary objective, and as such, we only consider basic specifications including safety and stability. This thesis incorporates previously published papers~\cite{ravanbakhsh2015counterexample,ravanbakhsh-others/2015/Counter,ravanbakhsh2015counter,ravanbakhsh2016robust,ravanbakhsh2017class,ravanbakhsh2017demonstration} and papers under review~\cite{ravanbakhsh2018demonstration,ravanbakhsh2018funnel}.

The goal of this chapter is to define control systems and specifications formalism, along with the problem statement and motivating examples.

\section{Control Systems}\label{sec:system}
We are particularly interested in state feedback control systems. There are two types of such systems: (i) smooth feedback systems, and (ii) switched feedback systems. For each of these systems, a description of the system model is provided in two steps. In the first step, the syntax of the control system is explained, and subsequently, the behavior of the system (semantics) is discussed by defining execution traces of the system.

\subsection{Smooth Feedback}
The system of interest consists of a \emph{plant} and a smooth state feedback \emph{controller}. The plant has $n$ continuous variables defining its state. A state $\vx$ belong to set $X :\ \reals^n$. The controller provides feedback (input) $\vu \in U \subseteq \reals^m$ for a measured state $\vx$. The dynamics are defined using an ordinary differential equation (ODE). Figure~\ref{fig:model}(a) shows a schematic view of the closed-loop system. 

\begin{figure}[t]
\begin{center}
    \includegraphics[width=0.8\textwidth]{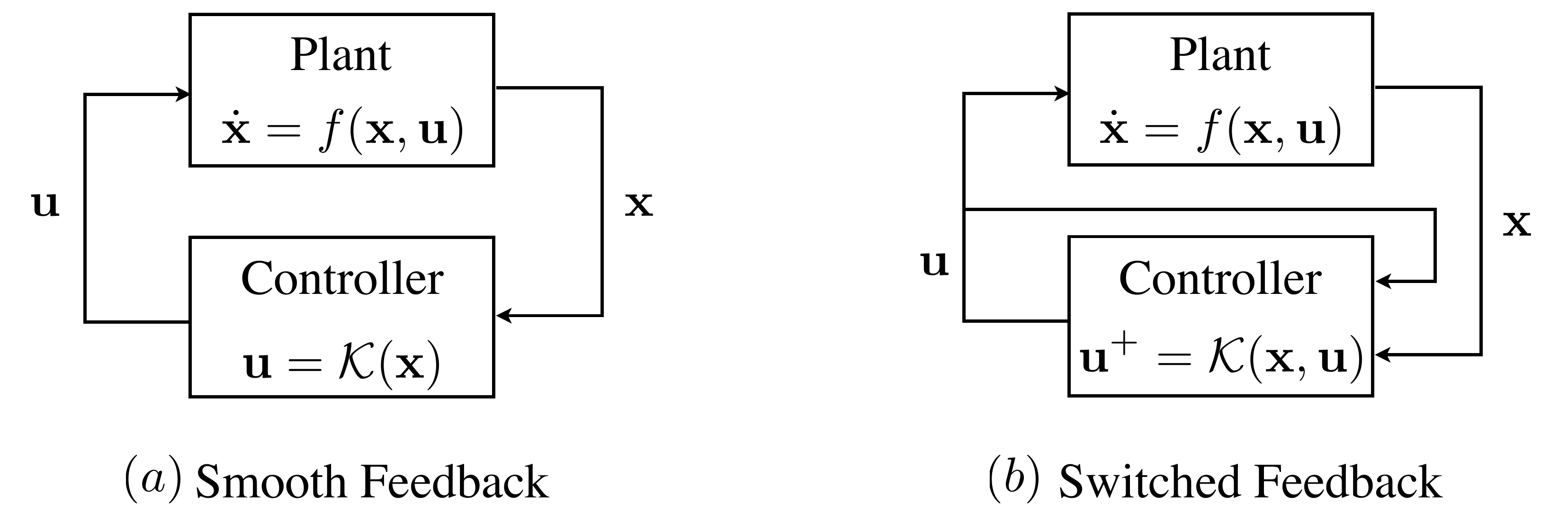}
\end{center}
\caption{Models of state feedback systems.}\label{fig:model} 
\end{figure}

We now provide a formal definition of the plant, the controller, and the closed-loop system is given in the following paragraphs.

\begin{definition}[Smooth Plant]\label{def:smooth-plant}
\label{def:plant}
 A smooth plant is a tuple $\P(X, U, f)$ describing the physical environment, where
 \begin{compactenum}
        \item $X :\ \reals^n$ is domain of plant state,
        \item $U$ ($\subseteq \reals^m$) is the range of control feedback,
        \item $f :\ X \times U \mapsto X$ is a smooth function, which defines the vector field for an ODE.
    \end{compactenum}
\end{definition}

\begin{definition}[Smooth Controller]\label{def:smooth-controller}
\label{def:control}
Given a smooth plant $\P(X, U, f)$, a smooth controller is a smooth function $\K :\ X \mapsto U$ that maps current state $\vx \in X$ to the feedback $\vu \in U$.
\end{definition}

\begin{definition}[Smooth Feedback System]\label{def:smooth-system}
\label{def:control}
The combination of the smooth plant $\P(X, U, f)$ and the smooth controller $\K$ yields a smooth closed-loop system $\Psi(\P, \K)$.
\end{definition}

\begin{definition}[Smooth Trace]
\label{def:trace}
    Given a smooth closed-loop system $\Psi$, a trace $\tr$ of $\Psi$ is a tuple $\tr:(\vx(\cdot),\vu(\cdot))$, where $\vx(\cdot)$ ($\vu(\cdot)$) maps the time to the state (input) of the system.
\end{definition}

Each possible execution of the system can be modeled using a valid trace.

\begin{definition}[Smooth Valid Trace]
\label{def:trace-valid-smooth}
 Given a smooth closed-loop system $\Psi(\P, \K)$, a
trace $\tr$ of the system is said to be valid iff for all times $t$:
        \begin{align*}
            \vu(t) & = \K(\vx(t)) \\
            \dot{\vx}(t) & = f(\vx(t), \vu(t)) \,.
        \end{align*}
\end{definition}

Notice that given an initial value $\vx(0) = \vx_0$, by continuity of $f$ and $\K$, Peano existence theorem guarantees that at least one valid trace exists (though, there may be more than one). Also, if $f$ and $\K$ are uniformly Lipschitz continuous, by Picard-Lindel\"{o}f theorem, a unique valid trace exists. For example, if the trace does not diverge to infinity and $f$ and $\K$ are finite over a compact set, then valid trace is unique for a given $\vx_0$. Also, each trace $\tr$ has a ``escape time'', $\Tau(\tr) \in \reals^+ \cup \{\infty\}$, and for all time $t > \Tau(\tr)$, the trace is not defined. We use $\bot$ to denote absence of a solution ($(\forall\ t \geq \Tau(\tr)) \ \vx(t) = \bot \land \vu(t) = \bot$). In the rest of this thesis, we use word \emph{trace} instead of \emph{valid trace}.

\begin{example}
Consider a plant with single state $x$, and a single input $u$ ($X :\ \reals$, $U:\reals$), where $f(x, u) = 1 + u$. For feedback law $\K(x) = x^2$, a unique valid trace $x(t)$ exists s.t. $\dot{x}(t) = 1 + x(t)^2$. Assuming $x_0 = 1$, $x(t) = tan(t + \frac{\pi}{4})$ is a valid trace. In other words, the solution blows up to infinity within a finite time. Here $\Tau(\tr) = \frac{\pi}{4}$ and $x(t) = \bot$ for $t \geq \frac{\pi}{4}$.
\end{example}

\subsection{Switched Feedback}
For a switched system, the model is slightly different. First, $U$ is a finite set with size $|U| = m'$, yielding $m'$ different modes for the plant. Also, for technical reasons, we need $\K$ to be a function of the current plant state $\vx$ and current mode $\vu$ (Figure~\ref{fig:model}(b)).

\begin{definition}[Switched Plant]\label{def:switched-plant}
\label{def:plant}
 A plant is a tuple $\P(X, U, f)$ describing the physical environment:
 \begin{compactenum}
        \item $X :\ \reals^n$ is domain of plant state,
        \item $U$ is a finite set of control feedback,
        \item $f :\ (X \times U) \mapsto X$ defines the smooth vector field for each mode $\vu \in U$.
    \end{compactenum}
\end{definition}

For readability, we use $f_\vu(\vx)$ instead of $f(\vx, \vu)$ for switched systems.

\begin{definition}[Switched Controller]\label{def:switched-plant}
\label{def:control}
Given a plant $\P(X, U, f)$, a controller is a function $\K :\ X \times U \mapsto U$ that maps current state $\vx \in X$ and current mode $\vu$ to the next mode $\vu^* \in U$.
\end{definition}

\begin{definition}[Switched Feedback System]\label{def:switched-system}
\label{def:control}
The combination of the switched plant $\P(X, U, f)$ and the switched controller $\K$ yields a closed-loop switched system $\Psi(\P, \K)$.
\end{definition}

\begin{definition}[Switched Valid Trace]
\label{def:trace-valid}
 Given a switched system $\Psi(\P, \K)$, a
trace $\tr$ of the system is said to be valid iff \begin{compactenum}
        \item For all times $t$
        \begin{align*}
            \vu^+(t) & = \K(\vu(t), \vx(t)) \\
            \dot{\vx}(t) & = f_{\vu^+(t)}(\vx(t)) \,,
        \end{align*}
        where $\dot{\vx}(t)$ is the right derivative of $\vx$ w.r.t. time,
        \item $\vu$ is continuous from left ($\vu^-(t) = \vu(t)$) for all $t > 0$,
        \item The set of switched times 
            \[\mathsf{SwitchTimes}(\vu(\cdot)) : \{t \in \reals^+ | \vu(t) \neq
\vu^+(t)\} \,,\]
            is a countable set.
    \end{compactenum}
\end{definition}

Similar to smooth traces, given $\vx(0) = \vx_0$ and $\vu(0) = \vu_0$, there exists at least one valid trace. In addition, a valid trace can have Zeno behavior.
\begin{definition}[Zeno Behavior]
    A trace $\tr$ has Zeno behavior if $(\exists \Delta > 0)$ s.t.  $\mathsf{SwitchTimes}(\vu(\cdot)) $ $\cap [0, \Delta]$ is infinite. Moreover, $\vx(t) = \vu(t) = \bot$ for $t \geq \Delta$.
\end{definition}
This phenomenon is named after Zeno of Elea, greek philosopher who noticed the paradoxes one may get when the number of switches (discrete events) are infinite in a finite (continuous) time interval. Thus, avoiding Zeno behavior is not merely a practical concern, as it leads to theoretical flaws as well.

\begin{example} Consider a plant with two states $x_1$ and $x_2$, and two modes $u_1$, and $u_2$ with the following dynamics:
\[
f_{u_1}(x_1, x_2) = \left[\begin{array}{c} -1 \\ -2\end{array} \right] \, , \, f_{u_2}(x_1, x_2) = \left[\begin{array}{c} -1 \\ 2\end{array} \right]\,.
\]
The feedback function $\K$ is the following:
\[
\K(x_1, x_2, u) = \begin{cases} u_1 & u = u_2 \land x_2 \geq x_1 \\ u_2 & u = u_1 \land x_2 \leq -x_1 \\ u & \mbox{otherwise}\,. \end{cases} 
\]
Starting from $\vx(0) = [1, 0]$ and $u(0) = u_1$, the trace is shown in Figure~\ref{fig:zeno}. It is easy to show that $x_1(t) = 1 - t$. On the other hand, the trace does not leave $\{\vx \ | \ x_2 \leq x_1 \land x_2 \geq -x_1\}$, which is a contradiction. However, as illustrated in Figure~\ref{fig:zeno}, the number of switches is infinite in interval $[0, 1]$. Zeno behavior occurs and the time stops at $\Delta = 1$. Therefore, $\vx(t) = \bot$ for $t \geq 1$.

\begin{figure}[t]
\begin{center}
    \includegraphics[width=0.2\textwidth]{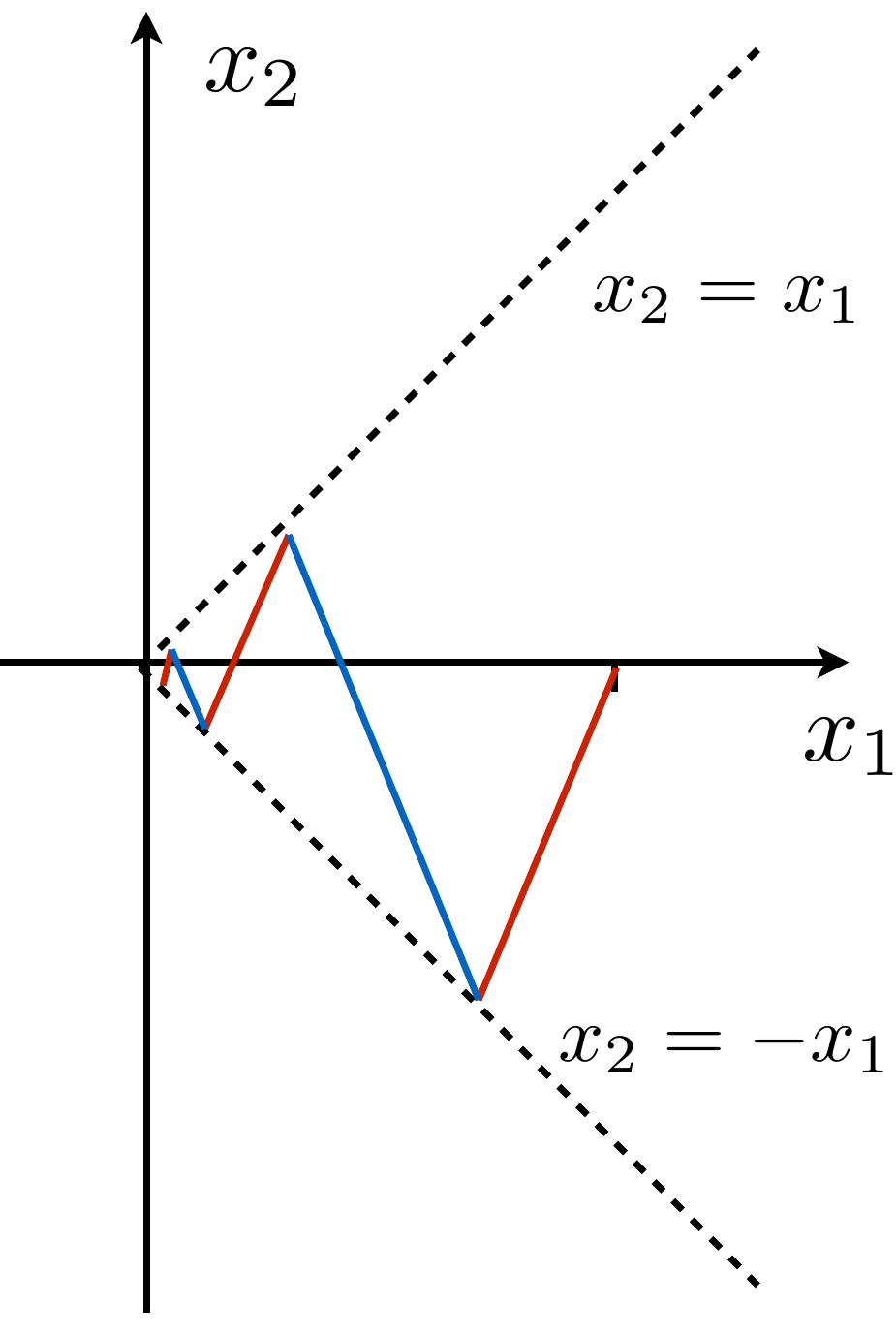}
\\ The color of execution trace is red (blue) when mode $u_1$ ($u_2$) is selected.
\end{center}
\caption{An example of a Zeno behavior.}\label{fig:zeno} 
\end{figure}
\end{example}

\section{Specifications}\label{sec:spec}
A specification describes the desired behavior of all possible system traces $\tr :\ (\vx(\cdot), \vu(\cdot))$. In general, a specification is a logical formula over a trace $\tr$. However, we restrict the specification to be a logical formula over $\vx(\cdot)$ ($\varphi:\ \vx(\cdot) \mapsto \mathbb{B}$), and given a state trace $\vx(\cdot)$, its value is decided. A trace $\vx(\cdot)$ \emph{respects} a specification $\varphi$, if $\varphi(\vx(\cdot))$ holds. For example, to ensure the state eventually reaches a set $G$ ($\Diamond G$), we use the following specification:
\[
\varphi(\vx(\cdot)) :\ (\exists t \geq 0) \ \vx(t) \in G \,.
\]

 Recall that $\vx(t) = \bot$ when the trace is not defined (either the trace escapes in finite time or Zeno behavior occurs). Therefore, we implicitly require $\vx(t) \neq \bot$ if $\vx(t)$ to be used in $\varphi(\vx(\cdot))$.

The ultimate goal is to find a $\K$ function that guarantees all traces admit the specifications.
\begin{definition}[Control Synthesis Problem]\label{def:problem}
    Given a plant $\P(X, Q, f)$ and a specification $\varphi$, the control synthesis problem is to find a function $\K$ s.t. for all traces $\tr$ of the closed-loop system $\Psi(\P, \K)$, $\varphi(\vx(\cdot))$ holds.
\end{definition}

One may use temporal logics~\cite{Koymans1990,maler2004monitoring} to describe $\varphi$. However, in this thesis, we focus on basic specifications, including stability, safety, and reach-while-stay. Other specifications are left for future work.

\subsection{Safety}
Safety property requires the traces of the system to remain inside a safe set $S$. Formally, $\varphi(\vx(\cdot)) :\ (\forall t \geq 0) \ \vx(t) \in S$ ($\Box S$).
However, if $\vx(0) \not\in S$, the safety cannot be guaranteed. Therefore, we also need to enforce initial condition $\vx(0) \in I$ for some initial set $I \subseteq S$:
\[
\varphi(\vx(\cdot)) \ :\ (\vx(0) \in I) \implies (\forall t \geq 0) \ \vx(t) \in S\,.
\]

\begin{example}[Inverted Pendulum Problem]\label{ex:inverted-pendulum}
    Consider the problem of keeping an inverted pendulum in a vertical position. This model has applications in balancing two-wheeled robots. The system has two degrees of freedom: the location of the cart $x$, and the degree of the inverted pendulum $\alpha$. The goal is to keep the pendulum in a vertical position by moving the cart with input $u$ (Figure~\ref{fig:inverted-pendulum}).

The system has four state variables $[x, \dot{x}, \alpha, \dot{\alpha}]$ with the following dynamics~\cite{landry2005dynamics}:
\begin{align*}\label{eq:inverted-pendulum-dyn}
        \ddot{x} & = \frac{4u - 4\epsilon\dot{x} + 4ml\dot{\alpha}^2 \sin(\alpha) - 3mg\sin(\alpha)\cos(\alpha)}{4(M+m)-3m\cos^2(\alpha)} \\ 
        \ddot{\alpha} & = \frac{ (M+m)g\sin(\alpha) - (u - \epsilon\dot{x}) \cos(\alpha) - ml\dot{\alpha}^2\sin(\alpha)\cos(\alpha)}{l(\frac{4}{3}(M+m)-m\cos(\alpha)^2)}\,,
\end{align*}
where $m = 0.21$ and $M=0.815$ are masses of the pendulum and the cart respectively, $g=9.8$ is the gravitational acceleration, and $l=0.305$ is distance of center of mass of the pendulum from the cart. Also, the input is saturated $U:[-20, 20]$.
We are interested in safety property where the state is initially in set $I :\ \{\vx \ | \ ||\vx||_2 \leq 0.1\}$ (almost in vertical position) and the safe region is $S:\ [-1, 1]^4$. The goal is to design a controller to satisfy this safety property.
\begin{figure}[t]
\begin{center}
    \includegraphics[width=0.3\textwidth]{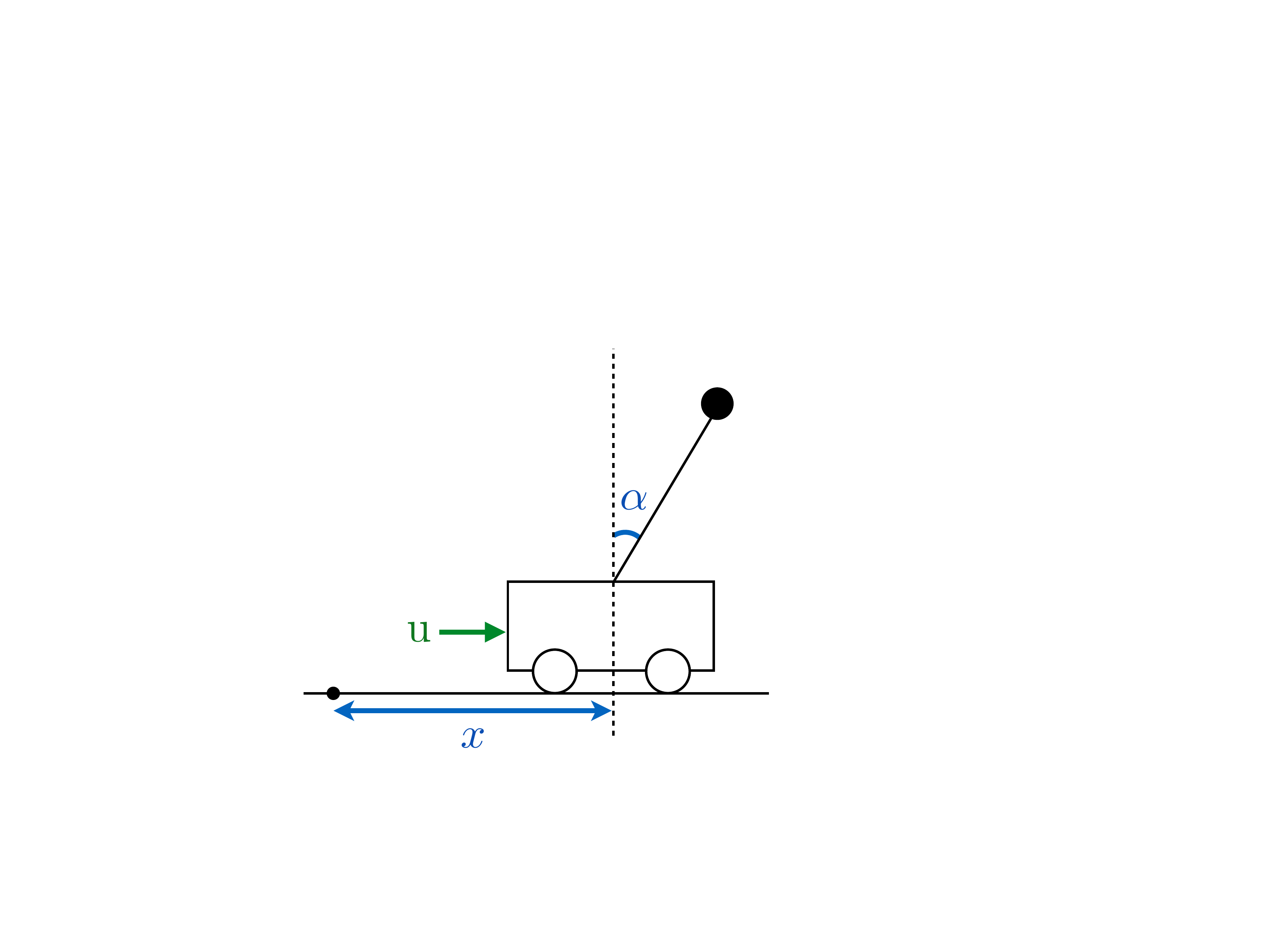}
\end{center}
\caption{A schematic view of the ``inverted pendulum on a cart."}\label{fig:inverted-pendulum} 
\end{figure}
\end{example}

\subsection{Stability}
The goal in stability is getting the state $\vx$ close to a desired point $\vx_r \in X$ (equilibrium state). Formally, global \emph{asymptotic} stability consists of two parts:

\begin{align*}
    (a) \ & (\forall \epsilon > 0) \ (\exists \delta > 0) \ \vx(0) \in \B_\delta(\vx_r) \implies (\forall t) \ \vx(t) \in \B_\epsilon(\vx_r) \\
    (b) \ & (\forall \epsilon > 0) \ (\exists T) \ (\forall t \geq T) \ \vx(t) \in \B_\epsilon(\vx_r)\,,
\end{align*}
where $\B_\delta(\vx)$ is a ball of appropriate dimension centered at $\vx$ with radius $\delta$. The first part makes sure that the trace does not diverge and stays close to the equilibrium (Lyapunov stability). Meanwhile, the second part guarantees that the state gets arbitrarily close (and stays close) to the equilibrium. Putting these two properties together, one can show that the state converges to the equilibrium ($\lim_{t\rightarrow\infty} ||\vx(t) - \vx_r|| = 0$).

\subsubsection{Reference Tracking}
While stability is used to keep the state close to the equilibrium, many specifications involve keeping the state close to a \emph{moving} state. For example, assume a car is chasing another car. The goal is not to park the car somewhere, but to follow the other car. The moving state defines a reference trajectory $\vx_r(\cdot):\ \reals^+ \mapsto X$. 
Formally, we wish $\lim_{t \rightarrow \infty} \vx(t) = \vx_r(t)$ (global asymptotic trajectory tracking). Global asymptotic trajectory tracking holds iff the following conditions hold:
\begin{align*}
    (a) \ & (\forall \epsilon > 0) \ (\exists \delta > 0) \ \vx(0) \in \B_\delta(\vx_r(0)) \implies (\forall t) \ \vx(t) \in \B_\epsilon(\vx_r(t)) \\
    (b) \ & (\forall \epsilon > 0) \ (\exists T) \ (\forall t \geq T) \ \vx(t) \in \B_\epsilon(\vx_r(t))\,.
\end{align*}
Similar to stability property, condition (a) guarantees that if the state is initially near $\vx_r(0)$, then the state will not diverge from $\vx_r(t)$ and stays close to the reference $\vx_r(t)$ at time $t$ (for all times). Also, condition (b) guarantees that the state gets arbitrarily close (and stays close) to the reference $\vx_r(t)$ (at time $t$) as $t \rightarrow \infty$.

Stability properties are interesting merely from a theoretical perspective as globally asymptotic stable system never exists. We consider practical stability in the next section.

\subsection{Reach-While-Stay}
The \emph{Reach-While-Stay} (RWS) combines the safety and reachability properties to serve as the $\U$ operator in temporal logics. More specifically, given an initial set $I$, a goal set $G$, and a safe set $S$, the specification requires the trace to reach from $I$ to $G$ while staying inside $S$ ($I \implies S \ \U \ G$ in temporal logic). Formally:

\[
(\vx(0) \in I) \implies (\exists T \geq 0) \left(\begin{array}{l}
    \vx(T) \in G \ \land \\
    (\forall t, 0 \leq t \leq T) \ \vx(t) \in S
\end{array} \right) \,.
\]

\begin{example}[Forward Flight Problem]\label{ex:ducted-fan-forward}
Caltech ducted-fan has been used to study the aerodynamics of a single wing of a thrust vectored, fixed-wing aircraft~\cite{jadbabaie2002control}.  
In this example, we wish to design forward flight control in which the angle of attack needs to be fixed for a stable forward flight. The model of the system is carefully calibrated through wind tunnel experiments. The system has four states: $v$ is the velocity, $\gamma$ defines the moving direction of the ducted-fan, $\mu$ is the rotational position, and $q$ is the angular velocity. The control inputs are the thrust $\tau$ and the angle $\delta$ at which the thrust is applied (Figure~\ref{fig:ducted-fan}).     
Also, the inputs are saturated: $\tau :\ [0, 13.5]$ and $\delta :\ [-0.45, 0.45]$.
The dynamics are:
\begin{align*}\label{eq:ducted-fan-forward-dyn}
m \dot{v} & = -D(v, \alpha) - W \sin(\gamma) + u \cos(\alpha + \delta_u) \, \ & \dot{\mu} & = q \\
m v \dot{\gamma} & = L(v, \alpha) - W \cos(\gamma) + u \sin(\alpha + \delta_u) \, \ & J \dot{q} & = M(v, \alpha) - u l_T \sin(\delta_u) \,,
\end{align*}
where the angle of attack $\alpha = \mu - \gamma$, and drag ($D$), lift ($L$), and moment ($M$) terms are polynomials in $v$ and $\alpha$. 
For full list of parameters, see~\cite{jadbabaie2002control}. According to the dynamics, $\vx_r:\ [6, 0, 0.1771, 0]$ is a stable equilibrium (for $\vu_r:\ [3.2, -0.138]$) where the ducted-fan can move forward with velocity $6$.
Thus, the goal is to reach near $\vx_r$. 
\begin{figure}[t]
\begin{center}
    \includegraphics[width=0.3\textwidth]{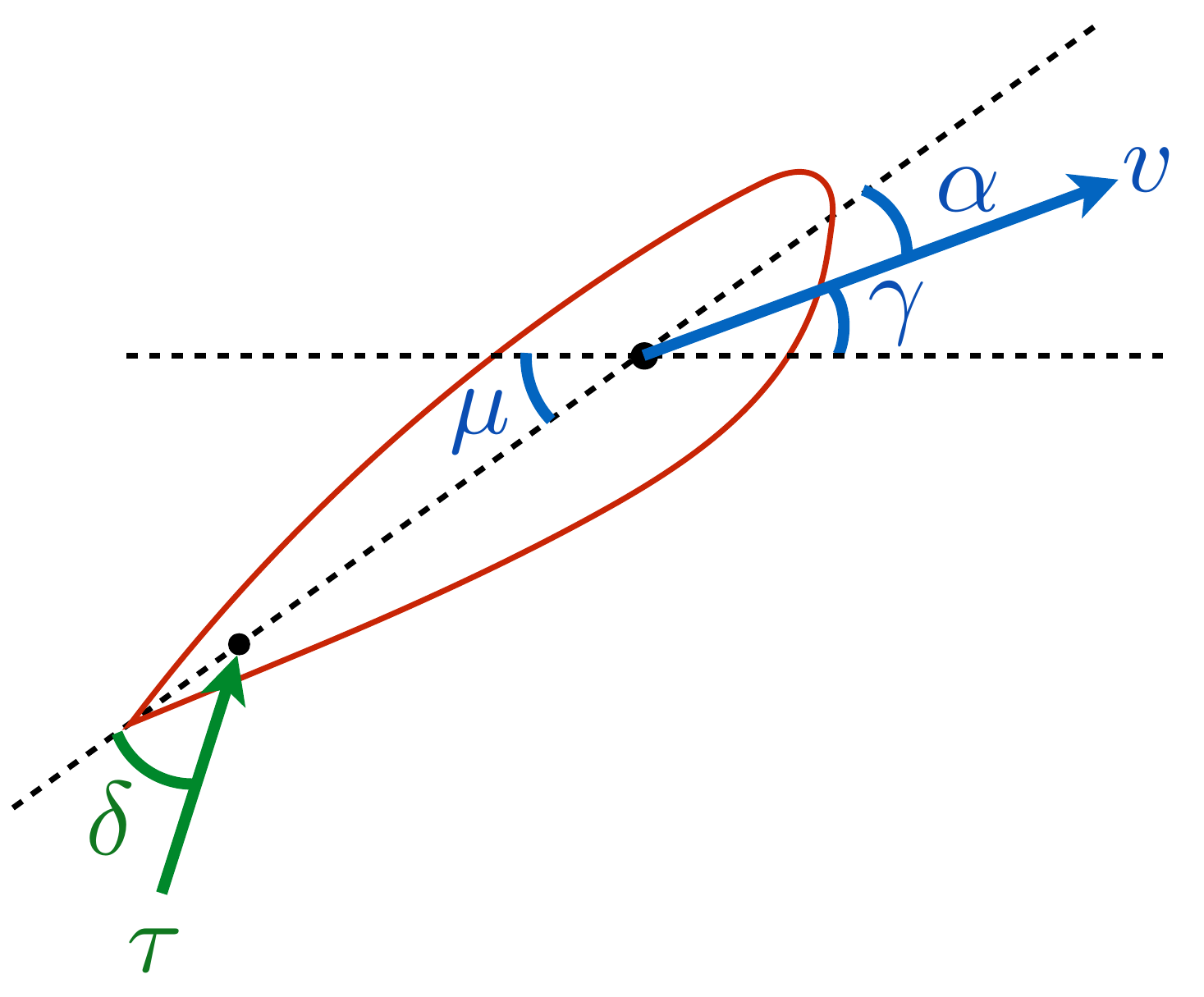}
\end{center}
\caption{A schematic view of the Caltech ducted-fan.}\label{fig:ducted-fan} 
\end{figure}

To guarantee practical stability, we use a RWS specification with the following sets:
\begin{align*}
S&:\{[v, \gamma, \mu, q]^t + \vx_r | v \in [3, 9], \gamma \in [-0.75, 0.75], \mu \in [-0.75, 0.75], q \in [-2, 2]\} \\
I&:\{[v, \gamma, \mu, q]^t + \vx_r |(0.4v)^2 + \gamma^2 + \mu^2 + q^2 < 0.4^2 \} \, \\
G&:\{[v, \gamma, \mu, q]^t + \vx_r |(0.4v)^2 + \gamma^2 + \mu^2 + q^2 < 0.05^2 \} \,.
\end{align*}
\end{example}

\begin{example}[Hover Mode Problem]\label{ex:ducted-fan-hover}
In this example, we consider a problem for the \emph{planar} Caltech ducted-fan~\cite{jadbabaie2002control}.
The goal is to keep the ducted-fan in a hover mode. The system (planar model) has three degrees of freedom, $x$, $y$, and $\mu$, which define the position and orientation of the ducted-fan. There are six state variables $x$, $y$, $\mu$, $\dot{x}$, $\dot{y}$, $\dot{\mu}$ and two control inputs $u_1$, $u_2$ ($U \in [-10, 10]\times[0, 10]$).
The dynamics are
\begin{align*}
m \ddot{x} & = -d_c\dot{x} + u_1 \cos(\mu) - u_2 \sin(\mu) \\
m \ddot{y} & = -d_c\dot{y} + u_2 \cos(\mu) + u_1 \sin(\mu) - mg \\
J \ddot{\mu} & = r u_1 \,,
\end{align*}
where $m = 11.2$, $g = 0.28$, $J = 0.0462$, $r = 0.156$ and $d_c = 0.1$. The system is stable at the origin for $\vu_r:\ [0, mg]$. Therefore, we set $\vu_r$ as the origin for the input space. For practical stability, we consider RWS and the sets are defined as
\begin{align*}
S :\ \vx_r \oplus ([-1,1]\times[-1,1]\times[-0.7,0.7]\times[-1, 1]^3) \,,\, I :\B_{0.25}(\vx_r) \,,\, G:\B_{0.1}(\vx_r) \,,
\end{align*}
where $\oplus$ is the Minkowski sum.
\end{example}

\begin{example}[Bicycle Problem]\label{ex:bicycle}
    This system is a two-wheeled mobile robot modeled with five states $[x, y, v, \alpha, \gamma]$ and two control inputs~\cite{francis2016models}, where $x$ and $y$ define the position of the robot, $v$ is its velocity, $\alpha$ is the rotational position and $\gamma$ is the angle between the front and rear axles.  The goal is to stabilize the robot to a reference velocity $v_r=5$, and $\alpha_r = \gamma_r = y_r = 0$ as shown in Figure~\ref{fig:bicycle}. The dynamics of the model is as follows:
\begin{equation*}\label{eq:bicycle-dyn}
\dot{x} = v\cos(\alpha) \,,\, \dot{y} = v\sin(\alpha) \,,\, \dot{v} = \tau \,,\, \dot{\alpha} = \frac{v}{l} \sigma \,,\, \dot{\sigma} = \delta,
\end{equation*}
where $l = 1$ is the distance between the wheels and $\sigma = tan(\gamma)$ (see Figure~\ref{fig:bicycle}). 
Variable $x$ is immaterial in the stabilization problem and is dropped to obtain a model with four state variables $[y, v, \alpha, \sigma]$. The equilibrium is $\vx_r:[y_r, v_r, \alpha_r, \sigma_r]$. The inputs are saturated: $\tau \in [-10, 10]$, $\delta \in [-10, 10]$. For practical stability, we consider RWS property with the following sets
\[
        S:\ \vx_r \oplus ([-2, 2]\times[3, 7]\times[-1, 1]\times[-1, 1]) \,,\, I:\ \B_{0.4}(\vx_r) \,,\, G:\ \B_{0.1}(\vx_r) \,.
\]
\begin{figure}[t]
\begin{center}
    \includegraphics[width=0.4\textwidth]{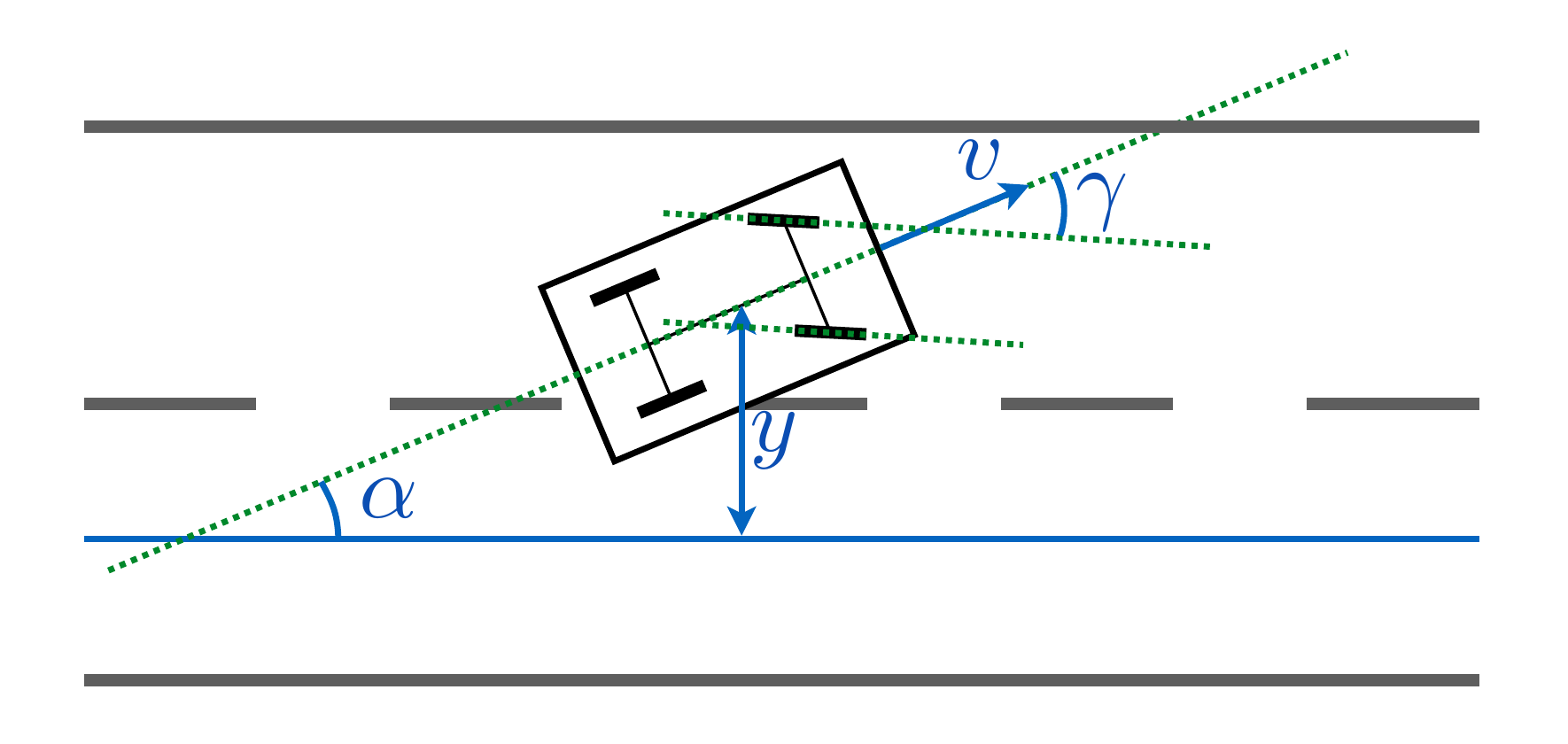}
      \end{center}
\caption{A schematic view of the bicycle model.}\label{fig:bicycle} 
\end{figure}
\end{example}
\subsubsection{Uninitialized RWS}
A RWS problem is \emph{uninitialized} if the initial set is the whole safe set $I = S$ ($S \implies S \ \U \ G$. To avoid technical difficulties for uninitialized RWS problems, we assume the safe set $S$ is a \emph{nondegenerate} basic semi-algebraic set.

\begin{definition}[Nondegenerate Basic Semialgebraic Set]\label{Def:basic-semialgebraic-set}
A nondegenerate basic semialgebraic set $K$ is a nonempty set defined by a conjunction polynomial inequalities:
\[ K :\ \{\vx\ |\ p_{K,1}(\vx) \leq 0\ \land\ \cdots\ \land p_{K,i}(\vx) \leq 0\} \,,\]  
where $\vx \in \reals^n$. For each $j \in [1,i]$, we define \[ H_{K,j} =
\{\vx\ |\ \vx \in K\ \land\ p_{K,j}(\vx) = 0\} \neq \emptyset \,.\]
\end{definition}

\subsubsection{RWS with Reference Trajectory}
For RWS with reference trajectory, in addition to sets $I$, $G$, and $S$, a \emph{valid} reference trace segment is also provided. The \emph{valid} trace segment, is defined over interval $[0, T]$: $\sigma_r :\ (\vx_r(\cdot), \vu_r(\cdot))$, where $\vx_r(\cdot) :\ [0, T] \mapsto X$ and $\vu_r(\cdot) :\ [0, T] \mapsto U$.
Then, given initial set $I \ni \vx_r(0)$, a goal set $G \ni \vx_r(T)$, and a safe set $S \ni \vx_r(t)$ ($\forall t \in [0, T]$), the goal is to reach $G$ from $I$ while remaining inside $S$:
\[
\varphi(\vx(\cdot)) \ :\ 
(\vx(0) \in I) \implies \left(\begin{array}{l}
    \vx(T) \in G \ \land \\
    (\forall t, 0 \leq t \leq T) \ \vx(t) \in S
\end{array} \right)\,.
\]
The difference between this property and the original RWS property is the fact that for this case, the reference trajectory $\vx_r(\cdot)$ (which is a feasible trajectory) is encoded into the specification (as a hint). Also, a time window $[0, T]$ is given a priori. This extra piece of information leads to a different solution as discussed later in this thesis.

We also consider another variation of this specification in which the time constraints are removed:

\[
\varphi(\vx(\cdot)) \ :\ 
(\vx(0) \in I) \implies (\exists \ T' \geq 0) \left(\begin{array}{l}
    \vx(T') \in G \ \land \\
    (\forall t, 0 \leq t \leq T') \ \vx(t) \in S
\end{array} \right)\,.
\]

\begin{example}[Obstacle Avoidance Problem]\label{ex:obstacle-car}
    Consider the bicycle model in Example~\ref{ex:bicycle}, where the state is $\vx :\ [\alpha, x, y, v]$ and inputs are $\gamma$ and $\tau$. Also, $l = 0.34$, $\tau \in [-4, 4]$, and $\gamma \in [-\frac{\pi}{4}, \frac{\pi}{4}]$. Consider a scenario where the vehicle moving with speed $2$ needs to circumnavigate an obstacle as shown in Figure~\ref{fig:obs-avoid}. First, a planner generates a reference trajectory $\vx_r$, which performs the task in $2$ time units (shown with the solid red line). Note that, by design, the reference trajectory keeps some distance from the obstacle. For RWS with reference trajectory, we define $S$ as
\[
S:\ \{ \vx \ | \ ([x, y] \oplus \B_{0.25}(\vzero)) \cap O = \emptyset \}\,,
\]
where $O$ is the obstacle and $0.25$ is the distance between the center of the vehicle and its corners (the body of the vehicle fits in a ball with radius $0.25$). This trick allows us to reason only about the center of the vehicle, and safety is guaranteed as long as the center of the vehicle is in $S$. Next, we set $I :\ \B_{0.25}(\vx_r(0))$ (a ball around the start point of the reference trajectory) and $G :\ \B_{0.5}(\vx_r(T))$ (a ball around the end point of the reference trajectory). Notice that the RWS property is defined over sets $I$, $G$, $S$, and horizon $T$. However, the reference trajectory $\sigma_r(\vx_r(\cdot), \vu_r(\cdot))$ is also provided as a hint.
\begin{figure}[t]
\begin{center}
    \includegraphics[width=0.45\textwidth]{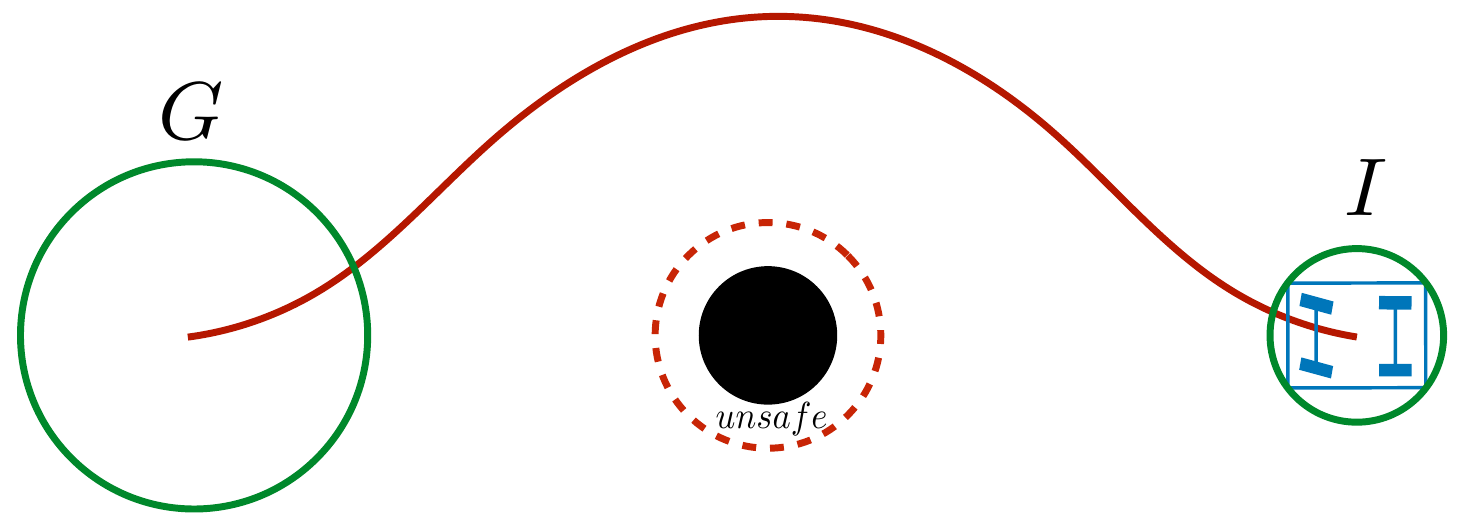}
\\ The reference trajectory (shown in solid red line) and sets are projected on $xy$ plane. Initial set $I$ and goal set $G$ are shown with green circles. The obstacle is the black circle, and the boundary of the safe region for the center of the vehicle is shown with the red dashed circle.
\end{center}
\caption{Obstacle avoidance for the bicycle model.}\label{fig:obs-avoid}
\end{figure}
\end{example}
\paragraph{Summary:} In this chapter, we discussed control system models, including smooth feedback systems and switched feedback systems. We formally defined the specifications we wish to solve for. Furthermore, we defined the control synthesis problem, along with several motivating examples.

\chapter{Overview}\label{ch:overview}
 In this chapter, we provide background and discuss where our proposed method stands in the related work. We start with control verification problems and classify existing methods. Afterward, we consider extensions of these methods for control synthesis problems. Finally, we show how our contributions relate to existing work.

\section{Finding Certificates}
The correctness of a control system w.r.t. a specification is equivalent to the existence of a certificate, and a verification problem is equivalent to the problem of finding a certificate. However, such problems are undecidable even for linear hybrid systems~\cite{HENZINGER1998}. Nevertheless, all hope is not lost and incomplete methods have been investigated.
In these methods, the search for a certificate is restricted. The search is conducted over a search space $\Hy$, namely hypothesis space. If a certificate is found in $\Hy$, the correctness of the system is proven. However, failure to find a certificate does not translate to ``non-existence of certificates."

A search method has two components: (i) a search space, and (ii) a search tool. We discuss each component in the followings. Throughout this section, we use the following running example. Consider a discrete-time dynamical system, where the transition relation is defined as $\vx_{t+1} = F(\vx_{t})$. To guarantee safety w.r.t. compact sets $I$ and $S$ ($I \implies \Box S$) one needs to find an invariant set $Inv$ s.t. 
\begin{equation}\label{eq:running-example-invariant}
    \begin{array}{rl}
        (a):\ & I \subseteq Inv \\
        (b):\ & Inv \subseteq S \\
        (c):\ & \vx \in Inv \implies F(\vx) \in Inv\,.
    \end{array}
\end{equation}
We will discuss different methods for finding an invariant $Inv$.

\subsection{Search Space}
To find a control certificate, one needs to define a hypothesis space $\Hy$ and search in $\Hy$ for a member that is a certificate. $\Hy$ is defined in two ways. Either $\Hy$ is taken to be a finite set, or is defined using a set of parameters.

\paragraph{Finite $\Hy$:}
In this method, $\Hy$ is designed to be finite. Considering the running example, in the first step, the state space is discretized into a finite set of cells $\Gamma$ and each cell $\gamma \in \Gamma$ is a subset of $X$ ($\gamma \subset X$). The hypothesis space is the set of all subsets of $\Gamma$ ($\Hy:\ 2^\Gamma$). Now, the goal is to find $h^* \in \Hy$ s.t. $Inv : \bigcup_{\gamma \in h^*} \gamma$ is an invariant.

\paragraph{Parameterization:}
A more general approach is parameterization. The goal is to restrict the search space and search for a certificate with a specific structure (template). In this method, a template $T:\ \C \mapsto \Hy$ is defined over a set of parameters $\vc \in \C$. Suppose we wish to find an invariant for the running example. One can restrict the invariant to be a box. Therefore, we define the following template:
\begin{align}\label{eq:invariant-template}
T(\vc) :\ \{ \vx \ | \ \bigwedge_{i=1}^n \left( \ve_i^t \vx \geq c_{2i} \land \ve_i^t \vx \leq c_{2i+1} \right) \} \,,
\end{align}
where $\ve_i$ is the unit vector in the direction of the $i^{th}$ dimension. Then, $\Hy$ is defined to be $\Hy:\ \{T(\vc) \ | \ \vc \in \C\}$.

We note that the parameterization technique is more general as the template could include some finite discretization. For example, Huang et al.~\cite{huang2015controller} use piecewise constant template functions, Ravanbakhsh et al.~\cite{ravanbakhsh2014abstract} use union of template polyhedra and Wu et al.~\cite{wu1996induced} uses a set of quadratic functions.

\subsection{Search Tool}
Aside from defining $\Hy$, one needs a search tool to find a certificate. Fundamentally, there are two classes of search tools: fixed-point computation (FPC) and constraint solving. 

\paragraph{Fixed-point Computation:}
In this method, members of $ \Hy $ form a (possibly infinite) complete lattice. An order-preserving function $f:\Hy \mapsto \Hy$ is defined over the lattice in a way that a fixed-point of $f$ satisfies sufficient conditions for being a certificate. Therefore, the goal is to find a fixed-point in this lattice.

Consider the running example, where the hypothesis space is finite and defined through discretization $\Hy :\ 2^\Gamma$. Members of $\Hy$ form a complete lattice with $\emptyset$ at the bottom and $\Gamma$ at the top. Then, $f$ is defined as
\[
f(h) :\ \{\gamma \in \Gamma \ | \ \gamma \subseteq S \land (\forall \vx \in \gamma) \ F(\vx) \in \gamma' \land \gamma' \in h\} \cap h\,.
\]
In other words, $f(h)$ contains all the cells in $h$ which are safe, and if the state is in $f(h)$, the state remains in $h$ in the next time step. Now, for a fixed point $h^*$ ($h^* = f(h^*)$), if $\left(\bigcup_{\gamma \in h^*} \gamma\right) \supseteq I$, for all states in $\bigcup_{\gamma \in h^*} \gamma$, in the next step the state remains in $\left(\bigcup_{\gamma \in h^*} \gamma\right) \subseteq S$, and thus, $\bigcup_{\gamma \in h^*} \gamma$ is an invariant.
Using lattice theory (see ~\cite{tabuada2009} for details), to find the greatest fixed-point, simply $h$ is initialized to be $\Gamma$ (greatest member of the lattice). Then, iteratively $h$ is set to be $f(h)$, until a fixed-point $h = f(h)$ is reached. If $\left(\bigcup_{\gamma \in h} \gamma\right) \supseteq I$, then $\bigcup_{\gamma \in h} \gamma$ is an invariant. 
The fixed-point computation method has been investigated for finite systems~\cite{tabuada2009}. This method has connections to viability theory and Hamilton-Jacobi equations used for reachability analysis over continuous domains~\cite{aubin2011viability}. Solving Hamilton-Jacobi equations is hard and in practice, approximate solutions over bounded time intervals are considered~\cite{mitchell2005,spaceex2011,flowstar2013}.

\paragraph{Constraint Solving:}
Given a hypothesis space $\Hy$, one can define a set of constraints $\eta:\Hy \mapsto \bools$ s.t. if $\eta(h)$ holds, then $h$ is a certificate. Next, $(\exists h) \ \eta(h)$ is solved using constraint solvers. There are two possible outcomes: (i) no $h$ exists, which means no certificate with the specific structure exists, (ii) one $h$ is returned and $h$ is a certificate. 
Consider the running example, where the hypothesis space is parameterized using the template $T$ (Eq.~\eqref{eq:invariant-template}). In order to find an invariant, we simply use Eq.~\eqref{eq:running-example-invariant}:
\[
\eta(\vc) :\ T(\vc) \supseteq I \, \land \, T(\vc) \subseteq S \, \land \, \left(\vx \in T(\vc) \implies F(\vx) \in T(\vc)\right)\,.
\]

For dynamical systems, the problem of finding certificates for the stability of linear~\cite{boyd1994linear} and polynomial~\cite{papachristodoulou2002} systems have been addressed using constraint solving. Also, safety is discussed through barrier function~\cite{prajna2004safety}. Recently, Dimitrova et al. ~\cite{dimitrova2014deductive} have shown how such certificates can be extended to address more complicated specifications such as parity games~\cite{thomas2002automata}.

\section{Control Synthesis}
 Recall that for correct-by-construction controller design, we wish to find a function $\K$, which guarantees the closed-loop system correctness. However, the reach-avoid controller synthesis problem is shown to be undecidable even for simplest switched systems~\cite{krishna2017}. Nevertheless, there are two classes of incomplete solutions based on \emph{certificates}. 
 
 In the first set of solutions, in addition to $\Hy$, a search space $\mathfrak{K}$ is also defined to search for a proper feedback law $\K$. In other words, the search is conducted over $\Hy \times \mathfrak{K}$ to find a certificate $h$ and a feedback law $\K$ at the same time. Therefore, the control design problem is reduced to finding a ``certificate + feedback law." The second set of solutions is based on ``control certificates." A control certificate (i) provides a strategy to control the system, and (ii) the closed-loop system correctness is guaranteed if the strategy is respected by the feedback law. The existence of a control certificate is sufficient to address the control design problem. Once a control certificate is found, $\K$ is mechanically generated from the control certificate. Methods based on control certificates are less conservative as they do not enforce $\K$ to have a specific structure. Now, we discuss the details.
 
\subsection{Certificate + Feedback Law} Searching for a certificate and a feedback law at the same time is known as static feedback design, where feedback function $\K$ has a specific structure. All these methods use templates along with constraint solvers to find a solution. Static (fixed structure) feedback has been widely studied in control theory. For example, one of the fundamental approaches to address linear system design is LQR~\cite{boyd1994linear}, in which a linear feedback function $\K$ and a quadratic certificate (known as Lyapunov function) are searched simultaneously. An alternative is to use a semi-definite programming (SDP) formulation~\cite{lofberg2004yalmip}. However, the problem is harder for switched linear systems~\cite{lin2009survey} or nonlinear control systems~\cite{prieur1999uniting}. In fact, Prieur et al.~\cite{prieur1999uniting} show that the set of feasible solutions for such a problem may not only be non-convex but also disconnected. There have been attempts to solve such problems by using bilinear matrix inequality (BMI) solvers~\cite{henrion2005solving} or satisfiability modulo theories (SMT) solvers~\cite{huang2015controller}. An alternative method is to search for a local solution using alternating maximizations (a.k.a. policy iteration)~\cite{el1994synthesis,majumdar2013control}. Without going into the technical details, in these methods, starting from a feedback function and a potential certificate, alternatively (i) the feedback law is fixed to search for a better potential certificate, and (ii) the potential certificate is fixed to search for a better feedback law. This procedure continues until convergence to a local solution, and the method declares success only if the optimal local solution is feasible (the potential certificate is in fact a certificate).
 A similar approach is used in formal methods community, in which a feedback function is fixed, and the closed-loop system is verified (to search for a certificate). If the verification is successful, the problem is solved. Otherwise, the analysis returned by the verifier (usually a counterexample) is used to choose a better feedback function~\cite{raman2015reactive,Abate2017}. Others also provide similar methods to find a set of parameters for which the closed-loop system is correct~\cite{yordanov2008parameter,donze2009parameter}.
 Finally, Dimitrova et al. ~\cite{dimitrova2014deductive} have shown how to address parity games by finding a ``certificate + feedback law." While their work shows that constraint solving based methods can be applied toward more complicated specification, no way of finding such certificates is provided.

\subsection{Control Certificate}
Unlike the problem of finding a ``certificate + feedback law," the problem of finding a ``control certificate" has been investigated using both fixed-point computation (FPC) and constraint solving.

Returning to the running example, suppose we wish to solve a control synthesis problem. The transition relation is now defined as $\vx_{t+1} = F(\vx_{t}, {\color{red}\vu_{t}})$. To guarantee safety w.r.t. compact sets $I$ and $S$ ($I \implies \Box S$) one needs to find a \emph{control} invariant set $CInv$ s.t. 
\begin{equation}\label{eq:running-example-control-invariant}
    \begin{array}{rl}
        (a):\ & I \subseteq CInv \\
        (b):\ & CInv \subseteq S \\
        (c):\ & \vx \in CInv \implies {\color{red}(\exists \vu)} \ F(\vx, {\color{red}\vu}) \in CInv\,.
    \end{array}
\end{equation}

For FPC methods, a (possibly infinite) lattice is defined over the members of the hypothesis space and a function $f :\ \Hy \mapsto \Hy$ is defined s.t. a fixed-point of $f$ (namely $h^*$) yields a control certificate $CInv : \bigcup_{\gamma \in h^*} \gamma$. 

Consider the running example, where the hypothesis space is finite and defined through discretization $\Hy :\ 2^\Gamma$. Members of $\Hy$ form a complete lattice with $\emptyset$ at a bottom and $\Gamma$ at the top. Then, $f$ is defined as:
\[
f(h) :\ \{\gamma \in \Gamma \ | \ \gamma \subseteq S \land (\forall \vx \in \gamma) \ {\color{red}(\exists \vu \in U)} \ F(\vx, {\color{red}\vu}) \in \gamma' \land \gamma' \in h\} \cap h\,.
\]
In other words, $f(h)$ contains all the cells in $h$, which are safe, and if the state is in $f(h)$, there exists a control input to keep the state in $h$ in the next time step.
Now, for a fixed-point $h^*$ ($h^* = f(h^*)$), if $\left(\bigcup_{\gamma \in h^*} \gamma\right) \supseteq I$, for all states in $\bigcup_{\gamma \in h^*} \gamma$, there exists a control input to keep the state inside $\left(\bigcup_{\gamma \in h^*} \gamma\right) \subseteq S$, and thus $\bigcup_{\gamma \in h^*} \gamma$ is a control invariant.
FPC for finding control certificates is investigated for finite systems~\cite{tabuada2009} and dynamical systems.
For dynamical systems, Hamilton-Jacobi equation (used for reachability analysis) is extended to Hamilton-Jacobi-Bellman (HJB) equation~\cite{bellman1957dynamic,tomlin1998synthesizing}, wherein the ultimate goal is to find the maximum controllable region. However, solving the HJB equation is hard. 

The FPC technique gives a complete solution for finite systems~\cite{tabuada2009}. As such, the FPC is mainly used along with a finite hypothesis space. In these methods, the feedback space is usually taken to be finite (switched systems), and the transition relation is approximated for performance benefits. The time, on the other hand, can be discrete or continuous. Zamani et al.~\cite{zamani2012symbolic} develop a technique for time discretization, leading to state-of-the-art toolboxes~\cite{roy2011pessoa,rungger2016scots}. To address continuous-time systems, usually Zeno behavior~\cite{liberzon2012switching} needs to be considered~\cite{Asarin2000,liu2013synthesis,ozay2013computing,wongpiromsarn2011tulip}. 
These methods have been shown to be more scalable when compared to solving HJB equations. However, the complexity remains exponential in the number of state variables (because of discretization). 
To combat state space explosion, Mouelhi et al.~\cite{mouelhi2013cosyma} provide a tool which uses multi-scale abstractions meaning that the cells used for defining the search space are smaller for specific regions. While this method seems to scale better, it is not applicable to all kinds of systems. Another interesting technique~\cite{nilssonincremental} is to use counterexample-guided abstraction refinement (CEGAR)~ framework\cite{clarke2000counterexample}. In this technique, first, the problem is solved with some initial hypothesis space. If the method fails to find a control certificate, either there is no solution or the hypothesis space is not expressive enough. In the latter case, the hypothesis space is refined, and the problem is resolved. This trick helps to use smaller cells only when it is needed. 

While fixed-point computation (FPC) is mostly used with finite hypothesis spaces, it is possible to use FPC for finding control certificates in a template form~\cite{slanina2007controller,ravanbakhsh2014abstract} through abstract interpretation framework~\cite{cousot1977abstract}.

Constraint solving for finding control certificates has received less attention. 
Consider the running example, where the hypothesis space is parameterized using the template $T$ defined in Eq.~\eqref{eq:invariant-template}. In order to find a \emph{control} invariant, we simply use Eq.~\eqref{eq:running-example-control-invariant}:
\[
\eta(\vc) :\ T(\vc) \supseteq I \, \land \, T(\vc) \subseteq S \, \land \, \left(\vx \in T(\vc) \implies {\color{red}(\exists \vu \in U)} \ F(\vx, {\color{red} \vu}) \in T(\vc)\right)\,.
\]

Tan et al.~\cite{tan2004searching} investigate the problem of finding control Lyapunov functions (control certificates for designing stable systems). 
In another line of work, Taly et al.~\cite{taly2010switching,taly2009synthesizing} use a combination of simulations and quantifier elimination to find control certificates. The control certificates in the mentioned line of work are quite complicated, and the provided search method is not scalable.
 
 \section{Thesis Overview}
 In this thesis, the proposed solution is based on control certificates. We discuss the proposed framework in the next three chapters. In Chapter~\ref{ch:certificates}, we investigate control certificates to address the specifications introduced in the previous chapter. For each specification, we reduce the control synthesis problem to problem of finding a class of control certificates, and we show how a feedback law $\K$ is designed using a control certificate. In Chapter~\ref{ch:search}, the problem of finding control certificates is discussed. We use parameterization for defining the hypothesis space. Then, we employ constraint solvers to find a control certificate in the hypothesis space. We show that finding a control certificate is a computationally hard problem in general. We propose a combination of techniques from formal methods, control theory, and machine learning to tackle this problem. Finally, in Chapter~\ref{ch:eval}, we evaluate the effectiveness of the proposed method. We also show how automatically extracted controllers behave in simulations and practice.

\chapter{Certificates}\label{ch:certificates}

In this chapter, we investigate different classes of control certificates for basic specifications. For each class of problems, we introduce a class of control certificates and show how a control synthesis problem is solved if a corresponding control certificate is available.

A \emph{certificate}, given a plant $\Psi(\P, \K)$ (see Definitions.~\ref{def:smooth-system} and~\ref{def:switched-system}), and a specification $\varphi$, guarantees that $\varphi$ holds for all traces. Each specification/control system has a corresponding class of certificates. For example, to prove the stability of a smooth closed-loop system, Lyapunov functions are used as certificates~\cite{boyd1994linear}. The function is named after Aleksandr Lyapunov, who introduced the concept of Lyapunov function for establishing the stability of ODEs. Recall that asymptotic stability consists of remaining close to the equilibrium state $ \vx_r $ (Lyapunov stability), and converging to the equilibrium $\vx_r$ (see Section~\ref{sec:spec}). To guarantee both conditions, a function $V :\ X \mapsto \reals$ with a unique (local) minimum $0$ for $\vx_r$ ($V(\vx_r) = 0$) is defined. For a given state trace $\vx(\cdot)$, we refer to $V(\vx(t))$ as the value of $V$ at time $t$. Intuitively speaking, if we could show that value of $V$ always decreases as time goes to infinity, and value of $V$ converges to its minimum $0$, the state converges to $\vx_r$.

\begin{definition}[Lyapunov Function]
    Given a plant $\P$ and a smooth feedback function $\K$, a radially unbounded smooth function $V:X \mapsto \reals$ is a Lyapunov function iff the following conditions hold:
    \[V(\vx_r) = 0\,, \ \ \ \ (\forall \vx \neq \vx_r) \ V(\vx) > 0 \,, \ \ \ \ (\forall \vx \neq \vx_r) \ \dot{V}(\vx) < 0\,,\]
    where $\dot{V}(\vx)$ is the derivative of $V$ w.r.t. time: $\dot{V}(\vx) = \nabla V(\vx) \cdot \dot{x} = \nabla V \cdot f(\vx, \K(\vx))$.
\end{definition}

\begin{theorem}[Lyapunov~\cite{lyapunov1992general}]
    Given a plant $\P$ and a smooth feedback function $\K$, the existence of a Lyapunov function guarantees global asymptotic stability of the closed-loop system.
\end{theorem}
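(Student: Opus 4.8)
The plan is to establish the two clauses of global asymptotic stability from Section~\ref{sec:spec}—Lyapunov stability (a) and convergence (b)—separately, relying on two structural facts about $V$ evaluated along a trace $\vx(\cdot)$: that $V$ is positive definite with its minimum $0$ at $\vx_r$, and that $t \mapsto V(\vx(t))$ is strictly decreasing whenever $\vx(t) \neq \vx_r$. The latter is the chain rule applied to a valid trace: $\frac{d}{dt} V(\vx(t)) = \nabla V(\vx(t)) \cdot \dot{\vx}(t) = \dot V(\vx(t)) < 0$, using the validity condition $\dot{\vx}(t) = f(\vx(t), \K(\vx(t)))$ of Definition~\ref{def:trace-valid-smooth} and the third defining inequality of a Lyapunov function. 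A preliminary observation I would record is that radial unboundedness makes every sublevel set $\{\vx \mid V(\vx) \le c\}$ compact; since $V(\vx(t))$ never increases, any trace stays inside the compact set $\{\vx \mid V(\vx) \le V(\vx(0))\}$, which by the escape-time discussion following Definition~\ref{def:trace-valid-smooth} precludes finite escape and guarantees the trace is defined for all $t \ge 0$.

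For clause (a), I would fix $\epsilon > 0$ and look at the sphere $\partial \B_\epsilon(\vx_r)$, which is compact and avoids $\vx_r$, so the continuous, strictly positive function $V$ attains a strictly positive minimum $\ell$ there. By continuity of $V$ at $\vx_r$ and $V(\vx_r) = 0$, I can choose $\delta > 0$ with $V < \ell$ on $\B_\delta(\vx_r)$. If $\vx(0) \in \B_\delta(\vx_r)$ then $V(\vx(0)) < \ell$, and since $V$ is non-increasing along the trace, $V(\vx(t)) < \ell$ for all $t$; a trajectory touching $\partial \B_\epsilon(\vx_r)$ would force $V(\vx(t)) \ge \ell$, a contradiction, so $\vx(t) \in \B_\epsilon(\vx_r)$ for all $t$.

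For clause (b), I would fix an initial $\vx(0) \neq \vx_r$. Since $t \mapsto V(\vx(t))$ is decreasing and bounded below by $0$, it converges to some $L \ge 0$; the \emph{heart} of the argument is showing $L = 0$. I would argue by contradiction: if $L > 0$, then positive definiteness and continuity of $V$ yield an $r > 0$ with $\|\vx(t) - \vx_r\| \ge r$ for all $t$, so the trace stays in the compact annulus $K = \{\vx \mid r \le \|\vx - \vx_r\| \ \land\ V(\vx) \le V(\vx(0))\}$. On $K$ the continuous function $\dot V$ attains a maximum $-\gamma < 0$, whence $V(\vx(t)) \le V(\vx(0)) - \gamma t$, which is eventually negative and contradicts $V \ge 0$; therefore $L = 0$. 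Finally, running the estimate of clause (a) in reverse—the minimum of $V$ on the complement of $\B_\epsilon(\vx_r)$ within a fixed sublevel set is strictly positive by compactness—shows that $V(\vx(t)) \to 0$ drives $\vx(t)$ into $\B_\epsilon(\vx_r)$ for all large $t$, which is exactly clause (b).

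I expect the compactness step in clause (b)—extracting the uniform bound $\dot V \le -\gamma < 0$ on the annular region $K$ to push $V$ below $0$—to be the main obstacle, since this is where radial unboundedness is genuinely needed and where one must rule out the trajectory ``stalling'' at a positive level of $V$ while $\dot V$ decays toward $0$. The stability clause (a) is comparatively routine.
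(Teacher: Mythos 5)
Your proof is correct, but note that the paper itself offers no proof of this theorem: it is quoted as a classical result with a citation to Lyapunov's original work, so there is no in-paper argument to compare against. What you have written is the standard textbook proof — compact sublevel sets from radial unboundedness (which also rules out finite escape time), the $\min$-on-the-sphere argument for Lyapunov stability, and the limit $L = \lim_t V(\vx(t))$ driven to $0$ by a uniform bound $\dot{V} \leq -\gamma < 0$ on a compact annulus — and each step is sound. Two small remarks: in clause (a) you should also take $\delta \leq \epsilon$ so the initial state already lies in $\B_\epsilon(\vx_r)$; and the monotonicity of $t \mapsto V(\vx(t))$ at a possible instant with $\vx(t) = \vx_r$ is covered because $\nabla V(\vx_r) = \vzero$ ($\vx_r$ is a minimum of the smooth $V$), so $\frac{d}{dt}V(\vx(t)) \leq 0$ holds everywhere along the trace, not just off the equilibrium. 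It is worth observing that the one place the paper does prove an asymptotic-stability claim — the third part of the proof of Theorem~\ref{thm:asymp-stable} for non-Zeno CLFs — uses exactly your mechanism for clause (b): assume the trace accumulates at some $\vx^* \neq \vzero$, extract a uniform negative bound on $\dot{V}$ near $\vx^*$, and contradict the boundedness of $V$ from below; so your argument is the natural specialization of the paper's own style of reasoning to the classical smooth-feedback setting.
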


In addition to proving a specification, certificates such as Lyapunov functions can be extended for correct-by-construction control design. More specifically, we address the control synthesis problem (Definition~\ref{def:problem}) through control certificates. The existence of a \emph{control} certificate guarantees the existence of a \emph{strategy} that satisfies the specification. Furthermore, a control certificate is ``constructive" if there is an automated method for designing a feedback law $\K$ that admits the strategy, and thus, respects the specifications. As the ultimate goal is designing the controller, in this thesis, we will focus on constructive control certificates (certificates for control design).

\section{Stability}
Lyapunov functions were extended to \emph{Control} Lyapunov Functions (CLF) by Artstein~\cite{artstein1983stabilization} for the stabilization of dynamical systems. Similar to stability analysis, we define a function $V :\ X \mapsto \reals$ with a unique (local) minimum $0$ only for $\vx_r$ ($V(\vx_r) = 0$). The controller, always, decreases the value of $V$ by choosing a proper feedback (Figure~\ref{fig:clf}). Then, as $V$ reaches its minimum $0$, the state converges to $\vx_r$.

\begin{definition}[Control Lyapunov Function]\label{def:clf}
    Given a smooth plant $\P$, a radially unbounded smooth function $V:X \mapsto \reals$ is a control Lyapunov function iff the following conditions hold:
    \begin{equation}\label{eq:clf}
V(\vx_r) = 0 \,, \ \ \ \ (\forall \vx \neq \vx_r) \ V(\vx) > 0 \,, \ \ \ \ (\forall \vx \neq \vx_r) \ (\exists \vu \in U) \ \dot{V}(\vx, \vu) < 0\,,
\end{equation}
    where $\dot{V}(\vx, \vu) = \nabla V \cdot f(\vx, \vu)$.
\end{definition}
Figure~\ref{fig:clf} shows the cenceptual view of a CLF, wherein at each non-equilibrium state, there is a value of the control input that can instantaneously decrease the value of the CLF.
Given a CLF, suppose that, the controller always selects a control input $ \vu $ (given current state $\vx$) s.t. the value of $ V $ decreases ($ \nabla V. f(\vx(t), \vu) < 0 $). This strategy leads to decrease of the value of $V$, which in turn guarantees the closed-loop system asymptotic stability. Artstein proved such $\K$ exists~\cite{artstein1983stabilization}. 

\begin{figure}[t]
\begin{center}
\includegraphics[width=0.4\textwidth]%
    {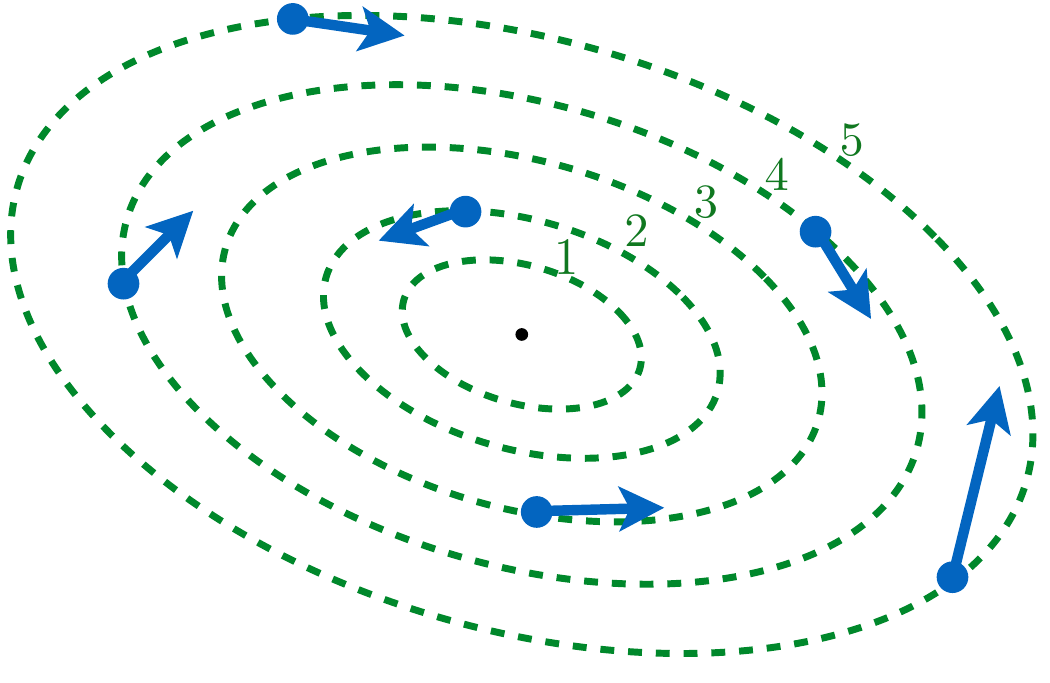}
\\ The black dot is the equilibrium $\vx_r$ for the stability property. Green dashed lines show different level sets of a CLF. The values for the level sets are shown in green. For some states (blue dots), the vector field of the closed-loop system is shown with blue arrows.
    \caption{A schematic view of a CLF.}
\label{fig:clf}
\end{center}
\end{figure}

\begin{theorem}[Artstein~\cite{artstein1983stabilization}]
Given a smooth plant $\P$, the existence of a control Lyapunov function guarantees the existence of an almost-everywhere smooth feedback law $\K$, which guarantees asymptotic stability of the closed-loop system.
\end{theorem}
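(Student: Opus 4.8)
The plan is to split the statement into a cheap reduction and the real construction. First I would observe that it suffices to exhibit a feedback $\K$, smooth on $X \setminus \{\vx_r\}$, satisfying $\dot V(\vx, \K(\vx)) = \nabla V(\vx)\cdot f(\vx, \K(\vx)) < 0$ for every $\vx \neq \vx_r$. Indeed, the closed-loop vector field $\vx \mapsto f(\vx, \K(\vx))$ is then smooth almost everywhere, and $V$ — already radially unbounded, vanishing only at $\vx_r$, and positive elsewhere by hypothesis — satisfies precisely the three defining conditions of a Lyapunov function for the closed loop. The Lyapunov theorem stated earlier then delivers global asymptotic stability, and no further work is required. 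The entire content of the theorem is therefore the construction of such a selection.

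The obstacle is that the pointwise decrease set $U(\vx) := \{\vu \in U \mid \nabla V(\vx)\cdot f(\vx,\vu) < 0\}$, which the CLF conditions guarantee to be nonempty for each $\vx \neq \vx_r$, does not admit an obvious \emph{smooth} selection: a pointwise minimizer of $\vu \mapsto \nabla V(\vx)\cdot f(\vx,\vu)$ generally fails to vary even continuously with $\vx$, and near $\vx_r$ the set $U(\vx)$ may shrink. Producing a selection that is smooth away from $\vx_r$ is the crux.

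I would carry out the construction first for the control-affine case $f(\vx,\vu) = f_0(\vx) + \sum_{i} u_i\, g_i(\vx)$, which is where Artstein's result lives in its sharpest form. Writing $a(\vx) = \nabla V(\vx)\cdot f_0(\vx)$ and letting $b(\vx)$ be the vector with entries $\nabla V(\vx)\cdot g_i(\vx)$, the CLF condition is exactly the implication $b(\vx) = \vzero \Rightarrow a(\vx) < 0$ for $\vx \neq \vx_r$. I would then use Sontag's universal formula
\[
\K(\vx) = -\frac{a(\vx) + \sqrt{a(\vx)^2 + \|b(\vx)\|^4}}{\|b(\vx)\|^2}\, b(\vx) \qquad \text{when } b(\vx) \neq \vzero,
\]
and $\K(\vx) = \vzero$ when $b(\vx) = \vzero$. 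A direct computation gives $\dot V(\vx, \K(\vx)) = -\sqrt{a(\vx)^2 + \|b(\vx)\|^4} < 0$ on $b \neq \vzero$, and $\dot V(\vx, \K(\vx)) = a(\vx) < 0$ on $b = \vzero$ by the CLF property, so the decrease condition holds everywhere off $\vx_r$. Smoothness away from $\vx_r$ is the one genuinely delicate point: the ratio above is manifestly smooth where $b \neq \vzero$, and the verification that it extends smoothly across the zero set of $b$ rests on the fact that there $a < 0$, which forces the numerator to vanish to high order (like $\|b\|^4/(-2a)$) as $b \to \vzero$.

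Finally, for a general smooth $f$ (or when $U$ is a proper, saturated subset of $\reals^m$), I would fall back on Artstein's original patching argument: for each $\vx_0 \neq \vx_r$ pick $\vu_{0} \in U(\vx_0)$, use continuity of $\nabla V \cdot f$ to make this single control work on a whole neighborhood, and glue the locally constant witnesses with a smooth partition of unity. The main obstacle in this route — and the reason the clean formula is preferable when available — is that the glued convex combination is guaranteed to remain a decreasing, admissible control only when $U$ is convex and the decrease condition is convex in $\vu$, i.e. essentially when $f$ is affine in $\vu$; without this one must pass to relaxed controls or invoke the small-control property. In all cases the single residual non-smoothness at $\vx_r$ is unavoidable, which is exactly why the conclusion asserts an \emph{almost-everywhere} smooth feedback rather than a globally smooth one.
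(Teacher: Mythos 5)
The paper does not actually prove this statement: it is imported verbatim from Artstein's 1983 paper as a citation, and the surrounding text only remarks that the result is non-constructive and then separately attributes the constructive, smooth-feedback version to Sontag. Your proposal therefore cannot match ``the paper's proof,'' but it is a correct reconstruction of the standard modern argument, and in fact it proves the stronger Sontag version in the control-affine case (which is exactly the setting the paper restricts to for smooth plants, cf.\ Eq.~\eqref{eq:control-affine-plant}). Your reduction to finding a decrease-selecting feedback and then invoking the closed-loop Lyapunov theorem is sound, the universal formula and the computation $\dot V = -\sqrt{a^2+\|b\|^4}$ are correct, and you correctly identify both why a pointwise minimizer fails and why the patching argument needs convexity of the decrease condition in $\vu$ (i.e.\ essentially affineness of $f$ in $\vu$) or a passage to relaxed controls --- which is precisely the hypothesis under which Artstein's theorem holds in the clean form stated here.

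Two spots are thinner than the rest. First, the smoothness of the universal formula across the set $\{b=\vzero\}$ does not quite follow from the observation that the numerator vanishes like $\|b\|^4/(-2a)$; that gives continuity and the right order of vanishing, but the clean argument is that $\phi(a,c) := (a+\sqrt{a^2+c^2})/c$ is the root of the polynomial $c\,p^2 - 2a\,p - c = 0$ whose $p$-derivative $2cp-2a$ is nonzero at $(a,0,0)$ when $a<0$, so the implicit function theorem yields analyticity of $\phi$ on $\{c>0\}\cup\{a<0\}$, and hence of $\K = -\phi(a(\vx),\|b(\vx)\|^2)\,b(\vx)$ away from $\vx_r$. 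Second, for saturated $U$ the universal formula does not respect the input constraint, and your fallback paragraph is an honest acknowledgment of the obstacle rather than a proof; since the paper's own development explicitly assumes control-affine dynamics and cites the follow-up works for $U$ a ball or a polytope, this gap is consistent with the scope the thesis actually relies on, but it should be flagged as a restriction on the theorem's hypotheses rather than left as a handled case.
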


This theorem proves there exists an almost everywhere continuous feedback function $\K$, where the continuity holds over $X \setminus \vx_r$, which is practically acceptable. 
However, Artstein theorem is not constructive and not applicable to control design problems. Sontag~\cite{sontag1989universal} provides a constructive method for designing not only continuous but also smooth feedbacks laws (over $X \setminus \vx_r$) using CLFs. Furthermore, Sontag shows $\K$ is smooth (even in $\vx_r$) under the ``small control property" assumption: for any $\epsilon > 0$, there exists a $\delta > 0$ s.t. $\K(\vx) \in \B_\epsilon(\vu_r)$ for all $\vx \in \B_\delta(\vx_r)$, where $f(\vx_r, \vu_r) = \vzero$.
This method is based on two assumptions. The first assumption indicates that inputs are not saturated ($U:\reals^m$). The second assumption states that the system is \emph{control affine}. I.e.,
\begin{equation}\label{eq:control-affine-plant}
    \dot{x} = f(\vx, \vu) = f_0(\vx) + \sum_{i=1}^m f_i(\vx) u_i \,.
\end{equation}
While this assumption seems restrictive, many control problems have this property. Additionally, if the dynamics of a system is not affine in control, one could approximately model the system using a control affine system. Such approximation is possible by adding additional states and integrators. In the rest of this thesis, we only address control affine dynamics for \emph{smooth} plants. Following Sontag, others have provided methods for cases where the inputs are saturated, and $U$ is a ball~\cite{LIN1991UNIVERSAL} or a polytope~\cite{suarez2001global}. Moreover, Nguyen et al.~\cite{Nguyen2015RobustCLF} show how to implement effective controllers from CLFs.

\begin{example}\label{ex:clf}
    Consider a one-input, two-states system with dynamics $\dot{x_1} = -x_2 \,, \dot{x_2} = x_1 + u$. To stabilize to the origin, we use CLF: $V(x_1, x_2) = x_1^2 + x_2^2 + x_1x_2$. Note that $\dot{V}(x_1, x_2, u) = (2x_2-x_2) u - x_1^2 + x_2^2$.
     Using Sontag formula the following feedback law is extracted from $V$:
    \[
    \K(x_1, x_2) = \begin{cases}
        0 & 2x_2-x_1 = 0 \\
        -\frac{-x_1^2 + x_2^2 + \sqrt{(-x_1^2+x_2^2)^2 + (2x_2-x_1)^2}}{2x_2-x_1} & 2x_2-x_1 \neq 0\,,
    \end{cases}
    \]
    which yields the vector field shown in Figure~\ref{fig:clf-example}.
\begin{figure}[t]
\begin{center}
\includegraphics[width=0.4\textwidth]%
    {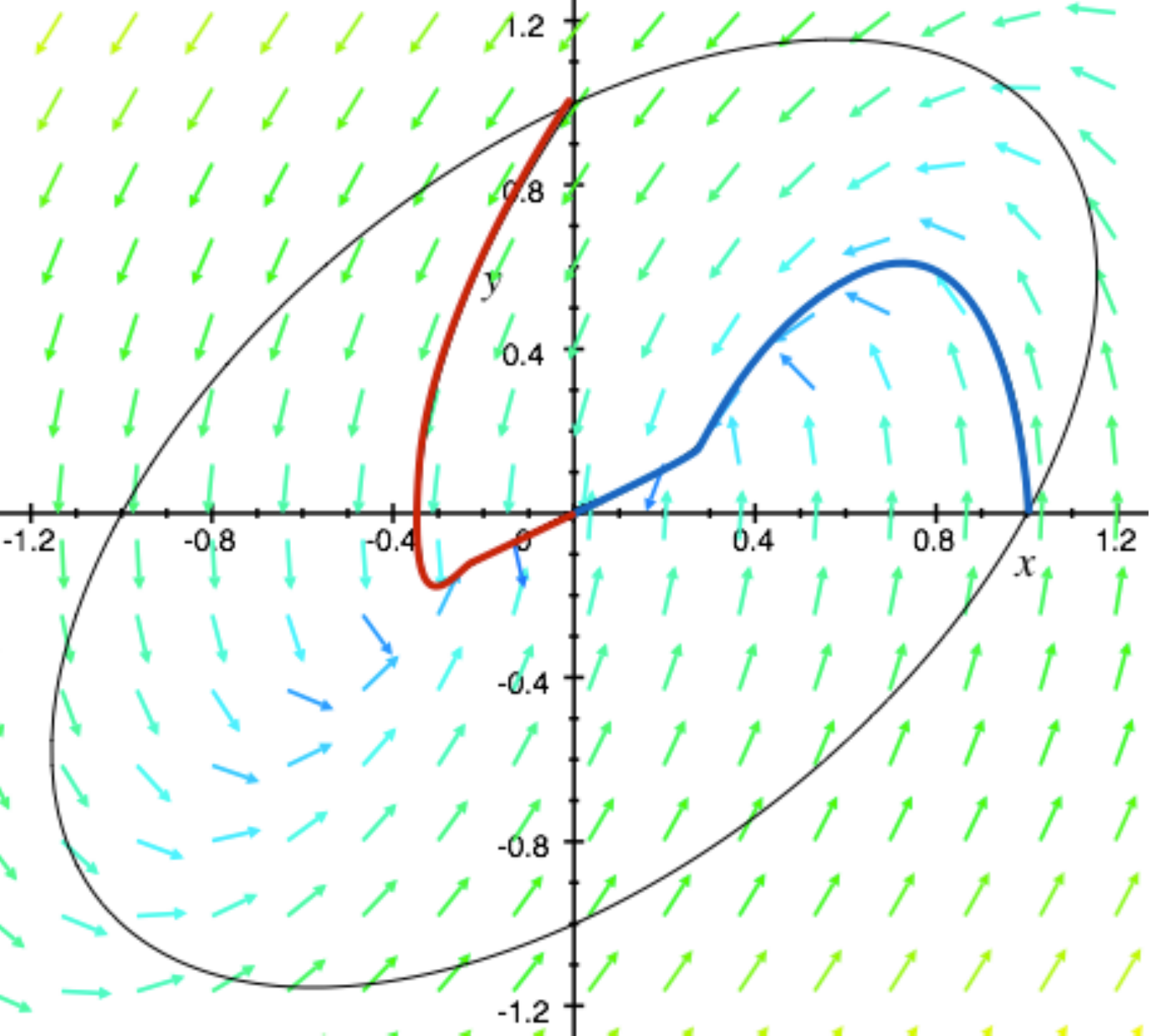} \\
The ellipse is a level set of $V$ ($\{\vx \ | \ V(\vx) = 1\}$). Initial state for the red trace is $[0, 1]$ and for the blue trace is $[1, 0]$.
    \caption{Vector field and two execution traces for the closed-loop system described in Example~\ref{ex:clf}.}
\label{fig:clf-example}
\end{center}
\end{figure}

\end{example}

\subsection{Non-Zeno CLF}
So far, we discussed the CLF for \emph{smooth} feedback systems. However, designing switched feedback functions using the strategy provided by a CLF is not straightforward as Zeno behavior may occur. We introduce non-Zeno CLFs which, are constructive control certificates for \emph{switched} feedback systems.
In this section, we define a large class of CLFs that can be used to synthesize controllers with guaranteed ``minimum dwell-time," i.e., the time between two switches is bounded from below. For this class of CLFs, the min dwell-time, and thus asymptotic stability, is only guaranteed on a compact set rather than globally. In particular, the bound on the minimum dwell-time is inversely proportional to the diameter of this set.

Before going further, we need to define some terminologies. Without loss of generality, we assume that the desired equilibrium is the origin $\vx_r = \vzero$. Let $\partial K$ and $\inter{K}$ denote the boundary and interior of a closed set $K$. Let $V^{\bowtie \beta} :\ \{ \vx \ | \ V(\vx) \bowtie \beta \}$ where $\bowtie \in \{ < , \leq, =, \geq, > \}$.
\begin{definition}[Associated Region] 
Given a smooth CLF $V$ and a compact region $P$, $\vzero \in \inter{P}$, we associate region $P^*$ to $V$ as $P^* :\ V^{\leq \beta}$, where $\beta:\ \min_{\vx \not\in \inter{P}} V(\vx)$.
\end{definition}
Notice that $V$ is radially unbounded and thus $\beta$ exists. Moreover, since $V$ has only one local minimum (at the origin), $V(\vx) = \beta \implies \vx \in \partial P$. Therefore, $P^* \subseteq P$.
\begin{definition}[$\phi$-boundedness] Given functions $p, \phi :\ X \mapsto \reals$, $p$ is said to be $\phi$-bounded iff for every bounded region $S \subset X$ there exists a constant $\Lambda_S$ s.t.  $(\forall \vx \in S) \ p(\vx) \leq \Lambda_S \phi(\vx)$.
\end{definition}
\begin{example}
Consider $\phi(x,y):\ x^2 +y^2$.  Any multivariate polynomial $p(x,y)$ whose lowest degree terms have degree at least two is $\phi$-bounded. Examples include $x^2 + 2x^3 + 3 xy$, $xy$, and $x^6- 3 y^3$. On the other hand, the function $p(x,y)= x+y$ is not $\phi$-bounded since no bound of the form $p(x,y) \leq \Lambda_S (x^2 + y^2)$ exists when $S$ is taken to be a region containing $(0,0)$.  Similarly, the function $3 + x$ is not $\phi$-bounded.
\end{example}

\begin{definition}[Non-Zeno CLF] \label{def:clf-non-zeno} Given a switched plant $\P$,  a CLF is said to be non-Zeno iff there exist positive definite functions $\phi_\vu :\ X \mapsto \reals$ s.t.
\begin{align} 
    &\ddot{V}_\vu(\vx) \mbox{ is } \phi_\vu\mbox{-bounded} \label{eq:phi-q-1} \\
    &\dot{\phi}_\vu(\vx) \mbox{ is } \phi_\vu\mbox{-bounded} \label{eq:phi-q-2}\\
    &    (\forall \vx \neq \vzero) \ (\exists \ \vu \in U) \
    \dot{V}_\vu(\vx) < - \phi_\vu(\vx), \label{eq:RCLF}
\end{align}
where $\dot{V}_\vu(\vx) = \nabla V \cdot f_\vu(\vx)$, $\ddot{V}_\vu(\vx) = \nabla \dot{V}_\vu \cdot f_\vu(\vx)$ and $\dot{\phi}_\vu(\vx) = \nabla \phi_\vu \cdot f_\vu(\vx)$.
\end{definition}
Informally, the goal is to make sure not only $\dot{V}_\vu$ is negative definite, but also is smaller than a class of negative  (definite) functions. Now we explain how such property helps to guarantee min-dwell time property. As depicted in Figure~\ref{fig:min-dwell-time}, we force the system to switch to a mode $\vu$ for which $\dot{V}_{\vu}(\vx) \leq - \phi_{\vu}(\vx) < 0$. Also, we keep the system in mode $\vu$ until $\dot{V}_\vu(\vx) \geq -\frac{\phi_\vu(\vx)}{\lambda}$, wherein $\lambda > 1$.

Given a non-Zeno CLF $V$, let a class of suitable feedback laws $\SK$ associated to $V$ be defined as 
\begin{equation}
    \label{eq:controller}
    \K \in \SK \iff \K(\vu, \vx) \mbox{ matches} \begin{cases} 
    \vu^* \hspace{0.5cm}
        \dot{V}_{\vu}(\vx) \geq -\frac{\phi_\vu(\vx)}{\lambda} \wedge \dot{V}_{\vu^*}(\vx) \leq -\phi_{\vu^*}(\vx)  \land \vx \in P\\
    \vu \hspace{0.5cm} \dot{V}_{\vu}(\vx) < -\frac{\phi_\vu(\vx)}{\lambda} \land \vx \in P \\
    \overline{\vu} \hspace{0.5cm}  \vx \not\in P \,,
    \end{cases}
\end{equation} 
wherein $\lambda > 1$ is a chosen scale constant. In other words, as long as $\vx \in P$, after switching to a mode $\vu$, rather than switching when the CLF $\dot{V}_\vu(\vx) = 0$, we force the system to switch when $\dot{V}_\vu(\vx) \geq -\frac{\phi_\vu(\vx)}{\lambda}$. We also force the system to switch to a mode $\vu^*$ for which $\dot{V}_{\vu^*}(\vx) \leq - \phi_{\vu^*}(\vx)$. The definition of a non-Zeno CLF guarantees that such a mode $\vu^*$ will exist.

\begin{figure}[t]
\begin{center}
\includegraphics[width=0.4\textwidth]%
    {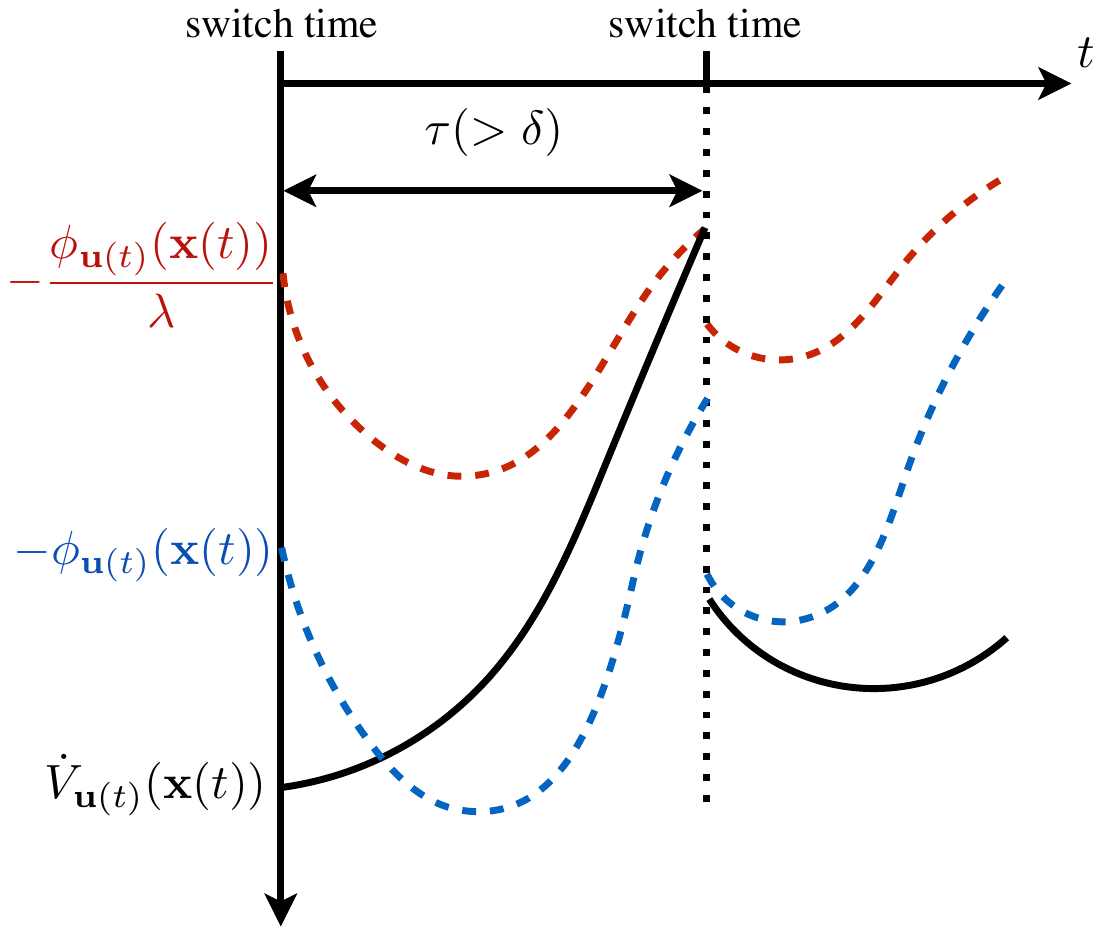} \\
Right after switching, $\dot{V}_{\vu(t)}(\vx(t)) < -\phi_{\vu(t)}(\vx(t))$ and the next switch occur only if $\dot{V}_{\vu(t)}(\vx(t)) \geq -\frac{\phi_{\vu(t)}(\vx(t))}{\lambda}$ ($\lambda = 2$).
    \caption{A conceptual view of $\dot{V}$ over time.}
\label{fig:min-dwell-time}
\end{center}
\end{figure}

The key observation here is that the constraints on $\ddot{V_\vu}$, $\dot{V_\vu}$, $\dot{\phi_\vu}$ altogether guarantee that when the controller switches at time $T$, the controller need not switch again in interval $[T, T + \delta)$ for some fixed  $\delta > 0$ (i.e.,  $\dot{V}_{\vu}(\vx(t)) \leq -\frac{\phi_\vu(\vx)}{\lambda}$ for all $t \in [T, T+\delta]$ ). A bound for $\delta$ is given directly in the proof of the following theorem.

\begin{theorem} \label{thm:asymp-stable} Given a plant $\P$, a compact regions $P$, and a non-Zeno CLF $V(\vx)$ (and associated region $P^*$ for $V$ w.r.t. $P$), assuming $\vx(0) \in P^*$, a switching function that admit the description of Eq.~\eqref{eq:controller} results in a system which satisfies the following properties. \begin{compactenum}
   \item all the traces of the system are time-divergent,
   \item $P^*$ is a positive invariant,
   \item system is asymptotically stable.
   \end{compactenum}
\end{theorem}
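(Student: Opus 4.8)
The plan is to hang the entire argument on a single scalar quantity along a trace,
\[
r(t) := \frac{\dot{V}_{\vu(t)}(\vx(t))}{\phi_{\vu(t)}(\vx(t))},
\]
which compares the instantaneous decrease of the CLF against its positive-definite lower bound $\phi_\vu$. I would then prove the three claims in the order: positive invariance of $P^*$ first, the minimum dwell-time bound second (the technical heart), and finally time-divergence and asymptotic stability as corollaries. Throughout, I fix the compact set $P^*$ (and the finitely many modes $\vu \in U$) as the ``bounded region'' in which the $\phi$-boundedness constants are read off.

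First I would establish positive invariance. As long as $\vx(t) \in P^* \subseteq P$ and $\vx(t) \neq \vzero$, the controller of Eq.~\eqref{eq:controller} never selects the $\vx \not\in P$ branch, and in both remaining branches the active mode satisfies $\dot{V}_\vu(\vx) \leq -\frac{\phi_\vu(\vx)}{\lambda} < 0$: for the ``stay'' branch this is immediate, and for the ``switch'' branch Eq.~\eqref{eq:RCLF} guarantees that a mode $\vu^*$ with $\dot{V}_{\vu^*}(\vx) \leq -\phi_{\vu^*}(\vx) < 0$ exists. Since $\vx(\cdot)$ is continuous (only $\vu$ jumps), the right derivative of $t \mapsto V(\vx(t))$ is strictly negative, so $V$ is strictly decreasing and $V(\vx(t)) \leq V(\vx(0)) \leq \beta$. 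Taking $\tau := \sup\{t : (\forall s \in [0,t])\ \vx(s) \in P^*\}$ and using that $P^* = V^{\leq\beta}$ is closed, a standard continuity argument shows $\vx(t) \in P^*$ for all $t$ in the domain of the trace, and in particular the $\overline{\vu}$ branch is never used.

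The crux is the dwell-time estimate. Immediately after a switch to mode $\vu$ we have $r \leq -1$, and the controller keeps this mode until $r$ first reaches $-\frac{1}{\lambda}$; hence during any dwell interval $r$ must cross the band $[-1,-\frac{1}{\lambda}]$ upward. Differentiating along the flow of mode $\vu$ gives
\[
\dot{r} = \frac{\ddot{V}_\vu}{\phi_\vu} - r\,\frac{\dot{\phi}_\vu}{\phi_\vu},
\]
and this is precisely where conditions~\eqref{eq:phi-q-1} and~\eqref{eq:phi-q-2} enter: on $P^*$ they supply constants $\Lambda_1,\Lambda_2 \geq 0$ with $\ddot{V}_\vu \leq \Lambda_1 \phi_\vu$ and $\dot{\phi}_\vu \leq \Lambda_2 \phi_\vu$, so $\dot{r} \leq \Lambda_1 + (-r)\Lambda_2$. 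The key trick is to restrict attention to the band, where $-r \leq 1$ and therefore $\dot{r} \leq \Lambda_1 + \Lambda_2$ is uniformly bounded, even though $r$ itself may be far below $-1$ at the switching instant. Crossing a band of width $1 - \frac{1}{\lambda}$ at bounded speed then takes at least $\delta := \frac{\lambda-1}{\lambda(\Lambda_1+\Lambda_2)} > 0$. I expect this step---recognizing that the right object to monitor is the ratio $r$, and that it only needs to be controlled on the band rather than globally---to be the main obstacle.

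The remaining claims follow cleanly. For time-divergence, the dwell-time $\delta$ bounds the number of switches on any $[0,T]$ by $1 + T/\delta$, ruling out Zeno behavior, while positive invariance confines $\vx(t)$ to the compact set $P^*$ so no finite-escape occurs; as each $f_\vu$ is smooth (hence locally Lipschitz), the solution extends over every inter-switch interval and therefore to all $t \geq 0$. For asymptotic stability, $V$ is positive definite and radially unbounded and, by the invariance argument, satisfies $\dot{V}(\vx(t)) \leq -\frac{1}{\lambda}\phi_{\vu(t)}(\vx(t))$ along every trace, i.e. $V$ is a common Lyapunov function with a positive-definite decrease rate. Lyapunov stability is immediate from monotonicity of $V$ on its sublevel sets, and for attractivity I would note that on the compact set $P^* \setminus \B_\epsilon(\vzero)$ the finitely many positive-definite $\phi_\vu$ admit a uniform bound $\dot{V} \leq -c < 0$, forcing the trajectory into $\B_\epsilon(\vzero)$ in finite time for every $\epsilon > 0$; combining the two gives convergence to $\vzero$ and hence asymptotic stability on $P^*$.
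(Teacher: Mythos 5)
Your proof is correct, and the two easy claims (positive invariance via monotonicity of $V$ on the sublevel set $V^{\leq\beta}$, and asymptotic stability via a uniform negative bound on $\dot{V}$ outside any ball around the origin) match the paper's in substance, even though the paper phrases attractivity through a Bolzano--Weierstrass limit point and a contradiction rather than the $\varepsilon$--$\delta$ Lyapunov argument you give. The genuine difference is in the dwell-time step. The paper integrates the $\phi$-boundedness conditions directly: a Gr\"onwall estimate gives $\phi_\vu(\vx(t)) \leq e^{\Lambda_2\delta}\phi_\vu(\vx(t_1))$, which is then fed into upper and lower bounds on $\dot{V}_\vu(\vx(t_2))$ to produce the transcendental inequality $1 < h(\delta)$ with $h(\delta) = \lambda\Lambda_1 e^{\Lambda_2\delta}\delta/(\lambda - e^{\Lambda_2\delta})$, and the dwell time is the implicit quantity $h^{-1}(1)$, obtained after a monotonicity analysis of $h$. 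You instead track the ratio $r = \dot{V}_\vu/\phi_\vu$, observe that $\dot{r} \leq \Lambda_1 + (-r)\Lambda_2$, and exploit that the controller forces $r$ to traverse the band $[-1,-1/\lambda]$ between consecutive switches, where $-r \leq 1$ caps the speed at $\Lambda_1+\Lambda_2$. This buys an explicit closed-form bound $\delta = \frac{\lambda-1}{\lambda(\Lambda_1+\Lambda_2)}$ and avoids both the Gr\"onwall step and the analysis of $h$; the cost is that differentiating the quotient requires $\phi_\vu(\vx(t)) > 0$, i.e.\ $\vx(t)\neq\vzero$, along the dwell interval, an edge case the paper also has to dispatch by a separate convention for trajectories reaching the origin at a switch time (you should state that convention, and note the degenerate case $\Lambda_1+\Lambda_2=0$, where no further switch ever occurs and the bound is vacuously infinite). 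Your bootstrapping of invariance, dwell time, and global existence on the maximal interval of definition is the same standard argument the paper uses, just in a different order.
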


We note that stability is an infinite horizon property and we need to make sure $\vx(t) \neq \bot$ ($\vx(\cdot)$ is defined) for all time $t$. The first property guarantees Zeno behavior is avoided and the second property guarantees that the trace does not escape to infinity in finite time. Having these properties, it is guaranteed that the trace is defined at all times and $\vx(t) \neq \bot$ (see Chapter~\ref{ch:intro} for details).

\begin{proof}
Consider the class of suitable feedback laws $\SK$ defined in Eq.~\eqref{eq:controller}. We note that $\SK$ is non-empty as $\vu^*$ exists by construction of the CLF $V$. Also, we assume if $\vx(t) = \vzero$ at a switch time, then the controller switches to a mode $\vu_0$ where $f_{\vu_0}(\vx) = \vzero$ and the state remains in $\vx$ forever, without switching.

In the first step, we show that there exists a min dwell-time between two switching times as long as the state remains in set $P$. Assume there is a switch time $t_1$ s.t. $\vx(t_1) \in P^*$ (and as mentioned $\vx(t_1) \neq \vzero$) and mode switches to $\vu$. Thus,
\begin{equation}
    \label{eq:dot-v-t1-eq}
    \dot{V}_{\vu}(\vx(t_1)) = \nabla V . f(\vx(t_1), \K(\vx(t_1), \vu(t_1))) < - \phi_\vu(\vx(t_1))\,.
\end{equation}
Let $t_2 > t_1$ be the next time instance when the controller switches to mode $\vu^*$ and for all $t \in [t_1, t_2]$, $\vx(t) \in P$.
By definition of the controller, we can conclude
\begin{equation}
    \label{eq:dot-v-t2-eq}
    \dot{V}_{\vu}(\vx(t_2)) \geq -\frac{\phi_\vu(\vx(t_2))}{\lambda}\,.
\end{equation}
It is sufficient to show $\delta = t_2 - t_1$ has a lower bound and it can not be arbitrarily small.

From Eqs.~\eqref{eq:phi-q-1} and~\eqref{eq:phi-q-2} and boundedness of $P$ there are constants $\Lambda_1$ and $\Lambda_2$  s.t. for all $\vx \in P$
\begin{align}
\ddot{V}_{\vu}(\vx) &\leq \Lambda_1 \phi_\vu(\vx) \label{eq:ddot-v-upper-bound}\\
\dot{\phi}_{\vu}(\vx) &\leq \Lambda_2 \phi_\vu(\vx)\,.
\label{eq:dot-phi-upper-bound}
\end{align}

From the fact that $\vx(t) \in P$ and Eq.~\eqref{eq:dot-phi-upper-bound}, we get
\begin{align*}
(\forall t \in [t_1, t_2]) \ \ \
    \phi_\vu(\vx(t)) = \phi_\vu(\vx(t_1)) + \int_{t_1}^{t} 
                                \dot{\phi}_{\vu}(\vx(\tau)) d\tau 
                            \leq \phi_\vu(\vx(t_1)) + \int_{t_1}^{t} 
                                \Lambda_2 \phi_\vu(\vx(\tau)) d\tau\,,
\end{align*}
and therefore $(\forall t \in [t_1, t_2])$
\begin{equation}
    \label{eq:phi-t-ineq}
    \phi_\vu(\vx(t)) \leq e^{\Lambda_2 \delta} \phi_\vu(\vx(t_1))\,.
\end{equation}

A lower bound on $\dot{V}_{\vu}(\vx(t_2))$ by Eq.~\eqref{eq:dot-v-t2-eq}
\begin{equation}
    \label{eq:dot-v-t2-lower-bound}
    \dot{V}_{\vu}(\vx(t_2)) \geq -\frac{\phi_\vu(\vx(t_2))}{\lambda} 
    \overset{Eq.~\eqref{eq:phi-t-ineq}}{\implies}                         \dot{V}_{\vu}(\vx(t_2)) \geq 
                             -\frac{e^{\Lambda_2 \delta} 
                             \phi_\vu(\vx(t_1))}{\lambda} \,.
\end{equation}

Also
\begin{align*}
(\forall t \in [t_1, t_2])& \\
    \dot{V}_{\vu}(\vx(t)) &= \dot{V}_{\vu}(\vx(t_1)) + \int_{t_1}^{t} 
                                \ddot{V}_{\vu}(\vx(\tau)) d\tau \\
    \overset{Eq.~\eqref{eq:ddot-v-upper-bound}}{\implies}
                        &\leq \dot{V}_{\vu}(\vx(t_1)) + \Lambda_1 \int_{t_1}^{t} 
                                \phi_\vu(\vx(\tau)) d\tau \\
    \overset{Eq.~\eqref{eq:phi-t-ineq}}{\implies}
                        &\leq \dot{V}_{\vu}(\vx(t_1)) + \Lambda_1 \int_{t_1}^{t} 
                                e^{\Lambda_2 \delta} \phi_\vu(\vx(t_1)) d\tau\,,
\end{align*}
and therefore an upper bound on $\dot{V}_{\vu}(\vx(t_2))$ is
\begin{align}
    \dot{V}_{\vu}(\vx(t_2)) &\leq \dot{V}_{\vu}(\vx(t_1)) + \Lambda_1  
                                e^{\Lambda_2 \delta} \phi_\vu(\vx(t_1)) \delta 
                                \nonumber \\
    \overset{Eq.~\eqref{eq:dot-v-t1-eq}}{\implies}
    &< -\phi_\vu(\vx(t_1)) + \Lambda_1  
                                e^{\Lambda_2 \delta} \phi_\vu(\vx(t_1)) \delta\,.
    \label{eq:dot-v-t2-upper-bound}
\end{align}

From Eqs.~\eqref{eq:dot-v-t2-lower-bound} and~\eqref{eq:dot-v-t2-upper-bound}

\begin{align*}
     -\frac{e^{\Lambda_2 \delta} \phi_\vu(\vx(t_1))}{\lambda}
     \leq 
     \dot{V}_{\vu}(\vx(t_2))
     <
     -\phi_\vu(\vx(t_1)) + \Lambda_1  
                                e^{\Lambda_2 \delta} \phi_\vu(\vx(t_1)) \delta\,,
\end{align*}
and finally assuming $\vx(t_1) \not= \vzero$ ($\phi(\vx(t_1) \neq 0$), we have $\phi(\vx(t_1)) > 0$:
\begin{align*}
    -\frac{e^{\Lambda_2 \delta}}{\lambda} < 
    - 1 + \Lambda_1  
                                e^{\Lambda_2 \delta} \delta
    \implies 1 &< \frac{\lambda \Lambda_1 e^{\Lambda_2 \delta} \delta}
    {(\lambda - e^{\Lambda_2 \delta}) } = h(\delta)\,.
\end{align*}

Notice that if $\delta \geq \frac{\log(\lambda)}{\Lambda_2}$, then $\delta$ has a lower-bound. Otherwise $h(\delta) \geq0$. Furthermore
\begin{compactenum}
\item $0 \leq e^{\Lambda_2 \delta} < \lambda \iff h(\delta) > 0$
\item $h$ is a monotone function of $\delta$ in domain
  $0 \leq e^{\Lambda_2 \delta} < \lambda$ by showing that
  $\frac{dh}{d\delta}$ is positive
    \item $h(0) = 0$ and 
    $\lim_{\delta \rightarrow \frac{\log(\lambda)}{\Lambda_2}} h(\delta) = +\infty$ ($h:\reals^+ \mapsto \reals^+$).
\end{compactenum}
Therefore, $h^{-1}:\reals^+ \mapsto \reals^+$ is defined and
$h^{-1}(1) < \delta$. Therefore, $h^{-1}(1)$ is a
lower bound on $\delta$, and min dwell-time exists as long as $\vx \in P$. 

 Next, we prove that $P^*$ is a positive invariant. Recall that $P^* :\ V^{\leq \beta} \subseteq P$ is a compact set containing $\vzero$. 
 Assume $\vx(0) \in P^*$. We obtain $V(\vx(0)) \leq \beta$.  Also, the $\K$ function ensures that as long as $\vx(t) \in P$,
$\dot{V}_{\vu(t)}(\vx(t)) \leq -\frac{\phi_{\vu(t)}(\vx(t))}{\lambda} < 0$ by definition of $\K$. Therefore
\begin{align*}
    V(\vx(t_b)) = V(\vx(0)) + \int_0^{t_b} \dot{V}_{\vu(t)}(\vx(t)) \ dt \leq V(\vx(0))\,.
\end{align*}
Since $V(\vx(0)) \leq \beta$, we have $V(\vx(t_b)) \leq \beta$ for all $t_b \geq 0$. Therefore, by definition $\vx(t) \in P^*$.

In the next step of the proof, we want to show the system is asymptotically stable. Since $P^*$ is a compact set, $(\forall t > 0) \ \vx(t) \in P^*$ and time diverges, by Bolzano-Weierstrass Theorem~\cite{bartle2011introduction}, $\vx(t)$ converges to some $\vx^* \in P^*$.
Assume $\vx^* \neq \vzero$ and therefore $\min_\vu (\phi_\vu(\vx^*)) = R > 0$. By continuity of $\phi_\vu$ and divergence of time, one can find $\epsilon > 0$ s.t.
\begin{align*}
(a): & \ (\exists T > 0) \ (\forall t \geq T) \ \vx(t) \in \scr{B}_{\epsilon}(\vx^*) \subseteq P^* \\
(b): & \ (\forall \vu \in U) \ (\forall \vx \in \scr{B}_{\epsilon}(\vx^*)) \ 
\phi_\vu(\vx) \geq \frac{R}{2}\,.
\end{align*}
Also, $V$ is bounded in $\scr{B}_{\epsilon}(\vx^*)$ and decreases through time. Formally,
\begin{align*}
(\forall t \geq T) \ \dot{V}_{\vu(t)}(\vx(t)) &\leq 
-\frac{\phi_{\vu(t)}(\vx(t))}{\lambda}
                                 \leq     -\frac{R}{2 \lambda}\,.
\end{align*}
As a result 
\begin{align*}
V(\vx(T+t)) &= V(\vx(T)) + \int_{T}^{T+t} \dot{V}_{\vu(\tau)}(\vx(\tau)) d\tau \\ 
             &\leq V(\vx(T)) -\frac{R}{2 \lambda} t\,.
\end{align*}
which means eventually $V$ becomes negative as time goes to infinity and that is a contradiction. Therefore, $\vx^* = \vzero$ and the system is asymptotically stable.
\QED
\end{proof}

\paragraph{Implementation:}
Once a non-Zeno CLF is found, the controller can be implemented in many ways. We can implement an operational amplifier circuit that selects the appropriate mode by computing $\phi(\vx)$ and $\dot{V_\vu}(\vx)$ from the state feedback $\vx$. Such a circuit will not need to know the minimum dwell time: however, the minimum dwell time provides us with a guideline on the maximum permissible delay.

Another approach is to find an under-approximation of min-dwell time $\underline{\delta}$ and use a discrete time controller that changes the modes every $\underline{\delta}$ time units. Yet another software-based solution is to use a model predictive control scheme, where the controller switches to a mode $\vu$ at time $t_s$ given $\vx(t_s)$ ($\dot{V}_\vu(\vx(t_s)) < - \phi_\vu(\vx(t_s))$). Also, the controller predicts the first time instance $t_f > t_s$ s.t. $\dot{V}_{\vu}(\vx(t_f)) \geq -\frac{\phi_\vu(\vx(t_f))}{\lambda}$.
Then the controller sets a wake-up timer for time $t=t_f$ and re-evaluates at that point. The minimum dwell time provides a design guideline on the shortest possible wake-up time $t_f$.

\section{Reference Tracking}
Stability for non-holonomic systems is a challenging problem. Brockett~\cite{brockett1983} showed that even a simple unicycle model is not stabilizable using continuous feedback laws. However, continuous feedback laws exist for stabilization to non-stationary \emph{reference trajectories}. In this section, we consider stability w.r.t. a reference trajectory using control Lyapunov functions. We discuss smooth feedback laws derived from the CLFs, noting that the case of switched feedback law extraction may be derived using the ideas from the previous section.

\subsection{Trajectory Tracking}
Stability property appears in trajectory tracking, wherein the goal is to stabilize to a reference (feasible) trajectory $\vx_r(t)$, assuming $\vx_r(t)$ (and $\vu_r(t)$) is provided as input.
Formally, let $\vx_d(t):\ \vx(t) - \vx_r(t)$ describe the deviation from the reference trajectory state at time $t$. As illustrated in Figure~\ref{fig:new-system}(a), the goal is to stabilize $\vx_d$ to the equilibrium $\vzero$ under the time-varying reference frame that places $\vx_r(t)$ as the origin at time $t$. The dynamics for $\vx_d$ is defined as: $\dot{\vx}_d = f(\vx, \vu) - \vr(t)$, wherein $\dot{\vx}_r = \frac{d\vx_r}{dt} = \vr(t)$. Furthermore, $\K$ is a function of both $\vx$ and $t$.

In order to stabilize $\vx_d$ ($\vx(t) \rightarrow \vx_r(t)$ as $t \rightarrow \infty$), we define a CLF as a function $V$ over $\vx_d$, that respects the following constraints:
\begin{equation}\label{eq:tt-clf}
\begin{array}{l}
    \noindent\textbf{(a)}:\ V (\vzero) = 0\ \mbox{and}\ (\forall\vx_d \neq \vzero)\; V(\vx_d) > 0 \\
    \noindent\textbf{(b)}:\ (\forall\ t \geq 0,\ \vx_d \neq \vzero)\; (\exists \vu \in U) \; \dot{V}(t, \vx, \vu) < 0\,.\\
\end{array}
\end{equation}
Note that $V$ is a function of $\vx_d$ and depends on $t$ only through $\vx_d$. Also its derivative depends on $\vx_d$, $t$, and $\vu$:
\begin{align*}
\dot{V}(t, \vx_d, \vu) & = \nabla V(\vx_d) \cdot \dot{\vx}_d = \nabla V(\vx_d) \cdot (f(\vx_d + \vx_r(t), \vu) - \vr(t))\,.
\end{align*}

Conditions (a) and (b) are identical to those in Definition~\ref{def:clf}, requiring the function $V$ to be positive definite, and the ability to choose controls to decrease $V$. The only difference is the fact that the dynamics are time-varying and $f$ depends on $t$.

\begin{theorem}\label{thm:trajectory-tracking}
  Given a plant $\P$, a reference trajectory $\vx_r(\cdot)$, and a radially unbounded smooth function $V$ which respects Eq.~\eqref{eq:tt-clf}, there exists an almost-everywhere smooth feedback law $\K$, for which $\vx_d$ stabilizes to the origin.
\end{theorem}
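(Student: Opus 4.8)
The conditions (a) and (b) of Eq.~\eqref{eq:tt-clf} are exactly the control Lyapunov conditions of Definition~\ref{def:clf}, the only novelty being that the dynamics governing $\vx_d$ are non-autonomous. Since we restrict to control-affine smooth plants (Eq.~\eqref{eq:control-affine-plant}), I would first expose this affine structure in the deviation coordinates. Writing $\dot{\vx}_d = f_0(\vx_d + \vx_r(t)) - \vr(t) + \sum_{i=1}^m f_i(\vx_d + \vx_r(t))\,u_i$, the time derivative of $V$ is affine in $\vu$, namely $\dot V(t,\vx_d,\vu) = a(t,\vx_d) + \mathbf{b}(t,\vx_d)^t \vu$, where $a(t,\vx_d) = \nabla V(\vx_d)\cdot(f_0(\vx_d+\vx_r(t)) - \vr(t))$ and $\mathbf{b}(t,\vx_d)$ has components $b_i = \nabla V(\vx_d)\cdot f_i(\vx_d+\vx_r(t))$. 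In these terms, condition (b) is equivalent to: for every $t$ and every $\vx_d \neq \vzero$, whenever $\mathbf{b}(t,\vx_d)=\vzero$ we must have $a(t,\vx_d) < 0$. This is precisely the hypothesis needed to invoke the universal construction.

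The plan is then to build $\K$ pointwise in $t$ by Sontag's universal formula: set $\K(t,\vx_d) = \vzero$ where $\mathbf{b}(t,\vx_d)=\vzero$, and $\K(t,\vx_d) = -\frac{a + \sqrt{a^2 + |\mathbf{b}|^4}}{|\mathbf{b}|^2}\,\mathbf{b}$ otherwise. Substituting yields $\dot V = -\sqrt{a^2+|\mathbf{b}|^4}$ whenever $\mathbf{b}\neq\vzero$ and $\dot V = a < 0$ on the singular set, so in every case $\dot V(t,\vx_d,\K(t,\vx_d)) < 0$ for $\vx_d \neq \vzero$. Because $V$, $f_0$, the $f_i$, and the reference $\vx_r(\cdot)$ are smooth, the maps $a$ and $\mathbf{b}$ are smooth, and the formula is smooth on the open set $\{\mathbf{b}\neq\vzero\}$; the zero set of $\mathbf{b}$, together with the origin, is where smoothness may be lost, yielding the claimed almost-everywhere smooth $\K$ exactly as in the Artstein/Sontag setting.

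It remains to show the closed loop drives $\vx_d$ to $\vzero$. Since $V$ is positive definite and radially unbounded with $\dot V \leq 0$, every sublevel set $\{V \leq c\}$ is compact and positively invariant, which gives Lyapunov stability (part (a) of trajectory tracking) and confines each trajectory to the compact set $\{V \leq V(\vx_d(0))\}$. For attractivity, $t \mapsto V(\vx_d(t))$ is nonincreasing and bounded below, hence converges to some $V^\star \geq 0$; I would argue $V^\star = 0$ by contradiction, assuming $\vx_d(t)$ stays forever in the compact annulus $A = \{V^\star \leq V \leq V(\vx_d(0))\}$ that excludes the origin, and integrating $\dot V$ to obtain $\int_0^\infty \sqrt{a^2+|\mathbf{b}|^4}\,dt = V(\vx_d(0)) - V^\star < \infty$.

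The hard part is exactly this last step, since it is where the non-autonomous character bites: unlike the autonomous CLF case where LaSalle's invariance principle applies verbatim, here the decrease rate $\sqrt{a^2+|\mathbf{b}|^4}$ depends on $t$ through $\vx_r(t)$, so a pointwise positive lower bound on $A$ need not be uniform in time. I expect to close the gap by requiring the reference to remain in a bounded region (equivalently, that $a$ and $\mathbf{b}$ stay uniformly bounded along $\vx_r(\cdot)$); then $(t,\vx_d)\mapsto(a,\mathbf{b})$ ranges over a compact set on which $\sqrt{a^2+|\mathbf{b}|^4}$ attains a strictly positive minimum over $A$, forcing the integral to diverge and contradicting its finiteness, so $V^\star=0$ and $\vx_d(t)\to\vzero$. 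An alternative is Barbalat's lemma: establishing uniform continuity of $\dot V$ along the trajectory gives $\dot V \to 0$, which again forces $\vx_d \to \vzero$.
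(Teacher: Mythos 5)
Your proposal follows essentially the same route as the paper: both obtain the almost-everywhere smooth feedback from Sontag's universal formula applied to the control-affine deviation dynamics, and then conclude convergence from the strict decrease of $V$. The only place you go beyond the paper is the convergence step: the paper simply cites LaSalle's theorem, whereas you correctly observe that the invariance principle does not apply verbatim to the non-autonomous deviation dynamics and supply a workable fix (a uniform-in-time lower bound on the decrease rate obtained from boundedness of the reference, or alternatively Barbalat's lemma) --- a subtlety the paper's own proof glosses over.
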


\begin{proof}
    We appeal directly to Sontag's result to obtain an almost-everywhere smooth feedback function $\K(\vx_d, t)$ that guarantees that $\dot{V}(\vx_d(t), t) < 0$ with $\vx_d(t) \not= 0$~\cite{sontag1989universal}. The feedback law is guaranteed to be smooth everywhere except at the origin.

 Since $\dot{V}(\vx_d(t), t) \leq 0$, and $V(\vx_d(t))$ is radially unbounded, La Salle's theorem~\cite{lasalle1960} guarantees that the dynamical system will stabilize to the 0 level set $V^{=0}:\ \{\vx_d |\ V(\vx_d) = 0 \} = \{\vzero\}$.
  \QED
\end{proof}

One of the key drawbacks of trajectory tracking is that it specifies the reference trajectory $\vx_r(t)$ along with the \emph{reference timing}, wherein the state $\vx_r(t)$ must ideally be achieved at time $t$. This poses a challenge for control design unless the timing is designed very carefully. Imagine, a reference trajectory that traverses a winding and hilly road at constant speeds. This compels the control to continually accelerate the vehicle on upslopes only to ``slam the brakes'' on downhill sections~\cite{Hauser+Saccon/2006/Motorcycle}.

\begin{figure}[t]
\begin{center}
    \includegraphics[width=0.8\textwidth]{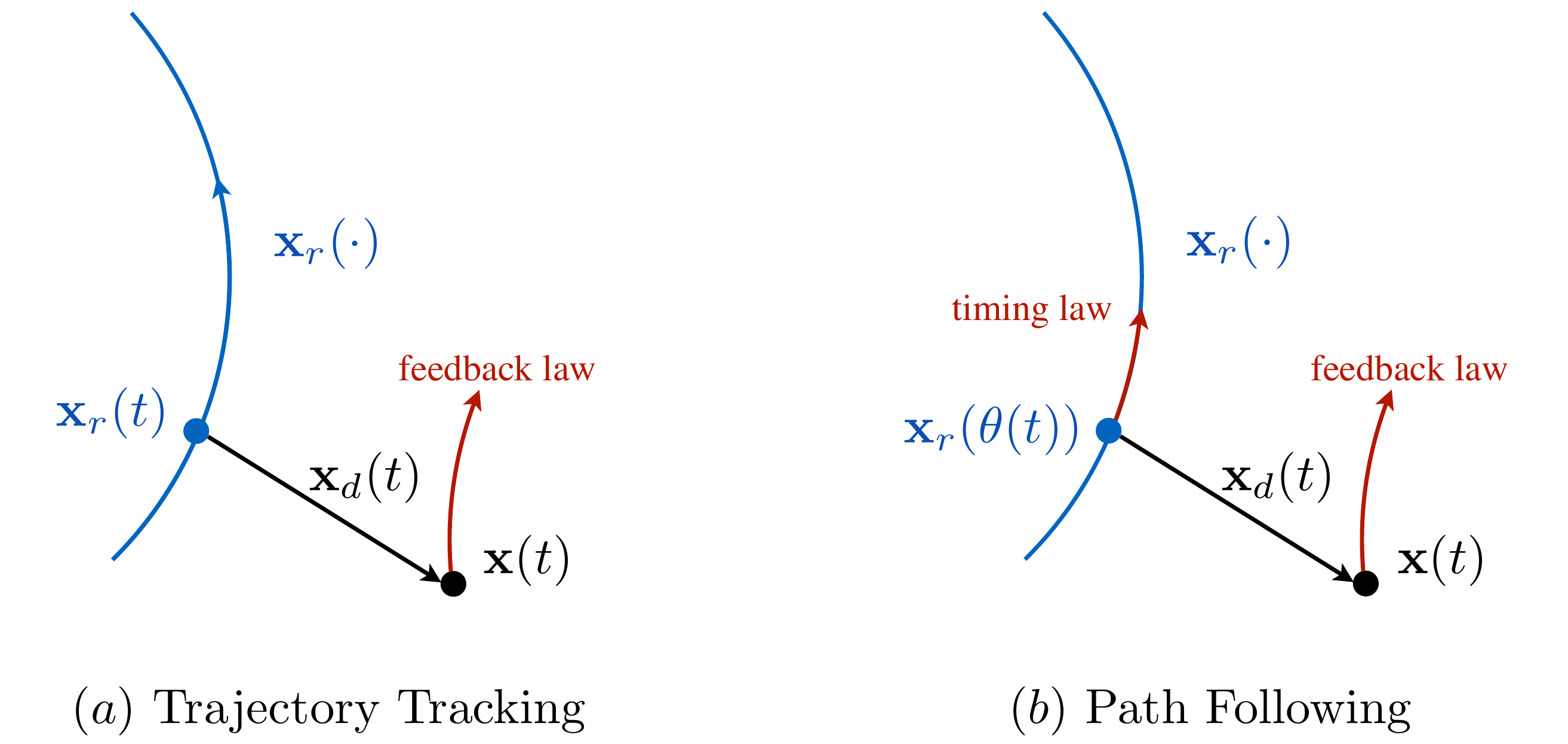}
\end{center}
\caption{A schematic view of a system along with a reference trajectory.}\label{fig:new-system} 
\end{figure}

\subsection{Path Following}
As an alternative to trajectory tracking, path following allows the user to specify a reference path parameterized with a scalar, $\theta$ (instead of time), $\vx_r(\theta)$ yields a state for each $\theta$ and $\frac{d\vx_r}{d\theta} = \vr(\theta)$.

As proposed by Hauser et al., one could define $\Pi$ as a function that maps a state $\vx$ to the closest state on the reference trajectory $\vx_r(\cdot)$, using an auxiliary map $\pi$ ~\cite{HAUSER1995,Saccon+Hauser+Beghi/2013/Virtual}:
\[ \pi(\vx):\ \underset{\theta}{\mbox{argmin}} \ ||\vx -
  \vx_r(\theta)||_P^2 \ \,, \ \Pi(\vx) :\ \vx_r(\pi(\vx)) \,, \] where $||\vx||_P^2 :\ \vx^t P \vx$ is a Lyapunov function for the linearized dynamics around the reference trajectory.  In order to stabilize the system to the reference path, Hauser et al.\ propose to decrease the value of $||\vx - \Pi(\vx)||_P^2$. However, as the projection function $\pi$ can get complicated, they use local approximations of $\pi$.

Following this, others have proposed to design a control law for a virtual input $u_0$ that controls $\theta$ as a function of time (called the \emph{timing feedback law}). In other words $u_0$ controls the progress (or sometimes regress) along the reference~\cite{SKJETNE2004,AGUIAR2004}. Therefore, the deviation is now defined as $\vx_d(t):\ \vx(t) - \vx_r(\theta(t))$\,, wherein $\diff{\theta}{t} = u_0$. As depicted in Figure~\ref{fig:new-system}(b), $\theta$ is mapped to a state on the path $\vx_r(\theta)$. For example, Faulwasser et al.~\cite{Faulwasser2014} design the timing law as a function of $\theta$, the deviation ($||\vx_d||$ for state feedback systems), and their higher derivatives:
\[
g(\theta^{(k)},||\vx_d||^{(k)},\ldots,\theta,||\vx_d||,u_0) = 0\,,
\]
wherein $\theta^{(k)}$ is the $k^{th}$ derivative of $\theta$. However, defining the function $g$ is a nontrivial problem.

We now present the design of a path following scheme by specifying a timing law as well as control for deviation from the reference trajectory based on a control Lyapunov function.

Let input path $(\vx_r(\cdot), \vu_r(\cdot))$ be a valid trace when  $\dot{\theta} = 1$. We define a new coordinate system, in which the state of the system is $\vz^t :\ [\theta, \vx_d^t]$, wherein $\vx(t) = \vx_d(t) + \vx_r(\theta(t))$ is the original state of the system. Also, the control inputs are $\vv^t :\ [u_0, \vu^t]$.  We assume $\theta$ is directly controllable using $u_0$ ($\dot{\theta} = u_0$). Therefore, $\dot{\vx}_r = \frac{d \vx_r}{d\theta} \dot{\theta} = \vr(\theta) u_0$, and thus
\[
\dot{\vx}_d = f(\vx_d + \vx_r(\theta), \vu) - \vr(\theta) u_0 \,.
\]

Our goal is to design a control that seeks to stabilize $\vx_d = \vzero$, which in turn guarantees $\vx(t) \rightarrow \vx_r(\theta(t))$  as $t \rightarrow \infty$ meaning $\vx(\cdot)$ converges to path $\{\vx_r(\theta) \ | \ \theta \in \reals\}$.  We define a CLF as a function $V$ over $\vx_d$, that respects the following constraints:
\begin{equation}\label{eq:pf-clf}
\begin{array}{l}
    \noindent\textbf{(a)}:\ V (\vzero) = 0\ \mbox{and}\ (\forall\vx_d \neq \vzero)\; V(\vx_d) > 0 \\
    \noindent\textbf{(b)}:\ (\forall\ \theta,\ \vx_d \neq \vzero)\; (\exists \vu \in U, u_0 \in U_0) \; \dot{V}(\theta, \vx, u_0, \vu) < 0\,,\\
\end{array}
\end{equation}
where $U_0:\ [\underline{u_0}, \overline{u_0}]$. Note that $V$ depends on $\theta$ only through $\vx_d$. Also, its derivative depends of $\vx_d, \theta, u_0, \vu$.
\begin{align*}
\dot{V}(\theta, \vx_d, u_0, \vu) & = \nabla V(\vx_d) \cdot \dot{\vx}_d  = \nabla V(\vx_d) \cdot (f(\vx_d + \vx_r(\theta), \vu) - \vr(\theta) u_0)\,.
\end{align*}

The $(\forall\ \theta)$ quantifier in condition (b) guarantees that this decrease is achieved no matter where the current reference state $\vx(\theta)$ lies relative to the current state $\vx_d$. Also, the value of $u_0$ (rate of change for $\theta$) is obtained through the feedback law derived from the CLF $V$.

\begin{theorem}\label{thm:path-following}
  Given a plant $\P$, a reference trajectory $\vx_r(\cdot)$, and a radially unbounded smooth function $V$ which respects Eq.~\eqref{eq:pf-clf}, there exists an almost-everywhere smooth feedback law $\K$ for which $\vx_d(\cdot)$ stabilizes to the origin.
\end{theorem}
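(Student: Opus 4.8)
The plan is to mirror the proof of Theorem~\ref{thm:trajectory-tracking}, now treating the progress variable $\theta$ as the ``time-like'' parameter and the stacked input $\vv = [u_0, \vu]$ as an augmented control. As in the trajectory-tracking case, I would proceed in two stages: first invoke a universal-formula construction (Sontag~\cite{sontag1989universal}) to extract an almost-everywhere smooth feedback that forces $\dot V$ negative off the origin, and then invoke La~Salle's invariance principle~\cite{lasalle1960} to upgrade this instantaneous decrease into convergence $\vx_d \to \vzero$.

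The first key step is to observe that the $\vx_d$-dynamics is \emph{affine} in the augmented control $\vv$. Since the plant is control affine (Eq.~\eqref{eq:control-affine-plant}), $f(\vx_d + \vx_r(\theta), \vu) = f_0(\vx_d + \vx_r(\theta)) + \sum_i f_i(\vx_d + \vx_r(\theta)) u_i$, and the extra term $-\vr(\theta) u_0$ is linear in $u_0$; hence $\dot{\vx}_d$ is affine in $\vv = [u_0, \vu]$ with coefficients that are smooth functions of $(\theta, \vx_d)$. This is precisely the structure the universal formula requires. Condition (b) of Eq.~\eqref{eq:pf-clf}, which holds uniformly for \emph{every} $\theta$ and every $\vx_d \neq \vzero$, supplies exactly the control-Lyapunov inequality the construction needs at each parameter value $\theta$. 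I would therefore apply the universal formula pointwise in $\theta$ to obtain a feedback $\K(\theta, \vx_d)$ that is smooth away from $\vx_d = \vzero$ and makes $\dot V(\theta, \vx_d, \K) < 0$ whenever $\vx_d \neq \vzero$. Because $U$ and $U_0$ are bounded, I would appeal to the saturated-input extensions of Sontag's formula~\cite{LIN1991UNIVERSAL,suarez2001global} rather than the unconstrained version.

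With the feedback fixed, the closed loop in $\vz = [\theta, \vx_d]$ is autonomous, so La~Salle applies. Since $V$ depends only on $\vx_d$, is radially unbounded, and satisfies $\dot V \leq 0$ with equality exactly on $\{\vx_d = \vzero\}$ (because $\nabla V(\vzero) = \vzero$), the sublevel set $\{V(\vx_d) \leq V(\vx_d(0))\}$ is compact in $\vx_d$ and positively invariant; trajectories converge to the largest invariant set contained in $\{\dot V = 0\} = \{\vx_d = \vzero\}$, forcing $\vx_d(t) \to \vzero$.

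The main obstacle is the non-compactness introduced by $\theta$: the progress variable may run off to infinity, so the full state $\vz$ need not lie in a compact set and La~Salle's invariance principle cannot be invoked verbatim. I would close this gap with a Barbalat-type argument: $\dot V \leq 0$ makes $V(\vx_d(t))$ convergent, so $\int_0^\infty \dot V\,dt$ is finite, and provided $\dot V$ is uniformly continuous along the trajectory — which holds once $\vx_r(\theta)$, $\vr(\theta)$ and their derivatives stay bounded over the traversed range of $\theta$, e.g.\ for periodic or bounded-curvature paths — one concludes $\dot V(\vz(t)) \to 0$ and hence $\vx_d \to \vzero$. This boundedness hypothesis on the reference, together with the choice of a saturated universal formula, is where the argument needs the most care.
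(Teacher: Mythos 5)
Your proposal follows essentially the same route as the paper: invoke Sontag's universal formula to obtain an almost-everywhere smooth feedback forcing $\dot{V}<0$ for $\vx_d\neq\vzero$, then conclude convergence of $\vx_d$ to $\vzero$ by an invariance/limiting argument. The paper's own proof simply cites La~Salle on the level set $\{(\vzero,\theta)\ |\ \theta\in\reals\}$ without addressing the non-compactness in $\theta$ or the saturation of $U$ and $U_0$, so the extra care you take --- the saturated universal formula and the Barbalat-type argument under a boundedness hypothesis on the reference --- only tightens steps the paper leaves implicit.
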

\begin{proof}
    We appeal directly to Sontag's result to obtain an almost-everywhere smooth feedback function $\K(\vx_d, \theta)$ which guarantees that $\dot{V}(\vx_d(t), \theta(t)) < 0$ for all $(\vx_d(t), \theta(t))$ with $\vx_d(t) \not= 0$~\cite{sontag1989universal}. The feedback law is guaranteed to be smooth everywhere except at the origin.

  Since $\dot{V}(\vx_d(t), \theta(t)) \leq 0$, and $V(\vx_d(t))$ is radially unbounded, La Salle's theorem~\cite{lasalle1960} guarantees that the dynamical system will stabilize to the 0 level set $V^{=0}:\ \{(\vx_d, \theta) |\ V(\vx_d) = 0 \} = \{ (\vzero, \theta)\ |\ \theta \in \reals \}$, which is the reference trajectory since $\vx_d(\cdot) = \vzero$. \QED
\end{proof}

Thus far, we have just demonstrated how CLFs can be used to decide on a timing law as well as a control to nullify the deviation to $0$. However, this does not address a key requirement of progress: we need to ensure that $\dot{\theta} > 0$ so that we make progress along the reference from one end to another. This can be enforced by saturating $u_0 = \dot{\theta}$:$u_0 \in [\underline{u_0}, \overline{u_0}]$, where $0 < \underline{u_0} \leq 1 \leq \overline{u_0} < \infty$.
Notice that we force $1 \in [\underline{u_0}, \overline{u_0}]$ to ensure that the reference trajectory remains feasible.

\section{Safety}\label{sec:certificate-safety}
We now consider control certificates for proving safety properties. A Control Barrier Function (CBF) is a control certificate for safety property. Recall that safety is defined over sets $I$ and $S$ ($I \implies \Box S$). To avoid finite escape time, we assume $S$ and $I$ are compact, and $I \subseteq \inter{S}$ for simplicity. A control barrier function is a smooth function $ B $, which is positive on the boundary of the safe set and negative on the boundary of the initial set. Also, value of $ B $ decreases when $ B(\vx) = 0 $. 

\begin{definition}[CBF\cite{taly2010switching,xu2015robustness,blanchini2008set}]
    Given a plant $\P$, a smooth function $B$ is a CBF iff the following conditions hold 
\begin{equation} \label{eq:cbf-original}
\begin{array}{rrl}
\noindent\textbf{(a)}:\ & (\vx \in \partial S) &\ B(\vx) > 0 \\
\noindent\textbf{(b)}:\ & (\vx \in I) & \ B(\vx) < 0    \\
\noindent\textbf{(c)}:\ & (\vx \in S \setminus \inter{I}) & \ \left( B(\vx) = 0  \implies (\exists \vu \in U)\  \dot{B}(\vx, \vu) \ < \ 0 \right) \,.
\end{array}
\end{equation}
\end{definition}

Intuitively $B^{=0} \cap S$ forms a barrier and $\partial S$ is unreachable (Figure~\ref{fig:barrier}).
Eq.~\eqref{eq:cbf-original}, combined with the smoothness of $B$ and $f$ ensures that as soon as the state is adequately ``close" to the barrier, it is possible to choose a control mode that ensures the local decrease of the $B$.
In other words, set $ P^* :\ (B^{\leq 0} \cup I) \cap S$ is a control invariant, and it is guaranteed that there exists a switching strategy for which the trace never leaves $ P^* $ (safety is guaranteed).

We note that especial care is necessary if one wishes to use the following condition instead of  condition (c) in Eq.~\eqref{eq:cbf-original}:
\[
(\vx \in S \setminus \inter{I}) \ \left( B(\vx) = 0  \implies (\exists \vu \in U) \ \dot{B}(\vx, \vu) \ {\color{red} \leq} \ 0 \right)\,,
\]
as discussed in~\cite{blanchini2008set,taly2009synthesizing,ravanbakhsh2017class}.

\begin{figure}[t]
\begin{center}
\includegraphics[width=0.4\textwidth]%
    {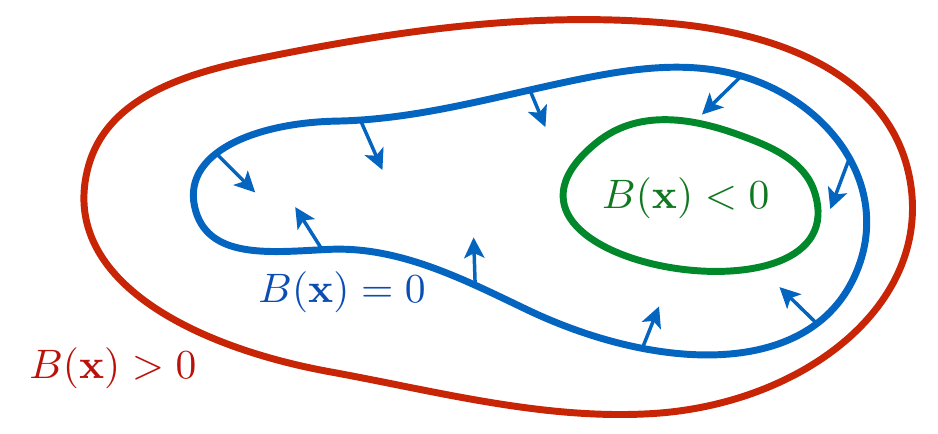}
\\ Boundary of safe and initial regions are shown in red and green, respectively. The barrier is shown in blue.
    \caption{A schematic view of a control barrier function.}
\label{fig:barrier}
\end{center}
\end{figure}

\begin{theorem}[Wieland et al.~\cite{WIELAND2007462}]\label{thm:cbf} Given a plant $\P$, sets $I$ and $S$, and a control barrier function $ B $, there is a \emph{smooth} feedback law which guarantees $ I \Rightarrow \Box S $ for the closed-loop system.
\end{theorem}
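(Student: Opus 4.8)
The plan is to use the control-affine structure of the plant together with Sontag's universal formula to build a single smooth feedback $\K$ under which the zero level set of $B$ becomes a strictly inward-pointing barrier, and then to certify safety by a Nagumo-type first-crossing argument. Writing the dynamics as in Eq.~\eqref{eq:control-affine-plant}, the Lie derivative of $B$ along the closed loop is affine in the input: with $a(\vx) := \nabla B(\vx)\cdot f_0(\vx)$ and $b(\vx) := (\nabla B(\vx)\cdot f_1(\vx), \dots, \nabla B(\vx)\cdot f_m(\vx))$ we have $\dot B(\vx,\vu) = a(\vx) + b(\vx)\,\vu$. Condition (c) of Eq.~\eqref{eq:cbf-original} then reads, on $\{B = 0\}\cap S$, exactly as the Artstein decrease condition for $B$: whenever $b(\vx) = 0$ the uncontrolled drift already satisfies $a(\vx) < 0$, and whenever $b(\vx) \neq 0$ the input can make $\dot B$ arbitrarily negative. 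This is precisely the hypothesis under which Sontag's construction (the one invoked earlier to turn CLFs into smooth stabilizers) produces a smooth feedback. Note also that since $B < 0$ on $I$, the barrier satisfies $\{B=0\}\cap S \subseteq S \setminus \inter{I}$, so condition (c) is available on the entire barrier-in-$S$.

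Second, I would construct $\K$. By continuity of $a$, $b$ and compactness of $S$, the Artstein condition $b(\vx)=0 \Rightarrow a(\vx) < 0$ that holds on $\{B = 0\}\cap S$ persists on an open strip $N := \{\,|B| < \varepsilon\,\}\cap \inter{S}$ around the barrier for some $\varepsilon > 0$. On $N$ define $\K$ by the universal formula
\[
\K(\vx) := \begin{cases} -\dfrac{a(\vx) + \sqrt{a(\vx)^2 + |b(\vx)|^4}}{|b(\vx)|^2}\, b(\vx)^t & b(\vx) \neq 0 \\ \vzero & b(\vx) = 0\,, \end{cases}
\]
which is smooth on $N$ (the denominator vanishes only where $b=0$, and there the Artstein condition forces the numerator to vanish to matching order, so the quotient extends smoothly) and yields $\dot B(\vx,\K(\vx)) < 0$ throughout $N$. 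Away from the barrier there is no decrease requirement, so I would take any fixed smooth feedback and glue it to the Sontag feedback with a smooth bump function supported in $N$ and equal to $1$ on $\{|B| \le \varepsilon/2\}$; the result is a globally smooth $\K : X \mapsto U$ with $\dot B(\vx,\K(\vx)) < 0$ on all of $\{B = 0\}\cap S$.

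Third, I would prove invariance of $\{B \le 0\}\cap S$ and hence safety. Fix a trace with $\vx(0)\in I$, so $B(\vx(0)) < 0$ and $\vx(0)\in\inter{S}$. Suppose $B(\vx(t)) > 0$ for some $t$, and let $t^\ast := \inf\{t : B(\vx(t)) > 0\}$; continuity of $B\circ\vx$ gives $t^\ast > 0$ and $B(\vx(t^\ast)) = 0$, and $\vx(t^\ast)\in S$ as a limit of points in $S$. Since $B > 0$ on $\partial S$, the point $\vx(t^\ast)$ lies in $\inter{S}$, so our feedback gives $\dot B(\vx(t^\ast),\K(\vx(t^\ast))) < 0$; but then $B\circ\vx$ is strictly decreasing through $t^\ast$, forcing $B > 0$ immediately \emph{before} $t^\ast$, contradicting the definition of $t^\ast$. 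Hence $B(\vx(t)) \le 0$ for all $t$, so the trace never meets $\partial S$ (where $B>0$) and stays in $\inter{S}\subseteq S$. Because $S$ is compact and $f(\cdot,\K(\cdot))$ is continuous and therefore bounded on $S$, the trace cannot escape in finite time (Chapter~\ref{ch:intro}), so $\vx(t)$ is defined for all $t\ge 0$ and $\vx(t)\in S$, i.e.\ $I\Rightarrow\Box S$.

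The main obstacle is producing a single globally smooth feedback from a decrease condition that is only assumed on the barrier $\{B = 0\}$: verifying that Sontag's quotient is genuinely smooth where $b$ vanishes (rather than merely continuous) and that the bump-function blending preserves both smoothness and the sign of $\dot B$ on the barrier. The invariance step itself is short once strict decrease on $\{B=0\}\cap\inter{S}$ is in hand, so essentially all of the technical effort sits in the feedback construction and its regularity.
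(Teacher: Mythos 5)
The paper does not actually prove Theorem~\ref{thm:cbf}: it is imported from Wieland et al.~\cite{WIELAND2007462} and used as a black box, so there is no in-paper proof to compare against. Your reconstruction is, however, the standard argument for this result, and it is the same technique the thesis uses for the closely related Theorem~\ref{thm:unified-lyapunov-barrier}: Sontag's universal formula supplies the smooth feedback, and a first-crossing argument (cf.\ Lemma~\ref{lem:rws} and the proof of Theorem~\ref{thm:safety-switch}) converts strict decrease of $B$ on its zero level set into forward invariance of $B^{\leq 0}\cap S$ and hence safety. Your observation that condition (c) of Eq.~\eqref{eq:cbf-original} is exactly the Artstein condition for $B$ restricted to the barrier, and that this open condition propagates to a tubular neighborhood of the compact set $\{B=0\}\cap S$ by continuity, is the right way to localize Sontag's construction; the first-crossing step is sound (for $t<t^\ast$ one has $B\leq 0$, so the trace cannot yet have met $\partial S$ where $B>0$, hence $\vx(t^\ast)\in\inter{S}$ and the decrease condition applies there).

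Two points need care. First, and most substantively, Sontag's formula returns values in all of $\reals^m$, whereas the theorem is stated for a plant with $U\subseteq\reals^m$ and condition (c) only guarantees a decreasing input \emph{inside} $U$; if $U$ is saturated, your $\K$ need not map into $U$, so the proof as written covers only $U=\reals^m$. The thesis itself flags this limitation of Sontag's formula and points to the ball/polytope variants~\cite{LIN1991UNIVERSAL,suarez2001global}, which is presumably part of why the general statement is delegated to Wieland et al. Second, your bump function is asked to equal $1$ on $\{|B|\leq\varepsilon/2\}$ while being supported in $N\subseteq\inter{S}$; since $\{|B|\leq\varepsilon/2\}$ may contain points outside $S$, you should instead take it to equal $1$ on a compact neighborhood of $\{B=0\}\cap S$ contained in $N$ --- that is all the invariance argument uses. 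Neither issue changes the architecture of the proof.
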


As we will discuss later, it is hard to search for such a function in our framework because of the equality constraint. Inspired by Kong et al.~\cite{kong2013exponential}, we find that the following relaxation is particularly effective in our experiments:
\begin{equation}
\label{eq:cbf-safety}
    \begin{array}{rrl}
\noindent\textbf{(a)}:\ & (\vx \in \partial S) & \ B(\vx) > 0 \\
\noindent\textbf{(b)}:\ &  (\vx \in  I) & \  B(\vx) < 0 \\
\noindent\textbf{(c)}:\ &  (\vx \in S \setminus \inter{I}) & \ (\exists \vu \in U) \ \left( \begin{array}{c}
        \dot{B}(\vx,\vu) - \lambda^* B(\vx) < 0 
        \lor
        \dot{B}(\vx,\vu) + \lambda^* B(\vx) < 0
    \end{array} \right) \,,
    \end{array}
\end{equation}
for some constant $\lambda^*$.

Intuitively, by choosing $\lambda^* = 0$, the conditions are similar to that of Barrier functions, introduced by Prajna et al.~\cite{prajna2004safety}. Also, as $|\lambda^*|$ gets larger, the conditions get less conservative. In fact, for a large enough $\lambda^*$, new conditions are equivalent to original ones, considering smoothness of $B$ and $f$, and compactness of $S$.
This would yield a trade-off, the degree of which is decided by $\lambda^*$. We also note that this formulation is less conservative than the one introduced by Kong et al.~\cite{kong2013exponential} as our formulation uses two exponential conditions which only forces decrease in the value of $B$ around $\partial P^*$.
Xu et al.~\cite{xu2015robustness} discuss conditions for the so-called ``control zeroing'' barrier functions for safety. The difference lies in the fact that our method provides a logical formulation (using disjunction), while their approach uses a bilinear formulation.

We now show how a \emph{switched} feedback law is extracted from a CBF. Given a CBF $B$ satisfying Eq.~\eqref{eq:cbf-safety}, the choice of a switching mode is dictated by a function $\cond^*_\vu(\vx) :\ \min (\dot{B}_\vu(\vx) +\lambda^* B(\vx), \dot{B}_\vu(\vx) -\lambda^* B(\vx))$ defined for each state $\vx \in S \setminus \inter{I}$ and mode $\vu \in U$.
First, we note that by compactness of $I$, there exists a $\lambda > \lambda^*$ s.t.
\[
(\forall \vx \in {\color{red} S}) \ (\exists \vu \in U) \ \cond_\vu(\vx) < 0\,,
\]
where $\cond_\vu(\vx) :\ \min (\dot{B}_\vu(\vx) +\lambda B(\vx), \dot{B}_\vu(\vx) -\lambda B(\vx))$. Then, by compactness of $S$
\[
(\forall \vx \in S) \ (\exists \vu \in U) \ \cond_\vu(\vx) < {\color{red}-\epsilon}\,,
\]
for some constant $\epsilon > 0$. Ultimately, we wish $\cond_{\vu(t)}(\vx(t)) < 0$ for all $t \geq 0$.
The idea is that whenever (at time $T$) the controller switches to a mode $\vu$, we make sure $\cond_\vu(\vx(T)) < -\epsilon$. Moreover, one can guarantee $\cond_\vu(\vx(t)) < 0$ for all time $t \in [T, T+\delta]$, for some minimum time $\delta > 0$ (there is no need for switching).
We define a class of suitable feedback laws on compact sets to discuss the details.

\begin{definition}[Class of Suitable Feedback Laws] Given a plant $\P$, a compact set $S$, and condition functions $\cond_\vu :\ X \mapsto \reals$ (for all $\vu \in U$), a class of suitable feedback laws $\SK$ is defined as
\begin{equation}\label{eq:suitable-feedbacks}
    \K \in \SK(\cond, S, \veps) \iff \K(\vu, \vx) \mbox{ matches} \begin{cases} 
    \vu^* & 
        \cond_{\vu}(\vx) \geq - \epsilon_s \land \cond_{\vu^*}(\vx) < - \epsilon \land \vx \in S\\
    \vu &  \cond_{\vu}(\vx) < -\epsilon_s \land \vx \in S \\
    \overline{u} & \vx \not\in S \,,
    \end{cases}
\end{equation}
where $\veps :\ (\epsilon, \epsilon_s)$ and $0 < \epsilon_s < \epsilon$.
\end{definition}
In other words, as long as $\vx(\cdot) \in S$, for a feedback function $\K \in \SK(\cond, S, \epsilon)$, the controller mode $\vu$ persists when $\cond_\vu(\vx(t)) < -\epsilon_s$ and the mode changes only when $-\epsilon_s \leq \cond_\vu(\vx(T+\delta))$ and for all $t \in [T, T+\delta] \ \cond_\vu(\vx(t)) \leq -\epsilon_s$. We wish to show under some mild conditions, there is a lower bound on $\delta$ (min dwell-time exists).

\begin{lemma}[Min Dwell-time for $\SK$]~\label{lem:min-dwell-exists} Given a switched plant $\P$, and a compact set $S$, assuming (i)
    $(\forall \vx \in S) \ (\exists \vu \in U ) \ \cond_\vu(\vx) < 0$, (ii)
    $\cond_\vu$ is continuous and piecewise differentiable, and (iii) $\dot{x}$ is bounded on $S$,
    there exists $\veps$, s.t. $\SK(\cond, S, \veps)$ is non-empty, and for any $\K \in \SK(\cond, S, \veps)$, (a) min dwell-time exists, and (b) $\cond_{\vu(t)}(\vx(t)) \leq -\epsilon_s$, as long as $\vx(\cdot) \in S$.
\end{lemma}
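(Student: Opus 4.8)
The plan is to mirror the dwell-time argument of Theorem~\ref{thm:asymp-stable}, but with the $\phi$-boundedness machinery replaced by a uniform Lipschitz bound on $\cond_\vu$ coming from compactness. First I would fix the constants $\veps = (\epsilon, \epsilon_s)$. Since $U$ is finite, the map $\vx \mapsto \min_{\vu \in U} \cond_\vu(\vx)$ is continuous on the compact set $S$ (assumption (ii)) and, by assumption (i), strictly negative there; hence it attains a negative maximum, say $-c < 0$. Choosing any $0 < \epsilon < c$ yields $(\forall \vx \in S)(\exists \vu \in U)\ \cond_\vu(\vx) \leq -c < -\epsilon$, so whenever the first case of Eq.~\eqref{eq:suitable-feedbacks} is triggered a valid switching target $\vu^*$ with $\cond_{\vu^*}(\vx) < -\epsilon$ always exists. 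Picking any $\epsilon_s$ with $0 < \epsilon_s < \epsilon$ then makes $\SK(\cond, S, \veps)$ non-empty.

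Next I would establish the dwell-time. By assumption (ii) and compactness of $S$, each $\cond_\vu$ is Lipschitz on $S$; let $L$ be a common Lipschitz constant over the finitely many modes. By assumption (iii), $\|\dot\vx\| = \|f_\vu(\vx)\| \leq M$ on $S$. Consider a switch at time $t_1$ into a mode $\vu$, so that $\cond_\vu(\vx(t_1)) < -\epsilon$ by the definition of $\SK$, and let $t_2 > t_1$ be the next switching instant, at which (again by Eq.~\eqref{eq:suitable-feedbacks}) $\cond_\vu(\vx(t_2)) \geq -\epsilon_s$. While the trajectory stays in $S$ on $[t_1,t_2]$ we have $\|\vx(t_2) - \vx(t_1)\| \leq M(t_2-t_1)$, hence
\[
\epsilon - \epsilon_s < \cond_\vu(\vx(t_2)) - \cond_\vu(\vx(t_1)) \leq L\,\|\vx(t_2)-\vx(t_1)\| \leq L M\,(t_2 - t_1)\,.
\]
This gives the explicit lower bound $t_2 - t_1 > \underline{\delta} := \frac{\epsilon - \epsilon_s}{LM} > 0$, which is the min dwell-time of part (a) (and in particular rules out Zeno behavior on $S$).

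Part (b) then follows from the switching logic itself: immediately after any switch the new mode satisfies $\cond_\vu(\vx) < -\epsilon < -\epsilon_s$, and Eq.~\eqref{eq:suitable-feedbacks} retains a mode exactly while $\cond_\vu(\vx(t)) < -\epsilon_s$, forcing a switch only when this value first reaches $-\epsilon_s$. Consequently $\cond_{\vu(t)}(\vx(t)) \leq -\epsilon_s$ holds throughout any interval on which $\vx(\cdot) \in S$, provided (as in Theorem~\ref{thm:asymp-stable}) the initial mode is itself chosen by the controller so that $\cond_{\vu(0)}(\vx(0)) < -\epsilon_s$.

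The main obstacle is justifying the uniform rate bound on $\cond_\vu$ along trajectories despite $\cond_\vu$ being only piecewise differentiable. The cleanest route, the one used above, is to avoid differentiating along the path entirely: I invoke the Lipschitz continuity of $\cond_\vu$ on the compact set $S$ — which follows from assumption (ii) plus compactness and is transparent for the concrete form $\cond_\vu(\vx) = \min(\dot B_\vu(\vx) + \lambda B(\vx),\, \dot B_\vu(\vx) - \lambda B(\vx))$, a minimum of two smooth functions — together with the bounded-speed bound from (iii). The composition $t \mapsto \cond_\vu(\vx(t))$ is then Lipschitz in $t$, and the estimate above holds without ever inspecting the measure-zero set where the two smooth branches of the minimum coincide. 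A minor point to handle carefully is the initialization at $t=0$, which I address by assuming the feedback law is applied at time $0$ as well, so that a ``bad'' initial mode switches immediately.
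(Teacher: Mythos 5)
Your proof is correct and follows essentially the same route as the paper's: compactness of $S$ plus finiteness of $U$ to extract a uniform margin $\epsilon$, then a Lipschitz bound on $\cond_\vu$ combined with the speed bound on $\dot{\vx}$ to get the dwell-time $\underline{\delta} = \frac{\epsilon-\epsilon_s}{LM}$, which matches the paper's $\delta = \frac{\epsilon-\epsilon_s}{\Lambda}$ with $\Lambda = A_\vu B_\vu$. Your explicit care about not differentiating the piecewise-differentiable $\cond_\vu$ along the trajectory, and the remark on initializing the mode at $t=0$, are welcome clarifications but do not change the argument.
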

\begin{proof}
     $\cond_\vu$ is continuous and piecewise differentiable. Then, by compactness of $S$, it is guaranteed that 
\begin{equation}\label{eq:mode-exists}
(\forall \vx \in S) \ (\exists \vu \in U) \ \cond_\vu(\vx) < -\epsilon\,,
\end{equation}
 for some $\epsilon > 0$. Let $\epsilon_s > 0$ be a design parameter s.t. $\epsilon_s < \epsilon$. $\SK(\cond, S, \veps)$ is non-empty as $\vu^*$ exists for the first case by Eq.~\eqref{eq:mode-exists}.

We now show min dwell-time exists.
     Suppose the controller switches to mode $\vu$ at time $T$. Let $T + \delta$ be the earliest time instant, where $\cond_{\vu}(\vx(T+\delta)) \geq - \epsilon_s$ while at the same time
    \[ 
        (\forall t \in (T , T + \delta ])\  \vu(t) = \vu,\ \vx(t) \in S \,.
    \]
    At time $T$, $\cond_{\vu}(\vx(T)) < -\epsilon$ and at time $T+\delta$, $\cond_\vu(\vx(T+\delta)) \geq -\epsilon_s$. Note that $\cond_\vu$ is a continuous and piecewise differentiable function of $\vx$. As a result, there is a Lipschitz constant $A_\vu$ such that
        \[ | \cond_\vu(\vx(T+ \tau)) - \cond_\vu(\vx(T)) | \leq A_\vu ||\vx(T+\tau) - \vx(T) || \,. \]
        Now, $\vx(t)$ in the interval $t \in [T,T+\tau]$ is the solution of an ODE where $\dot{x}$ is bounded. As a result, there exists a constant $B_\vu$ such that 
        \[ \|\vx(T+\tau) - \vx(T) \| \leq B_\vu \tau \,. \]
        Combining, we have $ | \cond_\vu(\vx(T+ \tau)) - \cond_\vu(\vx(T)) |  \leq  \Lambda_\vu \tau $ wherein $\Lambda_\vu = A_\vu B_\vu$. 

    Let $\Lambda = \max_{\vu \in U} \Lambda_\vu$. Let us choose a $\delta$ such that 
    \begin{align} \label{eq:delta-lower-bound}
       \delta :\ \frac{\epsilon - \epsilon_s}{\Lambda} \,.
    \end{align}
    The above arguments show that for all $t \in [T,T+\delta]$,
    \[
    | \cond_{\vu}(\vx(T+t)) - \cond_\vu(\vx(T)) | \leq \Lambda_\vu t \leq \Lambda \delta \leq \epsilon - \epsilon_s \,.
      \]
          Therefore, using that $\cond_\vu(\vx(T)) < -\epsilon$, we obtain for all $t \in [T,T+\delta]$, $\cond_\vu(\vx(T+t)) < -\epsilon_s$, and the controller would not switch in interval $[T, T+\delta)$. Therefore, min dwell-time exists as long as $\vx(\cdot) \in S$. Also, by the definition (Eq.~\eqref{eq:suitable-feedbacks}), $\cond_{\vu(t)}(\vx(t)) \leq -\epsilon_s$ as long as $\vx(\cdot) \in S$. \QED
 \end{proof}
 
 Now, we can show any feedback function $\K \in \SK(\cond, S, \veps)$ is a solution to the safety property.

\begin{theorem}
\label{thm:safety-switch}
    Given a switched plant $\P$, sets $I$ and $S$ and a control barrier function $ B $ (satisfying Eq.~\eqref{eq:cbf-safety}), there exist $\lambda$ and $\veps$ s.t. $\SK(\cond, S, \veps)$ (wherein $\cond_\vu :\ \dot{B}_\vu +\lambda B$) is non-empty, and any feedback function $\K \in \SK(\cond, S, \veps)$  guarantees the safety property defined by $S,I$:\ $I \implies \Box S$.
\end{theorem}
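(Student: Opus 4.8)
The plan is to verify the hypotheses of the Min Dwell-time Lemma (Lemma~\ref{lem:min-dwell-exists}) for an appropriately chosen condition function and scale $\lambda$, invoke it to obtain non-emptiness of $\SK$, a minimum dwell-time, and the persistent bound $\cond_{\vu(t)}(\vx(t)) \le -\epsilon_s$, and then run a comparison (Gr\"onwall-type) argument on $B$ along the trace to show that the state never crosses the barrier and hence never leaves $S$.

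First I would fix the condition function to the symmetric form $\cond_\vu(\vx) :\ \min(\dot{B}_\vu(\vx) + \lambda B(\vx),\, \dot{B}_\vu(\vx) - \lambda B(\vx)) = \dot{B}_\vu(\vx) - \lambda|B(\vx)|$, which coincides with $\dot{B}_\vu(\vx) + \lambda B(\vx)$ exactly on the region $B \le 0$ where the trace will be confined (when $B<0$ the first argument is the smaller one). I would then choose $\lambda$ so that $(\forall \vx \in S)(\exists \vu \in U)\ \cond_\vu(\vx) < 0$. On $S \setminus \inter{S}$-minus-$\inter{I}$ this is inherited from Eq.~\eqref{eq:cbf-safety}(c): the disjunction there is precisely $\dot{B}_\vu - \lambda^*|B| < 0$, and since $|B| \ge 0$ and $\lambda > \lambda^*$ the same mode gives $\dot{B}_\vu - \lambda|B| < 0$. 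On $I$, compactness together with Eq.~\eqref{eq:cbf-safety}(b) yields a bound $B \le -c < 0$, while smoothness of $B$ and $f$ and finiteness of $U$ bound $\dot{B}_\vu$ on the compact set $I$; taking $\lambda$ large then makes $\dot{B}_\vu + \lambda B < 0$ for every mode there. Since $B$ and $f$ are smooth and $U$ is finite, $\cond_\vu$ is continuous and piecewise differentiable and $\dot{\vx}$ is bounded on the compact set $S$, so hypotheses (i)--(iii) of Lemma~\ref{lem:min-dwell-exists} hold. The lemma then supplies $\veps = (\epsilon, \epsilon_s)$ with $\SK(\cond, S, \veps)$ non-empty, a minimum dwell-time, and $\cond_{\vu(t)}(\vx(t)) \le -\epsilon_s$ for as long as $\vx(\cdot) \in S$.

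The core safety step is a barrier comparison argument. Assume $\vx(0) \in I$, so $B(\vx(0)) < 0$ by Eq.~\eqref{eq:cbf-safety}(b), and let $[0, T_{\max})$ be the maximal interval on which the trace is defined and remains in $S$. On this interval Lemma~\ref{lem:min-dwell-exists} gives $\cond_{\vu(t)}(\vx(t)) \le -\epsilon_s$; wherever additionally $B(\vx(t)) \le 0$, the symmetric condition reduces to $\cond_{\vu(t)} = \dot{B}_{\vu(t)} + \lambda B$, so $\frac{d}{dt}\!\left(B(\vx(t))e^{\lambda t}\right) = \cond_{\vu(t)}(\vx(t))\, e^{\lambda t} \le -\epsilon_s e^{\lambda t} < 0$. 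Since $B(\vx(0))<0$, a continuity/bootstrapping argument then shows $B(\vx(t))$ can never return to $0$ on $[0,T_{\max})$: if $t_0$ were the first time with $B(\vx(t_0))=0$, then $B(\vx(t))\le 0$ on $[0,t_0]$, the weighted quantity is strictly decreasing there, and $B(\vx(t_0))e^{\lambda t_0} < B(\vx(0)) < 0$ contradicts $B(\vx(t_0))=0$. Hence $B(\vx(t)) < 0$ throughout $[0, T_{\max})$, so the trace stays off $\partial S$, where Eq.~\eqref{eq:cbf-safety}(a) forces $B > 0$. If $T_{\max}$ were finite, the trace would have to either exhibit Zeno behavior, excluded by the minimum dwell-time; escape to infinity in finite time, excluded by boundedness of $\dot{\vx}$ on $S$; or reach $\partial S$, excluded by $B < 0$ and continuity. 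Therefore $T_{\max} = \infty$, the trace remains in $S$ for all time, and $I \implies \Box S$ holds for every $\K \in \SK(\cond, S, \veps)$.

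The step I expect to be the main obstacle is the reconciliation between the disjunctive CBF condition of Eq.~\eqref{eq:cbf-safety}(c) and the single scalar condition function that the controller of Eq.~\eqref{eq:suitable-feedbacks} actually monitors: one must argue simultaneously that a feasible mode exists on all of $S$ (including the thin layer where $B > 0$ near $\partial S$, so that Lemma~\ref{lem:min-dwell-exists} applies without gaps) and that on the invariant region $B \le 0$ the active branch is always $\dot{B}_\vu + \lambda B$, so that the $e^{\lambda t}$-weighted comparison closes. The remaining care is purely bookkeeping: threading the ``as long as $\vx(\cdot) \in S$'' qualifier through the maximal-interval argument so that the dwell-time and comparison estimates are never invoked outside their region of validity.
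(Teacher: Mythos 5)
Your proposal is correct and follows essentially the same route as the paper's proof: choose $\lambda > \lambda^*$ using compactness of $I$ so that $\cond_\vu(\vx) :\ \min(\dot{B}_\vu + \lambda B, \dot{B}_\vu - \lambda B)$ admits a feasible mode on all of $S$, invoke Lemma~\ref{lem:min-dwell-exists} for non-emptiness, minimum dwell-time, and the persistent bound $\cond_{\vu(t)}(\vx(t)) \leq -\epsilon_s$, and then show the trace can never cross $B = 0$. The only cosmetic difference is that the paper derives the contradiction pointwise at the first exit time from $P^* :\ B^{\leq 0} \cap S$ (where $B = 0$ forces $\dot{B}_{\vu(T)} \leq -\epsilon_s < 0$ against $\dot{B}_{\vu(T)} \geq 0$), whereas you integrate the $e^{\lambda t}$-weighted comparison; both close the argument.
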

\begin{proof}
    Recall that $P^* :\ B^{\leq 0} \cap S$.
    By condition (a) of Eq.~\eqref{eq:cbf-safety}, $I \subset B^{< 0}$ ($I \subset \inter{B^{\leq 0}}$). Therefore, $I \subset \inter{P^*}$ ($\vx(0) \in \inter{P^*}$). Also, by condition (b) of Eq.~\eqref{eq:cbf-safety} $B^{\leq 0} \subset \inter{S}$ and therefore $P^* \subset \inter{S}$.
    
    By Eq.~\eqref{eq:cbf-safety} and compactness of $I$, there exists a $\lambda > \lambda^*$ s.t.
    \[
    (\forall \vx \in {\color{red} S}) \ (\exists \vu \in U) \cond_\vu(\vx) < 0\,, 
    \]
    where $\cond_\vu(\vx):\ \min(\dot{B}_\vu(\vx) +\lambda B(\vx),\dot{B}_\vu(\vx) -\lambda B(\vx))$. By Lemma~\ref{lem:min-dwell-exists} there exists $\veps$ s.t. $\K \in \SK(\cond, S, \veps)$ is non-empty. Moreover, for any $\K \in \SK(\cond, S, \veps)$ (i) there exists a min dwell-time and (ii) $\cond_{\vu(t)}(\vx(t)) \leq -\epsilon_s$ as long as $\vx(t) \in S$.
    Initially $\vx(0) \in P^*$. As long as the trace remains in $P^* \subseteq S$, time diverges and safety holds. If the safety property is violated, $\vx(\cdot)$ must reach boundary of $P^*$ and leave $P^*$ at some time $t$.
    Let $T$ be the first time this happens. This means that $B(\vx(T)) = 0$ and $\dot{B}_{\vu(T)}(\vx(T)) \geq 0$. However, for all time $t \leq T$,  
    \[
    \cond_{\vu(t)}(\vx(t)) :\ \min (\dot{B}_{\vu(t)}(\vx(t)) + \lambda B(\vx(t)) \lor \dot{B}_{\vu(t)}(\vx(t)) - \lambda B(\vx(t))) \leq -\epsilon_s \,,
    \]
which is a contradiction. As a result, the trace never leaves $P^*$ and time diverges. \QED
\end{proof}

\begin{example}[Inverted Pendulum]\label{ex:control-implementation}
Consider an inverted pendulum on a cart. The problem is to keep the pendulum in a vertical position $S :\ \{[\theta \ \omega]^t \ | \ \theta \in [-1, 1] \land \omega \in [-3, 3] \}$, having $I:\ \B_{0.5}(\vzero)$. The dynamics of the system is described by the ODEs
\[
\dot{\theta} =\omega \, \ \ , \ \ \, \dot{\omega}  =\frac{g}{l}sin(\theta)-\frac{h}{ml^2}\omega+\frac{1}{ml}cos(\theta)u \,,
\]
where $g = 9.8$, $h = 2$, $l = 2$, and $m = 0.5$. We use Taylor expansion to approximate the trigonometric function. We assume the control feedback is discrete, $\vu \in U :\{-30, 0, 30\}$. A CBF $B$ is provided: $B([\theta \ \omega]^t) :\ 10 \ \theta^2 + 1.5312 \ \theta\omega + 2.5859 \ \omega^2$.
In order to implement a discrete-time controller, we need a minimal dwell-time $\delta$ for switching strategy. Such $\delta$ exists by Theorem~\ref{thm:safety-switch}, and we can find a lower-bound for $\delta = 0.2ms$ from the expressions derived in its proof (using $\lambda^* = 0$, $\epsilon = 0.05$, $\epsilon_s = 0.01$).
We implement the plant and the controller in a Simulink diagram in MATLAB and simulate the system for $2.5s$ for initial state $[\theta \ \omega] = [1 \ -2]$. Figure~\ref{fig:simulation-inverted}  shows the simulation trace for $\theta$, $\omega$, and $B$. In fact, initially $\vx(0) \not\in I$. Nevertheless, the state is in the controllable region and once value of $B$ reaches $0$, the state will remain safe forever.

\begin{figure*}[t!]
\begin{center}
\includegraphics[width=0.5\textwidth]%
    {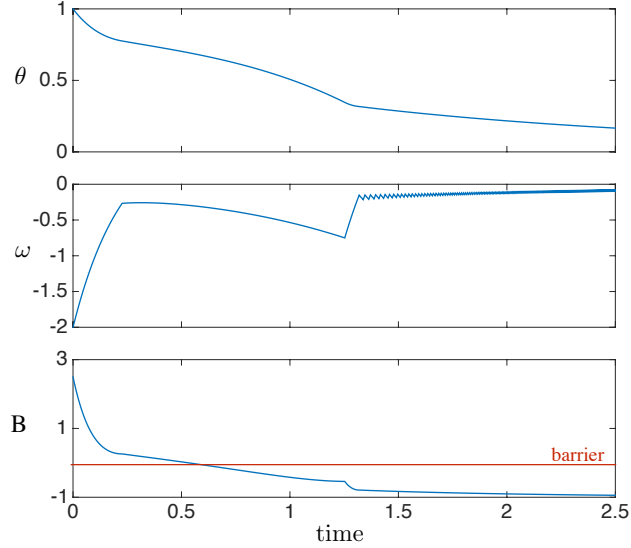}
\caption{Execution traces for Example~\ref{ex:control-implementation}.}
\label{fig:simulation-inverted}
\end{center}
\end{figure*}
\end{example}

\section{Reach-While-Stay}\label{sec:rws}
For the \emph{reach-while-stay} (RWS) specification, the goal is to reach a target set $G$ from an initial set $I$, while staying in a safe set $S$, wherein $S$ and $I$ are compact and $I \subseteq \inter{S}$. For a verification problem, it is sufficient to first find a \emph{barrier-like} function $B$
\begin{align*}
\begin{array}{rrl}
\noindent\textbf{(a)}:\ & (\forall \vx \in I) & \ B(\vx) < 0 \\ 
\noindent\textbf{(b)}:\ & (\forall \vx \in \partial S) & \ B(\vx) > 0 \\ 
\noindent\textbf{(c)}:\ & (\forall \vx \in S \setminus (\inter{I} \cup \inter{G}) & \ B(\vx) = 0 \Rightarrow \nabla B.f(\vx) < 0 \,,
\end{array}
\end{align*}
where $f(\vx)$ is simply the vector field. Notice that for condition (c), we do not need to consider states in $G$. The existence of $B$ proves that $\Box(S \setminus G) \lor S \ \U \ G$. Then, by finding a \emph{Lyapunov-like} function, one can prove $\Diamond \lnot (S \setminus G)$ (prove $S \ \U \ G$):
\[
(\forall \vx \in S \setminus \inter{G}) \ \nabla V.f(\vx) < 0 \,.
\] 

Xu et al.~\cite{xu2015robustness} consider combination of control Lyapunov functions and control barrier functions. However, the control Lyapunov function in their method is merely used for performance improvement and does not provide formal correctness. Nguyen et al.~\cite{Nguyen2016CBF} provide a QP method to find an input which respects CLF and CBF conditions at the same time. However, unlike verification, searching for a control barrier-like function and a control Lyapunov-like function must be a joint search. For example, the feasibility of the QP method used in~\cite{Nguyen2016CBF} (input selection) is not formally guaranteed, unless the CLF and CBF are generated jointly. In other words, we should make sure that selecting an input $\vu$ which respects both CLF and CBF conditions is always feasible. A straightforward solution is to perform a joint search, and find a certificate which consists of two functions $(V, B)$, with the following conditions:
\begin{align} \label{eq:eq:general-clf-cbf}
\begin{array}{rrl}
\noindent\textbf{(a)}:\ & (\forall \vx \in I) & \ B(\vx) < 0 \\ 
\noindent\textbf{(b)}:\ & (\forall \vx \in \partial S) & \ B(\vx) > 0 \\ 
\noindent\textbf{(c)}:\ & (\forall \vx \in S \setminus \inter{G}) &\ (\exists \vu \in U) \ \nabla V.f(\vx, \vu) < 0 \\ 
\noindent\textbf{(d)}:\ & (\forall \vx \in S \setminus \inter{G}) & \ B(\vx) = 0 \Rightarrow (\exists \vu \in U) \left( \nabla B.f(\vx, \vu) < 0 \land \nabla V.f(\vx, \vu) < 0 \right) \,.
\end{array}
\end{align}
The first three conditions were discussed previously. The fourth condition guarantees that for the states on the barrier, there exists a control input for which both $V$ and $B$ decrease.

In the rest of this section, first we provide a classes of certificates with a simpler structure, when compared to Eq.~\eqref{eq:eq:general-clf-cbf}. Next, we extend this class to address RWS with reference tracking. Finally, we consider a more general class of control certificates, with multiple barrier functions.
\subsection{Basic RWS}

The original control certificate $(V, B)$ for RWS can be simplified when we assume $ V = B $. In fact, function $ V $ can play both roles.
\begin{definition}[Control Lyapunov-Barrier Function] A smooth function $V$ is a control Lya-punov-barrier function (CLBF) iff
\begin{equation} \label{eq:cllf-rws}
    \begin{array}{rrl} 
    \noindent\textbf{(a)}:\ & (\forall \vx \in I) & \ V(\vx) < 0 \\ 
    \noindent\textbf{(b)}:\ & (\forall \vx \in \partial S) & \ V(\vx) > 0 \\ 
    \noindent\textbf{(c)}:\ & (\forall \vx \in S \setminus \inter{G}) & \ (\exists \vu \in U) \ \nabla V.f(\vx, \vu) < 0 \,.\end{array}
\end{equation}
\end{definition}

Using these conditions, the controller always provides a control input for decreasing the value of $V$. Then, having $\vx(0) \in P^* :\ V^{\leq 0} \cap S$, the trace never leaves $P^*$ without entering $G$ (Figure~\ref{fig:rws}). Moreover, as the value of $V$ decreases, it is guaranteed that the trace cannot stay in $P^* \setminus G$ and thus $I \implies S \ \U \ G$. 

\begin{lemma}\label{lem:rws}
Given a plant $\P$, compact sets $ G, I, S $, a smooth function $ V $ satisfying Eq.~\eqref{eq:cllf-rws}, and a feedback function $\K$ s.t. for all traces (wherein $\vx(0) \in I$) (i) $\dot{V}(\vx(t)) \leq -\epsilon_s$ , and (ii) time progresses as long as $\vx(t) \in S \setminus \inter{G}$, it is guaranteed that $(\exists T \geq 0)$ s.t.
\begin{compactenum}
    \item $(\forall t \in [0, T]) \ \vx(t) \in S$,
    \item $\vx(T) \in G$\,.
\end{compactenum}    
\end{lemma}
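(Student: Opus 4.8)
The plan is to work with the sublevel region $P^* :\ V^{\leq 0} \cap S$ introduced just before the lemma, and to show that the trace is confined to $S$ until it enters $G$, while the guaranteed decrease of $V$ forces such an entry in finite time. I would begin by checking the initial condition: since $\vx(0) \in I$, condition (a) of Eq.~\eqref{eq:cllf-rws} gives $V(\vx(0)) < 0$, and $I \subseteq \inter{S}$ gives $\vx(0) \in \inter{S} \subseteq S$; hence $\vx(0) \in P^*$ with $V(\vx(0)) < 0$ strictly.

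The second step is a confinement (first-exit) argument showing the trace cannot cross $\partial S$ before reaching $G$. Suppose otherwise, and let $T$ be the first time at which $\vx(T) \in \partial S$ while $\vx(t) \notin G$ for all $t \in [0, T]$; this infimum is attained because $\vx(\cdot)$ is continuous, $\partial S$ and $G$ are closed, and hypothesis (ii) guarantees the trace is defined (time progresses) throughout $[0, T]$. For $t \in [0, T)$ the state lies in $\inter{S}$ by continuity and the choice of $T$, and $\vx(t) \notin G \supseteq \inter{G}$, so $\vx(t) \in S \setminus \inter{G}$ on all of $[0, T]$. Hypothesis (i) then yields $\dot{V}(\vx(t)) \leq -\epsilon_s < 0$, and integrating gives $V(\vx(T)) \leq V(\vx(0)) < 0$. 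But $\vx(T) \in \partial S$ forces $V(\vx(T)) > 0$ by condition (b), a contradiction. Hence the trace remains inside $S$ (indeed $V$ stays $\leq V(\vx(0)) < 0$, so it remains inside $P^*$) until it first enters $G$.

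The third step establishes finite-time reachability of $G$. If the trace never entered $G$, then by the previous step it would remain in $P^* \setminus G \subseteq S \setminus \inter{G}$ for all $t \geq 0$; hypothesis (ii) then guarantees time diverges, so $\dot{V}(\vx(t)) \leq -\epsilon_s$ holds for every $t$, and integration gives $V(\vx(t)) \leq V(\vx(0)) - \epsilon_s t \to -\infty$. Since $V$ is continuous on the compact set $P^* \subseteq S$, it is bounded below there, a contradiction. Therefore there is a finite $T \geq 0$ with $\vx(T) \in G$, and the confinement argument gives $\vx(t) \in S$ for all $t \in [0, T]$, which are exactly the two claimed properties.

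The step I expect to be the main obstacle is the confinement argument in the second step: making ``the first time the trace touches $\partial S$ without having reached $G$'' rigorously well-defined and attained requires combining continuity of $\vx(\cdot)$ with closedness of $S$ and $G$, and, crucially, invoking hypothesis (ii) to ensure the trace does not cease to exist (finite escape or Zeno behavior) before that time, so that the integral bound on $V$ remains valid up to and including $T$.
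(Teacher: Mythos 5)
Your proof is correct and follows essentially the same route as the paper's: both confine the trace to $P^* :\ V^{\leq 0} \cap S$ using conditions (a) and (b) of Eq.~\eqref{eq:cllf-rws}, and both derive finite-time entry into $G$ from the fact that $\dot{V} \leq -\epsilon_s$ together with time divergence would drive $V$ below its lower bound on the compact set $P^*$. The only cosmetic difference is in the confinement step, where you integrate $\dot{V}$ to contradict $V > 0$ on $\partial S$, while the paper contradicts the sign of $\dot{V}$ at the first exit time from $\partial P^*$; both are sound.
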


\begin{proof}
Recall that $P^* :\ V^{\leq 0} \cap S$.
By condition (b) of Eq.~\eqref{eq:cllf-rws} $V^{\leq 0} \subset \inter{S}$ and therefore $P^* \subset \inter{S}$. Also, by condition (a) of Eq.~\eqref{eq:cllf-rws}, $I \subset V^{< 0}$ ($I \subset \inter{V^{\leq 0}}$). Therefore, $I \subset \inter{P^*}$ ($\vx(0) \in \inter{P}$).

If $\vx(0) \in G$, the conditions trivially hold. Therefore, we assume $\vx(0) \in \inter{P^*} \setminus G$.
    Now we show that $P^* \implies P^* \ U \ G$. Assume $\vx(\cdot)$ leaves $P^*$ before reaching $G$. Let $T \geq 0$ be the first time that $\vx(T)$ reaches $\partial P^*$ (without reaching $G$) and $\vx^+(T) \not\in P^*$. Then $V(\vx(T)) = 0$ and $\dot{V}_{\vu(T)}(\vx(T)) > 0$. 
    On the other hand, $\dot{V}_{\vu(t)}(\vx(t)) < -\epsilon_s$ for all $t \leq T$, which is a contradiction and the trace would not leave $P^*$ before reaching $G$.
    
    If $(\forall t > 0) \ \vx(t) \in P^* \setminus G$, $\dot{V}(\vx(t)) \leq -\epsilon_s$, $V$ decreases to infinity as time progresses as long as $\vx(\cdot) \in P^* \setminus G$.
    However, the value of $V$ is bounded on bounded set $P^* \setminus G$. Therefore, $\vx(\cdot)$ cannot remain in $P^* \setminus G$ and cannot reach the boundary of $P^*$. The only possible outcome for the trace is to reach $G$. Therefore, there exists $T \geq 0$ s.t.
    \begin{compactenum}
    \item $(\forall t \in [0, T]) \ \vx(t) \in \inter{P^*} \subseteq S$
    \item $\vx(T) \in G$.     \QED
    \end{compactenum}
\end{proof}

\begin{figure}[t]
\begin{center}
\includegraphics[width=0.4\textwidth]%
    {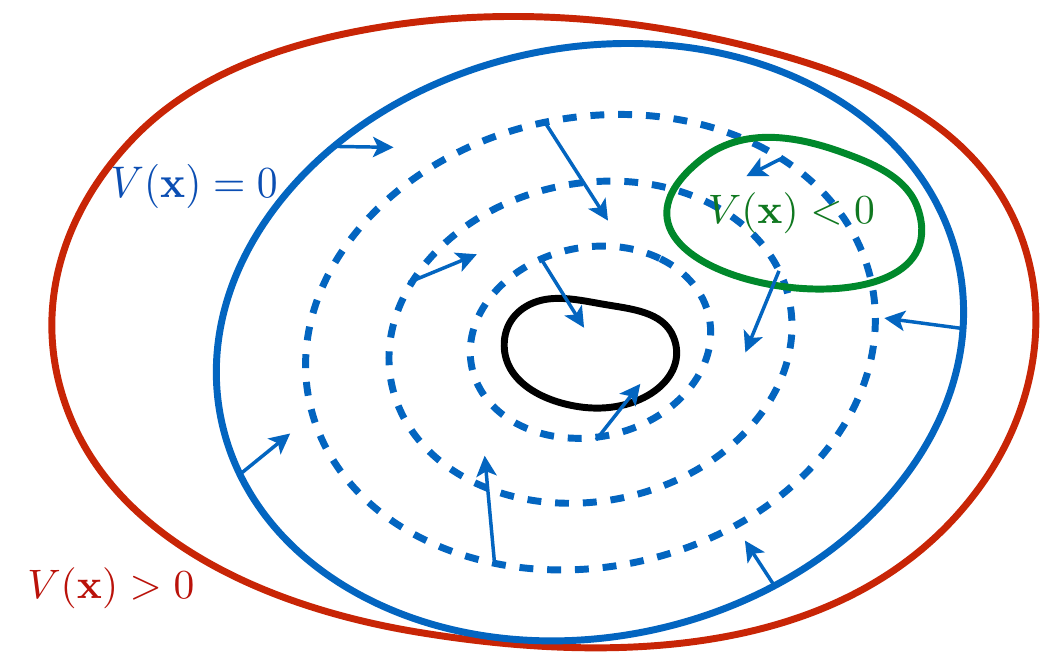}
\\ Boundary of safe, initial, and goal regions are shown in red, green, and black, respectively. The barrier is shown in blue along with the control vector field (blue arrows).
    \caption{A schematic view of CLBF.}
\label{fig:rws}
\end{center}
\end{figure}

\begin{theorem}\label{thm:unified-lyapunov-barrier} Given a plant $\P$, compact sets $ G, I, S $, and a smooth function $ V $ satisfying Eq.~\eqref{eq:cllf-rws}, there exists a smooth feedback law which satisfies $ I \implies S \ \U \ G$.
\end{theorem}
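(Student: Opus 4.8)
The plan is to construct the required smooth feedback law directly from the control Lyapunov-barrier function $V$ and then discharge the two hypotheses of Lemma~\ref{lem:rws}. Since that lemma already reduces the RWS property to the existence of a feedback law under which (i) $\dot{V}(\vx(t)) \leq -\epsilon_s$ along every trace with $\vx(0) \in I$ and (ii) time progresses as long as $\vx(t) \in S \setminus \inter{G}$, the whole theorem follows once such a $\K$ is exhibited.

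First I would build $\K$ from condition (c) of Eq.~\eqref{eq:cllf-rws} by appealing to Sontag's universal formula~\cite{sontag1989universal}, exactly as in the proofs of Theorems~\ref{thm:trajectory-tracking} and~\ref{thm:path-following}. For the control-affine plant $\dot{\vx} = f_0(\vx) + \sum_i f_i(\vx) u_i$, condition (c) is precisely the control-Lyapunov inequality on the domain $S \setminus \inter{G}$, so Sontag's formula yields a feedback with $\nabla V \cdot f(\vx, \K(\vx)) < 0$ there. Unlike the stability setting, there is no exceptional point: if every control coefficient $\nabla V \cdot f_i(\vx)$ vanishes, then $\dot{V}$ is independent of $\vu$ and condition (c) forces the drift term $\nabla V \cdot f_0(\vx)$ to be strictly negative, so the pair (drift, control coefficients) never meets the singular locus of Sontag's formula. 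Hence $\K$ is smooth throughout $S \setminus \inter{G}$, and I would extend it to a globally smooth $\K :\ X \mapsto U$ off this set; the extension is irrelevant, since Lemma~\ref{lem:rws} shows that traces from $I$ never leave $P^* :\ V^{\leq 0} \cap S$ before reaching $G$.

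Next I would establish the two hypotheses of the lemma. For (i), the set $S \setminus \inter{G}$ is compact, being a closed subset of the compact set $S$, and the map $\vx \mapsto \nabla V \cdot f(\vx, \K(\vx))$ is continuous and strictly negative on it; a continuous function attains its maximum on a compact set, so there is $\epsilon_s > 0$ with $\dot{V}(\vx) \leq -\epsilon_s$ for all $\vx \in S \setminus \inter{G}$. For (ii), because $f$ and $\K$ are smooth and $S$ is compact, the closed-loop vector field is bounded and locally Lipschitz on $S$, so by the Picard-Lindel\"{o}f theorem the trace exists, is unique, and cannot escape to infinity in finite time while it remains in $S$; in particular time progresses as long as $\vx(t) \in S \setminus \inter{G}$. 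Zeno behavior is not a concern here since the feedback is smooth rather than switched.

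With (i) and (ii) in hand, I would invoke Lemma~\ref{lem:rws} to obtain, for every trace with $\vx(0) \in I$, a time $T \geq 0$ such that $\vx(t) \in S$ for all $t \in [0, T]$ and $\vx(T) \in G$, which is exactly $I \implies S \ \U \ G$. The main obstacle I anticipate is the smoothness bookkeeping of the first step: verifying that the singular locus of Sontag's formula is genuinely avoided on $S \setminus \inter{G}$ and that a global smooth extension exists, together with the mild assumption inherited from the rest of the chapter that the plant is control affine with unsaturated inputs, so that Sontag's formula applies verbatim; saturated inputs would instead require the Lin--Sontag or Suarez variants~\cite{LIN1991UNIVERSAL,suarez2001global}.
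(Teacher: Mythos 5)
Your proposal is correct and follows essentially the same route as the paper's proof: appeal to Sontag's universal formula on $S \setminus \inter{G}$ to obtain a smooth feedback decreasing $V$, then conclude via Lemma~\ref{lem:rws}. The additional details you supply---avoidance of the singular locus of Sontag's formula, the uniform bound $\dot{V} \leq -\epsilon_s$ from compactness, and time divergence from boundedness of the closed-loop vector field---are exactly the steps the paper leaves implicit, and they are all argued correctly.
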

\begin{proof}
    We appeal directly to Sontag's result to obtain a smooth feedback function $\K(\vx)$ that guarantees that $\dot{V}(\vx(t)) < 0$ for all $\vx(t) \in S \setminus \inter{G}$~\cite{sontag1989universal}. Using Lemma~\ref{lem:rws}, $I \implies S \ \U \ G$ holds for the closed loop system.
  \QED
\end{proof}

The proof for switched feedback system is discussed in a more general form later in this chapter.

\subsection{RWS with Reference Tracking}
As discussed in Chapter~\ref{ch:intro}, in RWS with reference tracking, in addition to initial set $\hat{I}$, goal set $\hat{G}$, and safe set $\hat{S}$, a (feasible) reference trajectory $\vx_r(\cdot)$ is also provided (as a hint). 
Notice that for convenience, we have defined $\hat{S}, \hat{I}, \hat{G}$ in the original state space $\vx$.
Consider a reference trajectory \emph{segment} defined by $\vx_r(\theta) $ for $\theta \in \Theta :\ [0, T]$, initial set $\hat{I} \ni \vx_r(0)$, a goal set $\hat{G} \ni \vx_r(T)$, and a safe sets $\hat{S}(\theta) \ni \vx_r(\theta)$ for $\theta \in \Theta$. $\hat{S}(\theta_1)$ defines the safe set when $\theta = \theta_1$. Moreover, $\hat{S}(0) \supseteq \hat{I}$ and $\hat{S}(T) \supseteq \hat{G}$.
Let $\hat{S}:\ \bigcup_{\theta \in \Theta} \hat{S}(\theta)$ denote the entirety of the safe set. We will now define them in terms of the deviation $\vx_d$ to define the following sets:
\begin{align*}
I &:\ \{\vx_d \ | \ \vx_d+\vx_r(0) \in \hat{I}\} \\
G &:\ \{\vx_d \ | \ \vx_d+\vx_r(T) \in \hat{G}\} \\
S(\theta) &:\ \{\vx_d \ | \ \vx_d + \vx_r(\theta) \in \hat{S}(\theta) \}.
\end{align*}
Finally, let us denote $S:\ \bigcup_{\theta \in \Theta} S(\theta)$.
Figure~\ref{fig:funnel}(a) shows a schematic view for these sets.

Recall the following dynamics for reference tracking:
\[
\dot{\vx}_d = f(\theta, \vx_d, \vu) = f(\vx_d + \vx_r(\theta), \vu) - \vr(\theta)\dot{\theta}\,.
\]

We consider $\dot{\theta} = u_0$, where $u_0$ is a virtual input and $u_0 \in [\underline{u_0}, \overline{u_0}]$ and $0 < \underline{u_0} \leq 1 \leq \overline{u_0} < \infty$.
As discussed, if we wish to consider RWS with timing constraints, we simply set $\underline{u_0} = \overline{u_0} = 1$.
To address the problem, we wish to use finite-time invariants a.k.a. \emph{funnels}~\cite{mason1985mechanics}.
\begin{definition}[Funnel] Given a closed loop system $\Psi(\P, \K)$, a funnel $F \subseteq X$ is a set with a head $F_h \subseteq F$ and a tail $F_t \subseteq F$. Moreover, if a trace $\vx(\cdot)$ reaches the head of the funnel, it remains inside the funnel until it reaches its tail:
\[
\vx(t) \in F_h \implies (\exists T \geq t) \begin{cases}
    \vx(T) \in F_t \\
    (\forall t' \in [t, T]) \ \vx(t') \in F\,.
\end{cases}
\]
\end{definition}
The control Lyapunov-barrier argument can get extended to control funnels~\cite{bouyer2017timed} for formally satisfying the RWS with reference tracking. We will define a \emph{control funnel} as a sub-level set of a smooth function $V(\theta, \vx_d)$.
For a smooth function $V$, and a relational operator $\bowtie \in \{ <, \leq, =, \geq , >\}$, let us define the following families of sets that are parameterized by $\theta$: $V^{\bowtie \beta}(\theta) :\ \{\vx \ | \ V(\theta, \vx) \bowtie \beta\}$. Furthermore, let $V^{\bowtie \beta} :\ \cup_{\theta \in \Theta} \ V_\theta^{\bowtie\beta} $.

\begin{definition}[Control Funnel Function]\label{def:control-funnel-function}
  A smooth function
  $V(\theta, \vx_d)$ is called a control funnel function (CFF) iff the
  following conditions hold:
\begin{equation}\label{eq:rules}
\begin{array}{lrl}
\noindent\textbf{(a)}:\ & (\forall \vx_d \in I) & \hspace{-0.2cm} V(0,\vx_d) < 0 \\
\noindent\textbf{(b)}:\ & (\forall \vx_d \in S(T) \setminus \inter{G}) & \hspace{-0.2cm} V(T,\vx_d) > 0 \\
\noindent\textbf{(c)}:\ & (\forall \theta \in \Theta, \vx_d \in \partial(S(\theta))) & \hspace{-0.2cm} V(\theta, \vx_d) > 0 \\
\noindent\textbf{(d)}:\ & (\forall \theta \in \Theta, \vx_d \in S(\theta)) & \hspace{-0.2cm} V(\theta, \vx_d) = 0 \implies (\exists \vv \in \V) \ \dot{V}(\theta,\vx_d, \vv) < 0.
\end{array}
\end{equation}
\end{definition}

The idea, depicted in Figure~\ref{fig:funnel}(b), is as follows. Initially (condition(a) in Eq.~\eqref{eq:rules}), $V(\vx(0)) < 0$ ($\vx \in V^{<0}$). Condition (d) guarantees that for all the states in a neighborhood of set $V^{=0}$, there exists a feedback which decreases the value of $V$. Therefore, by providing a proper feedback, the state never reaches $\partial V^{\leq 0}$ ($V^{=0}$) because the value of $V$ can be decreased just before reaching $V^{=0}$. As a result, $V$ remains $< 0$. This means that the state stays inside $V^{\leq 0}$ as long as $\theta \in \Theta$. Also the state remains in $S(\theta)$ as the value of $V$ on $\partial S(\theta)$ is $\geq 0$ (condition (c)). Since $\theta$ is increasing at minimum rate $\underline{u_0}$ at some point $\theta$ reaches $T$. Then, according to condition (b), the state must be in the interior of $G$ (top green ellipse), because otherwise value of $V$ would be $\geq 0$.

\begin{figure}[t]
\begin{center}
    \includegraphics[width=0.9\textwidth]{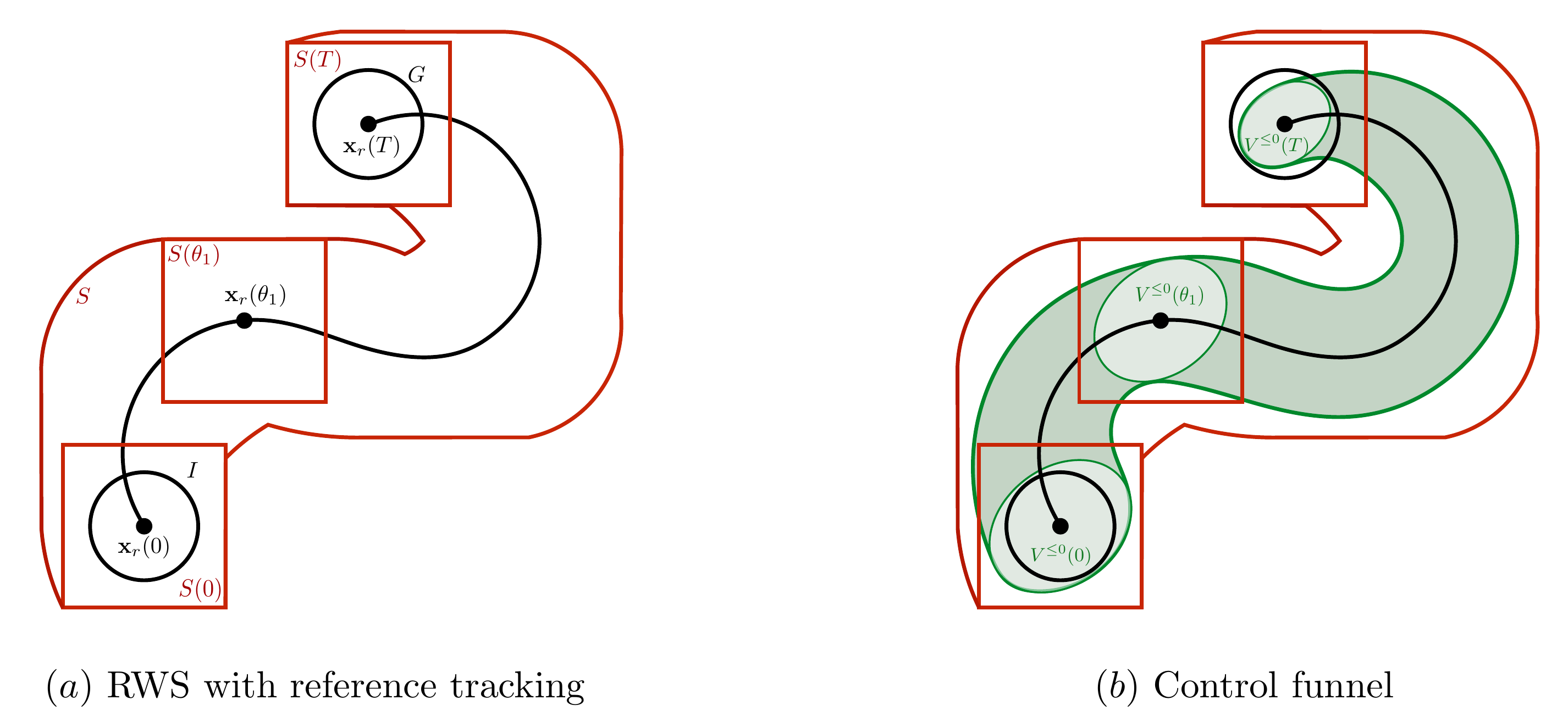} \\
\end{center}
\caption{A schematic view for RWS with reference tracking along with a control funnel. }\label{fig:funnel} 
\end{figure}

\begin{theorem}\label{thm:control-funnel}
  Given a plant $\P$ and a smooth control funnel function $V$, there exists a smooth feedback function for reaching $G$ such that for any initial state $\vx_d(0) \in I$, the goal state is eventually reached at some time $t^*$ satisfying $T/\underline{u_0} \geq t^* \geq T/\overline{u_0}$, while staying in set $S$ for $0 \leq t \leq t^*$.
\end{theorem}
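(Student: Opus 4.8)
The plan is to treat $(\theta, \vx_d)$ as the state of an extended system with dynamics $\dot\theta = u_0$ and $\dot{\vx}_d = f(\theta, \vx_d, \vu)$, and to read $V$ as a ($\theta$-parameterized) control barrier function for this extended system whose sublevel set $V^{\leq 0}$ is the control funnel. The argument then mirrors the barrier reasoning of Theorem~\ref{thm:cbf} and Lemma~\ref{lem:rws}, combined with a progress argument in $\theta$ to obtain the stated timing bounds. Concretely, I would first construct a smooth feedback $\K = (\K_{u_0}, \K_\vu)$ and then verify, in turn, funnel invariance, safety, strict progress of $\theta$, and arrival in $G$.

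For the feedback, condition (d) of Eq.~\eqref{eq:rules} guarantees that on the zero level set $\{(\theta, \vx_d) : \theta \in \Theta,\ \vx_d \in S(\theta),\ V(\theta,\vx_d) = 0\}$ there is a \emph{saturated} input $\vv = (u_0, \vu) \in \V$ (so automatically $u_0 \in [\underline{u_0}, \overline{u_0}]$) rendering $\dot V < 0$; this set is compact since $\Theta = [0,T]$ and each $S(\theta)$ is compact, so by continuity the decrease condition persists on a neighborhood $\{|V| \leq \delta\}$. Appealing to Sontag's universal formula~\cite{sontag1989universal} (equivalently the CBF existence result of Wieland et al.~\cite{WIELAND2007462} used in Theorem~\ref{thm:cbf}) on this neighborhood, and smoothly blending with the nominal choice $u_0 = 1 \in [\underline{u_0},\overline{u_0}]$ deeper inside the funnel, yields a smooth $\K$ with $\dot V(\theta, \vx_d, \K) < 0$ whenever $V = 0$ and with $\K_{u_0} \in [\underline{u_0}, \overline{u_0}]$ everywhere. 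With this feedback, funnel invariance follows from the standard barrier argument: since $V(0, \vx_d(0)) < 0$ by condition (a), if $V$ first reached $0$ at a time $\tau$ we would have $\dot V(\tau) \geq 0$, contradicting $\dot V < 0$ on $V = 0$; hence $V(\theta(t), \vx_d(t)) < 0$ throughout. Safety then follows because a crossing of $\partial S(\theta)$ at some time would force $V > 0$ there by condition (c), contradicting $V < 0$ along the trajectory, so $\vx_d(t) \in \inter{S(\theta(t))} \subseteq S$.

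It remains to establish progress and arrival. Since $\dot\theta = \K_{u_0} \in [\underline{u_0}, \overline{u_0}]$ with $\underline{u_0} > 0$, the coordinate $\theta$ increases strictly and reaches $T$ at a finite time $t^*$; integrating $T = \int_0^{t^*} \dot\theta\, dt$ and using the bounds on $\dot\theta$ gives $T/\overline{u_0} \leq t^* \leq T/\underline{u_0}$. At $t = t^*$ we have $\theta = T$ and, by invariance, $V(T, \vx_d(t^*)) < 0$, while $\vx_d(t^*) \in S(T)$ by safety and continuity; condition (b), which forces $V > 0$ on $S(T) \setminus \inter{G}$, then leaves $\vx_d(t^*) \in \inter{G}$ as the only possibility, so the goal is reached while the trace has stayed in $S$ throughout $[0, t^*]$. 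The main obstacle is the feedback construction of the first step: unlike the pure CLF theorems (Theorems~\ref{thm:trajectory-tracking}--\ref{thm:unified-lyapunov-barrier}), where decrease is demanded everywhere and Sontag's formula applies verbatim, here decrease is required only on the zero level set, so the construction must localize near $V = 0$, blend smoothly into the interior without destroying smoothness, and simultaneously keep the timing input $u_0$ within its saturation window $[\underline{u_0}, \overline{u_0}]$; the control-affine structure in $(u_0, \vu)$ is what makes this joint Sontag construction go through.
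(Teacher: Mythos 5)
Your proposal is correct and follows essentially the same route as the paper's proof: condition (a) for initial negativity of $V$, Sontag's result applied via condition (d) to obtain a smooth feedback decreasing $V$ on the zero level set, a barrier-crossing contradiction with condition (c) for invariance and safety, the bounds $\dot\theta = u_0 \in [\underline{u_0}, \overline{u_0}]$ for the timing window, and condition (b) to conclude $\vx_d(t^*) \in \inter{G}$. The one place you go beyond the paper is the feedback-construction step, where you correctly flag that Sontag's formula must be localized near $V=0$ and blended with a nominal $u_0$ while respecting the saturation window --- a subtlety the paper's proof passes over by citing Sontag and Wieland et al.\ directly.
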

\begin{proof}
    Initially $\vz(0) = [\theta(0), \vx_d(0)^t]^t$. According to condition (a), $V(\vx(0)) = \beta_0 < 0$ as $\theta(0) = 0$
    According to condition (d) in Eq.~\eqref{eq:rules} and by Sontag's result (cf.~\cite{sontag1989universal,WIELAND2007462}), there exists a smooth feedback function $\K$ which decreases the value of $V$ for all time instances that $\theta(t) \in \Theta \land \vx(t) \in S \land V(\vx(t)) = 0$. 
    Now, we assume $\vx(\cdot)$ reaches the boundary of $S$ before reaching $G$.
    Let $t_2$ be the first time instance that $\vx(\cdot)$ reaches the boundary of $S$. According to condition (c), $V(\vx(t_2)) > 0$. By smoothness of $V$ and the dynamics, there is a time $t$ ($0 \leq t < t_2$) for which $V(\vx(t)) = 0$. Let $t_1$ be the first time instance that $V(\vx(t_1)) = 0$ and $V^+(t_1) > 0$. However, the feedback law forces $V$ to decrease, which is a contradiction ($V^+(t_1) < 0$). Therefore, either $\vx_d(\cdot)$ remains inside $R = V^{\leq \beta} \cap S$ forever or remains inside $R$ until it reaches $G$. On the other hand, let $t_f$ be the time $\theta(t_f) = T$ and $\frac{T}{\overline{u_0}} \leq t_f \leq \frac{T}{\underline{u_0}}$. Since $\vx_d(\cdot)$ remains in $V^{\leq \beta}$, $V(\vx_d(t_f)) \leq \beta$. According to condition (b), $\vx_d(t_f)$ is in the interior of $G$. \QED
\end{proof}
Alternatively, similar to Eq.~\eqref{eq:cbf-safety}, we can replace condition (d) in Eq.~\eqref{eq:rules} with the following:
\begin{equation}\label{eq:cff-exp}
(\forall \theta \in \Theta, \vx_d \in S(\theta))\ (\exists \vv \in \V) \ \left( \begin{array}{l}
    \dot{V}(\theta,\vx_d, \vv) + \lambda^* V(\theta, \vx_d) < 0 \lor \\
    \dot{V}(\theta,\vx_d, \vv) - \lambda^* V(\theta, \vx_d) < 0
\end{array} \right).
\end{equation}
Extracting switched feedback law from control funnel function can be addressed using a similar protocol used in the proof of Theorem~\ref{thm:safety-switch}.

\subsection{Uninitialized RWS}
Another interesting class of properties is control to facet problems, related to work of Habets et al.~\cite{habets2006reachability} and Kloetzer et al.~\cite{kloetzer2008fully}, wherein the control system is modeled with a finite automaton by solving local control-to-facet problems. 
These properties are RWS properties. However, the initial set $I$ is the same as safe set $ S $ (uninitialized RWS). Precisely, the specification is $ S \Rightarrow S \ \U \ G $, where $ S $ is a nondegenerate basic semi-algebraic compact set. Since $G$ is usually a facet, this property is also called control-to-facet.
Let $S$ be a nondegenerate basic semialgebraic sets, as in
Definition~\ref{Def:basic-semialgebraic-set}:
\[ 
S :\ \{\vx\ |\ p_{S,1}(\vx) \leq 0\ \land\ \cdots\ \land p_{S,i}(\vx) \leq 0\} \,,
\]  
where 
\[ H_{S,j} =
\{\vx\ |\ \vx \in S\ \land\ p_{S,j}(\vx) = 0\} \neq \emptyset \,.
\]
Let $\partial S$ be partitioned into nonempty facets $F_{1}, \ldots,$ $ F_{l}$. Each facet $F_{k}$ is, in turn, defined by two sets of polynomial inequalities $F_{k}^{<}$ of inactive constraints and $F_{k}^{=}$ of active constraints: $F_{k}=\{ \bigwedge_{p_{S,j} \in F_{k}^{<}} p_{S,j}(\vx) < 0\ \land\ \bigwedge_{p_{S,j} \in F_{k}^{=} } p_{S,j}(\vx) =  0 \}$. For example, Figure~\ref{fig:control-to-facet} shows facets for a polytope. For $i \in [0..4]$,  $F^{i=} :\ \{p_i\}$ and for $i \in [5..9]$, $F^{i=} :\ \{p_{i-5}, p_{(i-6)\%5}\}$.

For each state on a facet and not in $G$, we require the existence of an input $\vu$, whose vector field points inside $S$. Additionally, we require a certificate $V$ to decrease everywhere in $S \setminus G$. For any polynomial $p_{S,j}$, let $\dot{p}_{S,j,\vu}:\ (\nabla p_{S,j})\cdot f_\vu(\vx)$. Conventional methods~\cite{habets2004control} combine conditions for safety and reachability to define a \emph{Control Lyapunov Fixed-Barriers Function} (CLFBFs) $V$ (which is smooth) as the following:

\begin{equation}\label{eq:rws-basic}
\begin{array}{rl}
\noindent\textbf{(0)}:\ & \vx \in S\setminus \inter{G} \implies (\exists\ \vu \in U)\ \dot{V}_\vu(\vx) < 0  \\[4pt]
\noindent\textbf{(1)}:\ & \vx \in F_{1} \setminus \inter{G} \implies (\exists\ \vu \in U)
\left( \begin{array}{c}
     \dot{V}_\vu(\vx) < 0\ \land\ \bigwedge  \limits_{p \in F_{1}^{=}} \begin{array}{c} \dot{p}_\vu(\vx) <0\end{array} \end{array} \right)  \\
\vdots\\
\noindent\textbf{(l)}:\ & \vx \in F_{l} \setminus \inter{G} \implies (\exists\ \vu \in U)
\left(\begin{array}{c} 
\dot{V}_\vu(\vx) < 0\ \land\ \bigwedge  \limits_{p \in F_{l}^{=}} \begin{array}{c} \dot{p}_\vu(\vx) <0 \end{array}  \end{array} \right)
\,.
\end{array}
\end{equation}

Condition (0) in Eq.~\eqref{eq:rws-basic} states that $V$ must decrease everywhere in the set $S \setminus \inter{G}$. The subsequent conditions treat each facet $F_j$ of the set $S$ and posit the existence of a feedback $\vu$ for each state that causes the active constraints and the function $V$ to decrease. 

However, we note that as the number of state variables increases, the number of facets can be exponential in the number of inequalities that define $S$~\cite{helwa2013monotonic}. For example, a 4D ($n = 4$) box has $ 80 $ facets. This poses a serious limitation to the applicability of Eq.~\eqref{eq:rws-basic}. 

\begin{figure}[t!]
\begin{center}
\includegraphics[width=0.4\textwidth]%
    {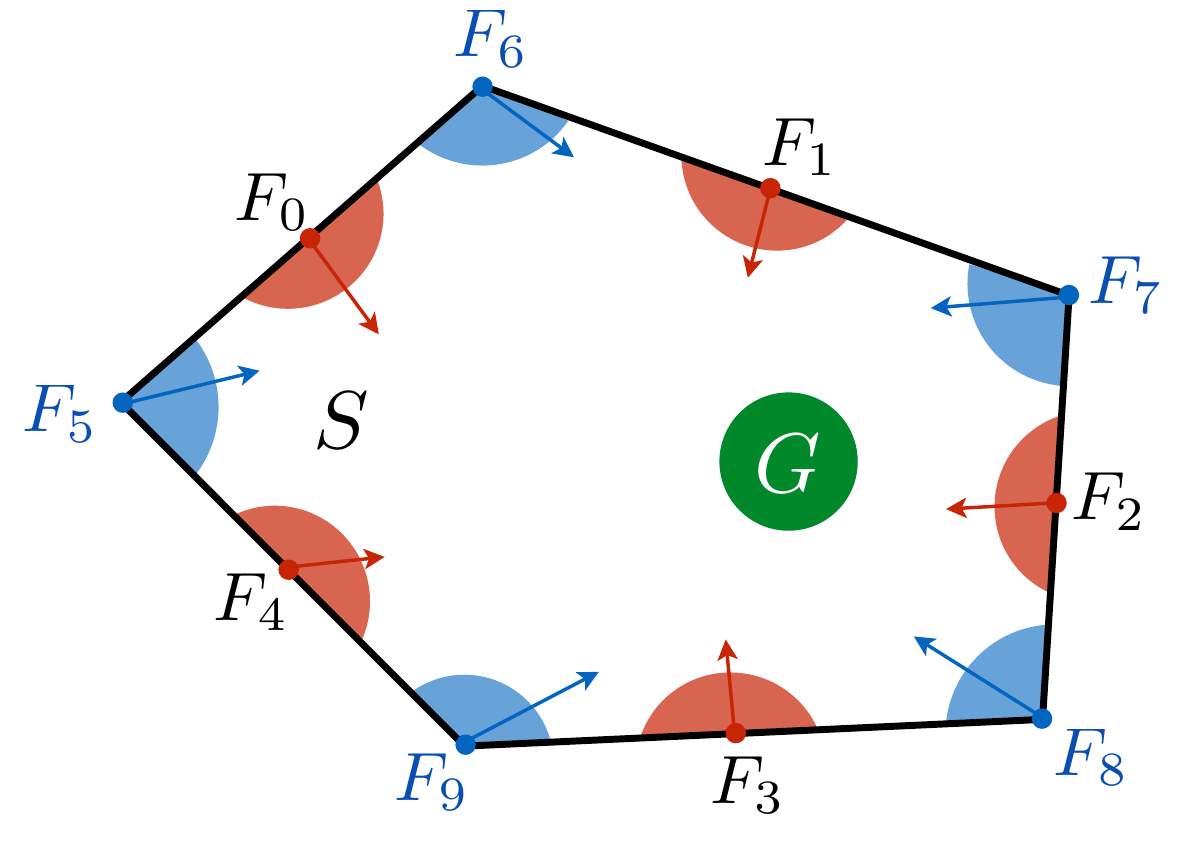}
\\ Black lines and blue dots are the facets. For blue dots and red dots, feasible directions for the vector field is shown.
    \caption{Facets for a 2D polytope.}
\label{fig:control-to-facet}
\end{center}
\end{figure}

Our solution to this problem, is based on the idea introduced for control barrier functions in Section~\ref{sec:certificate-safety}. Rather than force the vector field to point inwards at each facet, we simply ensure that each polynomial inequality $p_{S,j} \leq 0$ that defines $S$, satisfies a decrease condition outside the set $G$. Thus, Eq.~\eqref{eq:rws-basic} is replaced by a simpler (relaxed) condition:
\begin{equation}\label{eq:rws-basic-exp}
\begin{array}{l}
\vx \in S \setminus \inter{G} \implies (\exists\ \vu \in U)  \ \ \dot{V}_\vu(\vx) < 0 \ \ \land \bigwedge_j\ \left( \begin{array}{c}\dot{p}_{S,j,\vu}(\vx) + \lambda\ p_{S,j}(\vx)\ < 0
    \end{array} \right)\,.
\end{array}
\end{equation}
Again, $\lambda > 0$ is a user specified parameter and this rule is a relaxation of Eq.~\eqref{eq:rws-basic}.

This control certificate may not be \emph{constructive} for nonlinear smooth feedback systems because of the logical formulation. In other words, extracting a closed-form smooth feedback is not trivial and potentially a hard problem. As a result, here we discuss feedback law extraction only for switched systems.

Given a control certificate $V$ satisfying Eq.~\eqref{eq:rws-basic-exp}, the choice of a switching mode is dictated by a function $\cond_\vu(\vx)$ defined for each state $\vx \in X$ and mode $\vu \in U$ as follows:
\begin{equation}\label{eq:cond-for-control-to-facet}
\cond_\vu(\vx):\ \max\left(
\begin{array}{c}
\dot{V}_\vu(\vx),\ 
\cond_{S,1,\vu}(\vx),\ \cdots, \cond_{S,k,\vu}(\vx)
\end{array}
\right)\,,
\end{equation}
where $\cond_{S,j,\vu}$ is $\dot{p}_{S,j,\vu} +\lambda p_{S,j}$.
The goal of a controller is to switch to a mode $\vu$ that guarantees that $\cond_\vu(\vx) < - \epsilon$, which in turn guarantees decrease of $V$ as well as remaining in $S$.
Now, we can show there exists $\veps$ s.t. any feedback function $\K \in \SK(\cond, S \setminus \inter{G}, \veps)$ is a solution to the \RWS \ problem.

\begin{theorem}
\label{thm:control-to-facet}
    Given a switched plant $\P$, a nondegenerate \ basic \ semialgebraic set $S$, a compact set $G$, and a smooth function $V$ (satisfying Eq.~\eqref{eq:rws-basic-exp}), there exists $\veps$ s.t. the  $\SK(\cond, S \setminus \inter{G}, \veps)$ (wherein $\cond_\vu$ is defined by Eq.~\eqref{eq:cond-for-control-to-facet}) is non-empty, and any $\K \in \SK(\cond, S \setminus \inter{G}, \veps)$ guarantees the RWS property defined by $S,G$:\ $S \implies S \scr{U} G$.
\end{theorem}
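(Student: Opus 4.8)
The plan is to reduce the statement to an application of Lemma~\ref{lem:min-dwell-exists} on the compact set $K :\ S \setminus \inter{G}$, followed by a barrier-style argument that parallels the proofs of Theorem~\ref{thm:safety-switch} and Lemma~\ref{lem:rws}. The key structural observation is that the running invariant supplied by Lemma~\ref{lem:min-dwell-exists}, namely $\cond_{\vu(t)}(\vx(t)) \leq -\epsilon_s$, actually bundles together two distinct guarantees once $\cond_\vu$ is unfolded via Eq.~\eqref{eq:cond-for-control-to-facet}: a strict decrease $\dot{V}_{\vu(t)}(\vx(t)) \leq -\epsilon_s$, and an inward-pointing condition $\dot{p}_{S,j,\vu(t)}(\vx(t)) + \lambda\,p_{S,j}(\vx(t)) \leq -\epsilon_s$ on every defining constraint of $S$. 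I will use the first to prove reachability of $G$ and the second to prove that the trace never escapes $S$, so the two halves of the RWS property are handled by separate components of the same certificate.

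First I would verify the hypotheses of Lemma~\ref{lem:min-dwell-exists} for the set $K$. Here $K$ is compact since $S$ is a compact basic semialgebraic set and $\inter{G}$ is open. Hypothesis (i), $(\forall \vx \in K)(\exists \vu \in U)\ \cond_\vu(\vx) < 0$, is exactly Eq.~\eqref{eq:rws-basic-exp} after rewriting $\cond_\vu$ as the maximum of $\dot{V}_\vu$ and the facet terms $\cond_{S,j,\vu} = \dot{p}_{S,j,\vu} + \lambda\,p_{S,j}$. Hypothesis (ii) holds because each of these terms is smooth (built from the smooth $V$, polynomial $p_{S,j}$, and smooth $f_\vu$), so their pointwise maximum is continuous and piecewise differentiable; hypothesis (iii) holds because $f_\vu$ is bounded on the compact $K$ over the finitely many modes. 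Lemma~\ref{lem:min-dwell-exists} then delivers $\veps = (\epsilon, \epsilon_s)$, non-emptiness of $\SK(\cond, K, \veps)$, a min dwell-time, and the invariant $\cond_{\vu(t)}(\vx(t)) \leq -\epsilon_s$ as long as $\vx(\cdot) \in K$.

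The safety argument is where the main work lies and is the step I expect to be the principal obstacle. Fix $\K \in \SK(\cond, K, \veps)$ and a trace with $\vx(0) \in S$. If $\vx(0) \in \inter{G}$ we are done; otherwise the trace begins in $K$, so the invariant gives $\dot{p}_{S,j,\vu(t)}(\vx(t)) + \lambda\,p_{S,j}(\vx(t)) \leq -\epsilon_s$ for every $j$. Suppose toward contradiction that the trace leaves $S$ before entering $G$, and let $T$ be the first time $\vx(t)$ reaches $\partial S$. Then some constraint is active, $p_{S,j}(\vx(T)) = 0$, and the trace is exiting, so $\dot{p}_{S,j}(\vx(T)) \geq 0$; but substituting $p_{S,j}(\vx(T)) = 0$ into the invariant yields $\dot{p}_{S,j,\vu(T)}(\vx(T)) \leq -\epsilon_s < 0$, a contradiction. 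The subtlety is precisely that one must reason at the first boundary-crossing time so that the vanishing of $p_{S,j}$ cancels the $\lambda\,p_{S,j}$ slack introduced by the relaxation of Eq.~\eqref{eq:rws-basic}; this is the mechanism by which the relaxed facet condition forces the vector field inward exactly where it is needed, without ever touching the exponentially many facets of $\partial S$.

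For reachability I would argue as in Lemma~\ref{lem:rws}: the same invariant gives $\dot{V}_{\vu(t)}(\vx(t)) \leq -\epsilon_s < 0$ as long as $\vx(\cdot) \in K$, so $V$ decreases at a rate bounded away from zero, while $V$ is bounded below on the compact $K$. Hence the trace cannot remain in $K = S \setminus \inter{G}$ forever, and since it cannot leave $S$ by the safety step, it must enter $\inter{G}$. The limiting arguments are legitimate because the min dwell-time excludes Zeno behavior and the boundedness of $\dot{x}$ on the compact $S$ excludes finite escape, so time diverges. Combining, there is a finite $T \geq 0$ with $\vx(T) \in G$ and $\vx(t) \in S$ for all $t \in [0, T]$, which is exactly $S \implies S\ \U\ G$ for every $\K \in \SK(\cond, S \setminus \inter{G}, \veps)$.
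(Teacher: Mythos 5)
Your proposal is correct and follows essentially the same route as the paper's proof: both reduce to Lemma~\ref{lem:min-dwell-exists} on $S \setminus \inter{G}$, use the active-constraint component of $\cond_{\vu(t)}(\vx(t)) \leq -\epsilon_s$ at the first boundary time to rule out leaving $S$, and use the $\dot{V}$ component together with boundedness of $V$ and time divergence to force entry into $G$. The only difference is cosmetic — you phrase the safety step as a contradiction with the exit condition $\dot{p}_{S,j}(\vx(T)) \geq 0$, while the paper argues that $p_{S,j}$ strictly decreases just after the boundary is touched.
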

\begin{proof}
     Note that $\cond_\vu(\vx)$ is defined as $\max(\alpha_0(\vx),\ldots,\alpha_{m'}(\vx))$ for some smooth functions $\alpha_0,\ldots,$ $ \alpha_{m'}$. Also, $\dot{x}$ is bounded as $D$ is bounded. Using Eq.~\eqref{eq:rws-basic-exp}, and Lemma~\ref{lem:min-dwell-exists} there exists $\veps$ s.t. $\K \in \SK(\cond, S, \veps)$ is non-empty. Moreover, for any $\K \in \SK(\cond, S \setminus \inter{G}, \epsilon)$, (i) the min-dwell time exists and (ii) $\dot{V}_{\vu(t)}(\vx(t)) \leq -\epsilon_s$ and $\bigwedge_j (\dot{p}_{S,j,\vu(t)}(\vx(t)) + \lambda p_{S,j}(\vx(t)) \leq -\epsilon_s$ as long as $\vx(t) \in S \setminus \inter{G}$.
    
    Assume $\vx(t)$ is on the boundary of $S$ (and not in $G$) at some time $t$. Because $S$ is assumed to be a nondegenerate basic semialgebraic set, there exists at least one $j$ s.t. $p_{S,j}(\vx(t)) = 0$. We obtain $\dot{p}_{S,j,\vu(t)}(\vx(t)) \leq -\epsilon_s < 0$.
    Therefore, there exits $\tau_j > 0$, s.t. $\forall s \in (t, t + \tau_j)$, $p_{S,j,\vu(s)}(\vx(s)) < 0$. As a result, the trajectory cannot leave the set $S$.

     Thus, the trace cannot leave $S$, unless it reaches $G$. Now, we show that the trajectory cannot stay inside $S \setminus \inter{G}$ forever. By the construction of the controller, we can conclude that time progresses as long as $\vx(\cdot) \in S \setminus \inter{G}$ (because the controller respects the min-dwell time property) and that $V$ decreases ($\dot{V}_{\vu(t)}(\vx(t)) \leq -\epsilon_s$). However, the value of $V$ is bounded on bounded set $S \setminus \inter{G}$. Therefore, $\vx$ cannot remain in $S \setminus \inter{G}$ and the only possible outcome for the trace is to reach $G$. \QED
\end{proof}

\begin{example}\label{ex:basic-control-to-facet}
This example is adopted from~\cite{nilssonincremental}. There are two variables and three control modes with the dynamics given below: \begin{align*} \begin{array}{c} \left[ \begin{array}{c} \dot{x_1} \\ \dot{x_2} \end{array} \right]
    = \left[ \begin{array}{c} -x_2-1.5x_1-0.5x_1^3 \\
    x_1 \end{array} \right] + B_\vu ,  B_{\vu_1}
    = \left[ \begin{array}{c} 0 \\ -x_2^2 +
    2 \end{array} \right],  B_{\vu_2} = \left[ \begin{array}{c}
    0 \\ -x_2 \end{array} \right],  B_{\vu_3}
    = \left[ \begin{array}{c} 2 \\
    10 \end{array} \right].  \end{array} \end{align*}
    The goal is to reach the target set $G:\ (x_1+0.75)^2 + (x_2-1.75)^2 \leq 0.25^2$, a circle centered at $(-0.75,1.75)$, as shown in Figure~\ref{fig:basic}, while staying in the safe region given by the rectangle $S_0:\ [-2,2] \times [-2,3]$:
    \[
    S_0 :\ \{ \vx | (x_1 + 2)(x_1 - 2) \leq 0 \land (x_2 + 2)(x_2 - 3) \leq 0 \} \,.
    \]

    First, we find the following control certificate:
        \begin{align*}  
        V(x_1,x_2):\ & 37.782349 x_1^2 -
    2.009762 x_1 x_2 + 60.190607 x_1 + 4.415093 x_2^2 -
    16.960145 x_2 + 37.411604 \,.
        \end{align*}
Using Eq.~\eqref{eq:suitable-feedbacks} we design a controller.  Figure~\ref{Fig:closed-loop-basic-example} shows some of the simulation traces of this closed-loop system, demonstrating the RWS property.
\begin{figure}[t!]
\begin{center}
\begin{subfigure}{.4\textwidth}
\centering
\includegraphics[width=1\textwidth]{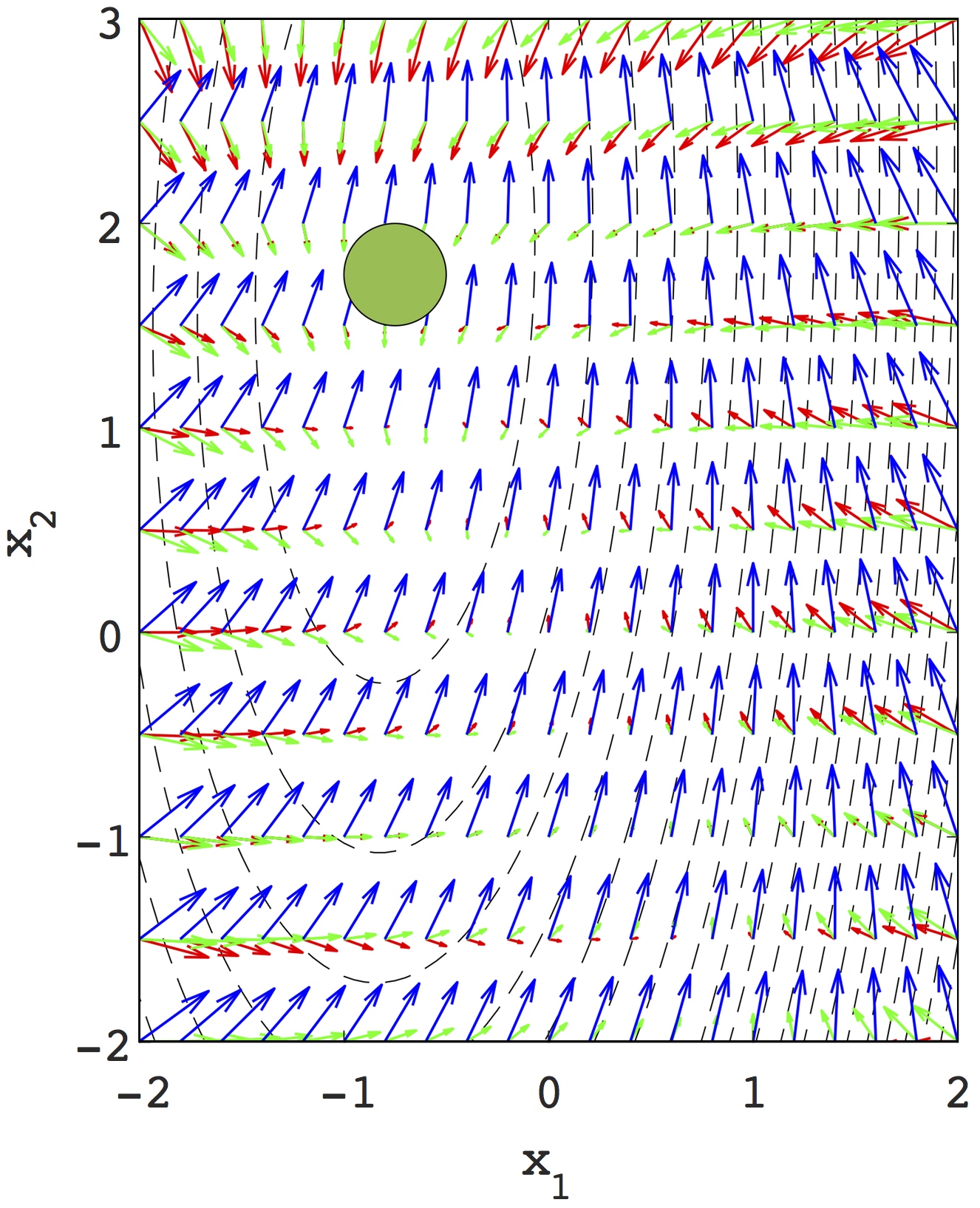}
\caption{Region $G$ is shown with a green circle and the vector fields for modes $\vu_1,\vu_2$ and $\vu_3$ are shown in red, green and blue, respectively. Level sets of $V$ are shown with black dashed lines}
\label{fig:basic}
\end{subfigure}
\qquad
\begin{subfigure}{.4\textwidth}
\centering
\includegraphics[width=1\textwidth]{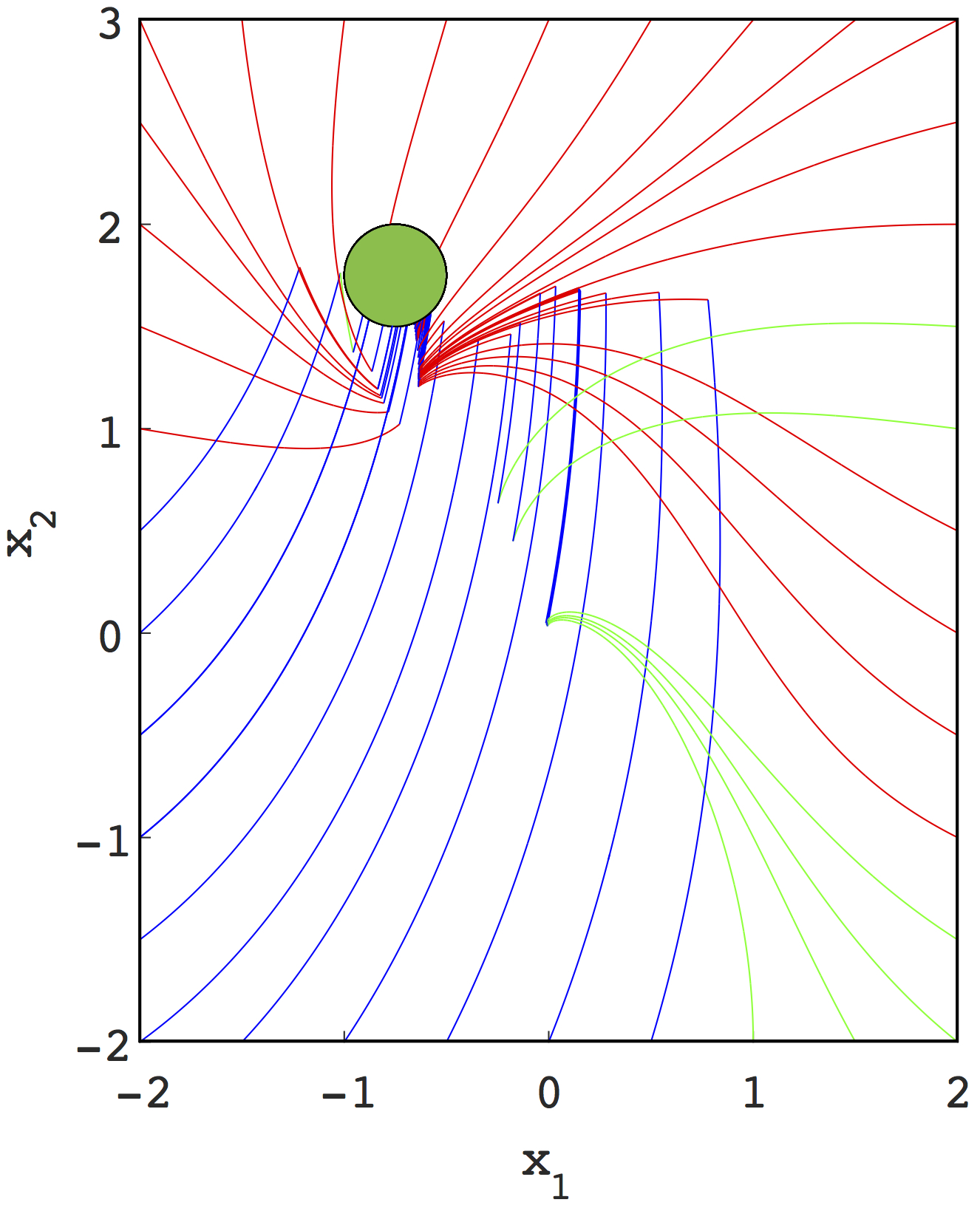}
\caption{Closed-loop trajectories using a CLFBF-based controller. The segments shown in colors red, green and blue correspond to the modes $\vu_1, \vu_2$ and $\vu_3$, respectively}
\label{Fig:closed-loop-basic-example}
\end{subfigure}
\caption{Plots for Example~\ref{ex:basic-control-to-facet}.}
\end{center}
\end{figure}
\end{example}

\section{Disturbances}\label{sec:dist-1}
Going one step further, one can consider the disturbances as well, and design a more robust solution. In this setting, $f$ is not only is a function of $\vx$ and $\vu$, but also function of disturbances $\vd$, which belongs to a compact set $D\subseteq\reals^p$: $\dot{\vx} = f(\vx, \vu, \vd)$ (Figure~\ref{fig:dist-model}). To avoid technical difficulties, we simply assume $D$ is a basic semi-algebraic set, and $\vd(\cdot)$ describes the uncontrollable input signal, and $\vd(\cdot)$ is assumed to be (locally) Lipschitz.

\begin{figure}[t]
\begin{center}
    \includegraphics[width=0.8\textwidth]{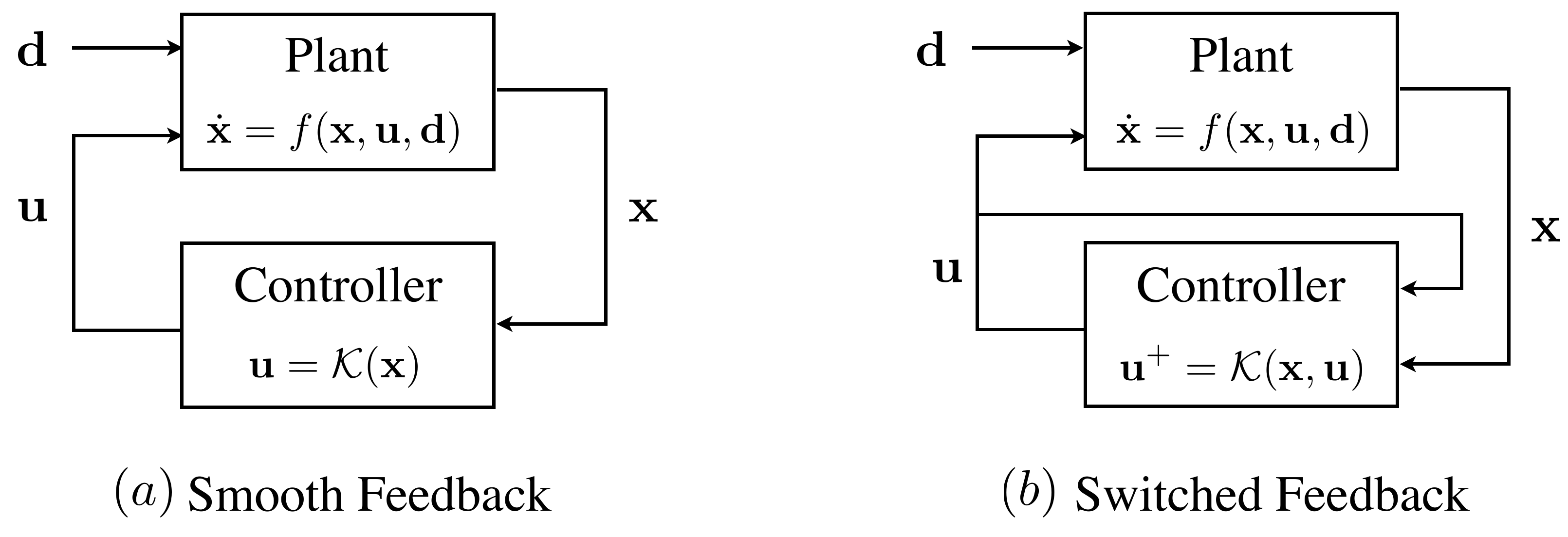}
\end{center}
\caption{Models of state feedback systems in presence of disturbances.}\label{fig:dist-model} 
\end{figure}

The controller has access to the state $\vx$ and needs to choose a proper input, without knowing the value of the disturbance $\vd$.
The idea is to find feedback $\vu$ such that the value of $V$ decreases under all possible disturbances.
To handle disturbances, the notion of CLF is extended to \emph{robust} CLFs (RCLF)~\cite{rotea1989stabilization,freeman1996inverse}.
We modify Definition~\ref{def:clf} to incorporate disturbances.
\begin{definition} [\emph{Robust} CLF] $\hspace{-0.2cm}$ A robust CLF (RCLF) is a smooth radially unbounded function $V$ with the following properties:
    \begin{equation}\label{eq:cllf-das}
      V(\vx_r) = 0 \,, \ \ \ \ 
          (\forall \vx \neq \vx_r) \ V(\vx) > 0 \,, \ \ \ \ 
          (\forall \vx \neq \vx_r) \ (\exists \vu \in U)
                \ {\color{red}(\forall \vd \in D)} \ (\nabla V) . f(\vx, \vu, {\color{red}\vd}) <
                0 \,.
    \end{equation}
\end{definition}

When compared to Eq.~\eqref{eq:clf}, the third condition for the RCLF is more complicated because of the extra $(\forall \vd)$ quantifier.

The solution of Freeman et al.~\cite{freeman1996inverse} for extracting a \emph{smooth} feedback function $\K$ from RCLF is not constructive. Battilotti~\cite{battilotti1999robust} provides a method for automatic design of $\K$, which depends on finding unknown functions with complicated constraints. We are not aware of any efficient method for extracting $\K$ from RCLFs. In fact, it can be quite complicated, requiring expensive quantifier elimination procedures. 

Here we discuss robust control certificates for \emph{switched} systems. For brevity, we only consider \emph{robust} CLBFs, and we mention that other robust control certificates will follow a similar protocol. Our solution is constructive and the design of the feedback function $\K$ is quite straightforward if we postpone quantifier elimination to runtime.

\begin{definition}[Robust CLBF] A smooth function $V$ is a robust control Lyapunov-barrier function iff
    \begin{equation} \label{eq:rcllf-rws}
    \begin{array}{rl} (\forall \vx \in I) & \ V(\vx) < 0 \\ (\forall \vx \in \partial S) & \ V(\vx) > 0 \\ (\forall \vx \in S \setminus \inter{G}) & \ (\exists \vu \in U) \ (\forall \vd \in D) \ \nabla V.f(\vx, \vu, \vd) < 0 \,.\end{array}
\end{equation}
\end{definition}

To design a switched feedback, let us define a function $\cond_\vu(\vx)$ over a state $\vx$ and mode $\vu$ as
\begin{equation}\label{eq:condq-definition}
\cond_\vu(\vx) = \max_{\vd \in D} \ \nabla V \cdot f_\vu(\vx, \vd) \,.
\end{equation}
The goal of a controller is to switch to a mode $\vu$ that guarantees that $\cond_\vu(\vx) < - \epsilon$, which in turn guarantees decrease of $V$.
Now, we show that having an RCLBF $V$, there exists $\veps$ s.t. any feedback function $\K \in \SK(\cond, S \setminus \inter{G}, \veps)$ is a solution to the \RWS \ problem. Notice that as $S$ is compact, and there is no need to check whether there is a finite escape time. Also, Zenoness must be avoided only for some time $T$ where $\vx(T) \in G$. Once the state reaches $G$, we assume the control is handed over to another controller.

\begin{theorem} \label{thm:robust-rws} Given a plant $\P$, sets $I$, $G$, $S$, and a RCLBF $V(\vx)$, there exists $\veps$ s.t. the  $\SK(\cond, S \setminus \inter{G}, \veps)$ (wherein $\cond_\vu :\ \max_{\vd \in D} (\nabla V) \cdot f_\vu(\vx, \vd)$) in non-empty, and any $\K \in \SK(\cond, S \setminus \inter{G}, \veps)$  guarantees RWS property defined by $I$, $S$, and $G$: $I \implies S \ \U \ G$.
\end{theorem}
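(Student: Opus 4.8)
The plan is to mirror the proof of Theorem~\ref{thm:control-to-facet}, adapting every step to absorb the disturbance quantifier through the worst-case condition function $\cond_\vu(\vx) = \max_{\vd \in D} \nabla V \cdot f_\vu(\vx, \vd)$. The key conceptual point is that forcing $V$ to decrease against the \emph{worst-case} disturbance forces it to decrease against \emph{every} realized signal $\vd(\cdot)$: whenever the controller holds a mode $\vu(t)$ with $\cond_{\vu(t)}(\vx(t)) \leq -\epsilon_s$, the actual derivative obeys $\dot{V}(\vx(t)) = \nabla V \cdot f(\vx(t), \vu(t), \vd(t)) \leq \cond_{\vu(t)}(\vx(t)) \leq -\epsilon_s < 0$, independent of $\vd(t)$. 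This reduces the robust problem to the deterministic decrease argument of Lemma~\ref{lem:rws}.

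First I would verify the three hypotheses of Lemma~\ref{lem:min-dwell-exists} on the set $S \setminus \inter{G}$, which is compact as a closed subset of the compact set $S$. Condition (c) of Eq.~\eqref{eq:rcllf-rws} gives, for each $\vx \in S \setminus \inter{G}$, a mode $\vu$ with $\nabla V \cdot f(\vx, \vu, \vd) < 0$ for all $\vd \in D$; since $D$ is compact and $\nabla V \cdot f_\vu(\vx, \cdot)$ is continuous, the maximum is attained, so $\cond_\vu(\vx) < 0$, establishing hypothesis (i). Boundedness of $\dot{\vx}$ on $S$ (hypothesis (iii)) follows from smoothness of $f$ together with compactness of $S$, $U$, and $D$. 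The remaining hypothesis (ii), continuity and piecewise differentiability of $\cond_\vu$, is the main technical obstacle: $\cond_\vu$ is a pointwise maximum over the compact set $D$ of the smooth family $\vd \mapsto \nabla V \cdot f_\vu(\vx, \vd)$, so continuity follows from Berge's maximum theorem, and the local Lipschitz bound that the dwell-time estimate actually uses follows from standard envelope (Danskin) arguments; piecewise differentiability holds wherever a maximizing $\vd$ is locally unique.

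With the hypotheses in place, Lemma~\ref{lem:min-dwell-exists} supplies $\veps = (\epsilon, \epsilon_s)$ for which $\SK(\cond, S \setminus \inter{G}, \veps)$ is non-empty, every $\K$ in this class has a positive minimum dwell-time (so traces are time-divergent and Zeno-free until $G$ is reached), and $\cond_{\vu(t)}(\vx(t)) \leq -\epsilon_s$ holds while $\vx(\cdot) \in S \setminus \inter{G}$. I would then record the geometric facts exactly as in the proof of Theorem~\ref{thm:safety-switch}: setting $P^* :\ V^{\leq 0} \cap S$, condition (a) of Eq.~\eqref{eq:rcllf-rws} gives $I \subset \inter{P^*}$, while condition (b) gives $V^{\leq 0} \subset \inter{S}$, hence $P^* \subset \inter{S}$ and $\partial P^* \subseteq V^{=0}$.

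Finally I would close the argument as in Lemma~\ref{lem:rws}. Starting from $\vx(0) \in I \subset \inter{P^*}$, the observation above yields $\dot{V}(\vx(t)) \leq -\epsilon_s$ while $\vx(t) \in S \setminus \inter{G}$, so $V$ is strictly decreasing; if the trace reached $\partial P^* \subseteq V^{=0}$ before entering $G$, then $V$ would equal $0$ there, contradicting $V(\vx(t)) < V(\vx(0)) < 0$, so the trace stays in $P^* \subset \inter{S}$ and remains safe. For reachability, if the trace stayed in $P^* \setminus G$ forever, then $V(\vx(t)) \leq V(\vx(0)) - \epsilon_s t \to -\infty$, contradicting boundedness of the continuous $V$ on the compact set $P^*$; hence some finite $T$ satisfies $\vx(T) \in G$ with $\vx(t) \in S$ for all $t \in [0, T]$, which is precisely $I \implies S \ \U \ G$.
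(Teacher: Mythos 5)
Your proof is correct and follows essentially the same route as the paper's: verify the hypotheses of Lemma~\ref{lem:min-dwell-exists} on the compact set $S \setminus \inter{G}$, use the worst-case bound $\nabla V \cdot f_{\vu(t)}(\vx(t), \vd(t)) \leq \cond_{\vu(t)}(\vx(t)) \leq -\epsilon_s$ to eliminate the disturbance quantifier, and conclude via Lemma~\ref{lem:rws}. You are in fact somewhat more careful than the paper, which asserts piecewise continuity of $\cond_\vu$ and boundedness of $f$ without the Berge/Danskin-style justification you supply for the pointwise maximum over the compact set $D$.
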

\begin{proof}
        Note that as $D$ is a compact basic semi-algebraic set and thus, $f$ is bounded on $S \setminus G$. Also, $\cond_\vu$ is piecewise continuous.
By Eq.~\eqref{eq:cllf-das} and Lemma~\ref{lem:min-dwell-exists}, there exists $\veps$ s.t. $\K \in \SK(\cond, S \setminus \inter{G}, \veps)$ is non-empty. Moreover, for any $\K \in \SK(\cond, S \setminus \inter{G}, \veps)$ (i) a min dwell-time exists and (ii) $\cond_{\vu(t)}(\vx(t)) :\ \nabla V.f_{\vu(t)}(\vx(t), \vd(t)) \leq -\epsilon_s$ as long as $\vx(t) \in S \setminus G$. Using Lemma~\ref{lem:rws} we conclude $I \implies S \ \U \ G$. \QED
\end{proof}
Recall that the controller needs to track the value of $\cond_\vu(\vx)$, which involves an optimization. As these optimizations can be expensive, we use the following assumption to design an efficient controller: (a) $D$ is a polytope, and (b) $f$ is affine in $\vd$. Now, the optimization is equivalent to solving a linear programming problem. Furthermore, if $D$ is a hyper-box, the optimization problem is effectively solvable with complexity $O(p)$, which is practically appealing. Therefore, the whole mode selection is $O(pm')$, where $m'$ is the number of modes.

\paragraph{Summary:} In this chapter, we discussed already established control certificates for \emph{smooth} feedback systems to address stability, reference tracking, and safety. Moreover, for these systems, we introduced control Lyapunov-barrier functions and control funnel functions (using path-following) for RWS properties. We introduced non-Zeno CLFs for \emph{switched} feedback systems. In addition, we showed that other control certificates for smooth feedback systems are applicable to switched feedback systems as well. Furthermore, we introduced control Lyapunov fixed-barriers functions for switched feedback systems. Finally, we demonstrated that \emph{robust} control certificates can be used for switched feedback systems with disturbances. In the next chapter, we provide a framework for discovering these control certificates.

\chapter{Inductive Synthesis}~\label{ch:search}
In the previous chapter, we described several classes of control certificates and how a control synthesis problem is reduced to that of finding a control certificate. We now propose a framework for finding such control certificates. In the proposed framework, the hypothesis space is defined using parameterization. More specifically, a template $T:\C \mapsto \Hy$ is defined over a set of parameters $\vc \in \C$. For each $\vc \in C$, $T(\vc)$ is a member of the hypothesis space $\Hy$, and the goal is to find $ \vc $ such that $ T(\vc) $ is a control certificate. All the control certificates we discussed in the previous chapter are functions that map $X$ to $\reals$.
We define a template $T:\C \mapsto (X \mapsto \reals)$ as $T(\vc) :\ V_\vc(\vx) :\ \sum_{k=1}^r c_k g_k(\vx)$, wherein $ c_k $ is the coefficient of basis function $ g_k(\vx) $. 
Suppose that we wish to find a CLF. If $g_k$'s are smooth, then $V$ is smooth. Moreover, if $g_k$'s are monomials, $V$ would be radially unbounded. Now, it is sufficient to solve for 
\begin{equation}
(\exists \vc \in \C) \ \left( V_\vc(\vzero) = 0 \, \land \, (\forall \vx \neq \vzero) \left( V_\vc(\vx) > 0 \land (\exists \vu) \ \nabla V \cdot f(\vx, \vu) < 0 \right)  \right)     \,.
\end{equation}
Either we prove no such $ \vc $ exists or we show that there exists $ \vc^* $ s.t. $ V_{\vc^*} $ is a CLF. The former case translates to ``non-existence of CLFs in the hypothesis space." Nevertheless, one can try to simply change the hypothesis space by adding more basis functions to the template. 

Finding control certificates is challenging as solving constraints which arise in a control certificate search is expensive~\cite{primbs1999}. Experts use their knowledge about the domain of interest to design a ``control certificate"~\cite{Nguyen2015RobustCLF,Nguyen2016CBF}. Alternatively, there are solutions for specific systems such as feedback linearizable systems~\cite{krstic1995nonlinear} and strict feedback systems~\cite{freeman2008robust}. 
In this chapter, we introduce an inductive framework for finding a control certificate $T(\vc):\ V_\vc$. 
First, we discuss the related work. 

\section{Background}
We go over the related work by first discussing the certificate synthesis problem using constraint solvers. Afterward, we investigate the control synthesis problem. For simplification, we mostly focus on the stability property. More specifically, we discuss constraint solving techniques used to analyze or synthesize control systems using Lyapunov functions.

\subsection{Lyapunov Function Synthesis}
Suppose we wish to find a Lyapunov function $V$. The search space is simplified through parameterization $V_\vc :\ \sum\limits_{k=1}^{r} c_k g_k(\vx)$.
Now, the problem is to find $\vc$ s.t. $V_\vc$ is a Lyapunov function. In this chapter, we assume $V$ is a polynomial with unknown coefficients ($g_k$'s are monomials). Other templates are left for future work. We note that using parameterization, the completeness is lost. However, any smooth function can be approximated with a polynomial.

\paragraph{Linear Systems:}
For a linear system where $\dot{\vx} = A \vx$, the problem of finding a Lyapunov function can be solved quite efficiently. More specifically, it is known that for a stable linear system, quadratic Lyapunov functions exist. Therefore, the parameterization of $V$ preserves the completeness as long as $V_\vc$ contains all quadratic terms. Now, we wish to find $\vc$ s.t.
\begin{equation}
    V_\vc(\vzero) = 0 \, \land \, 
    (\forall \vx \neq \vzero ) \ V_\vc(\vx) > 0 \, \land \, 
    (\forall \vx \neq \vzero ) \ \nabla V_\vc \cdot f(\vx) < 0 \,.
\end{equation}
I.e., $V_\vc$ should be positive definite and $\dot{V}_\vc$ is negative definite.
Notice that $V_\vc :\ \sum\limits_{k=1}^{r} c_k g_k(\vx)$ is linear in $\vc$ and therefore, $\nabla V_\vc \cdot f(\vx) :\ \sum\limits_{k=1}^{r} c_k (\nabla g_k \cdot f(\vx))$ is linear in $\vc$ as well. Also, if $V_\vc$ is purely quadratic in $\vx$ ($V_\vc:\ \sum\limits_{i=1}^n\sum\limits_{j=1}^{i} c_{ij} x_i x_j$), then $V_\vc(\vzero) = 0$, $V_\vc :\ \vx^t C \vx$, and $\nabla V_\vc \cdot f(\vx) :\ \vx^t C A \vx + \vx^t A^t C \vx$, where $C$ is a symmetric matrix and its entities have linear relations with $\vc$.
Alternatively, CLF conditions can be written in the following form:
\[
(\forall \vx \neq \vzero) \ \left( \tupleof{C, \vx \vx^t} > 0\, \land \, \tupleof{CA + A^tC, \vx \vx^t} < 0 \right) \,,
\]
wherein $\tupleof{A, B}$ is $\mbox{trace}(AB)$.
Using the fact that
\[
C \succ 0 \iff (\forall X \succeq 0, \neq \vzero) \ \tupleof{C, X} > 0 \iff (\forall \vx \neq \vzero) \ \tupleof{C, \vx \vx^t} > 0\,,
\]
any solution to the following semi-definite programming (SDP) problem yields a Lyapunov function:
\begin{align*}
    (\exists \vc) \ s.t. & \,  C \succ 0 \, \land \, C A + A^t C \prec 0 \, \wedge \ \tupleof{C, \vx\vx^t} = V_\vc(\vx),
\end{align*}
wherein the last condition defines the (linear) relation between $\vc$ and $C$.

\paragraph{Nonlinear Systems:}
The verification problem for a nonlinear system is harder. First, $V$ is not necessarily quadratic, and second, even if $V$ is quadratic in $\vx$, $\nabla V$ is not quadratic anymore.  Nevertheless, if $V$ and $f$ (and thus $\nabla V$) are polynomials over $\vx$, the same trick is extended to polynomials using SOS programming~\cite{papachristodoulou2002,prajna2002sos}. To represent a polynomial $g:X \mapsto \reals$, a vector collecting all monomials of degree up to $\degr$ is defined:
\[ \vz:\ \left[\begin{array}{c}
               1 \  x_1 \ x_2\ \ldots \ x_n^{\degr}\ \end{array}\right]^t \,,\]
wherein $\degr$ is chosen to be at least half 
of the maximum degree in $\vx$ among all monomials in $g(\vx):\ \degr \geq \frac{1}{2} \mbox{deg}(g)$.
Each polynomial $p$ of degree up to $2\degr$ (including $g$) may now be written as a trace inner product $p(\vx):\ \tupleof{ P, \vz \vz^t}$, wherein the matrix $P$ is symmetric and has real-valued entries that define the coefficients in $p$ corresponding to the various monomials. Then, it is easy to show
\[
    P \succ 0  \Rightarrow (\forall Z \succeq 0,\neq \vzero)\  \tupleof{P, Z} > 0 \Rightarrow (\forall \vz \neq \vzero)\  \tupleof{P, \vz \vz^t} > 0 \Rightarrow (\forall \vx) \ p(\vx \neq \vzero) > 0\,.
    \]
Notice that the reverse does not hold anymore. For example, Motzkin polynomial $m(\vx) :\ x_1^4x_2^2 + x_1^2x_2^4 - 3x_1^2x_2^2 + 1$ is a non-negative function, but for all $M$ s.t. $m(\vx) = \tupleof{M, \vz \vz^t}$, $M$ is not positive-semi definite~\cite{reznick2000some}. Similarly, one can show $m'(\vx) :\ m(\vx) + \epsilon$ ($\epsilon > 0$) is a positive polynomial, but for all $M'$ s.t. $m'(\vx) = \tupleof{M', \vz \vz^t}$, $M'$ is not positive definite (we leave the proof of this to the reader as it is slightly different from the one discussed in~\cite{reznick2000some}).

The method can be extended to cases where $\vx \in I$ for some basic semi-algebraic set $I :\ \{\vx \ | \ g_1(\vx) \geq 0 \land \ldots\land g_l(\vx) \geq 0\}$. Let $g_i(\vx) = \tupleof{G_i, \vz\vz^t}$ and $\I :\ \{\vz \ | \tupleof{G_1, \vz\vz^t}\geq 0\land\ldots\land \tupleof{G_l, \vz\vz^t}\geq 0 \}$. It is straightforward to show
\begin{align*}
    \left( P_0 \succ 0, \bigwedge_{i=1}^l P_i \succ 0 \right) & \Rightarrow (\forall \vz \in \I) \ \tupleof{P_0, \vz \vz^t} + \sum_{i=1}^l \tupleof{P_i, \tupleof{G_i, \vz \vz^t} \vz \vz^t} > 0 \\
    & \Rightarrow (\forall \vx \in I) \  p(\vx) = p_0(\vx) + \sum_{i=1}^l p_i(\vx)g_i(\vx) > 0\,,
\end{align*}
where $\degr \geq \frac{1}{2} \max(\mbox{deg} (p_0),\cup_i\mbox{deg} (p_ig_i))$.
This trick is known as S-procedure.

Let cone of $I$ be
\[
\mbox{Cone}(I) : \left\{ p \left| \left( (\exists P_0, \ldots, P_l \succ 0) \ \tupleof{P, \vz\vz^t} = \tupleof{P_0, \vz \vz^t} + \sum_{i=1}^l \tupleof{P_i, \tupleof{G_i, \vz \vz^t} \vz \vz^t}   \right) \right.\right\}\,.
\]
The following theorem shows the inverse is true under some assumptions.
\begin{theorem}[Putinar's Positivestellensatz~\cite{putinar1993positive}]
Given a set $I :\ \{\vx \ | g_1(\vx) \geq 0\land\ldots\land g_l(\vx) \geq 0\}$, assume there exists $h \in \mbox{Cone}(I)$ s.t. $\{\vx \ | \ h(\vx) \geq 0\}$ is a compact set, and $p$ is a positive polynomial on $K$, then, there exists $\degr$ and $P_0,P_1,\ldots,P_l \succ 0$ s.t.
\[
    p(\vx) = \tupleof{P, \vz \vz^t} = \tupleof{P_0, \vz \vz^t} + \sum_{i=1}^l \tupleof{P_i, \tupleof{G_i, \vz\vz^t}\vz \vz^t}\,.
\]
\end{theorem}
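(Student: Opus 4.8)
This is Putinar's Positivstellensatz, a deep result in real algebraic geometry, and the plan is to follow the classical functional-analytic argument rather than attempt a direct algebraic construction of the $P_i$. First I would recast the statement in cleaner language: write $M := \mbox{Cone}(I)$ for the quadratic module, i.e.\ the set of polynomials expressible as $\sigma_0 + \sum_{i=1}^l \sigma_i g_i$ with each $\sigma_i$ a sum of squares (exactly the representation with $P_i \succeq 0$ appearing in the statement), and set $K := I = \{\vx \mid g_1(\vx)\ge 0, \ldots, g_l(\vx)\ge 0\}$. The conclusion to be proved is precisely that $p \in M$. The hypothesis that some $h \in M$ has $\{h \ge 0\}$ compact is the \emph{Archimedean} condition; the first, routine, step is to upgrade it to the standard form, producing a constant $N$ with $N - \sum_i x_i^2 \in M$, so that $1$ is an order unit for $M$ and $M$ absorbs bounded perturbations of every polynomial.

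The core of the proof is a separation argument by contradiction. Suppose $p \notin M$. Since $M$ is a convex cone in the vector space of polynomials and $p$ lies outside it, I would apply a Hahn--Banach separation theorem (in the Eidelheit form for convex sets, using that the order unit $1$ gives $M$ nonempty algebraic interior) to obtain a nonzero linear functional $L$ with $L \ge 0$ on $M$ and $L(p) \le 0$. In particular $L$ is nonnegative on all sums of squares, hence positive semidefinite, and after normalization we may take $L(1)=1$. The Archimedean property is what makes this work downstream: it bounds the moments $L(x^\alpha)$ and forces $L$ to be continuous in the relevant topology, preventing mass from escaping to infinity.

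The decisive and hardest step is to realize $L$ as integration against a probability measure supported on $K$. Here I would invoke the representation theorem for Archimedean quadratic modules, via either Putinar's operator-theoretic route or Haviland's theorem. In the operator approach one forms the GNS Hilbert space of the positive functional $L$; the bound $N - \sum_i x_i^2 \in M$ guarantees that the multiplication operators $X_i$ are bounded, self-adjoint, and mutually commuting, so the spectral theorem yields a joint spectral measure whose push-forward is a Borel probability measure $\mu$ with $L(q) = \int q\, d\mu$ for all polynomials $q$. A short additional argument shows $\mathrm{supp}(\mu) \subseteq K$: for each $i$ and each SOS $\sigma$ we have $g_i \sigma \in M$, so $L(g_i \sigma) = \int g_i \sigma\, d\mu \ge 0$, which forces $\mu$ to give no mass to $\{g_i < 0\}$. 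This measure-existence step is the genuine obstacle; everything surrounding it is formal.

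The contradiction is then immediate: since $p > 0$ on the compact set $K$ and $\mu$ is a probability measure supported in $K$, we get $L(p) = \int_K p\, d\mu > 0$, contradicting $L(p) \le 0$. Hence $p \in M$, which is exactly the asserted representation $p(\vx) = \tupleof{P_0, \vz\vz^t} + \sum_{i=1}^l \tupleof{P_i, \tupleof{G_i,\vz\vz^t}\vz\vz^t}$ with positive semidefinite $P_i$, for some degree $\degr$. I would close with the remark that the argument is inherently non-constructive in $\degr$: it certifies existence of the representation but yields no effective degree bound, which matches the purely existential ``there exists $\degr$'' phrasing of the theorem (extracting explicit bounds is a separate and substantially harder quantitative problem).
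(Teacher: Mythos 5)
The paper does not prove this statement: it is quoted from Putinar's 1993 paper as a known result, so there is no in-paper argument to compare against. Your outline is the classical functional-analytic proof of Putinar's Positivstellensatz and is correct as a sketch --- separating $p$ from the quadratic module $M = \mbox{Cone}(I)$ with a positive linear functional $L$, realizing $L$ as integration against a probability measure supported on $K$ via the GNS construction and the spectral theorem (boundedness of the multiplication operators coming from the Archimedean bound), and deriving the contradiction $L(p) = \int_K p\,d\mu > 0$. Two caveats. First, the step you dismiss as routine --- upgrading the hypothesis that some $h \in \mbox{Cone}(I)$ has compact positivity set to the Archimedean bound $N - \sum_i x_i^2 \in \mbox{Cone}(I)$ --- is not a formality: the standard argument (Putinar's own lemma, or W\"ormann's trick) applies Schm\"udgen's Positivstellensatz to the single inequality $h \geq 0$, and Schm\"udgen's theorem is itself a deep result; you should either cite that equivalence explicitly or assume the Archimedean condition in its standard form. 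Second, your argument (correctly) produces sums of squares, i.e.\ Gram matrices $P_i \succeq 0$, whereas the statement as transcribed in the paper writes $P_i \succ 0$; the semidefinite form is the actual content of the theorem, so this mismatch is a defect of the paper's transcription rather than of your proof, but it is worth flagging.
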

Given a compact set $I$, without loss of generality, one could add an extra constraint $g_{l+1}(\vx) :\ N - ||\vx||^2$ for a large enough $N$. Then, the above theorem is applicable to $I' :\ I \cap \{\vx \ | \ N - ||\vx||^2 \geq 0\} = I$.

This idea leads to an SDP relaxation for approximate polynomial optimization~\cite{lasserre2001global}, as well as sum of squares (SOS) programming~\cite{papachristodoulou2002}, which is used to find Lyapunov functions over semi-algebraic sets. More specifically, by fixing a template $V_\vc$, and a large enough $\degr$ (to define the size of $P_0,P_1,\ldots,P_l$), any constraint of the form $(\forall \vx \in I) \ p_\vc(\vx) > 0$ is relaxed to an SDP constraint:
\[
\bigwedge_{i=0}^l P_i \succ 0 \land \left( \tupleof{P_0, \vz\vz^t} + \sum_{i=1}^l \tupleof{P_i, \tupleof{G_i, \vz\vz^t} \vz\vz^t} \right) = p_\vc(\vx)\,,
\]
wherein the last condition defines the (linear) relation between $P_0,P_1,\ldots P_l$, and $\vc$. Thus, one could use an SDP solver to find a Lyapunov function for polynomial dynamical systems~\cite{papachristodoulou2002}.

\paragraph{Sample-Based Lyapunov Function Synthesis:}
The problem of synthesizing Lyapunov functions for a control system by 
observing states of the system in simulation (for sampling) has been 
investigated in the past by Topcu et al. to learn Lyapunov functions 
along with the resulting basin of attraction~\cite{topcu2007stability}. 
Whereas the original problem is bilinear, the use of simulation data makes 
it simpler to postulate states that belong to the region of attraction, and
therefore find Lyapunov functions that belong to this region.
We use similar ideas to find \emph{control} Lyapunov functions.

\subsection{Lyapunov Function + Feedback law Synthesis}
As discussed in Chapter~\ref{ch:overview}, to address the control synthesis problem, most of constraint solving methods find a ``certificate+feedback law." In other words, $\K$ is also parameterized $\K_{\vc'}(\vx) :\ \sum_{i=1}^{r'} \vc_i' g'(\vx)$.
The problem of finding feedback function and Lyapunov function at the same time is harder as there are two sets of unknowns.

\paragraph{Linear Systems:}
For state feedback linear systems where $\dot{\vx} = A\vx + B\vu$, the problem remains simple. In fact, to stabilize to the origin, one would only need a linear feedback function $K$. Then, the goal is to find $\vc$ and $\vc'$ s.t.
\begin{align*}
    V_\vc(\vzero) = 0 \, \land \,
    (\forall \vx \neq \vzero) \ V_\vc(\vx) > 0 \, \land \, 
    (\forall \vx \neq \vzero) \ \nabla V_\vc \cdot f(\vx, \K_{\vc'}(\vx)) < 0\,.
\end{align*}
Written in matrix form, we wish to find $C$ and $C'$ s.t.
\[
C \succ 0 \, , \, C(A+BC') + (A+BC')C \prec 0\,.
\]
This problem seems harder as unknowns are multiplied which yields a bilinear problem. However, the following trick solves the bilinearity problem~\cite{lofberg2004yalmip}. Let $Q=C^{-1}$ and $Y = C'Q$. Then, find $Q$ and $Y$ s.t.
\[
Q \succ 0 \land QA + A^tQ + Y^tB^t + BY \prec 0 \,,
\]
which is linear (not biliear).
Then, $C$ is $Q^{-1}$ and $C' = CY$. Another approach is to use LQR~\cite{boyd1994linear} in which a Riccati differential equation is solved.

\paragraph{Nonlinear Systems:}
The idea mentioned above is not extendable to nonlinear systems. In fact, for a nonlinear system, such a problem is not convex (the feasible set is not even connected) anymore~\cite{prieur1999uniting}. Usually, this problem is formulated as a bilinear SOS programming with two sets of unknowns. There are expensive methods which can solve such nonconvex bilinear problem~\cite{lofberg2004yalmip}. Another approach is to use alternating optimizations.
To form an optimization problem, a scalar variable $\gamma$ is added and we wish to solve the following optimization problem:
\[
\min_{\vc, \vc', {\color{red}\gamma}} {\color{red}\gamma} \mbox{ s.t. } \ V_\vc(\vzero) = 0 \, \land \,
    (\forall \vx \neq \vzero) \ V_\vc(\vx) > 0 \, \land \,
    (\forall \vx \neq \vzero) \ \nabla V_\vc \cdot f(\vx, \K_{\vc'}(\vx)) < {\color{red}\gamma}\,.
\]

Then, alternatively (i) $\vc$ is fixed while $\gamma$ is minimized by changing $\vc'$, and (ii) $\vc'$ is fixed while $\gamma$ is minimized by changing $\vc$. This procedure repeats until it converges to a local minimum. If $\gamma$ is negative, the solution is feasible. This method (which is also called policy iteration) has poor guarantees in practice, it often \emph{gets stuck} on a saddle point that does not allow the technique to make progress in finding a feasible solution~\cite{el1994synthesis,BENSASSI2017} . To combat this, Majumdar et al. (ibid) use LQR controllers and their associated Lyapunov functions for the linearization of the dynamics as a good initial seed solution~\cite{majumdar2013control}. However, the linearization of the dynamics may not be controllable. In addition, the complexity of their method is exponential in the number of inputs when inputs are saturated.

\subsection{Control Lyapunov Function Synthesis}
The problem of finding control Lyapunov function synthesis is harder as there is one addition quantifier alternation.
The problem for linear systems is not considered as complete solutions exists for finding a ``Lyapunov function + feedback law" (as discussed). For nonlinear systems Tan et al. consider systems with control affine dynamics~\cite{tan2004searching}. Then, they reduce the problem of finding a CLF to the following problem:
\[
(\exists \vc \in \C) \ V_\vc(\vzero) = 0 \, , \, (\forall \vx \neq \vzero) \ V_\vc(\vx) > 0 \, , \, (\forall \vx \neq \vzero) \left(\bigwedge_{i=1}^m \nabla V.f_i(\vx) = 0 \right) \implies f_0(\vx) < 0,
\]
which gives a bilinear formulation through SOS programming. Then, the bilinearity is solved using alternative optimizations which may trap in a local infeasible solution.

In this chapter, we propose a framework to find CLFs as well as other control certificates. For other control certificates we use similar templates $V_\vc(\vx) : \sum_{i=1}^r c_i g_i(\vx)$, where $g_i(\vx)$ is a monomial in $\vx$. For \emph{smooth} feedback systems, we need to solve a formula of the following form:
\begin{equation}\label{eq:smooth-general}
    (\exists \vc \in \C)\ (\forall \vx \in X) \ 
    \ \begin{cases}
    \vx \in R_1 \implies \bigvee\limits_{q \in Q} (\exists \vu \in U) \ p_{\vc,1,q}(\vx, \vu) < 0 \\
    \vdots \\
    \vx \in R_l \implies \bigvee\limits_{q \in Q} (\exists \vu \in U)\ p_{\vc,l,q}(\vx,\vu) < 0\,,
\end{cases}
\end{equation}
wherein $R_i$ is a \emph{basic semi-algebraic set} and $p_{\vc,i,q}$ is linear in $\vc$. 
The disjunction over $Q$ is used for control barrier functions. To find a control barrier function, we wish to solve the following formula:
\begin{equation*}
    (\exists \vc \in \C) \ (\forall \vx \in X) \begin{cases}\begin{array}{rl}
\vx \in \partial S \implies & \ B_\vc(\vx) > 0 \\
\vx \in  I \implies & \  B_\vc(\vx) < 0 \\
\vx \in S \setminus \inter{I} \implies & \ \ \left( \begin{array}{c}
        (\exists \vu \in U) \ \nabla B_\vc \cdot f(\vx,\vu) - \lambda^* B_\vc(\vx) < 0 \\
        \lor \\
        (\exists \vu \in U) \ \nabla B_\vc \cdot f(\vx,\vu) + \lambda^* B_\vc(\vx) < 0 \\
    \end{array} \right) \,,
    \end{array}\end{cases}
\end{equation*}
which fits the general form in Eq.~\eqref{eq:smooth-general}. Similar argument applies for control funnel functions (see Eqs.~\eqref{eq:rules} and~\eqref{eq:cff-exp}).

For \emph{switched} feedback systems, we wish to solve the following formula:
\begin{equation}\label{eq:switched-general}
    (\exists \vc \in \C)\ (\forall \vx \in X) \ 
    \ \begin{cases}
    \vx \in R_1 \implies \bigvee\limits_{q \in Q} \bigwedge\limits_{s \in S} \ p_{\vc,1,q,s}(\vx) < 0 \\
    \vdots \\
    \vx \in R_l \implies \bigvee\limits_{q \in Q} \bigwedge\limits_{s \in S} \ p_{\vc,l,q,s}(\vx) < 0\,,
\end{cases}
\end{equation}
wherein $R_i$ is a basic semi-algebraic set and $p_{\vc,i,q,s}$ is linear in $\vc$. 
Notice that here $(\exists \vu \in U)$ is merged into the disjunction over $Q$ as $U$ is finite. Recall that we consider control Lyapunov fixed-barriers functions only for switched feedback systems. The disjunction over $S$ is needed when we search for a control Lyapunov fixed-barriers function:
\begin{equation*}
(\exists \vc \in \C)\ (\forall \vx \in X) \ 
    \ \begin{cases}
\begin{array}{rl}
\vx \in S \setminus \inter{G} \implies & \bigvee_{\vu \in U}  \left( \begin{array}{c}
    \nabla V_\vc \cdot f_\vu(\vx) < 0 \ \land \\
    \bigwedge_j\ \left( \begin{array}{c}\nabla p_{S,j} \cdot f_\vu(\vx) + \lambda\ p_{S,j}(\vx)\ < 0
\end{array}\right)
\end{array} \right)\,.
\end{array}
\end{cases}
\end{equation*}

Now, we propose our framework to solve Eqs.~\eqref{eq:smooth-general} and~\eqref{eq:switched-general}.

\section{Counterexample Guided Search}\label{sec:cegis}
The general idea for our proposed framework is to use a finite number of samples to learn a control certificate. The idea is simple: given a finite set of points (states), if a hypothesis $T(\vc)$ satisfies conditions for being a control certificate on these sample points (witness), $T(\vc)$ potentially satisfies those conditions for all states.

\begin{example} \label{ex:cegis-idea} Consider a smooth feedback system with two state variables $x_1$ and $x_2$, where $\dot{x_1} = x_2$, $\dot{x_2} = -x_1 + u$, and $u \in [-1, 1]$. The goal is to stabilize the system. It is easy to show (cf.~\cite{ravanbakhsh2015counter}) that $V$ is a CLF for the smooth feedback system iff $V$ is a CLF for a switched feedback system with two modes $ U :\ \{u_1, u_2\} $ and the following dynamics:
\[
\vu_1 \begin{cases}
    \dot{x_1} = x_2 \\
    \dot{x_2} = -x_1-1
\end{cases} \,,
\vu_2 \begin{cases}
    \dot{x_1} = x_2 \\
    \dot{x_2} = -x_1+1\,.
\end{cases} 
\]
We use a quadratic template $V_\vc(x_1, x_2):\ c_1 x_1^2 + c_2 x_1x_2 + c_3 x_2^2$. Then, we find a $\vc$ s.t. $V_\vc(\vx)$ satisfies Eq.~\eqref{eq:clf} only for states shown in red in Figure~\ref{fig:example-points}. In other words, for red states, (a) $V(\vx)$ is positive, and (b) there is an input ($\vu_1$ or $\vu_2$) which if selected, the value of $V$ decreases. We find one such $V_\vc$. $V_\vc(x_1, x_2) = 3 x_1^2 + 1.5 x_1x_2 + 1.5 x_2^2$. Interestingly, it is verified that this $V_\vc$ is in fact a control Lyapunov function.

\begin{figure}[t!]
\begin{center}
\includegraphics[width=0.8\textwidth]%
    {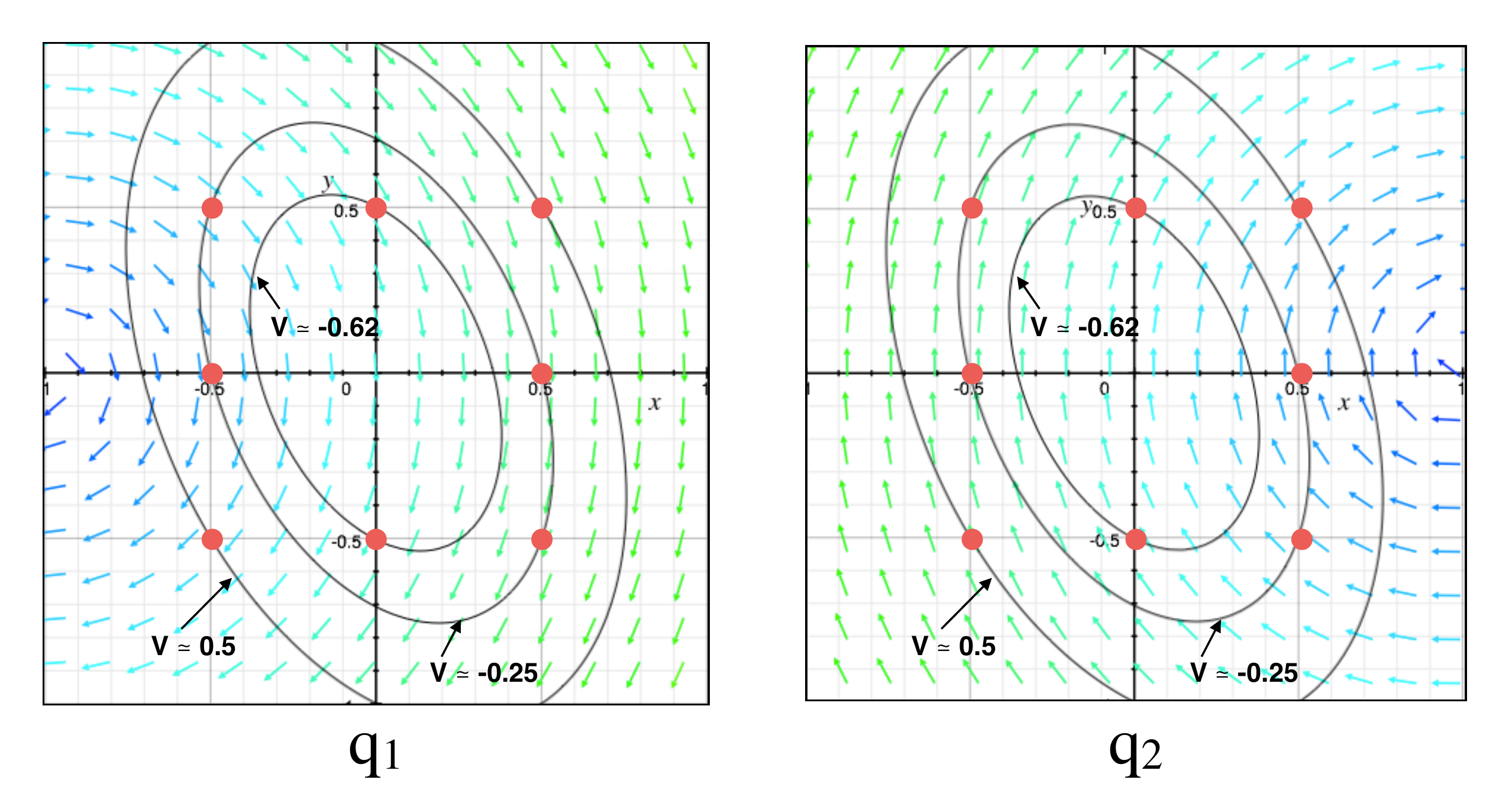}
\\
Left plot shows the vector
      field for mode $\vu_1$ and right plot shows the vector field for mode
      $\vu_2$ along with level sets of the CLF discovered.
    \caption{Learning a CLF from finite samples.}
\label{fig:example-points}
\end{center}
\end{figure}
\end{example}

Unfortunately, the set of samples may yield a candidate solution $T(\vc)$ which is not a control certificate. In these cases, we wish to generate a new sample and repeat the process iteratively. Ultimately, we want to carefully select samples and learn suitable parameters $\vc$ with few witnesses.
For learning Lyapunov functions, Kapinski et al.~\cite{kapinski2014simulation} propose to generate samples iteratively using counterexamples through counterexample guided inductive synthesis (CEGIS) framework, which originally is proposed in verification community by
Solar-Lezama et al.~\cite{solar2006combinatorial,solar2008program}. Here, we extend this method to learn \emph{control} certificates.

First, we briefly discuss how CEGIS works. The procedure is iterative and solves $ (\exists \vc \in \C) \ (\forall \vx \in X) \ \psi_\vc(\vx)$. By solving, we mean either prove no $ \vc^* \in C $ exists, or find a $ \vc^* \in \C $ for which $ (\forall \vx \in X) \ \psi_{\vc^*}(\vx) $ holds. The framework consists of two components: (i) a learner which generates a candidate solution using samples, and (ii) a verifier that tests whether the candidate solution is valid.

The learner uses the following concept to generate a candidate solution.
\begin{definition}[Sample Compatibility]
    A formula $\psi(\vx) :\ X \mapsto \bools$ is compatible with a set of samples $\X$
    iff $\bigwedge\limits_{\vx_i \in \X} \psi(\vx_i)$.
\end{definition}

For each iteration $ j $, we define a finite set of witnesses (counterexamples) $ \X_j :\ \{\vx_1,\ldots,\vx_j\} \subset X$. Then, an implicit set of candidate solutions $ \C_j $ defined as all $\vc \in \C$ for which $\psi_\vc(\vx)$ is compatible with the samples:
\[
\C_j :\ \{\vc \in \C | \bigwedge_{\vx_i \in \X_j} \psi_\vc(\vx_i)\} \,.
\]

Set $ \C_j $ over-approximates feasible solutions in $ \C $ ($\C_j$ is a set of potential solutions). $ \C_j $ is defined by $ \X_j $ and the procedure starts with $ \X_0 = \emptyset$ ($\C_0 = \C$). As depicted in Figure~\ref{fig:cegis}, in the $ j^{th} $ iteration starting from $j=1$, the following steps are executed:
\begin{compactenum}
    \item \textsc{findCandidate:} The learner checks wether $ \C_{j-1} $ is empty
    \begin{compactenum}
        \item If yes, CEGIS terminates, proving no solution exists,
        \item Otherwise, the learner returns a candidate $ \vc_j \in \C_{j-1} $.
    \end{compactenum}
    \item \textsc{verify:} The verifier checks whether candidate $\vc_j$ yields a solution: $ (\forall \vx \in X) \ \psi_{\vc_j}(\vx) $
    \begin{compactenum}
        \item If yes, CEGIS terminates with $ \vc^* = \vc_j $ as a solution,
        \item Otherwise, the verifier returns a counterexample (witness) $ \vx_j \in X $ s.t. $ \lnot \psi_{\vc_j}(\vx_j) $.
    \end{compactenum}
    \item \textsc{update} add the new witness to the witnesses set:
\begin{align}
    \X_{j} :& \ \X_{j-1} \cup \{\vx_{j}\} \\
    \C_{j} :& \ \C_{j-1} \cap \{\vc \in \C\ | \ \psi_{\vc}(\vx_{j}) \} \,.
\end{align} 
\end{compactenum}

\begin{figure}[t!]
\begin{center}
\includegraphics[width=0.7\textwidth]%
    {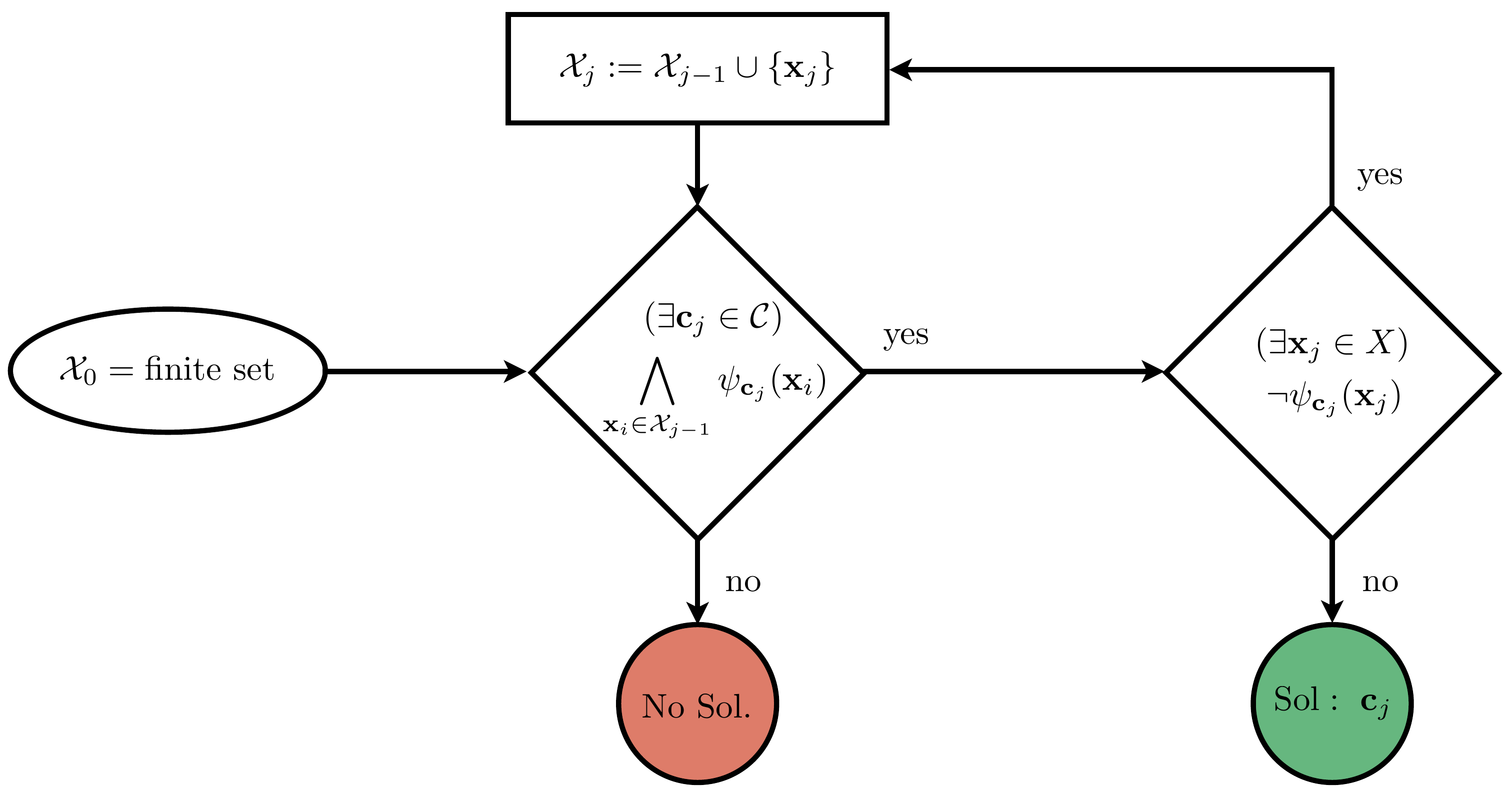}
\\
\caption{Visualization of the CEGIS framework.}
\label{fig:cegis}
\end{center}
\end{figure}

First, the learner checks whether a candidate solution exists. Since $\C_j$ over-approximates feasible solution, $\C_j = \emptyset$ implies that no solution exists. Otherwise, the verifier tests feasibility of a candidate solution. If no counterexample is found, the candidate is feasible. Otherwise, the counterexample is added to the set of witnesses. We note that $ \vx_{j} \not\in \X_{j-1} $ since it is guaranteed for all $\vc_j \in \C_{j-1}$, $ (\forall \vx_j \in \X_{j-1})\ \psi_{\vc_j}(\vx_j) $. Intuitively, set $ \X_{j} $ has one more member ($ \vx_{j} $) that $ \X_{j-1} $ and $ \C_{j} \subset \C_{j-1} $. If $ X $ is a finite set, CEGIS terminates with two possible outcomes: Either (a) $ \C_j = \emptyset $ for some iteration $ j $ (no solution exists), or $ \vc_j $ is a solution (no $ \vx_j $ exists).

\begin{remark}
    The initial set of witnesses $\scr{X}_0$ does not need to be empty, and it can be an arbitrary finite set.
\end{remark}

The CEGIS method is not directly applicable to our problem, and we need some adaptations. Recall Eq.~\eqref{eq:clf}:
\begin{equation}\label{eq:clf-Vc}
    V_\vc(\vzero) = 0 \, , \, (\forall \vx \neq \vzero) \ \left( V_\vc(\vx) > 0 \land (\exists \vu \in U) \ \nabla V_\vc \cdot f(\vx, \vu) < 0 \right) \,.
\end{equation}
Let $\C_0 :\ \{ \vc \in \C | \ V_\vc(\vzero) = 0 \}$. We wish to solve $(\exists \vc \in \C) \ (\forall \vx \neq \vzero) \psi_\vc(\vx)$, where $\psi_\vc(\vx)$ is

\[
    \psi_\vc(\vx) :\ V_\vc (\vx) > 0 \land (\exists \vu \in U) \ \nabla V_\vc \cdot f(\vx, \vu) < 0\,.
\]

To implement the CEGIS framework, two main steps should be addressed: (a) \emph{Discovering} a $\vc_j \in \C_{j-1} $, and (b) \emph{Verifying} whether candidate $ V_{\vc_j} $ is a solution. For the learning step, we need to solve the following formula for $\vc \in \C$:

\begin{equation} \label{eq:candidate-generation}
    \bigwedge_{\vx_i \in \X_{j-1}} \left( V_\vc(\vx_i) > 0 \land (\exists \vu \in U) \ \nabla V_\vc \cdot f(\vx_i, \vu) < 0 \right) \,.
\end{equation} 

Also, in the verification step, we need to solve two formulae for $\vx \in X$:
\begin{align}
    (a) & :\ \vx \neq \vzero \land V_{\vc_j}(\vx) \leq 0 \label{eq:ce-generation-pos} \\
    (b) & :\ \vx \neq \vzero \land (\exists \vu \in U) \ \nabla V_{\vc_j} \cdot f(\vx, \vu) \geq 0\,. \label{eq:ce-generation-neg}
\end{align}

However, these formulae are hard to solve mainly because of the quantifier over $\vu$. To make CEGIS applicable to our problem we consider switched systems (i.e., we assume that the set $ U $ is finite). Theoretically, each $U$ can be safely under-approximated with a finite set $ \hat{U} $ and a solution for the switched system is also a solution for the original problem. This reduction is complete for certain systems/properties~\cite{ravanbakhsh2015counter}. For now, we assume that $U$ is finite. We will remove this restriction, later in this chapter. Having a finite $U$, each formula mentioned above is a quantifier-free formula as $(\exists \vu \in U)$ is replaced with $\bigvee_{\vu \in U}$. More precisely, Eq.~\eqref{eq:candidate-generation} is a Quantifier-Free formula in Linear Real Arithmetic theory (QF-LRA), which can be solved with satisfiability modulo theories (SMT) solvers such as Z3~\cite{DeMoura-Bjorner-08-Z3} or mixed integer linear programming (MILP) solvers. Eqs.~\eqref{eq:ce-generation-pos} and~\eqref{eq:ce-generation-neg} are quantifier-free formula as well. Solving these formulae is \emph{in general} undecidable if the dynamics (or basis functions) include trigonometric and exponential functions. However, $\delta$-decision procedures can solve these problems approximately under certain assumptions~\cite{gao2012delta}.
Assuming that the dynamics and chosen bases are polynomials in $\vx$, Eqs.~\eqref{eq:ce-generation-pos} and~\eqref{eq:ce-generation-neg} are Quantifier Free formula in Nonlinear Real Arithmetic theory (QF-NRA). Such problems are decidable with high complexity (NP-hard)~\cite{Basu+Pollock+Roy/03/Algorithms}.
Exact approaches using semi-algebraic geometry~\cite{Brown:2007:CQE:1277548.1277557} or branch-and-bound solvers (e.g. dReal~\cite{DBLP:conf/cade/GaoKC13}) can tackle this problem precisely.

\emph{Numerical} SMT Solvers such as dReal~\cite{DBLP:conf/cade/GaoKC13} are shown to be more efficient. However, they use some numerical thresholds, which result in two issues. First, numerical SMT solvers need the region of interest (here $X \setminus \{\vzero\}$) to be bounded. Second, an output for a given formula is either \emph{UNSAT} (unsatisfiable) or \emph{$\delta$-SAT}, when the formula is satisfiable under some $\delta$-perturbation~\cite{gao2012delta} (for $\delta > 0$). These two issues prevent us from verifying a CLF because first, the CLF conditions are defined over an unbounded open set ($\vx \neq \vzero$), and second, an SMT solver would return $\delta$-SAT for any CLF as infinite precision is needed around the origin. For these reasons, we consider other control certificates (where the regions of interest are compact semi-algebraic sets) in the rest of this section. We will deal with CLFs in the next section.

Recall that for other control certificates we use similar templates: $V_\vc(\vx) = \sum_{i=1}^r c_i g_i(\vx)$, where $g_i(\vx)$ is a monomial in $\vx$. Then, assuming $\C_0 :\ \C$, using learning a proper $\vc$ is equivalent to solving $(\exists \vc \in \C) \ (\forall \vx \in X) \ \psi_\vc(\vx)$ where $\psi_\vc(\vx)$ has the following general form (see Eq.~\eqref{eq:switched-general}):
\begin{equation}\label{eq:cegis-psi}
\psi_\vc(\vx) :\ \begin{cases}
    \vx \in R_1 \implies \bigvee\limits_{q \in Q} \bigwedge\limits_{s \in S} \ p_{\vc,1,q,s}(\vx) < 0 \\
    \vdots \\
    \vx \in R_l \implies \bigvee\limits_{q \in Q} \bigwedge\limits_{s \in S} \ p_{\vc,l,q,s}(\vx) < 0\,,
\end{cases}
\end{equation}
wherein $R_i$ is a basic semi-algebraic set, and $p_{\vc,q,s,i}$ is linear in $\vc$. Moreover, for all control certificates except for CLFs, $R_i$ is \emph{compact} (for all $i$). At each iteration a candidate $\vc_j$ is generated s.t. $\psi_{\vc_j}(\vx)$ holds for all $\vx \in \X_{j-1}$:
\begin{equation}\label{eq:cegis-lra}
\bigwedge_{\vx \in \X_{j-1}} \begin{cases}
    \vx \in R_1 \implies \bigvee\limits_{q \in Q} \bigwedge\limits_{s \in S} \ p_{\vc_j,1,q,s}(\vx) < 0 \\
    \vdots \\
    \vx \in R_l \implies \bigvee\limits_{q \in Q} \bigwedge\limits_{s \in S} \ p_{\vc_j,l,q,s}(\vx) < 0 \,. \\
\end{cases}
\end{equation}
 This formula belongs to QF-LRA. Then, having a candidate $\vc_j$, we solve $l$ different verification problems:
\begin{align*}
    (1) & :\ \vx \in R_1 \land \bigwedge\limits_{q \in Q} \bigvee\limits_{s \in S} p_{\vc_j,1,q,s}(\vx) \geq 0 \\
    \vdots \\
    (l) & :\ \vx \in R_l \land \bigwedge\limits_{q \in Q} \bigvee\limits_{s \in S} p_{\vc_j,l,q,s}(\vx) \geq 0
\,.
\end{align*}
Now, we could use numerical SMT solvers to solve these problems because $R_i$ is a compact set. 
They can correctly conclude that the current candidate yields a valid control certificate.  In case one of the problems is $\delta$-SAT, the witness may not be a witness to the original problem. Using the spurious witness may cause the CEGIS procedure to potentially continue (needlessly) even when a solution $\vc_i$ has been found. Even worse, a $\delta$-SAT procedure may cause $\C_{j-1} = \C_{j}$ (because of a spurious witness), i.e., CEGIS does not progress. We address this problem next.

\subsection{Termination}
The termination of CEGIS is not guaranteed in a continuous domain, even if we use symbolic SMT solvers to make sure the method progresses. We noted that termination is possible if a solution of the desired form exists, or the hypothesis space is exhausted. However, neither situation may happen, and the algorithm may run forever. 
We provide a strengthening of Eq.~\eqref{eq:cegis-lra} that guarantees termination:
 \begin{equation}\label{eq:cegis-lra-relaxed}
\bigwedge_{\vx \in \X_{j-1}} \psi_\vc(\vx) : \begin{cases}
    \vx \in R_1 \implies \bigvee\limits_{q \in Q} \bigwedge\limits_{s \in S} \ p_{\vc_j,1,q,s}(\vx) < {\color{red}-\epsilon_T} \\
    \vdots \\
    \vx \in R_l \implies \bigvee\limits_{q \in Q} \bigwedge\limits_{s \in S} \ p_{\vc_j,l,q,s}(\vx) < {\color{red}-\epsilon_T} \,, \\
\end{cases}
 \end{equation}
wherein $\epsilon_T > 0$ is fixed and larger than the threshold used for the numerical SMT solver.

Let $\vc_j$ be a candidate examined at the $j^{th}$ iteration of the CEGIS procedure modified to use Eq.~\eqref{eq:cegis-lra-relaxed}. Suppose $V_{\vc_j}$ fails to be a control certificate and we compute $\C_{j}$. It is easily shown the $\vc_j \not \in \C_{j}$.
Furthermore, by using~\eqref{eq:cegis-lra-relaxed}, we obtain the following result that any candidate in a $\eta$-ball around $\vc_j$ is also eliminated if $\C$ is compact.
\begin{theorem}
   \label{thm:eta-exists}
Given a compact $\C$, if the CEGIS procedure were modified using Eq.~\eqref{eq:cegis-lra-relaxed} with a given $\epsilon_T > 0$, then there exists a constant $\eta > 0$ such that at each iteration $j$, $ \B_{\eta}(\vc_j) \cap \C_{j} = \emptyset$.
\end{theorem}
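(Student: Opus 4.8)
The plan is to exploit the affine (linear in $\vc$) structure of the polynomials $p_{\vc,i,q,s}$ together with the fixed margin $\epsilon_T$ and the compactness of the regions $R_i$. The core observation is that the candidate $\vc_j$ and every surviving parameter $\vc \in \C_j$ must disagree, at the newly added witness $\vx_j$, by a definite amount on one and the same polynomial; because that polynomial is affine in the parameter, this forced gap converts directly into a lower bound on $\|\vc - \vc_j\|$.

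First I would unpack what it means for $\vc_j$ to fail verification. The verifier tests the original (unrelaxed, $< 0$) conditions, so failure produces a witness $\vx_j$ and an index $i$ with $\vx_j \in R_i$ for which the disjunction fails: for every disjunct $q \in Q$ there is a conjunct $s \in S$ with $p_{\vc_j,i,q,s}(\vx_j) \geq 0$. Next I would record what membership in $\C_j$ demands. Since $\vx_j$ is appended to the witness set and $\C_j$ is cut by the relaxed condition of Eq.~\eqref{eq:cegis-lra-relaxed}, any $\vc \in \C_j$ must satisfy that condition at $\vx_j$; at the same index $i$ this yields some disjunct $q^* \in Q$ with $p_{\vc,i,q^*,s}(\vx_j) < -\epsilon_T$ for all $s \in S$.

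The decisive step is to pair these two facts on a single polynomial. For the disjunct $q^*$ certified by membership, the failure of $\vc_j$ furnishes a conjunct $s^*$ with $p_{\vc_j,i,q^*,s^*}(\vx_j) \geq 0$, while membership gives $p_{\vc,i,q^*,s^*}(\vx_j) < -\epsilon_T$. Since $p_{\vc,i,q^*,s^*}(\vx_j)$ is affine in $\vc$, the difference equals $\langle \nabla_\vc p_{\cdot,i,q^*,s^*}(\vx_j),\ \vc - \vc_j\rangle$, whose magnitude therefore exceeds $\epsilon_T$; here $\nabla_\vc p$ is simply the vector of basis functions $g_k$ evaluated at $\vx_j$, which is continuous in $\vx_j$. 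By Cauchy--Schwarz, $\|\vc - \vc_j\| > \epsilon_T / \|\nabla_\vc p_{\cdot,i,q^*,s^*}(\vx_j)\|$. To obtain a single $\eta$ valid at every iteration I would take $L := \max_{i,q,s} \sup_{\vx \in R_i} \|\nabla_\vc p_{\cdot,i,q,s}(\vx)\|$, which is finite because each $R_i$ is compact (as holds for every control certificate treated here) and the basis functions are continuous, while $Q$, $S$, and the index set are finite. Setting $\eta := \epsilon_T / L > 0$ forces every $\vc \in \C_j$ to satisfy $\|\vc - \vc_j\| > \eta$, hence $\B_\eta(\vc_j) \cap \C_j = \emptyset$, with $\eta$ depending only on $\epsilon_T$ and $L$ and not on $j$.

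The main obstacle is the uniformity of $\eta$ across iterations: a naive argument yields a gap $\epsilon_T / \|\nabla_\vc p(\vx_j)\|$ that could degrade as the witnesses $\vx_j$ vary, so the compactness of the regions $R_i$ — in which all witnesses necessarily lie — is exactly what supplies the uniform bound $L$ and rescues the claim. A secondary subtlety is the index bookkeeping: the disjunct $q^*$ must be the one guaranteed by membership in $\C_j$ and the conjunct $s^*$ the one guaranteed by the failure of $\vc_j$, so that both inequalities refer to the identical polynomial $p_{\cdot,i,q^*,s^*}$ evaluated at the identical point $\vx_j$; without this matching the affine-difference estimate does not apply.
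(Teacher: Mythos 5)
Your proof is correct and follows essentially the same route as the paper's: both convert the forced gap of at least $\epsilon_T$ between the failed candidate's value ($\geq 0$) and any surviving parameter's value ($< -\epsilon_T$) at the witness point into a lower bound on $\|\vc - \vc_j\|$ via linearity in $\vc$ --- the paper phrases this as Lipschitz continuity of $p'(\vc) := \min_{q}\max_{s} p_{\vc,k,q,s}(\vx_j)$, while you match a single disjunct/conjunct pair $(q^*,s^*)$ and apply Cauchy--Schwarz, which amounts to the same estimate. If anything, your explicit uniformization of the constant across iterations (taking the supremum of the coefficient norms over the compact regions $R_i$) is more careful than the paper's proof, which simply asserts a Lipschitz constant $A_{p'}$ for the witness-dependent function $p'$ without noting that it must be chosen independently of $\vx_j$ for a single $\eta$ to work at every iteration.
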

\begin{proof}
 Given a counterexample ($\vx_j$) for $V_{\vc_j}$, there is a $k \in [1,\ldots,l]$ s.t.
 \begin{align*}
    (k) :\ \vx_j \in R_k \land \bigwedge\limits_{q \in Q} \bigvee\limits_{s \in S} \ p_{\vc_j,k,q,s}(\vx_j) \geq 0\,,
 \end{align*}
 and for next iteration
 \begin{align}
  \label{eq:next-iteration-restrictions}
  \C_{j} \subseteq \C_{j-1} \cap \left\{\vc \ | \bigvee\limits_{q \in Q} \bigwedge\limits_{s \in S} \ p_{\vc,k,q,s}(\vx_j) < -\epsilon_T \right\} \,.
 \end{align}
Let $p'(\vc) :\ \min_{q \in Q} \max_{s \in S} \ p_{\vc,k,q,s}(\vx_j, \vu)$.
As $p'$ is continuous and piecewise differentiable. Therefore, there is a Lipschitz constant $A_{p'}$ s.t.
 \[
 (\forall \vc_1, \vc_2 \in \C) \ ||p'(\vc_1) - p'(\vc_2)|| \leq A_{p'} ||\vc_1 - \vc_2||\,.
 \]
Let $\eta :\ \frac{\epsilon_T}{A_{p'}}$. Then
\begin{align*}
(\forall \vc \in \B_{\eta}(\vc_j)) & \ ||p'(\vc) - p'(\vc_j) || \leq  A_{p'} ||\vc - \vc_j|| \leq A_{p'} \eta = \epsilon_T \\
\implies & p'(\vc) \geq p'(\vc_j) - \epsilon_T \geq -\epsilon_T 
\implies \vc \not\in \C_{j}\,.
\end{align*}\QED
\end{proof}

As a result, starting from a compact initial set $\C_0:\C$, we note that employing the stronger rule (Eq.~\eqref{eq:cegis-lra-relaxed}) guarantees that at each step, an $\eta$-ball around the current solution is also removed. Thus, either a control certificate is found, or the hypothesis space is empty within finitely many iterations.
If we exhaust the hypothesis space for a given value of $\epsilon_T$, it is possible to repeat the search by halving $\epsilon_T$ to alleviate the loss of possible solutions due to the strengthening of Eq.~\eqref{eq:cegis-lra} by Eq.~\eqref{eq:cegis-lra-relaxed}.

\subsection{Implementation Heuristics}
A first cut application of the CEGIS approach, presented thus far, resulted in a prohibitively large number of witnesses, failing on most of our benchmarks. Such a failure happens because candidate control certificates returned by the SMT solvers are similar (parameters are close in term of Euclidean distance). We discuss a heuristic for witnesses selection at each step, that leads to successful implementation of the overall procedure.

Given a current candidate $\vc_j$,  we may split the search for a witness into $l$ parts, for each of which we find a witness that violates the corresponding condition.
For the $k^{th}$ condition, we search for a counterexample that produces the ``most-egregious'' violation of the constraint possible. Therefore, we wish to maximize 
\[
\min\limits_{q \in Q} \max\limits_{s \in S} p_{\vc_j,k,q,s}(\vx)\,.
\]
However, many SMT solvers currently lack the ability to optimize. Therefore, we simply fix a constant $\gamma$ and search for $\vx_j$ satisfying
\[
\min\limits_{q \in Q} \max\limits_{s \in S} p_{\vc_j,k,q,s}(\vx_j) > \gamma\,.
\]
A larger $\gamma$ leads to a more ``egregious'' violation and a larger set of candidates ruled out in the hypothesis space and it is less likely to find a  candidate that is similar to the previously selected candidate. The parameter $\gamma$ itself is iteratively reduced to find a witness or conclude that no witness exists when $\gamma = 0 $.
Another heuristic is to seed the process with an initial set of points $\X_0$, which in our experiments improves the performance.

\paragraph{Equality Constraints:}
Recall the conditions for a parameterized control barrier function $B_\vc$
\begin{align*}
    (\forall \vx \in I) & \ B_\vc(\vx) < 0 \\
    (\forall \vx \in \partial S) & \ B_\vc(\vx) > 0 \\
    (\forall \vx \in S \setminus \inter{I}) & \ B_\vc(\vx) = 0 \implies \bigvee\limits_{\vu \in U} \ \nabla B_\vc \cdot f(\vx, \vu) < 0 \,.
\end{align*}

The condition in Eq.~\eqref{eq:cbf-original} can be encoded into the CEGIS framework. However, the presence of the equality $ B(\vx) = 0$ poses practical problems. In particular, it requires to find a candidate $B_\vc$, s.t. for all $\vx \in \X$:
\[
\ B_\vc(\vx) \neq 0 \lor \bigwedge\limits_{\vu \in U} \ \nabla B_\vc \cdot f(\vx, \vu) \geq 0 \,.
\]
Unfortunately, such an assertion is easy to satisfy ($B_\vc(\vx) \neq 0$ can easily be satisfied), resulting in the procedure always exceeding the maximum number of iterations permitted. However, our experiments suggest that the following relaxation (as discussed in Chapter~\ref{ch:certificates}) is particularly effective:
\begin{equation*}
    \begin{array}{rl}
    (\forall \vx \in I) & \ B_\vc(\vx) < 0 \\
    (\forall \vx \in \partial S) & \ B_\vc(\vx) > 0 \\
    (\forall \vx \in S \setminus \inter{I}) & \ \bigvee\limits_{\vu \in U}\left( \begin{array}{c}
        \nabla B \cdot f(\vx, \vu) - \lambda B(\vx) < 0
        \lor
        \nabla B \cdot f(\vx, \vu) + \lambda B(\vx) < 0
    \end{array} \right) \,.
    \end{array}
\end{equation*}
The same argument applies to control funnel functions (see Eq.~\eqref{eq:cff-exp}).

\subsection{An Illustrative Example}
We consider a simple example and follow the CEGIS procedure step by step. The system of interest has two continuous variables $x$ and $y$ and two modes $\vu_1$ and $\vu_2$ with the following vector field:
\begin{align*}
\vu_1 \begin{cases}
    \dot{x} = -x + y^2 \\
    \dot{y} = 1
\end{cases} \,,
\vu_2 \begin{cases}
    \dot{x} = -x - y^2 \\
    \dot{y} = -1 \,.
\end{cases} 
\end{align*}
We are interested in a reach-while-stay (RWS) property where $S = [-0.5, 0.5]^2$, $I :\ \B_{0.2}(\vzero)$, and $G:\B_{0.05}(\vzero)$. 
We use the following template for control Lyapunov-barrier function $V:\ c_1x^2 + c_2xy + c_1 y^2$.  Notice that we assumed coefficients of $x^2$ and $y^2$ are equal so that the $\C$ becomes a 2D space suitable for illustration.
We use Eq.~\eqref{eq:cegis-lra-relaxed} by setting $\epsilon_T = 1$. Initial sets are $\C_0 = [-10, 10]^2$, $\X_0 =\ \{(z_1, z_2) |  z_1, z_2 \in \{ -0.5, 0, 0.5\} \} \setminus \{(0, 0)\}$.

\begin{figure}[t!]
\begin{center}
\includegraphics[width=1.0\textwidth]%
    {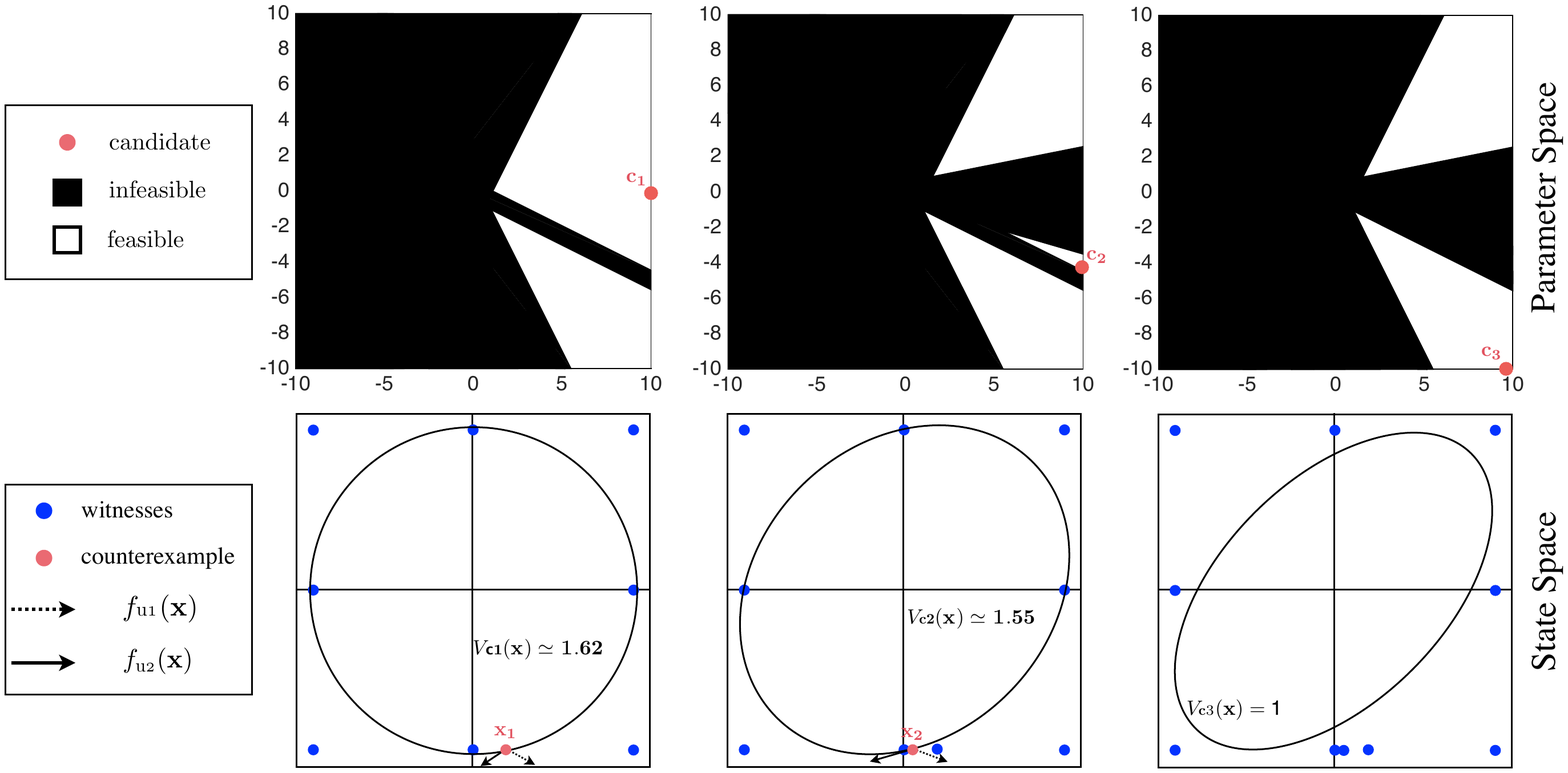}
\\
(\textbf{top}) Parameter space for the various iterations. The feasible set is shown in white and the current candidate is shown in red. (\textbf{bottom}) The phase plots and the current witness points (witness violates the candidate CLBF is shown in red as well as the vector fields for the witness).
\caption{Iterations of the CEGIS algorithm.}
\label{fig:example-complete}
\end{center}
\end{figure}

The whole procedure is summarized in Figure~\ref{fig:example-complete}.  The candidate $\vc_1=\ [10 \ \ 0]^t$ is learned in the first iteration. $V_{\vc_1}$ fails to be a CLBF as a witness $\vx_1 = [0.115, -0.4989]^t$ is found with $\gamma = 8$. The figure shows that the flow at $\vx_1$ locally violates the candidate $V_{\vc_1}$ in each mode.  In the next iteration, by adding this counterexample to $\X$, feasible parameter set gets smaller and then $\vc_2=\ [10 \ \ -4]^t$ is learned. This fails once again with witness $\vx_2 = [0.024 \ \ -0.5]^t$ for $\gamma = 2$. For the third iteration, the feasible region is further refined as shown in Figure~\ref{fig:example-complete} (top right). In this iteration $\vc_3 = [9.92288424748 \ \ -10]^t$ is learned as a candidate and verified by the verifier. The final CLBF is $V(\vx) = 9.92288424748x^2 - 10xy + 9.92288424748y^2$.

\subsection{Completeness and Complexity}
There are many sources of incompleteness: (i) The polynomial template on the control certificate; (ii) The use of  $\epsilon_T$ in Eq.~\eqref{eq:cegis-lra-relaxed}; and finally (iii) the use of a $\delta$-satisfiability solver for nonlinear constraints. However, it is possible to reduce this incompleteness by making $\delta$ and $\epsilon_T$ smaller and using larger polynomial templates.

Regarding the complexity, solving linear arithmetic constraints and quantifier-free nonlinear constraints are well-known to be NP-hard. Furthermore, while it is guaranteed that there will be a finite number of iterations, this number can be prohibitively large. Though we provided some heuristics to decrease the number of iterations, the worst case can be in the order of $O(d^r)$, where $r$ is the number of unknown coefficients in the template and $d$ is a function of $\C_0$ and $\eta$ in Theorem~\ref{thm:eta-exists}.

\subsection{Handling Disturbances}
Recall that modeling disturbances would result in additional quantifier alternations. Moreover, we address disturbances only for switched feedback systems.
To find a robust control certificate, we need to extend Eq.~\eqref{eq:switched-general} and solve a problem with the following
structure:
\begin{equation*}
(\exists\ \vc \in C)\ (\forall\ \vx \in X) \begin{cases}
    \vx \in R_1 \implies \bigvee_{q\in Q} \bigwedge_{s\in S} (\forall\ \vd \in D)\ p_{\vc,1,q,s}(\vx,\vd) < 0 \\
    \vx \in R_2 \implies \bigvee_{q\in Q} \bigwedge_{s\in S} (\forall\ \vd \in D)\ p_{\vc,2,q,s}(\vx,\vd) < 0 \\
    \vdots \\
    \vx \in R_l \implies \bigvee_{q\in Q} \bigwedge_{s\in S} (\forall\ \vd \in D)\ p_{\vc,l,q,s}(\vx,\vd) < 0\,.
\end{cases}
\end{equation*}

First, we replace conjunction over $s \in S$ ($|S| = k'$) by a quantifier over $\vs' \in S :\ \{\vs' \ | \ \vs' \geq 0 \land \vec{1} \cdot \vs' = 1\}$ s.t.
\[
\bigwedge_{s \in S} h_s(\vx) < 0 \iff (\forall \vs' \in S') \  \vs' \cdot \vh(\vx) < 0 \,,
\]
where $\vh$ is vector of function $h_s$'s. 
Then, the formula is rewritten in a shorter form as
\begin{equation*}
(\exists\ \vc \in C)\ (\forall\ \vx \in X) \begin{cases}
    \vx \in R_1 \implies \bigvee_{q\in Q} (\forall\ \vd' \in D')\ p_{\vc,1,q}(\vx,\vd') < 0 \\
    \vx \in R_2 \implies \bigvee_{q\in Q} (\forall\ \vd' \in D')\ p_{\vc,2,q}(\vx,\vd') < 0 \\
    \vdots \\
    \vx \in R_l \implies \bigvee_{q\in Q} (\forall\ \vd' \in D')\ p_{\vc,l,q}(\vx,\vd') < 0\,,
\end{cases}
\end{equation*}
for a quantifier over $\vd'^t :\ [\vs'^t \ \vd^t] \in D' :\ S' \times D$.

A first solution consists of applying CEGIS for $\exists\forall$ described previously. However, doing so yields quantified constraints for the candidate and witness generation steps. Since our objective was to avoid these quantified constraints in the first place, we modify the witness structure. 

Our solution is conceptually simple: we will extend the witnesses structure.  Rather than witnesses which are simply states $\vx_i \in X$, we will now allow witnesses that are of the form $(\vx_i, (Q \mapsto D') )$, i.e, a combination of a state $\vx_i \in X$ and a map from each disjunction to a disturbance vector. Since $Q$ is finite, this map is explicitly stored as $(\vx_i, (q_1, \vd'_{q_1}), \ldots, (q_k,\vd'_{q_k}))$.

\paragraph{Witness Structure:} A witness to the violation of a given robust control certificate candidate $T(\vc)$ includes a state $\vx \in X$ at which the violation happens along with for each disjunction $q$, a disturbance witness
$\vd'_q \in D'$ that will violate the formula.
With disturbances, each witness then has the following structure:
\begin{equation}\label{eq:witness-structure}
 \vy_i:\ \left( \vx_i, (q_1, \vd'^{(i)}_{q_1}), \ldots, (q_m, \vd'^{(i)}_{q_k}) \right) \,.
\end{equation}
With this witness structure, the overall CEGIS procedure now extends naturally.

\paragraph{Learning:} Let $\mathcal{Y}_j:\ \{ \vy_1,\ldots,\vy_j \}$ be the set of witnesses at the $j^{th}$ iteration, starting from $\mathcal{Y}_0:\ \emptyset$. The learner solves the formula:
\begin{equation}\label{eq:cegis-instantiation-dist}
(\exists \vc \in \C)\ \land\ \bigwedge_{(\vx_i,(q_1,\vd'^{(i)}_{q_1}),\ldots,(q_k,\vd'^{(i)}_{q_k})) \in \mathcal{Y}_{j-1}}\ \begin{cases}
 \vx_i \in R_1 \implies \bigvee\limits_{q \in Q} p_{\vc,1,q}(\vx_i,\vd'^{(i)}_q) < 0 \\
 \vx_i \in R_2 \implies \bigvee\limits_{q \in Q} p_{\vc,2,q}(\vx_i,\vd'^{(i)}_q) < 0 \\
    \vdots \\
\vx_i \in R_l \implies \bigvee\limits_{q \in Q} p_{\vc,l,q}(\vx_i,\vd'^{(i)}_q) < 0 \,. \\
\end{cases}
\end{equation}
We now use an SMT solver to find if the unquantified formula holds.

\paragraph{Verification:} Once a candidate $\vc_j$ is generated we now evaluate if it yields a robust control certificate. Since $\psi_\vc$ itself is the conjunction of $l> 0$ conditions, its negation is a disjunction and we can check each disjunct separately for satisfiability.  Each disjunct has the following form:
\[ \vx \in R_i\ \land\ \bigwedge_{q \in Q} \ (\exists\ \vd' \in D')\
p_{\vc_j,i,q}(\vx, \vd') \geq 0 \,.\]
We can remove the existential quantifier over $\vd'$ equivalently through $k$ fresh set of variables $\vd'_{q_1}, \ldots, \vd'_{q_k}$. The new disjunct is written:
\begin{equation}\label{eq:witness-generation-rcc-form}
 \vx \in R_i\ \land\ \bigwedge_{q \in Q} p_{\vc_j,i,q}(\vx,\vd'_{q}) \geq 0 \,. 
\end{equation}
If satisfiable, we obtain a witness $(\vx_{j}, (q_1, \vd'^{(j)}_{q_1}),\ldots, (q_k,$ $\vd'^{(j)}_{q_k}) )$. Otherwise, we conclude that $\vc_j$ yields a valid robust control certificate.
 
 \section{Demonstration Guided Search}
The CEGIS framework can be viewed as a framework in which a learner interacts with a verifier \emph{oracle}.
A more general framework is Oracle-Guided Inductive Synthesis (OGIS) framework wherein, a learner interacts with different input/output oracles~\cite{jha2010oracle}. As depicted in Figure~\ref{fig:ogis}, if the only oracle is the verifier, then OGIS is equivalent to CEGIS. Here, in addition to a  verifier oracle, we wish to use a demonstrator oracle.
\begin{figure}[t]
\begin{center}
\includegraphics[width=0.8\textwidth]{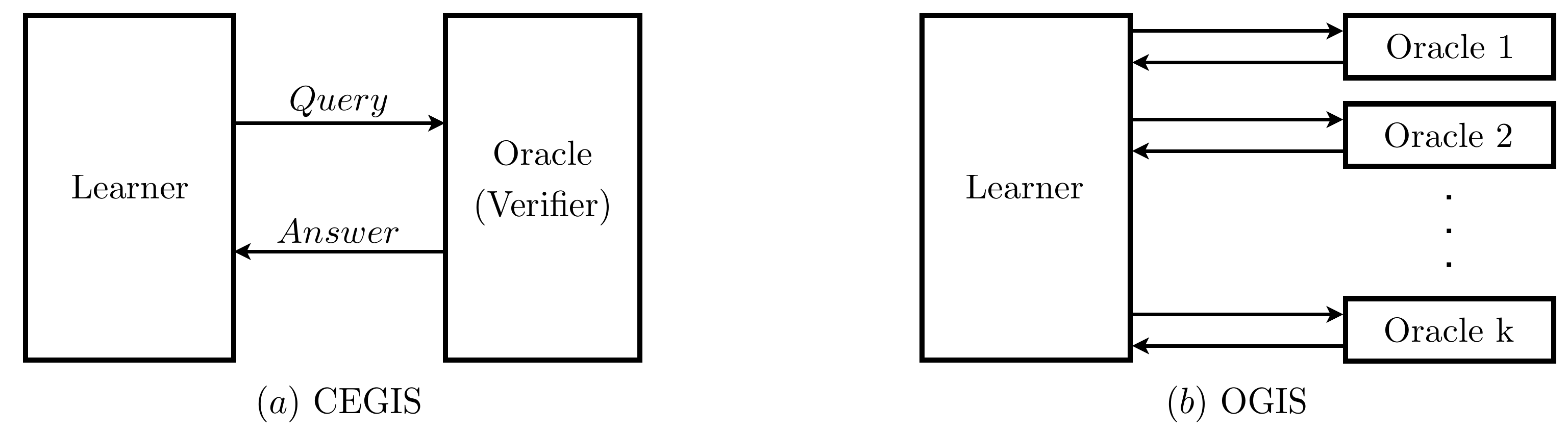}\\
\end{center}
\caption{OGIS vs. CEGIS.}
\label{fig:ogis}
\end{figure}
 The idea of learning from demonstrations has a long history~\cite{ARGALL2009469}. A demonstrator can, in fact, be a human operator~\cite{KHANSARIZADEH2014,Khansari-Zadeh2017} or a complex model predictive control~\cite{stolle2006policies,atkeson2013trajectory,ross2011reduction,zhong2013value,Mordatch-RSS-14,zhang2016learning}. However, the goal in these articles is to learn a policy similar to that of the demonstrator through statistical optimizations over large models such as deep neural networks.
In a more related method, Khansari-Zadeh et al. use human demonstrations to generate data and enforce CLF conditions for the data points, to learn a CLF candidate~\cite{KHANSARIZADEH2014}. However, their method does not include a verifier, and therefore, the CLF candidate may not, in fact, be a CLF. In this section, we extend the CEGIS framework to use a demonstrator oracle.

We investigate the problem of learning a control certificate using a black-box \emph{demonstrator} that can be queried with a given system state, and responds by demonstrating control inputs to address the specification starting from that state. Such a demonstrator can be realized using an expensive nonlinear model predictive controller (MPC) that uses a local optimization scheme or even a human operator.
The framework includes three components: (i) a \textsc{Learner} which selects a candidate control certificate, (ii) a \textsc{Verifier} that tests whether this control certificate is valid, and (iii) a \textsc{Demonstrator}. 
When verification fails, the \textsc{Verifier} returns a state (counterexample) at which the current candidate fails and the \textsc{Learner} queries the \textsc{Demonstrator} to obtain a control input corresponding to this state.

\subsection{Illustrative Example: TORA System}
\begin{figure*}[t]
\begin{center}
\includegraphics[width=0.95\textwidth]{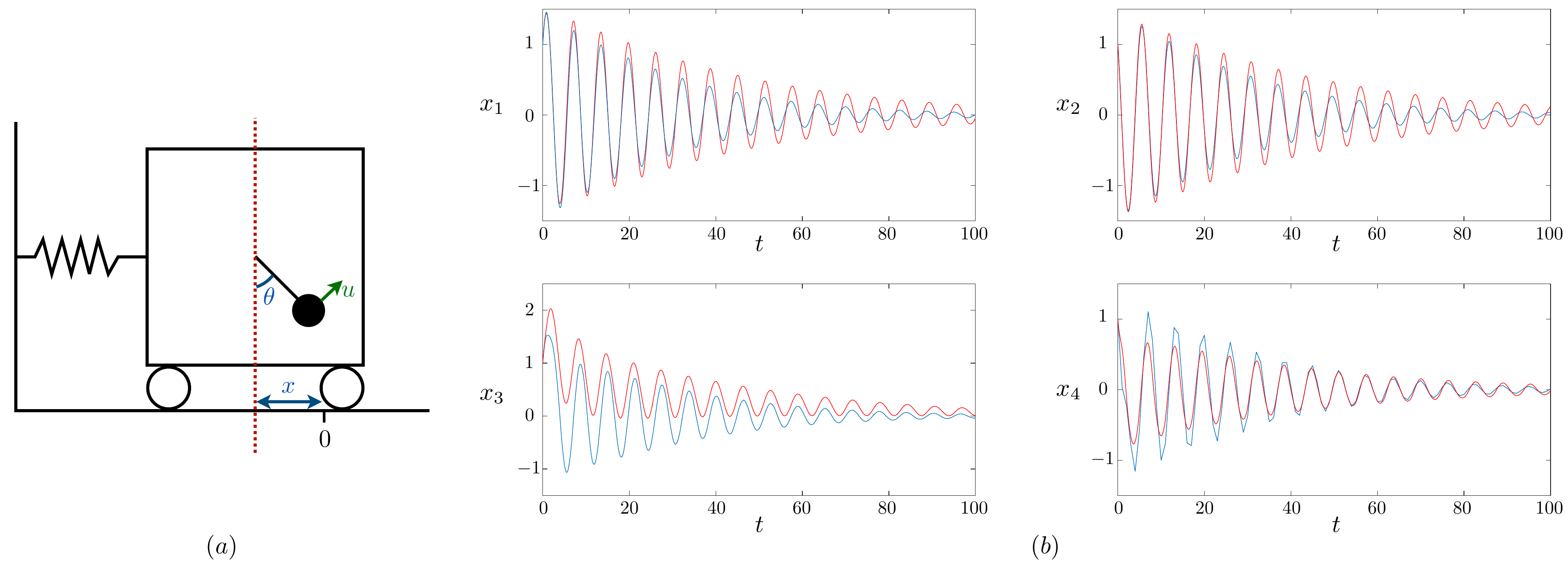}\\
(a) A schematic diagram of the TORA 
system. (b) Execution traces of the system using MPC control 
(blue traces) and Lyapunov based control (red traces) starting 
from same initial state $[1 \ 1 \ 1 \ 1]$.
\end{center}
\caption{TORA System.}
\label{Fig:tora-example}
\end{figure*}

Figure~\ref{Fig:tora-example}(a) shows a mechanical system, called translational oscillations with a rotational actuator (TORA).
The system consists of a cart attached to a wall using a spring. Inside the cart, there is an arm with a weight which can rotate. The cart itself can oscillate freely, and there are no friction forces. The system has two degrees of freedom, including the position of the cart $x$, and the rotational position of the arm $\theta$. The controller can rotate the arm through input $u$.
The goal is to stabilize the cart to $x=0$, with its velocity, angle, and angular velocity $\dot{x} = \theta=\dot{\theta} = 0$. We refer the reader to Jankovic et al.~\cite{jankovic1996tora} for a derivation of the dynamics, shown below in terms of state variables $(x_1,\ldots,x_4)$ and control input $u_1$, after a suitable basis transformation:
\begin{equation}\label{eq:tora-dyn}
        \dot{x_1} = x_2,\, \dot{x_2} = -x_1 + \epsilon \sin(x_3),\, \dot{x_3} = x_4,\, \dot{x_4} = u_1\,.
\end{equation}
$\sin(x_3)$ is approximated using a degree three polynomial approximation which is quite accurate over the range $x_3 \in [-2,2]$ (region of interest).
The equilibrium $x = \dot{x} = \theta = \dot{\theta} = 0$ now corresponds to $x_1 = x_2 = x_3 = x_4 = 0$.  The system has a single control input $u_1$ that is bounded $u_1 \in [-1.5, 1.5]$.  

\paragraph{MPC Scheme:} A first approach for solving the problem uses a nonlinear model-predictive control (MPC) scheme using a discretization of the system dynamics with a time step $\tau = 1$.
The time $t$ belongs to set $\{0, \tau, 2\tau,\ldots,N\tau = \T\}$ and:
\begin{equation}\label{eq:discretized-dynamics}
  \vx(t + \tau) =  \vx(t) + \tau f(\vx(t), \vu(t)) \,,
\end{equation}
Fixing the time horizon $\T = 30$, we use a simple cost function $J(\vx(0), \vu(0), \vu(\tau), \ldots, \vu(\T-\tau))$:
\[
 \left( \sum_{t \in \{0,\tau,...,\T-\tau\}} \left(||\vx(t)||_2^2 +
    ||\vu(t)||_2^2\right) \right) + N \ ||\vx(\T)||_2^2 \,.\] Here, we constrain $\vu(t) \in [-1.5, 1.5]$ for all $t$ and define $\vx(t+\tau)$ in terms of $\vx(t)$ using the discretization in Eq.~\eqref{eq:discretized-dynamics}.
Such a controller is implemented using a first/second order numerical gradient descent method to minimize the cost function~\cite{Nocedal+Wright/2006/Numerical}. The stabilization of the system was informally confirmed through hundreds of simulations from different initial states. However, the MPC scheme is expensive, requiring repeated solutions to (constrained) nonlinear optimization problems in real-time. Furthermore, in general, the closed loop lacks formal guarantees despite the ``high confidence" gained from numerous simulations.

\paragraph{Learning a Control Lyapunov Function:} Now, we introduce an approach which uses the MPC scheme as a \textsc{Demonstrator} and attempts to learn a control Lyapunov function. Then, a simpler control law is obtained from the CLF. The overall idea, depicted in Figure~\ref{fig:learning-framework}, is to pose queries to the \emph{offline} MPC at finitely many \emph{witness} states $\X_j : \{ \vx_1, \ldots, \vx_j \}$. Then, for each witness state $\vx_i$, the MPC yields the corresponding instantaneous control inputs $\vu_i$. The \textsc{learner} attempts to find a candidate function $V(\vx)$ that is positive for all $\vx_i \in \X_j$, which also decreases at each witness state $\vx_i$ through the control input $\vu_i$. This function $V$ is potentially a CLF function for the system.
This function is fed to the \textsc{verifier}, which checks whether $V(\vx)$ is indeed a CLF, or discovers a state $\vx_{j+1}$ which refutes $V$. This new state is added to the witness set and the process is iterated. The procedure described in this section synthesizes the control Lyapunov function $V(\vx)$ below:
 \begin{align*}
 V =& 1.22 x_2^2 + 0.31 x_2x_3 + 0.44 x_3^2 - 0.28 x_4x_2
    + 0.80 x_4x_3 + \\
    & 1.69 x_4^2 + 0.07 x_1x_2 - 0.66 x_1x_3
    - 1.85 x_4x_1 + 1.6 x_1^2\,.
 \end{align*}

Next, this function is used to design a simple associated control law that guarantees the stabilization of the model (Eq.~\eqref{eq:tora-dyn}). Figure~\ref{Fig:tora-example}(b) shows a closed loop trajectory for this control law vs. control law extracted by the MPC. The advantage of this law is that its calculation is \emph{much simpler}, appealing for control implementation. 

 \begin{figure}[t]
\begin{center}
\begin{tikzpicture}
\matrix[every node/.style={rectangle, draw=black, line width=1.5pt}, row sep=20pt, column sep=15pt]{ & \node[fill=myBlue!60](n0){\begin{tabular}{c} 
\textsc{Learner} \end{tabular}}; & \\
\node[fill=myGreen!60](n1){\begin{tabular}{c}
\textsc{Verifier} \end{tabular}}; & & \node[fill=myRed!60](n2){\begin{tabular}{c}
\textsc{Demonstrator} \end{tabular} }; \\
};
\path[->, line width=2pt] (n0) edge[bend right] node[left]{$V(\vx)?$} (n1)
(n1) edge[bend right] node[below]{\;\;\;\begin{tabular}{c}
Yes or \\
No($\vx_{j}$)\end{tabular}} (n0)
(n0) edge [bend left] node[above]{$\vx_j$} (n2)
(n2) edge [bend left] node[above]{$\vu_j$} (n0);
\draw (n0.north)+(0,0.3cm) node {$(\vx_1, \vu_1), \ldots, (\vx_{j-1}, \vu_{j-1})$};
\end{tikzpicture}
\end{center}
\caption{Overview of the learning framework for learning a control Lyapunov function.}\label{fig:learning-framework}
\end{figure}
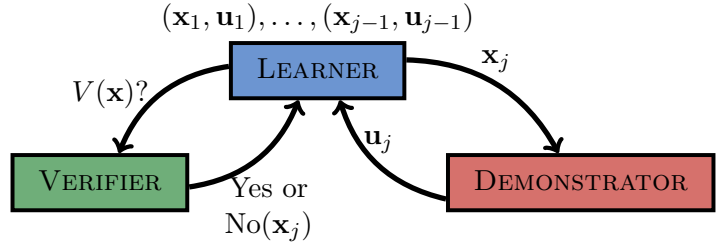
\subsection{Formal Learning Framework}
As mentioned earlier, the learning framework has three components: a demonstrator, a learner, and a verifier (see Figure~\ref{fig:learning-framework}).
The demonstrator inputs a state $\vx$ and returns a control input $\vu \in U$, that is an appropriate ``instantaneous'' feedback for $\vx$.  Formally, demonstrator is a function $\D :\ X \mapsto U$.

\begin{remark}[Demonstrator]
  The demonstrator is treated as a black box. This allows using a variety of approaches ranging from trajectory  optimizations~\cite{zhang2016learning}, human expert demonstrations~\cite{KHANSARIZADEH2014}, and sample-based methods~\cite{lavalle2000rapidly,kocsis2006bandit}, which can be probabilistically complete. 
  While the demonstrator is \emph{presumed} to address the specification, our method can work even if the demonstrator is faulty. Specifically, a faulty demonstrator in the worst case scenario may cause our method to terminate without finding a control certificate. However, if our approach finds a control certificate, it is guaranteed to be correct.
\end{remark}

Formally, in addition to a (switched or smooth) plant $\P$, the specification $\varphi$, the hypothesis space defined by $T$ over $\C$, we also assume that a \emph{black-box} demonstrator function $\D:\ X \mapsto U$ is provided as input.

For simplicity, we focus on finding CLFs for smooth feedback systems and we define the following terminologies regarding CLFs.

Instead of set of witness points $\X$, we define a set of observations $O$ as
\[
O :\ \{(\vx_1, \vu_1),\ldots,(\vx_j, \vu_j)\} \subset X \times U\,,
\]
where $\vu_i$ is the demonstrated feedback for state $\vx_i \neq \vzero$, i.e, $\vu_i:\ \D(\vx_i)$.

\begin{definition}[Observation Compatibility] \label{def:compatible-data}
    A function $V$ is said to be compatible with a set of 
    observations $O$ iff $V$ respects the CLF conditions 
    (Eq.~\eqref{eq:clf}) for every observation in $O$:
    \[
    V(\vzero) = 0 \ \wedge \ \bigwedge\limits_{(\vx_i, \vu_i) \in O_j}
\left(\begin{array}{c} V(\vx_i) > 0\ \land\ \nabla V \cdot f(\vx_i, \vu_i) < 0 \end{array}\right)\,.
    \]
\end{definition}

We note that not every CLF (satisfying the conditions in
  Eq.~\eqref{eq:clf}) will necessarily be compatible with a given
  observation set $O$.

\begin{definition}[ Demonstrator Compatibility ] \label{def:compatible-dem}
    A function $V$ is said to be compatible with a demonstrator $\D$ iff $V$ respects the CLF conditions (Eq.~\eqref{eq:clf}) for every observation that can be generated by the demonstrator:
    \[
    V(\vzero) = 0 \ \wedge \ \forall{\vx \neq \vzero}
\left(\begin{array}{c} V(\vx) > 0\ \land\ \nabla V \cdot f(\vx, \D(\vx)) < 0 \end{array}\right)\,.
    \]
    In other words, $V$ is a Lyapunov function for the closed-loop system $\Psi(\P, \D)$.
\end{definition}

The framework works iteratively and at each iteration $j$, the learner maintains a set of observations
\[
O_j :\ \{(\vx_1, \vu_1),\ldots,(\vx_{j}, \vu_{j})\} \subset X \times U\,.
\]
Given a template $V_\vc(\vx) :\ \vc^t \cdot \vg(\vx)$, corresponding to $O_j$, $\C_j \subseteq \C$ is defined as a set of candidate unknowns for function $V_\vc(\vx)$. Formally, $\C_j$ is a set of all $\vc$ s.t. $V_\vc$ is compatible with $O_j$:
\begin{equation} \label{eq:C_j_0}
\C_j :\ \left\{ \vc \in \C\left|
\begin{array}{c} V_\vc(\vzero) = 0 \ \land
\bigwedge\limits_{(\vx_i, \vu_i) \in O_j}
\left(\begin{array}{c} V_\vc(\vx_i) > 0\ \land\ \nabla V_\vc \cdot f(\vx_i, \vu_i) < 0 \end{array}\right)
\end{array}
\ \right.\right\}.
\end{equation}

\begin{figure}[!t]
\begin{center}
    \includegraphics[width=0.7\textwidth]{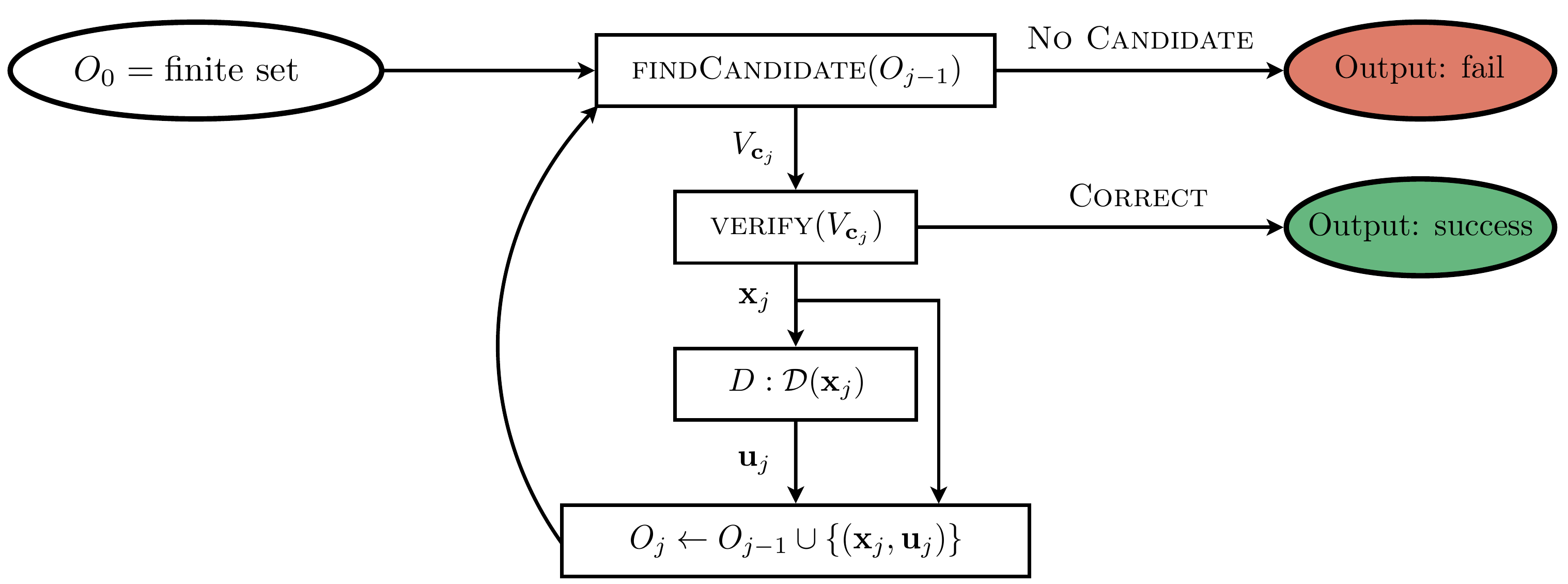}
\end{center}
\caption{Visualization of the learning framework.}\label{fig:framework} 
\end{figure}

The overall procedure is shown in Figure~\ref{fig:framework}. The procedure starts with an empty set $O_0=\emptyset$ and the corresponding set of compatible parameters $\C_0 :\ \{\vc \in \C \ | \ V_\vc(\vzero) = 0 \}$.
Each iteration involves the following steps:
\begin{compactenum}
    \item \textsc{findCandidate}: The learner checks if there exists a $V_\vc$ compatible with $O_{j-1}$
    \begin{compactenum}[(a)]
    \item If no such $\vc$ exists, the learner declares failure ($\C_{j-1} = \emptyset$),
    \item Otherwise, a candidate $\vc_j \in \C_{j-1}$ is chosen and the corresponding function $V_{\vc_j}(\vx):\ \vc_j^t \cdot \vg(\vx)$ is considered for verification.
    \end{compactenum}
    \item \textsc{verify}: The verifier oracle tests whether $V_{\vc_j}$ is a CLF (Eq.~\eqref{eq:clf})
    \begin{compactenum}[(a)]
    \item If yes, the process terminates successfully ($V_{\vc_j}$ is a CLF),
    \item Otherwise, the oracle provides a witness $\vx_j \neq \vzero$ for the negation
    of Eq.~\eqref{eq:clf}.
    \end{compactenum}
    \item \textsc{update}: Using the demonstrator $\vu_j :\ \D(\vx_j)$, a new observation $(\vx_j, \vu_j)$ is added to the training set:
    \begin{align}
    O_{j} &:\ O_{j-1} \cup \{(\vx_j, \vu_j)\} \\
        \C_{j} &:\ \C_{j-1} \cap \left\{ \vc \in \C\ |\ \begin{array}{c} V_{\vc}(\vx_{j}) > 0\ \land\ \nabla V_{\vc}\cdot f(\vx_{j}, \vu_{j}) < 0 \end{array} \right\} \label{eq:C_j+1}\,.
    \end{align}
\end{compactenum}

\begin{theorem}\label{thm:formal-learning-thm}
  The learning framework as described above has the following property:
\begin{compactenum}
\item $\vc_j \not\in \C_{j}$. I.e., the candidate found at the $j^{th}$ step is eliminated from further consideration
\item If the algorithm succeeds at iteration $j$, then the output function $V_{\vc_j}$ is a valid CLF for stabilization
\item The algorithm declares failure at iteration $j$ if and only if no linear combination of the basis functions is a CLF compatible with the demonstrator. 
\end{compactenum}
\end{theorem}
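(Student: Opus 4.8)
The plan is to prove the three claims of Theorem~\ref{thm:formal-learning-thm} essentially by unpacking the definitions of the sets $\C_j$ and the semantics of the \textsc{findCandidate}, \textsc{verify}, and \textsc{update} steps. Each claim is about a different facet of the loop: claim (1) is about the \emph{update} step strictly shrinking the candidate set, claim (2) is about the \emph{soundness} of a successful termination, and claim (3) is about the \emph{meaning} of a declared failure. I would treat them one at a time in that order, since (1) and (2) are short and set up the vocabulary for the slightly more delicate (3).

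**First,** for claim (1), I would argue that the candidate $\vc_j$ is returned by \textsc{findCandidate} only when verification fails, so the \textsc{verify} oracle returns a witness $\vx_j \neq \vzero$ with $\lnot\bigl(V_{\vc_j}(\vx_j) > 0 \land \nabla V_{\vc_j}\cdot f(\vx_j,\vu_j) < 0\bigr)$ after the demonstrator supplies $\vu_j:\ \D(\vx_j)$. Then by the update rule in Eq.~\eqref{eq:C_j+1}, $\C_j$ is intersected with exactly the set of $\vc$ for which $V_\vc(\vx_j)>0 \land \nabla V_\vc\cdot f(\vx_j,\vu_j)<0$ holds; since $\vc_j$ violates this very conjunction at $(\vx_j,\vu_j)$, we get $\vc_j \notin \C_j$. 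The one subtlety I would flag is that the witness returned by the verifier refutes the \emph{full} CLF condition of Eq.~\eqref{eq:clf} (which quantifies existentially over $\vu$), whereas the update uses the \emph{demonstrated} input $\vu_j$; I would note that this is precisely why the demonstrator is queried, and that the argument still works because the excluded conjunction is the one enforced at the next iteration, so $\vc_j$ is genuinely removed.

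**Next,** claim (2) is immediate from the design of the \textsc{verify} step: the algorithm succeeds at iteration $j$ only if the verifier oracle confirms that $V_{\vc_j}$ satisfies Eq.~\eqref{eq:clf} for all $\vx \neq \vzero$ (together with $V_{\vc_j}(\vzero)=0$, which is guaranteed by $\vc_j \in \C_{j-1} \subseteq \C_0$). By Definition~\ref{def:clf} this makes $V_{\vc_j}$ a genuine CLF, and then Artstein's/Sontag's results (the theorems stated just before Example~\ref{ex:clf}) yield an almost-everywhere smooth stabilizing feedback. So the output is a valid CLF for stabilization, assuming the verifier is sound.

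**The main obstacle is claim (3),** the ``if and only if'' characterizing failure. The ``only if'' direction (failure $\Rightarrow$ no compatible CLF exists) is the heart of the matter. Failure at iteration $j$ means $\C_{j-1}=\emptyset$, i.e.\ no $\vc$ makes $V_\vc$ compatible with the finite observation set $O_{j-1}$. I must then show that \emph{no} linear combination of the basis functions is a CLF compatible with the demonstrator $\D$ (in the sense of Definition~\ref{def:compatible-dem}). The key monotonicity fact is that $O_{j-1}$ consists of observations $(\vx_i,\D(\vx_i))$ actually produced by $\D$, so any $V$ compatible with $\D$ must in particular be compatible with \emph{every} finite subset of demonstrator-generated observations, hence with $O_{j-1}$; therefore its coefficient vector would lie in $\C_{j-1}$, contradicting $\C_{j-1}=\emptyset$. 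For the converse (``if''), I would argue that if some $V_{\vc^*}$ in the template is a CLF compatible with $\D$, then $\vc^*$ satisfies every constraint added by the update step (since each constraint is evaluated at a demonstrator observation $(\vx_i,\D(\vx_i))$ that $V_{\vc^*}$ respects), so $\vc^* \in \C_j$ for all $j$; this keeps every $\C_j$ nonempty, so the algorithm can never declare failure. The delicate point I expect to wrestle with is disentangling ``compatible with the demonstrator $\D$'' (Definition~\ref{def:compatible-dem}) from ``is a CLF'' (Definition~\ref{def:clf}): a function may be a CLF without being compatible with $\D$, as the remark after Definition~\ref{def:compatible-data} warns, so I would be careful to phrase claim (3) entirely in terms of demonstrator-compatibility, which is exactly the property the accumulated constraints track.
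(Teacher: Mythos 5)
Your proposal is correct and follows essentially the same route as the paper's proof: claim (1) by contradicting the update constraint at the demonstrated input $\vu_j$ (the paper likewise instantiates the verifier's universally-quantified refutation at $\vu_j$), claim (2) by soundness of the verifier, and claim (3) by observing that demonstrator compatibility implies compatibility with the finite observation set $O_j$, so $\C_j = \emptyset$ rules out any compatible linear combination. Your ``converse'' for claim (3) is just the contrapositive of the same direction the paper proves, so it adds nothing beyond the paper's argument, but it is not wrong.
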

\begin{proof}
    1) Suppose that $\vc_j \in \C_{j}$. Then, $\vc_j$ satisfies the following conditions (Eq.~\eqref{eq:C_j+1}):
    \[
    V_{\vc_j}(\vx_{j}) > 0\ \land\ \nabla V_{\vc_j}\cdot f(\vx_{j}, \vu_{j}) < 0 \,.
    \]
    However, the verifier guarantees that $\vc_j$ is a counterexample for Eq.~\eqref{eq:clf}). I.e.,
    \[
        V_{\vc_j}(\vx_{j}) \leq 0\ \lor\ \nabla V_{\vc_j}\cdot f(\vx_{j}, \vu_{j}) \geq 0 \,,
    \]
    which is a contradiction. Therefore, $\vc_j \not\in \C_{j}$.
    
    2) The algorithm declares success if the verifier could not find a counterexample. In other words, $V_{\vc_j}$ satisfies conditions of Eq.~\eqref{eq:clf} and therefore a CLF.
    
    3) The algorithm declares failure if $\C_j = \emptyset$. On the other hand, by definition, $\C_j$ yields the set of all $\vc$ s.t. $V_\vc$ (which is a linear combination of basis functions) is compatible with the observations $O_j$. Therefore, $\C_j = \emptyset$ implies that no linear combination of the basis functions is compatible with the $O_j$ and therefore compatible with the demonstrator. \QED
\end{proof}

When compared to CEGIS, the verifier remains unchanged. However, the learning process differs.
Recall that $V_\vc :\ \vc^t \cdot \vg(\vx)$ and the learner needs to check if there exists a $\vc$ s.t. $V_\vc$ is compatible with the observation set $O$(Definition~\ref{def:compatible-data}). In other words, we wish to check
\[
(\exists \vc \in \C) \ V_\vc(\vzero) = 0 \wedge \bigwedge_{(\vx_i, \vu_i) \in O} 
\left(\begin{array}{c} V_\vc(\vx_i) > 0\ \land\ \nabla V_\vc \cdot f(\vx_i, \vu_i) < 0 \end{array}\right)\,.
\]
The (initial) space of all candidates $\C$ is assumed to be a hyper-rectangular box, and therefore a polytope. Let $\overline{\C_j}$ represent the topological closure of the set $\C_j$ obtained at the $j^{th}$ iteration (see Eq.~\eqref{eq:C_j_0}).
\begin{lemma}\label{lemma:cj-convex}
For each $j \geq 0$, $\overline{\C_j}$ is a polytope.
\end{lemma}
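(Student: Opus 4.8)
The plan is to exploit the fact that the template $V_\vc(\vx) = \vc^t\,\vg(\vx)$ is \emph{linear} in the parameter vector $\vc$, so that once the witness states and inputs are fixed, every constraint defining $\C_j$ in Eq.~\eqref{eq:C_j_0} becomes an affine condition on $\vc$. Concretely, for a fixed observation $(\vx_i,\vu_i)$ the map $\vc \mapsto V_\vc(\vx_i) = \vc^t\vg(\vx_i)$ is a linear functional (the vector $\vg(\vx_i)$ is constant), and likewise $\nabla V_\vc\cdot f(\vx_i,\vu_i) = \sum_k c_k\,\bigl(\nabla g_k(\vx_i)\cdot f(\vx_i,\vu_i)\bigr)$ is linear in $\vc$. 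Hence each condition $V_\vc(\vx_i) > 0$ and $\nabla V_\vc\cdot f(\vx_i,\vu_i) < 0$ defines an open (strict) half-space, while $V_\vc(\vzero) = \vc^t\vg(\vzero) = 0$ defines a hyperplane.

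Next I would assemble these pieces. The ambient candidate set $\C$ is a hyper-rectangular box, hence a polytope, so
\[
\C_j \;=\; \C \cap H \cap \bigcap_{i=1}^{N} \{\vc : a_i^t\vc < b_i\},
\]
where $H = \{\vc : \vc^t\vg(\vzero) = 0\}$ and the $N = 2j$ strict inequalities collect the two conditions contributed by each observation in $O_j$. Let $Q_j$ denote the set obtained by relaxing every strict inequality to its non-strict form, $Q_j = \C \cap H \cap \bigcap_i \{\vc : a_i^t\vc \le b_i\}$; this is a genuine (bounded) polytope, being the intersection of a box, a hyperplane, and finitely many closed half-spaces. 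Since $\C_j \subseteq Q_j$ and $Q_j$ is closed, we immediately obtain $\overline{\C_j} \subseteq Q_j$.

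The crux is the reverse inclusion, and this is where I expect the only genuine subtlety: one cannot simply declare that the closure is got by relaxing the strict inequalities, since when $\C_j$ is empty this can fail (relaxation may produce a nonempty lower-dimensional face). I would therefore split into cases. If $\C_j = \emptyset$, then $\overline{\C_j} = \emptyset$, which is a polytope by convention. If $\C_j \neq \emptyset$, fix any $s \in \C_j$ and any target $q \in Q_j$ and consider the half-open segment $(1-t)s + tq$ for $t \in [0,1)$. It stays in the box $\C$ and on the hyperplane $H$ by convexity, and for each $i$ we have $a_i^t\bigl((1-t)s+tq\bigr) = (1-t)\,a_i^t s + t\,a_i^t q < (1-t)b_i + t\,b_i = b_i$, using $a_i^t s < b_i$ strictly together with $1-t > 0$, and $a_i^t q \le b_i$. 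Thus the entire half-open segment lies in $\C_j$ and converges to $q$ as $t \to 1$, giving $q \in \overline{\C_j}$. Hence $Q_j \subseteq \overline{\C_j}$, so $\overline{\C_j} = Q_j$ is a polytope.

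Finally I would note that this argument is uniform in $j$: it uses only the affine-in-$\vc$ structure of the constraints and the boundedness of $\C$, so no separate induction on the iteration index is needed; equivalently, the same reasoning applies verbatim to the augmented constraint set produced by the \textsc{update} step, establishing the claim for every $j \ge 0$.
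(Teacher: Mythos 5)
Your proof is correct, and it rests on the same key observation as the paper's: since $V_\vc$ is linear in $\vc$, each condition contributed by an observation $(\vx_i,\vu_i)$ is an affine (strict) inequality on $\vc$, and $V_\vc(\vzero)=0$ is a hyperplane, so $\C_j$ is a box intersected with a hyperplane and finitely many open half-spaces. The difference is in how the closure is handled. The paper argues by induction on $j$ and simply asserts that $\overline{\C_j}$ is the intersection of the polytope $\overline{\C_{j-1}}$ with the two new half-spaces; this glosses over the fact that the closure of an intersection of a convex set with open half-spaces need not equal the intersection of the closures (indeed it fails when $\C_j=\emptyset$ but the relaxed system is satisfiable on a lower-dimensional face). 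You address exactly this point: the inclusion $\overline{\C_j}\subseteq Q_j$ is immediate, and for $Q_j\subseteq\overline{\C_j}$ you use a half-open segment from an interior witness $s\in\C_j$ to an arbitrary $q\in Q_j$, which stays strictly feasible for $t<1$, together with a separate (trivial) treatment of the empty case. So your argument is a non-inductive version of the paper's that fills in the one step the paper leaves implicit; the paper's induction buys nothing here, since, as you note, the reasoning is uniform in $j$.
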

\begin{proof}
    We prove by induction. Initially, $\C$ is an hyper-rectangular box. 
    Also, $\C_0 :\ \C \cap H_0$, where
    \[
    H_0 = \{\vc \ | \ V_\vc(\vzero) = \sum_{i=1}^r c_i g_i(\vzero) = 0 \} \,.
    \] 
As $V_\vc$ is linear in $\vc$, $H_0 :\ \{\vc \ | \ \va_0^t . \vc = b_0\}$ is a hyperplane, where $\va_0$ and $b_0$ depend on the values of, $g_k(\vzero)$ ($k = 1,\ldots,r$). $\C_0$ would be the intersection of a polytope and a hyperplane, which is a polytope.
Now, assume $\overline{\C_{j-1}}$ is a polytope. Recall that $\C_{j}$ is defined as $\C_{j} :\ \C_{j-1} \cap P_j$ (Eq.~\eqref{eq:C_j+1}), where
    \[
    P_{j}:\  \left\{ \vc \ \left|\ \begin{array}{c} \sum_{i=1}^{r} (c_i \ g_i(\vx_{j})) > 0\ \land\ \\ \sum_{i=1}^{r} (c_i \ \nabla g_i(\vx_j) \cdot f(\vx_{j}, \vu_{j})) < 0 \end{array} \right.\right\} \,.
    \]
    Notice that $f(\vx_{j}, \vu_{j})$ and $g_i(\vx_i)$ are constants and
    \begin{align*}
    P_j :\ &H_{j1} \cap H_{j2} \ \  \begin{cases}
    H_{j1} :\ &\{\vc \ | \ \va_{j1}^t . \vc > b_{j1}\}
    = \{\vc | \sum_{i=1}^{r} (c_i \ g_i(\vx_{j})) > 0\} \\
    H_{j2} :\ &\{\vc \ | \ \va_{j2}^t . \vc > b_{j2}\} =
    \{\vc | 
    \sum_{i=1}^{r} (c_i \ \nabla g_i(\vx_j) \cdot f(\vx_{j}, \vu_{j})) < 0 \}\,.
    \end{cases}
    \end{align*}
    Therefore, $\overline{\C_{j}}$ is the intersection of a polytope ($\overline{\C_{j-1}}$) and two half-spaces ($P_j$) which yields another polytope. \QED
\end{proof}

The learner should sample a point $\vc_j \in \C_{j-1}$ at $j^{th}$ iteration, which is equivalent to checking the emptiness of a polytope with some strict inequalities. This is solved using a slight modification of the simplex method, using infinitesimals for strict inequalities or using interior point methods~\cite{Vanderbei/2004/Linear}. We will now demonstrate that by choosing $\vc_j$ carefully, we can guarantee the polynomial time termination of our learning framework.

\subsection{Termination}
Jha et al.~\cite{Jha2017} prove bounds on the number of queries (iterations) for discrete hypothesis space using results on exact concept learning in discrete spaces~\cite{GOLDMAN199520}. Here, we prove bounds on the number of queries needed to learn in a continuous linear hypothesis space using results from convex optimization.

Recall that in the framework, the learner provides a candidate, and the verifier refutes the candidate by a counterexample, and the demonstrator generates a new observation.  The following lemma relates the sample $\vc_j \in \C_{j-1}$ at the $j^{th}$ iteration and the set $\C_{j}$ in the subsequent iteration.

\begin{lemma}\label{lemma:cj-half-space}
  There exists a half-space  $H_{j}:\ \va^t \vc \geq b$ such that (a) $\vc_j$ lies on the boundary of hyperplane $H_{j}$, and (b) $\C_{j} \subseteq \C_{j-1} \cap H_{j}$.
\end{lemma}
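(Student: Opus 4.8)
The plan is to exploit the fact, already established in Theorem~\ref{thm:formal-learning-thm}(1), that the candidate $\vc_j$ returned at the $j^{th}$ iteration is eliminated from $\C_{j}$, and then to pinpoint \emph{which} linear constraint is responsible for its removal. Recall from the update rule (Eq.~\eqref{eq:C_j+1}) and the proof of Lemma~\ref{lemma:cj-convex} that $\C_{j} = \C_{j-1} \cap H_{j1} \cap H_{j2}$, where $H_{j1} = \{\vc \mid \sum_i c_i g_i(\vx_j) > 0\}$ encodes the condition $V_\vc(\vx_j) > 0$ and $H_{j2} = \{\vc \mid \sum_i c_i\,\nabla g_i(\vx_j)\cdot f(\vx_j,\vu_j) < 0\}$ encodes $\nabla V_\vc \cdot f(\vx_j,\vu_j) < 0$. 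Both are open half-spaces, and since the template $V_\vc$ is homogeneous linear in $\vc$, both of their bounding hyperplanes pass through the origin of parameter space.

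First I would argue that $\vc_j$ violates at least one of these two strict inequalities. The verifier returns a witness $\vx_j \neq \vzero$ refuting Eq.~\eqref{eq:clf} for $V_{\vc_j}$, so either $V_{\vc_j}(\vx_j) \leq 0$ or $(\forall \vu)\ \nabla V_{\vc_j}\cdot f(\vx_j,\vu) \geq 0$; in the latter case, instantiating at the demonstrated input $\vu_j := \D(\vx_j)$ used by the update gives $\nabla V_{\vc_j}\cdot f(\vx_j,\vu_j) \geq 0$. Thus in the first case $\vc_j \notin H_{j1}$ and in the second $\vc_j \notin H_{j2}$. Fix a violated constraint and, after multiplying by $-1$ if necessary to orient it as a strict lower bound, write it as $\ell(\vc) := \va^t \vc > 0$, where the right-hand side is $0$ by homogeneity. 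By construction $\ell(\vc_j) = \va^t \vc_j \leq 0$, whereas every $\vc \in \C_{j}$ satisfies $\ell(\vc) > 0$.

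Then I would take $H_j := \{\vc \mid \va^t \vc \geq b\}$ with $b := \va^t \vc_j$. Part (a) is immediate, since $\va^t \vc_j = b$ places $\vc_j$ on the bounding hyperplane of $H_j$. For part (b), any $\vc \in \C_j$ obeys $\va^t \vc = \ell(\vc) > 0 \geq \va^t \vc_j = b$, hence $\vc \in H_j$; combined with $\C_j \subseteq \C_{j-1}$ from the definition of the update, this yields $\C_j \subseteq \C_{j-1} \cap H_j$, as required.

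The only genuine subtlety---and the single step I would treat carefully---is the bookkeeping of the disjunctive verification condition: the witness $\vx_j$ certifies failure of the CLF property, which is itself an $(\exists \vu)$ statement, so its negation is a $(\forall \vu)$ statement, and I must instantiate that universal at \emph{precisely} the demonstrated input $\vu_j = \D(\vx_j)$ appearing in the update rule in order to land back in the half-space $H_{j2}$ that actually defines $\C_j$. Everything else is elementary. This lemma is evidently the geometric engine for the subsequent termination argument: because $\vc_j$ sits on the boundary of a half-space containing all of $\C_j$, choosing $\vc_j$ to be, say, the centroid of $\C_{j-1}$ permits a Gr\"unbaum-type volume-cut estimate bounding the number of iterations.
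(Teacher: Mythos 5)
Your proof is correct, and it arrives at the same final construction as the paper---a hyperplane weakly separating $\vc_j$ from $\C_{j}$, translated so that its boundary passes through $\vc_j$---but it obtains that separating hyperplane by a more elementary and fully constructive route. The paper's proof simply invokes the existence of a separating hyperplane between the point $\vc_j$ and the convex set $\C_{j}$, using $\vc_j \notin \C_{j}$ from Theorem~\ref{thm:formal-learning-thm} and convexity of $\C_j$; strictly speaking this needs the supporting-hyperplane form of the separation theorem, since $\C_j$ is carved out by strict inequalities and $\vc_j$ may lie in the closure $\overline{\C_j}$. You instead read the separator directly off the update rule: writing $\C_{j} = \C_{j-1} \cap H_{j1} \cap H_{j2}$, you observe that the verifier's counterexample---correctly instantiated at the demonstrated input $\vu_j = \D(\vx_j)$, which is exactly the subtlety you flag---forces $\vc_j$ to violate one of the two homogeneous strict linear inequalities, and that violated constraint, shifted to pass through $\vc_j$, is the required half-space. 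This buys you an explicit normal vector $\va$ (one of the two constraint normals, whose bounding hyperplanes pass through the origin by linearity of the template in $\vc$), no appeal to convex-separation theorems, and no dependence on the convexity of $\C_{j-1}$; the paper's version is shorter and would survive verbatim if the update rule were replaced by any other convexity-preserving refinement. Both arguments feed identically into the subsequent volume-reduction and termination analysis.
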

\begin{proof}
    Recall that we have $\vc_j \in \C_{j-1}$ but $\vc_j \not\in \C_{j}$ by Theorem~\ref{thm:formal-learning-thm}. Let $\hat{H}_j:\ \va^t \vc = \hat{b}$ be a separating hyperplane between the (convex) set $\C_{j}$ and the point $\vc_j$, such that $\C_{j} \subseteq \{ \vc\ |\ \va^t \vc \geq \hat{b}\}$. By setting the offset $b:\ \va^t \vc_j$, we note that $b \leq \hat{b}$. Therefore, by defining $H_{j}$ as $\va^t \vc \geq b$, we obtain the required half-space that satisfies conditions (a) and (b).
\QED
\end{proof}

While sampling a point from $\C_j$ is solved efficiently by solving a linear programming problem, Lemma.~\ref{lemma:cj-half-space} suggests that the choice of $\vc_j$ governs the convergence of the algorithm. Figure~\ref{fig:learning-iteration} demonstrates the importance of this choice by showing candidate $\vc_j$, $\C_{j}$, and $\C_{j+1}$.

\begin{figure}[t]
\begin{center}
\begin{tikzpicture}[scale=0.8]
\draw[draw=black, line width = 1.5pt, fill=myGreen!40](0,0) -- (2,0.5) -- (2,2.5) -- (0,4) -- (-1,1) -- cycle;
\draw[draw=black, line width=1.5pt, fill=myBlue!40] (0,0) -- (2,0.5) -- (2,1.1) -- (1.2,1.5) -- ( -0.35,0.35) -- cycle;
\draw[draw=black, line width=1.5pt, dashed](-0.9,-0.05) -- (3.0, 2.83);
\draw[draw=black, line width=1.5pt, dashed](2.5,0.86) -- (-1, 2.56);
\draw[draw=black, line width=1.5pt, dashed](-1.3,0.9) -- (3.0, 2.2);
\draw[fill=myRed] (0.5,1.45) circle (0.1);
\node at (0.22,1.7) {$\vc_j$};
\node at (0.6,2.8) {$\C_{j-1}$};
\node at (0.8,0.6) {$\C_{j}$};
\node at (-1.2,2.8){$H_{j1}$};
\node at (3.5,3.2){$H_{j2}$};
\node at (-1.7,0.8){$H_{j}$};
\end{tikzpicture}
\\ Original candidate region $\C_{j-1}$ (green) at the start of the
  $j^{th}$ iteration, the candidate $\vc_j$, and the new region
  $\C_{j}$ (blue region).
\end{center}
\caption{Effect of candidate selection on feasible parameter sets.}\label{fig:learning-iteration}
\end{figure}

For a faster termination, we wish to remove a ``large portion" of $\C_{j-1}$ to obtain a ``smaller" $\C_{j}$.  There are two important factors affecting this: (i) counterexample $\vx_j$ selection and (ii) candidate $\vc_j$ selection.  Counterexample $\vx_j$ would affect $\vu_j:\ \D(\vx_j)$, $g(\vx_j)$, and $f(\vx_j, \vu_j)$ and therefore affects the separating hyperplane $H_{j}$. On the other hand, candidate  $\vc_j \not\in \C_{j}$. We have already discussed the counterexample selection in Section~\ref{sec:cegis}. In the following, we focus on different techniques to generate a candidate $\vc_j \in \C_{j-1}$. 

The goal is to find a $\vc_j$ s.t. 
\begin{equation} \label{eq:volume-reduction}
    \Vol(\C_{j}) \leq \alpha \Vol(\C_{j-1}) \,,
\end{equation}
for each iteration $j$ and a fixed constant $0 \leq \alpha < 1$, independent of the hyperplane $H_{j}$. Here, $\Vol(\C_j)$ represents the volume of the (closure) of the set $\C_j$. Since the closure of $\C_j$ is contained in $\C$ which is compact, this volume will always be finite. Note that if we can guarantee Eq.~\eqref{eq:volume-reduction}, it immediately follows that $\Vol(\C_j) \leq \alpha^j \Vol(\C_0)$. This implies that the volume of the remaining candidates ``vanishes" rapidly.

\begin{remark} By referring to $\Vol(\C_j)$, we are implicitly assuming that $\C_j$ is not embedded inside a subspace of $\reals^r$, i.e, it is full-dimensional. However, this assumption is not strictly true. Specifically, $\C_0 :\ \C \cap H_0$, where $H_0$ is a hyperplane. Thus, strictly speaking, the volume of $\C_0$ in $\reals^r$ is $0$. This issue is easily addressed by first factoring out the linearity space of $\C_j$, i.e., the affine hull of $\C_j$. This is performed by using the equality constraints that describe the affine hull to eliminate variables from $\C_j$. Subsequently, $\C_j$ can be treated as a full dimensional polytope in $\reals^{r-d_j}$, wherein $d_j$ is the dimension of its linearity space.
Furthermore, since $\C_{j} \subseteq \C_{j-1}$, we can continue to express $\C_{j} $ inside $\reals^{r-d_j}$ using the same basis vectors as $\C_{j-1}$.  A further complication arises if $\C_{j}$ is embedded inside a smaller subspace. We do not treat this case in our analysis. However, note that this can happen for at most $r$ iterations and thus, does not pose a problem for the termination analysis.
\end{remark}

Intuitively, it is clear from Figure~\ref{fig:learning-iteration} that a candidate at the \emph{center} of $\C_j$ would be a good one. We now relate the choice of $\vc_j$ to an appropriate definition of center, so that Eq.~\eqref{eq:volume-reduction} is satisfied.

\paragraph{Center of Maximum Volume Ellipsoid:}
Maximum volume ellipsoid (MVE) inscribed inside a polytope is unique with many useful characteristics. Let $\E_j$ be the MVE inscribed inside $\C_j$ (Figure~\ref{fig:ellipsoid}).

\begin{theorem}[\cite{tarasov1988method,khachiyan1990inequality}] \label{thm:mve-2}
Let $\vc_j$ be chosen as the center of $\E_{j-1}$. Then, $\Vol(\E_{j}) \leq \left(\frac{8}{9}\right) \Vol\left(\E_{j-1}\right)$. 
\end{theorem}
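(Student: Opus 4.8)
The plan is to reduce the claim to a purely geometric statement about how a central cut shrinks the maximum-volume inscribed ellipsoid, and then to establish that statement by a variational (extremal) argument. First I would record two elementary facts. (i) \emph{Monotonicity:} if $A \subseteq A'$ are convex bodies, then the MVE of $A$ is an ellipsoid contained in $A'$, so its volume is at most that of the MVE of $A'$. (ii) \emph{Affine invariance:} the MVE commutes with every invertible affine map $T$, and $T$ scales all volumes by the common factor $|\det T|$, so ratios of the form $\Vol(\E_j)/\Vol(\E_{j-1})$ are invariant under $T$. By Lemma~\ref{lemma:cj-half-space} there is a half-space $H_j$ whose bounding hyperplane passes through $\vc_j$, which is the center of $\E_{j-1}$ by the hypothesis of this theorem, and which satisfies $\C_j \subseteq \C_{j-1}\cap H_j$. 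Applying (i) gives $\Vol(\E_j)\le \Vol\bigl(\mathrm{MVE}(\C_{j-1}\cap H_j)\bigr)$, so it suffices to bound the right-hand side.

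Next I would normalize. Let $T$ be the affine map sending $\E_{j-1}$ to the unit Euclidean ball $B$; by fact (ii) I may assume $\E_{j-1}=B$ is centered at the origin, that $\vc_j=\vzero$, and, after a rotation, that $H_j=\{\vx : x_1\ge 0\}$. Write $K:=\C_{j-1}$, so that $B\subseteq K$ and, crucially, $B$ is \emph{the} MVE of $K$. The problem is now the dimension-free geometric lemma: for every convex body $K$ whose MVE is the unit ball $B$ and every half-space $H$ through the origin, the MVE of $K\cap H$ has volume at most $\tfrac{8}{9}\Vol(B)$.

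To prove this lemma I would argue by contradiction using the optimality of $B$. Suppose $E^{+}:=\mathrm{MVE}(K\cap H)$ had volume exceeding $\tfrac{8}{9}\Vol(B)$. Since $E^{+}\subseteq K\cap\{x_1\ge 0\}$ and the whole half-ball $B^{-}:=B\cap\{x_1\le 0\}$ lies in $K$ (because $B\subseteq K$), the convex body $\mathrm{conv}(B^{-}\cup E^{+})$ is contained in $K$; hence its own maximum-volume inscribed ellipsoid $E^{\ast}$ lies in $K$, and optimality of $B$ forces $\Vol(E^{\ast})\le\Vol(B)$. The heart of the argument is then to exhibit an explicit ellipsoid inside $\mathrm{conv}(B^{-}\cup E^{+})$, stretched along the $x_1$-axis so as to fill both the guaranteed room on the side $x_1\le 0$ and the ellipsoid $E^{+}$ on the side $x_1\ge 0$, and to compute its volume. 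Optimizing the single interpolation/stretch parameter shows that $\Vol(E^{+})>\tfrac{8}{9}\Vol(B)$ would yield $\Vol(E^{\ast})>\Vol(B)$, the desired contradiction. The constant $\tfrac{8}{9}$ emerges as the value of this one-parameter optimization, the worst case over all admissible $K$ being governed by this bound (the one-dimensional instance $B=[-1,1]$, $H=[0,1]$ already exhibits a strict reduction).

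The step I expect to be the main obstacle is this last geometric core. Two technical points must be handled carefully. The center and orientation of $E^{+}$ are \emph{a priori} arbitrary, so before optimizing one must first reduce to an axis-aligned configuration---for instance by symmetrizing $E^{+}$ against its reflection across $\{x_1=0\}$, or by a secondary volume-preserving normalization that fixes both the ball and the half-space---and the Minkowski (or convex-hull) combination of two ellipsoids with distinct centers is not itself an ellipsoid, so one must produce a concrete inscribed ellipsoid and bound its volume from below rather than manipulate the combination directly. Once the configuration is aligned, the remainder is the elementary calculus optimization that pins down $\tfrac{8}{9}$, after which the monotonicity reduction and the affine invariance of volume ratios close the argument: $\Vol(\E_j)\le\Vol(E^{+})\le\tfrac{8}{9}\Vol(B)=\tfrac{8}{9}\Vol(\E_{j-1})$.
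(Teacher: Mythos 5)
You should first note that the paper does not prove Theorem~\ref{thm:mve-2} at all: it is imported from \cite{tarasov1988method,khachiyan1990inequality} and used as a black box, so there is no in-paper argument to compare against. Your reduction of the statement to a clean geometric lemma is correct and matches exactly how the paper deploys the result: monotonicity of the maximum-volume inscribed ellipsoid under inclusion, affine equivariance of the MVE together with invariance of volume ratios, and Lemma~\ref{lemma:cj-half-space} to replace $\C_j$ by $\C_{j-1}\cap H_j$ with $H_j$ a half-space through the center of $\E_{j-1}$. After normalization the claim becomes: if the unit ball $B$ is the MVE of a convex body $K$ and $H=\{x : x_1\ge 0\}$, then $\Vol(\mathrm{MVE}(K\cap H))\le \frac{8}{9}\Vol(B)$. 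Your contradiction scheme --- a larger inscribed ellipsoid inside $\mathrm{conv}(B\cup E^{+})\subseteq K$ would contradict the optimality of $B$ --- is indeed the strategy of the cited central-cut lemma.

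The gap is that everything specific to the number $\frac{8}{9}$ is deferred, and the deferred part is the entire content of the theorem. Three points are not established. First, the reduction to an axis-aligned configuration is asserted but the proposed device does not work as stated: reflecting $E^{+}$ across $\{x_1=0\}$ moves it out of the admissible half-space, and the MVE of $K\cap H$ need not be a body of revolution about the $x_1$-axis, so a genuine symmetrization of $K$ itself (or a computation valid for arbitrary centers and positive-definite shape matrices, e.g.\ via the Minkowski determinant inequality) is needed before any one-dimensional calculus can begin. Second, the claim that $\frac{8}{9}$ ``emerges as the value of this one-parameter optimization'' is unsubstantiated: the competitor ellipsoid involves at least two coupled parameters (interpolation weight and axial stretch), and the bound must additionally be extremized over the admissible shapes and positions of $E^{+}$ --- the interaction between the eccentricity of $E^{+}$ and the half-space constraint is precisely where the worst case hides, and nothing in the sketch identifies it or verifies that the resulting supremum is $\frac{8}{9}$ rather than some other constant. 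Third, and more minor, working inside $\mathrm{conv}(B^{-}\cup E^{+})$ rather than $\mathrm{conv}(B\cup E^{+})$ needlessly discards half of the guaranteed room and makes the required construction strictly harder. None of this is unfixable --- the references carry out exactly this program --- but as written your proposal establishes only the easy reduction, not the inequality itself.
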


Recall, here that $r$ is the number of basis functions such that $\C_j \subseteq \reals^r$. This leads us to a scheme that guarantees termination of the overall procedure within finitely many steps under some assumptions. The idea is simple. Select the center of the MVE inscribed in $\C_{j-1}$ at $j^{th}$ iteration (Figure~\ref{fig:ellipsoid}).
We consider some robustness for the candidate.
\begin{definition}[Robust Compatibility]
A candidate $\vc$ is $\delta$-robust for $\delta > 0$ w.r.t. observations (the demonstrator), iff for each $\hat{\vc} \in \B_{\delta}(\vc)$, $V_{\hat{\vc}}(\vx):\ \hat{\vc}^t \cdot \vg(\vx)$ is compatible with observations (the demonstrator) as well. 
\end{definition}

Let $\C \subseteq (-\Delta, \Delta)^r$ for $\Delta > 0$. Following the robustness assumption, it is sufficient to terminate the procedure whenever
\begin{equation}\label{eq:termin-cond-2}
\Vol(\E_j) < \gamma \delta^r\,,
\end{equation}
where $\gamma$ is the volume of $r$-ball with radius $1$. This additional termination condition is easily justified when one considers the precision limits of floating point numbers and sets of small volumes. Clearly, as the $\Vol(\C_j)$ decrease exponentially (as $j$ increases), each point inside the set will be quite close to one that is outside, requiring high precision arithmetic to represent and sample from the sets $\C_j$. 
  
  \begin{theorem} \label{thm:termination2}
If at each step $\vc_j$ is chosen as the center of $\E_{j-1}$, the learning loop condition defined by Eq.~\eqref{eq:termin-cond-2} is violated in at most $\frac{r (\log(\Delta) - \log(\delta))}{- \log\left(\frac{8}{9}\right) } = O(r)$ iterations.
\end{theorem}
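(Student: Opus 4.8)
The plan is to bound the number of iterations by tracking the volume of the maximum-volume inscribed ellipsoid $\E_j$ and combining the geometric decrease from Theorem~\ref{thm:mve-2} with the hard floor imposed by the termination condition Eq.~\eqref{eq:termin-cond-2}. First I would observe that, by hypothesis, at every iteration $\vc_j$ is chosen as the center of $\E_{j-1}$, so Theorem~\ref{thm:mve-2} applies verbatim and gives $\Vol(\E_j) \leq \left(\tfrac{8}{9}\right)\Vol(\E_{j-1})$ for each $j$. Iterating this inequality yields $\Vol(\E_j) \leq \left(\tfrac{8}{9}\right)^{j}\Vol(\E_0)$, so the ellipsoid volume decays geometrically with ratio $\tfrac{8}{9}$.

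Next I would supply the two endpoints of the bound. For the upper endpoint on the initial volume: since $\C \subseteq (-\Delta,\Delta)^r$ and $\E_0 \subseteq \C_0 \subseteq \C$, the ellipsoid $\E_0$ is contained in a box of side $2\Delta$, hence contained in a ball of radius $\sqrt{r}\,\Delta$ (or, more crudely but sufficiently, we can use that $\Vol(\E_0) \leq \gamma \Delta^r$ up to the relevant constant, matching the form of Eq.~\eqref{eq:termin-cond-2}). For the lower endpoint: the loop continues only while $\Vol(\E_j) \geq \gamma \delta^r$, so as long as the loop has not terminated we have $\gamma \delta^r \leq \Vol(\E_j) \leq \left(\tfrac{8}{9}\right)^{j}\gamma\Delta^r$. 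The key step is then to solve this inequality for $j$: dividing by $\gamma$ and taking logarithms gives
\[
\delta^r \leq \left(\tfrac{8}{9}\right)^{j}\Delta^r
\;\Longrightarrow\;
r\log\delta \leq j\log\!\left(\tfrac{8}{9}\right) + r\log\Delta.
\]
Since $\log\!\left(\tfrac{8}{9}\right) < 0$, rearranging flips the inequality and isolates $j$ as
\[
j \leq \frac{r\bigl(\log(\Delta)-\log(\delta)\bigr)}{-\log\!\left(\tfrac{8}{9}\right)},
\]
which is exactly the claimed bound; the $O(r)$ statement follows because $\Delta$, $\delta$, and the constant $\tfrac{8}{9}$ are fixed, so the whole coefficient of $r$ is a constant.

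The conclusion is that once $j$ exceeds this threshold, the termination condition Eq.~\eqref{eq:termin-cond-2} must be violated, i.e.\ $\Vol(\E_j) < \gamma\delta^r$, so the loop cannot continue past that many iterations. The main obstacle I anticipate is purely a matter of matching constants rather than mathematical depth: I need to make sure the $\gamma\Delta^r$ factor used for the initial ellipsoid is genuinely an upper bound for $\Vol(\E_0)$ (this rests on $\E_0$ fitting inside the box $(-\Delta,\Delta)^r$, and I should be careful whether $\gamma\Delta^r$, the volume of a radius-$\Delta$ ball, actually dominates the inscribed ellipsoid of a box of half-side $\Delta$). A secondary subtlety is the dimensional-reduction caveat raised in the preceding remark: $\C_j$ may collapse into a lower-dimensional subspace, but as noted there this can happen at most $r$ times and so only adds a lower-order term, leaving the $O(r)$ bound intact.
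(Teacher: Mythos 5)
Your proposal is correct and follows essentially the same route as the paper: bound $\Vol(\E_0)$ by $\gamma\Delta^r$ (the ball $\B_\Delta(\vzero)$ being the MVE of the box $[-\Delta,\Delta]^r$), apply Theorem~\ref{thm:mve-2} to get geometric decay with ratio $\tfrac{8}{9}$, and take logarithms against the floor $\gamma\delta^r$ from Eq.~\eqref{eq:termin-cond-2} to isolate $j$. The concern you flag about the initial-volume constant is resolved exactly as you suspect (the inscribed ball is the MVE of the symmetric box), and the dimensional-collapse caveat is handled in the paper's preceding remark just as you note.
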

\begin{proof}
    Initially, $\B_{\Delta}(\vzero)$ is the MVE strictly inside box $[-\Delta, \Delta]^r$ and therefore, $\Vol(\E_0) < \gamma \Delta^r$. Then by Theorem~\ref{thm:mve-2}
    \begin{align*}
    &\Vol(\E_j) \leq (\frac{8}{9})^j \ \Vol(\E_0) < (\frac{8}{9})^j \gamma \Delta^r \\
    \implies & \log(\Vol(\E_j)) - \log(\gamma \Delta^r) < j \  \log(\frac{8}{9}) \,.
    \end{align*}
    After $k = \frac{r(\log(\Delta)-\log(\delta))}{-\log(\frac{8}{9})}$ iterations:
    \begin{equation*}
        \log(\Vol(\E_k)) - \log(\gamma \Delta^r) < \frac{r(\log(\Delta)-\log(\delta))}{-\log(\frac{8}{9})} \  \log(\frac{8}{9}) \,,
    \end{equation*}
    and
    \begin{align*}
        \implies & \log(\Vol(\E_k)) - \log(\gamma \Delta^r) < r(\log(\delta) - \log(\Delta)) \\
        \implies & \log(\Vol(\E_k)) - \log(\gamma \Delta^r) < \log(\gamma \delta^r) - \log(\gamma \Delta^r) \\
        \implies & \log(\Vol(\E_k)) < \log(\gamma \delta^r)\,.
    \end{align*}    
    It is concluded that $\Vol(\E_k) < \gamma \delta^r$, which is the termination condition. And asymptotically, the maximum number of iterations would be $O(r)$. \QED
\end{proof}

  The volume of an ellipsoid is effectively computable, and thus, such termination condition can be checked quickly. The MVE itself can be computed by solving a convex optimization problem\cite{tarasov1988method,vandenberghe1998determinant}.

\begin{figure}[t]
\begin{center}
    \includegraphics[width=0.4\textwidth]{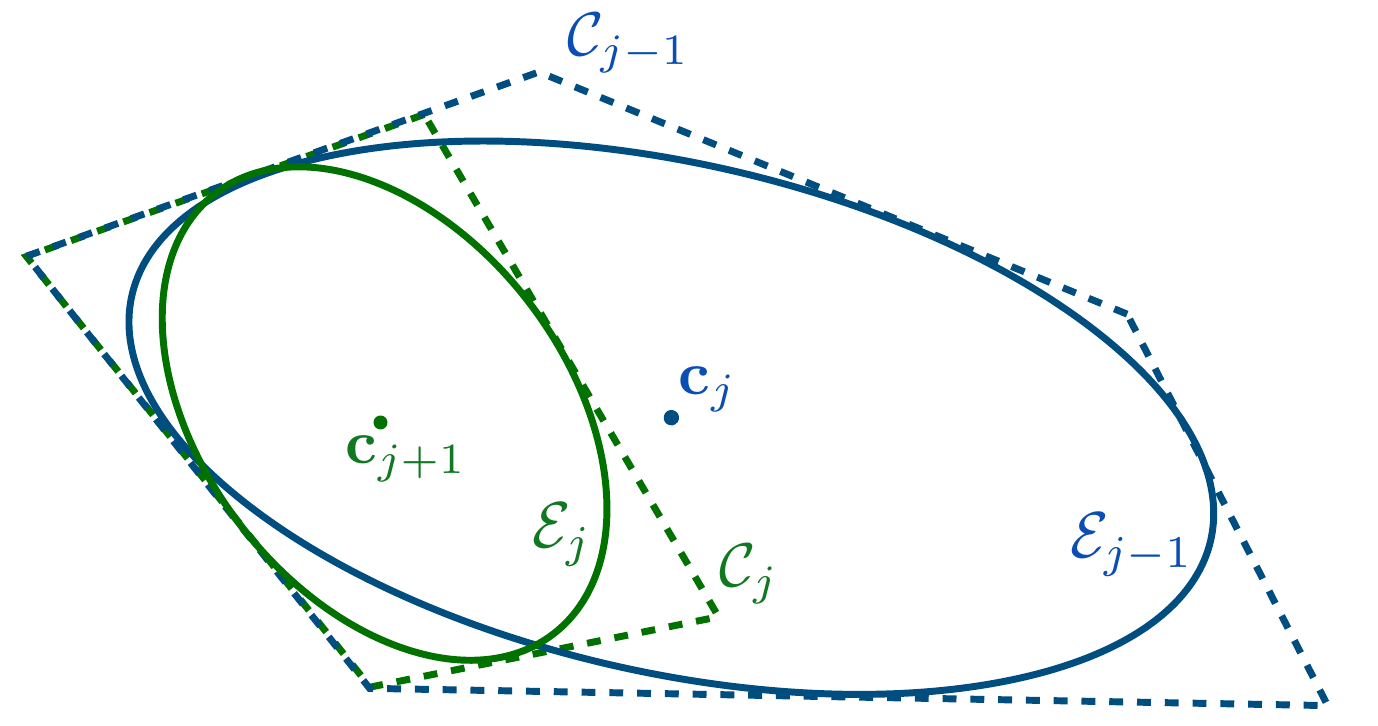}
\\
Candidate region $\C_{j-1}$ ($\C_{j}$) is shown in blue (green) polygon. The maximum volume ellipsoid $E_{j-1}$ ($E_{j}$) is inscribed in $\C_{j-1}$ ($\C_{j}$) and its center is the candidate $\vc_j$ ($\vc_{j+1}$).
\end{center}
\caption{Effect of selecting center of MVE.}
\label{fig:ellipsoid}
\end{figure}

\begin{theorem}\label{thm:clf-or-no-robust-solution}
    The learning framework either finds a control Lyapunov function or proves that no linear combination of the basis functions would yield a function with robust compatibility with the demonstrator.
\end{theorem}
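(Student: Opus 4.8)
The plan is to derive the dichotomy from two facts already available: the guaranteed termination of the loop (Theorem~\ref{thm:termination2}) and a volume lower bound that robust compatibility would impose on the inscribed ellipsoids $\E_j$. First I would invoke Theorem~\ref{thm:termination2}: choosing each candidate $\vc_j$ as the center of $\E_{j-1}$ guarantees that the loop halts within $O(r)$ iterations, since a verifier success stops it immediately and otherwise the volume test Eq.~\eqref{eq:termin-cond-2} must fire. At halting, exactly one of two situations holds: the verifier has certified $V_{\vc_j}$ (success), or the procedure has declared failure because $\C_{j-1}=\emptyset$ or because $\Vol(\E_j) < \gamma\delta^r$. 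The success branch is immediately the ``found a CLF'' case, its correctness being Theorem~\ref{thm:formal-learning-thm}(2); so the entire content is to show that the failure branch entails the nonexistence of a $\delta$-robustly compatible parameter.

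I would prove this by contraposition. Assume some $\vc^*$ is $\delta$-robustly compatible with the demonstrator, so that $V_{\hat\vc}$ is a Lyapunov function for $\Psi(\P,\D)$ for every $\hat\vc \in \B_\delta(\vc^*)$. Because every recorded observation has the form $(\vx_i,\D(\vx_i))$, demonstrator compatibility implies observation compatibility, whence $\B_\delta(\vc^*) \subseteq \C_j$ at every iteration $j$. This ball is itself an ellipsoid inscribed in $\C_j$, so maximality of $\E_j$ forces $\Vol(\E_j) \ge \gamma\delta^r$ for all $j$, and in particular $\C_j$ is never empty. Neither failure trigger can therefore ever occur, so the only way the (terminating) loop can stop is through verifier success. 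This contradicts having reached the failure branch; hence no $\delta$-robustly compatible $\vc^*$ exists, which is exactly what the failure branch must certify.

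The step needing the most care is matching the dimension of the ball $\B_\delta(\vc^*)$ with the space in which $\Vol(\E_j)$ is measured. As the remark before Theorem~\ref{thm:termination2} explains, $\C_j$ lies in the affine hull determined by $V_\vc(\vzero)=0$, and volumes are taken in the reduced space $\reals^{r-d}$; the robustness radius $\delta$, the ball $\B_\delta(\vc^*)$, and the unit-ball constant $\gamma$ must all be read in that same reduced space for the inequality $\Vol(\E_j)\ge\gamma\delta^r$ to hold. Once robustness is defined relative to the affine hull of the feasible region, the remaining ingredient---that a $\delta$-ball inside a polytope lower-bounds the volume of its maximum-volume inscribed ellipsoid by $\gamma\delta^r$---is elementary, and I would state it as a short lemma since it is the only geometric fact the argument needs; the rest is bookkeeping with the definitions of compatibility.
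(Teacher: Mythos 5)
Your proposal is correct and is essentially the paper's own argument run in the contrapositive direction: both rest on Theorem~\ref{thm:termination2}, the observation that a $\delta$-ball contained in $\C_j$ forces $\Vol(\E_j)\ge\gamma\delta^r$ by maximality of the inscribed ellipsoid, and the fact that demonstrator compatibility implies compatibility with every recorded observation. Your extra attention to measuring volumes in the affine hull of $\C_j$ is a welcome clarification the paper leaves implicit, but it does not change the route.
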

\begin{proof}
    By Theorem~\ref{thm:formal-learning-thm}, if verifier certifies correctness of a solution $V$, then $V$ is a CLF. Assume that the framework terminates after $k$ iterations and no solution is found. Then, by Theorem~\ref{thm:termination2}, $\Vol(\E_k) < \gamma \delta^r$. This means that a ball with radius $\delta$ would not fit in $\C_k$ as $\E_k$ is the MVE inscribed inside $\C_k$.
    In other words
    \[
    (\forall \vc \in \C_k) \ (\exists \hat{\vc} \in \B_{\delta}(\vc)) \ \hat{\vc} \not\in \C_k\,.
    \]
    On the other hand, for all $\vc \not\in \C_k$, $V_\vc$ is not compatible with the observations $O_j$. Therefore, even if there is a CLF $V_\vc$ s.t. $\vc \in \C_k$, the CLF is not robust in its compatibility with the demonstrator. \QED
\end{proof}

\paragraph{Other Definitions for Center of Polytope:}
Besides the center of MVE inscribed inside a polytope, there are other notions for defining the center of a polytope. These include the center of gravity and  Chebyshev center. Center of gravity provides the following inequality~~\cite{bland1981ellipsoid}
\[ \Vol\left(\C_{j}\right) \leq \left(1-\frac{1}{e}\right) \Vol\left(\C_{j-1}\right)
< 0.64 \ \Vol(\C_{j-1}) \,,\]
meaning that the volume of candidate set is reduced by at least 36\% at each iteration. Unfortunately, computing center of gravity is costly. Chebyshev center~\cite{elzinga1975central} of a polytope is the center of the largest Euclidean ball that lies inside the polytope. Finding a Chebyshev center for a polytope is equivalent to solving a linear program, and while it yields a good heuristic, it would not provide an inequality in the form of Eq.~\eqref{eq:volume-reduction}.

There are also notions for defining the center for a set of constraints, including analytic center, and volumetric center. Assuming $I :\ \{ \vc \ | \bigwedge_i \va_i^t . \vc < b_i \}$, then analytic center for $\bigwedge_i \va_i^t . \vc < b_i$ is defined as
\[
ac(\bigwedge_i \va_i^t . \vc < b_i) = argmin_{\vc} - \sum_i \log (b_i - \va_i^t . \vc) \,.
\]
Notice that infinitely many inequalities can represent $I$ and any point inside $I$ can be an analytic center depending on the inequalities. Atkinson et al.~\cite{atkinson1995cutting} and Vaidya~\cite{vaidya1996new} provide candidate generation techniques based on these centers, along with appropriate termination conditions and convergence analysis.

\subsection{Other Control Problems}
While we discussed the problem for smooth feedback systems, its extension for switched feedback systems is straightforward. Moreover, the framework is directly applicable for finding control exponential-barrier function~\cite{kong2013exponential}, control Lyapunov-barrier functions, and control Lyapunov fixed-barriers functions (for uninitialized RWS) as well. 

To extend the method for finding control barrier functions (or control funnel functions), we need a slightly different demonstrator. Recall that for a control barrier certificate $B_\vc(\vx)$ the following conditions must hold:
\begin{align*}
(\forall \vx \in I) & \ B_\vc(\vx) < 0\\
(\forall \vx \in \partial S) & \ B_\vc(\vx) > 0 \\
(\forall \vx \in S \setminus \inter{I}) & \left( \begin{array}{c}
(\exists \vu \in U) \ \nabla B_\vc \cdot f(\vx, \vu) + \lambda^* B_\vc(\vx) < 0 \\ \lor \\
(\exists \vu \in U) \ \nabla B_\vc \cdot f(\vx, \vu) - \lambda^* B_\vc(\vx) < 0\,.
\end{array} \right) \,.
\end{align*}
If $\lambda^* = 0$, the previously discussed method can address the problem. Otherwise, because of the disjunction on the third condition, $\overline{\C_j}$ would not be a polytope if we use the demonstrator discussed previously.

Considering the third condition, if $B_\vc(\vx) < 0$, then it is easier to satisfy 
\[(\exists \vu \in U) \ \nabla B_\vc \cdot f(\vx, \vu) + \lambda B_\vc(\vx) < 0\,,\]
and in case $B_\vc(\vx) > 0$, the other condition is weaker. Also recall that $B_{\vc}^{< 0}$ contains states for which safety is guaranteed. We wish the demonstrator to inform the learner which condition is weaker (better to use) for a specific $\vx_j$. A simple solution is to assume a demonstrator that for a given state $\vx_j$ outputs: (i) a proper instantaneous control input $\vu_j$, and (ii) $b_j \in \bools$ indicating whether safety can be guaranteed by the demonstrator policy, starting from $\vx_j$. If safety is guaranteed using the demonstrator, we prefer to search a CBF $B_\vc$ s.t. $\vx_j \in B_\vc^{< 0}$ (safety would be guaranteed using the CBF-based controller) which hints the learner to use the first condition. Such a demonstrator would yield an observation set of the form:
\[
O_{j} :\ \{(\vx_1, \vu_1, b_1),\ldots(\vx_{j}, \vu_{j}, b_{j})\}\,.
\]
Then, $C_j$ is defined as
\[
C_j :\ \left\{ \vc \in \C \ | \ \bigwedge_{(\vx_i, \vu_i, b_i) \in O_j} 
\left( \begin{array}{l}
    b_i \implies \nabla B_\vc \cdot f(\vx_i, \vu_i) + B_\vc(\vx_i) < 0 \land \\
    \lnot b_i \implies \nabla B_\vc \cdot f(\vx_i, \vu_i) - B_\vc(\vx_i) < 0
\end{array} \right)
\right\}\,.
\]
Now, one could show $\overline{C_j}$ is a polytope and other parts of the framework remain unchanged.

\section{SDP Relaxation}
Recall that if the regions of interest are semi-algebraic sets, the dynamics and chosen bases are polynomials in $\vx$, then the verification problem for switched systems reduces to checking if a given semi-algebraic set defined by polynomial inequalities is empty. Therefore, the verification problem decidable with high complexity (NP-hard)~\cite{Basu+Pollock+Roy/03/Algorithms}.
However, for scalability, we consent to a relaxation using SDP solvers. We now present a relaxation using semidefinite programming (SDP) solvers.

Abandoning numerical SMT solvers, we no longer require the regions of interest to be compact. Moreover, we do not assume $U$ is finite anymore. However, we require nonlinear functions to be polynomials. For simplicity, we only consider CLF verification for smooth feedback systems.

The verifier checks the CLF conditions in Eq.~\eqref{eq:clf} for a candidate $V_{\vc_j}(\vx):\ \vc_j^t \cdot \vg(\vx)$. Since the CLF is generated by the learner, it is guaranteed that $V_{\vc_j}(\vzero) = 0$ (Eq.~\eqref{eq:C_j_0}). 
Accordingly, verification is split into two separate checks:

\noindent\textbf{(A)} Check if $V_{\vc_j}(\vx)$ is a positive polynomial for
  $\vx \neq \vzero$, or equivalently
  \begin{equation}\label{eq:positivity-cond}
 (\exists\ \vx \neq \vzero)\ V_{\vc_j}(\vx) \leq 0 \,.
\end{equation}
\noindent\textbf{(B)} Check if the Lie derivative of $V_{\vc_j}$ can be made negative for each $\vx \neq \vzero$ by a choice $\vu \in U$: 
\begin{equation}\label{eq:decrease-cond-init}
 (\exists \vx \neq \vzero ) \ (\forall \vu \in U)\ (\nabla V_{\vc_j}) \cdot f(\vx, \vu) \geq 0 \,.
\end{equation}
This problem \emph{seems} harder  due to the presence of a \emph{quantifier alternation}.
\begin{lemma}[Farkas Lemma for LP]\label{lemma:farkas}
Exactly one of the following holds:
\begin{compactitem}
    \item $(\exists \vx) \ A \vx \leq \vb$
    \item $(\exists \vlam) \ \vlam \geq \vzero \,, \vb^t\vlam < 0 \,, A^t \vlam = \vzero$.
\end{compactitem}
\end{lemma}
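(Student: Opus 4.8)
The plan is to prove the two halves of the ``exactly one'' claim separately: first that the two systems cannot both be feasible, and then that infeasibility of the first forces feasibility of the second.

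For the easy direction, I would assume toward a contradiction that both alternatives hold simultaneously, i.e., there are $\vx$ with $A\vx \le \vb$ and $\vlam \ge \vzero$ with $A^t \vlam = \vzero$ and $\vb^t \vlam < 0$. Multiplying $A\vx \le \vb$ on the left by the nonnegative vector $\vlam^t$ preserves the inequality, giving $\vlam^t A \vx \le \vlam^t \vb$. The left-hand side equals $(A^t\vlam)^t \vx = 0$, so $0 \le \vb^t \vlam < 0$, a contradiction. Hence at most one alternative is feasible.

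For the substantive direction, I would argue by separation. Observe that $A\vx \le \vb$ is feasible exactly when $\vb$ belongs to the set $K :\ \{A\vx + \vs \mid \vx \in \reals^n,\ \vs \ge \vzero\}$, which is a finitely generated convex cone (generated by the columns of $A$, their negations, and the standard basis vectors of $\reals^m$). If the first system is infeasible, then $\vb \notin K$, and since $K$ is a closed convex set I would invoke the separating-hyperplane theorem to produce a vector $\vlam$ with $\vlam^t \vb < 0$ and $\vlam^t \vy \ge 0$ for every $\vy \in K$; the separation constant can be taken to be $0$ because $K$ is a cone containing $\vzero$. Specializing $\vy = A\vx$ with $\vx$ ranging over all of $\reals^n$ forces $A^t \vlam = \vzero$, while specializing $\vy = \ve_i \ge \vzero$ forces each coordinate $\lambda_i \ge 0$, that is, $\vlam \ge \vzero$. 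Together with $\vb^t \vlam = \vlam^t \vb < 0$ this is precisely the second alternative.

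The main obstacle is the closedness of $K$, which the strict-separation step genuinely requires: separation of a point from a merely convex (non-closed) set need not be strict. I would discharge this using the Minkowski--Weyl/Carath\'eodory fact that every finitely generated cone is polyhedral and hence closed. An entirely self-contained alternative that sidesteps closedness is Fourier--Motzkin elimination: eliminating $x_1,\ldots,x_n$ one at a time from $A\vx \le \vb$, each derived inequality is a nonnegative combination of previous ones, so after all variables are removed the system reduces to finitely many inequalities of the form $0 \le \vlam^t \vb$ with $\vlam \ge \vzero$ and $A^t \vlam = \vzero$; the original system is infeasible iff one of these is violated, which is again exactly the second alternative.
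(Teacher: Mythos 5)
The paper does not prove this lemma: it is stated as a named classical result (Farkas' lemma in its Gale/theorem-of-alternatives form) and used as a black box in the proof of Lemma~\ref{lem:control-dual}, so there is no in-paper argument to compare yours against. Judged on its own, your proof is correct. The weak direction is the standard nonnegative-combination computation, done properly. For the strong direction, the reformulation of feasibility of $A\vx \leq \vb$ as membership of $\vb$ in the finitely generated cone $K = \{A\vx + \vs \mid \vs \geq \vzero\}$ is right, and the separation argument correctly reduces the separation constant to $0$ using that $K$ is a cone containing the origin before reading off $A^t\vlam = \vzero$ (from $\vy = A\vx$, $\vx$ arbitrary) and $\vlam \geq \vzero$ (from $\vy = \ve_i$). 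Crucially, you identify the one genuinely nontrivial ingredient --- closedness of a finitely generated cone, without which strict separation fails --- and you discharge it both by citing Minkowski--Weyl/Carath\'eodory and by offering the Fourier--Motzkin route, which is self-contained and avoids the issue entirely since each eliminated inequality is by construction a nonnegative combination of the originals with the $\vx$-coefficients cancelled. Either route is a complete and standard proof; there is no gap.
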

\begin{lemma} \label{lem:control-dual}
Eq.~\eqref{eq:decrease-cond-init} holds for some $\vx \neq \vzero$ iff
\begin{equation}\label{eq:decr-condition}
\begin{array}{ll}
(\exists\ \vx \neq \vzero,\vlam) \ \vlam\geq \vzero \, , \, \vlam^t \vb \geq -\nabla V_{\vc_j}.f_0(\vx) \,,\, A_i^t \vlam=\nabla V_{\vc_j}.f_i(\vx) (i \in \{1 \ldots m\}).
\end{array}
\end{equation}
\end{lemma}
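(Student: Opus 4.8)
The plan is to eliminate the inner universal quantifier $(\forall \vu \in U)$ in Eq.~\eqref{eq:decrease-cond-init} by a theorem of alternatives, turning it into an existential quantifier over dual multipliers. First I would fix an arbitrary $\vx$ and exploit the control-affine structure (Eq.~\eqref{eq:control-affine-plant}): writing $a_0 :\ \nabla V_{\vc_j} \cdot f_0(\vx)$ and $a_i :\ \nabla V_{\vc_j} \cdot f_i(\vx)$ for $i \in \{1,\ldots,m\}$, and collecting $\va :\ (a_1,\ldots,a_m)^t$, we have $\nabla V_{\vc_j} \cdot f(\vx,\vu) = a_0 + \va^t \vu$. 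Recalling that $U$ is the polytope defined by the linear inequalities $A\vu \geq \vb$, the inner condition of Eq.~\eqref{eq:decrease-cond-init} becomes the statement that the affine function $a_0 + \va^t\vu$ is nonnegative on all of $U$, i.e.\ the implication $A\vu \geq \vb \implies a_0 + \va^t\vu \geq 0$ holds. Since $\vx$ enters only through the constants $a_0,\va$, it suffices to characterize this implication for fixed $\vx$ and then re-attach the outer quantifier $\exists \vx \neq \vzero$ to both sides.

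The core step is an affine version of Farkas' lemma. Assuming $U \neq \emptyset$ (which holds as $U$ is a nonempty compact input set), the implication $A\vu \geq \vb \implies \va^t\vu \geq -a_0$ holds iff there is a multiplier $\vlam \geq \vzero$ with $A^t \vlam = \va$ and $\vb^t \vlam \geq -a_0$. I would derive this affine form from the homogeneous alternative in Lemma~\ref{lemma:farkas} by homogenization: append a fresh nonnegative scalar to absorb the constant terms $a_0$ and $\vb$, so that infeasibility of the strict system $\{A\vu \geq \vb,\ \va^t\vu < -a_0\}$ is translated, via the alternative, into nonnegative multipliers certifying $\va^t\vu \geq -a_0$ as a nonnegative combination of the constraints $A\vu \geq \vb$. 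The sign pattern $\vlam \geq \vzero$ comes directly from the corresponding branch of Lemma~\ref{lemma:farkas}, while the equality $A^t\vlam = \va$ and the inequality $\vb^t\vlam \geq -a_0$ record, respectively, that the combination reproduces the linear part $\va$ and dominates the constant $-a_0$.

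Substituting back $a_0 = \nabla V_{\vc_j}\cdot f_0(\vx)$ and $a_i = \nabla V_{\vc_j}\cdot f_i(\vx)$ turns these conditions into exactly $\vlam \geq \vzero$, $A_i^t\vlam = \nabla V_{\vc_j}\cdot f_i(\vx)$ for $i\in\{1,\ldots,m\}$, and $\vlam^t\vb \geq -\nabla V_{\vc_j}\cdot f_0(\vx)$, which is the inner part of Eq.~\eqref{eq:decr-condition}. Because the theorem of alternatives is an exact equivalence, both directions follow at once: a witness $\vx \neq \vzero$ for Eq.~\eqref{eq:decrease-cond-init} yields a pair $(\vx,\vlam)$ satisfying Eq.~\eqref{eq:decr-condition}, and conversely any such pair certifies $\nabla V_{\vc_j}\cdot f(\vx,\vu) \geq 0$ for every $\vu \in U$. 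Re-attaching $\exists\,\vx \neq \vzero$ to the now quantifier-free multiplier conditions completes the equivalence and removes the $\forall \vu$ alternation, so the verification of Eq.~\eqref{eq:decrease-cond-init} reduces to a single existential feasibility query in $(\vx,\vlam)$.

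The main obstacle I expect is the careful passage from Lemma~\ref{lemma:farkas} to its affine, non-strict form: one must track the feasibility assumption $U \neq \emptyset$ (otherwise the implication is vacuous while the dual system can behave degenerately), manage the strict inequality through the homogenizing variable, and confirm that the sign conventions of $A$ and $\vb$ that define $U$ (here $A\vu \geq \vb$) produce precisely the signs $A_i^t\vlam = \nabla V_{\vc_j}\cdot f_i(\vx)$ and $\vlam^t\vb \geq -\nabla V_{\vc_j}\cdot f_0(\vx)$ appearing in the statement, rather than their negatives. The remaining manipulations --- substituting the control-affine expansion and re-attaching the outer existential --- are routine.
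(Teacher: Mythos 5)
Your proposal is correct and follows essentially the same route as the paper: both fix $\vx$, use the control-affine expansion to reduce the inner $(\forall \vu \in U)$ condition to an affine inequality holding over the polytope $A\vu \geq \vb$, and then invoke Farkas' lemma (Lemma~\ref{lemma:farkas}) to replace it with the existence of nonnegative multipliers $\vlam$ satisfying $A_i^t\vlam = \nabla V_{\vc_j}\cdot f_i(\vx)$ and $\vlam^t\vb \geq -\nabla V_{\vc_j}\cdot f_0(\vx)$. Your treatment is in fact slightly more careful than the paper's, which applies the theorem of alternatives directly to the mixed strict/non-strict system without spelling out the homogenization or the nonemptiness of $U$.
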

\begin{proof}
    Suppose Eq.~\eqref{eq:decrease-cond-init} holds. Then, for the given $V$, there
    exists a $\vx \neq \vzero$ s.t.
    \begin{equation} \label{eq:u-cond-before-farkas-lemma}
    (\forall \vu \in U) \ \nabla V \cdot f(\vx,\vu) = \left( \begin{array}{c} \nabla V \cdot f_0(\vx) + \sum\limits_{i=1}^{m} \nabla V \cdot f_i(\vx) u_i \end{array} \right)\hspace{-0.1cm} \geq0,
    \end{equation}
    which is equivalent to:
    \[
        (\not\exists \vu) A \vu \geq \vb \land \nabla V \cdot f_0(\vx) + \sum_{i=1}^{m} \nabla V \cdot f_i(\vx) u_i < 0 \,.
    \]
     This yields a set of linear inequalities (w.r.t. $\vu$). Using Lemma~\ref{lemma:farkas}, this is equivalent to
    \begin{align*}
    (\exists \vlam \geq 0) \ A_i^t \vlam = \nabla V \cdot f_i(\vx) (i \in \{1...m\}) \, , \,
    \vlam^t \vb \geq -\nabla V \cdot f_0(\vx).
    \end{align*}
    Thus, for a given $V$, Eq.~\eqref{eq:decrease-cond-init} is equivalent
    to Eq.~\eqref{eq:decr-condition}. \QED
\end{proof}

Let $\vw^t:\ [\vx^t, \vlam^t] \in \reals^{n'}$ collect the state variables $\vx$ and the dual variables $\vlam$ involved in the conditions stated in Eq.~\eqref{eq:decr-condition}. The core idea behind the SDP relaxation is to consider a vector collecting all monomials of degree up to $\degr$: $\vz:\ [ 1 \  w_1 \  w_2\ \ldots \ w_{n'}^{\degr} ]^t$, wherein $\degr$ is chosen to be at least half of the maximum degree in $\vx$ among all monomials in $g_j(\vx)$ and $\nabla g_j \cdot f_i(\vx)$:$ \degr \geq \frac{1}{2} \max\left( \bigcup_{j} \left( \{ \mbox{deg}(g_j) \} \cup \{ \bigcup_{i}\mbox{deg}(\nabla g_j \cdot f_i ) \} \right) \right)$.
Let us define $Z(\vw):\ \vz \vz^t$, which is a symmetric matrix
  of monomial terms of degree at most $2\degr$. Each polynomial of degree up to $2\degr$ may now be written as a trace inner product $p(\vx, \vlam):\ \tupleof{ P, Z(\vw)} = \mathsf{trace}( P Z(\vw) )$, wherein the symmetric matrix $P$ has real-valued entries that define the coefficients in $p$ corresponding to the various monomials. Although, $Z$ is a function of $\vx$ and $\vlam$, we will write $Z(\vx)$ as a function of just $\vx$ to denote the matrix $Z([\vx^t, \vzero^t]^t)$ (i.e., set $\vlam = \vzero$).

The constraint in Eq.~\eqref{eq:positivity-cond} is equivalent to solving the following optimization problem over $\vx$
\begin{equation}\label{eq:positivity-cond-relax}
\begin{array}{ll}
 \mathsf{max}_{\vx} \tupleof{I,Z(\vx)} 
 \mathsf{ s.t. } \tupleof{\mathcal{V}_{\vc_j}, Z(\vx)} \leq 0\,, \\
\end{array}
\end{equation}
wherein $I$ is the identity matrix, and $V_{\vc_j}(\vx)$ is written in the inner product form as $\tupleof{\mathcal{V}_{\vc_j}, Z(\vx)}$.  Let $\tupleof{\Lambda_k, Z(\vw)}$ represent the variable $\lambda_k$. $\vlam$ is represented as vector $\Lambda(Z(\vw))$, wherein the $k^{th}$ element is $\tupleof{\Lambda_k, Z(\vw)}$.  Then, the conditions in Eq.~\eqref{eq:decr-condition} are written as
\begin{equation}\label{eq:decr-cond-relax}
 \begin{array}{l}
 \mathsf{max}_{\vw} \tupleof{I,Z(\vw)} 
 \mathsf{s.t.}  \\
  \tupleof{F_{{\vc_j},i}, Z(\vw)} = A_i^t 
 \Lambda(Z(\vw))\ i \in \{1,\ldots, m\} \, , \, 
\tupleof{-F_{{\vc_j},0}, Z(\vw)} \leq \vb^t \Lambda(Z(\vw)) \, , \, 
\Lambda(Z(\vw)) \geq 0 \,,
\end{array}
\end{equation}
wherein the components $\nabla V_{\vc_j} \cdot f_i(\vx)$ defining the Lie derivatives of $V_{\vc_j}$ are now written in terms of $Z(\vw)$ as $\tupleof{F_{{\vc_j},i},Z(\vw)}$. Notice that $Z(\vzero)$ is a square matrix where the first element ($Z(\vzero)_{1,1}$) is $1$ and the rest of the entries are zero. Let $Z_0 = Z(\vzero)$ . Then $\tupleof{I, Z_0} = 1$, and $(\forall \vw) \ Z(\vw) \succeq Z_0$.

The SDP relaxation is used to solve these problems approximately and $D$ defines the degree of relaxation~\cite{henrion2009gloptipoly}. The relaxation treats $Z(\vw)$ as a fresh matrix variable $Z$ that is no longer a function of $\vw$. The constraint $Z \succeq Z_0$ is added.
However, $Z(\vw):\ \vz \vz^t$ is a rank one matrix, and ideally, $Z$ should be constrained to be rank one as well. However, such a constraint is non-convex, and therefore, will be dropped.
Also, constraints in Eqs.~\eqref{eq:positivity-cond-relax} and~\eqref{eq:decr-cond-relax} are added as support constraints.
In other words, for a constraint $h(\vw) \geq 0$, the relaxation treats $Z(\vw)h(\vw)$ as a fresh matrix variable $Z_h \succeq Z_0$ (cf.~\cite{lasserre2009moments,lasserre2010positivity,henrion2009gloptipoly} for details).
Both optimization problems (Eqs.~\eqref{eq:positivity-cond-relax} and~\eqref{eq:decr-cond-relax}) are feasible by setting $Z$ to be $Z_0$. Furthermore, if the optimal solution for each problem is $1$ in the SDP relaxation, then we will conclude that the given candidate is a CLF. Unfortunately, the converse is not necessarily true: the relaxation may fail to recognize that a given candidate is, in fact, a CLF.

\begin{lemma}\label{lem:non-zero-sol}
Whenever the relaxed optimization problems in Eqs.~\eqref{eq:positivity-cond-relax} and~\eqref{eq:decr-cond-relax} yield $1$ as a solution, then the given candidate $V_{\vc_j}(\vx)$ is in fact a CLF. 
\end{lemma}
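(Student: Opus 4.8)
The plan is to exploit two facts: that the SDP relaxation is a \emph{one-sided} (sound) over-approximation, and a simple observation about the trace objective. Since $Z(\vw) = \vz\vz^t$ is rank one, the true (unrelaxed) feasible set of each optimization problem consists exactly of those matrices of the form $Z(\vw)$ satisfying the stated constraints; dropping the rank-one requirement only \emph{enlarges} the feasible set. As we are \emph{maximizing}, the relaxed optimum is therefore always an upper bound on the true optimum. The second ingredient is that the objective equals the squared norm of the monomial vector, $\tupleof{I, Z(\vw)} = \|\vz\|^2 = 1 + \sum_k w_k^2 + \cdots \geq 1$, with equality precisely when $\vw = \vzero$. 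Combining these: if a relaxed optimum equals $1$, then the true optimum also equals $1$, because it is sandwiched between the value $\tupleof{I, Z_0} = 1$ attained at the feasible point $Z_0$ (asserted feasible for both problems) and the relaxed optimum $1$; consequently the only true-feasible point is $\vw = \vzero$.

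First I would handle the positivity check (A). The refutation Eq.~\eqref{eq:positivity-cond} asks for some $\vx \neq \vzero$ with $V_{\vc_j}(\vx) \leq 0$, and its relaxation Eq.~\eqref{eq:positivity-cond-relax} maximizes $\tupleof{I, Z(\vx)}$ subject to $\tupleof{\mathcal{V}_{\vc_j}, Z(\vx)} \leq 0$. Applying the sandwich argument above, a relaxed optimum of $1$ forces the only true-feasible point to be $\vx = \vzero$; in particular no $\vx \neq \vzero$ satisfies $V_{\vc_j}(\vx) \leq 0$, so $V_{\vc_j}(\vx) > 0$ for all $\vx \neq \vzero$. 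Together with $V_{\vc_j}(\vzero) = 0$ (guaranteed by the learner, Eq.~\eqref{eq:C_j_0}) and radial unboundedness of the monomial template, this establishes the first two CLF conditions of Eq.~\eqref{eq:clf}.

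Next I would treat the decrease check (B). By Lemma~\ref{lem:control-dual}, the refutation Eq.~\eqref{eq:decrease-cond-init} has a solution $\vx \neq \vzero$ if and only if the lifted system Eq.~\eqref{eq:decr-condition} in the variables $\vw = [\vx^t, \vlam^t]^t$ is feasible with $\vx \neq \vzero$, whose relaxation is Eq.~\eqref{eq:decr-cond-relax}. Again a relaxed optimum of $1$ forces the only true-feasible $\vw$ to be $\vzero$, hence $\vx = \vzero$, so Eq.~\eqref{eq:decr-condition}, and therefore Eq.~\eqref{eq:decrease-cond-init}, admits no solution with $\vx \neq \vzero$. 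Negating yields $(\forall \vx \neq \vzero)(\exists \vu \in U)\ \nabla V_{\vc_j} \cdot f(\vx, \vu) < 0$, which is the third CLF condition. With all three conditions of Eq.~\eqref{eq:clf} verified, $V_{\vc_j}$ is a CLF.

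The main obstacle is the faithful treatment of the open constraint $\vx \neq \vzero$: rather than imposing it directly, the argument leans entirely on the trace-maximization device, so the crux is justifying that the objective separates the origin from every other point (its minimum $1$ is attained \emph{only} at $\vw = \vzero$) and that the relaxation furnishes an \emph{upper} bound on the true optimum. The latter is exactly the soundness direction of the moment/SDP relaxation; the converse can fail, which is precisely why the statement is a one-way implication and the verifier may be unable to certify some genuine CLFs.
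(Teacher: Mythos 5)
Your proof is correct and follows essentially the same route as the paper's: both arguments rest on the single observation that any genuine refutation witness $\vw^* \neq \vzero$ (for positivity, or for the decrease condition after dualizing via Lemma~\ref{lem:control-dual}) lifts to a relaxation-feasible moment matrix $Z(\vw^*)$ whose trace objective strictly exceeds $1$, so a relaxed optimum of $1$ excludes all nonzero witnesses. Your direct computation $\tupleof{I, Z(\vw)} = ||\vz||^2 \geq 1$, with equality only at $\vw = \vzero$, is in fact slightly cleaner than the paper's argument via the eigenvalues of $Z(\vw^*) - Z_0$ (a matrix that, strictly speaking, need not be positive semidefinite, although its trace is indeed positive, which is all that is needed).
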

\begin{proof}
    Suppose that $V_{\vc_j}$ is not a CLF but both optimization problems yield an optimal value of $1$. Then, one of Eq.~\eqref{eq:positivity-cond} or Eq.~\eqref{eq:decrease-cond-init} is satisfied. 
    I.e., $(\exists \vx^* \neq \vzero, \vlam^* \geq \vzero)$ s.t. $V_{\vc_j}(\vx^*) \leq 0$ or $A_i^t \vlam^*=\nabla V_{\vc_j}.f_i(\vx^*) \ (i \in \{1 \ldots m\}) , \ \vlam^{*t} \vb \geq - \nabla V_{\vc_j}.f_0(\vx^*)$.
        Let $\vw^{*t} = [\vx^{*t}, \vlam^{*t}]$ and therefore, $Z(\vw^*) \succeq Z_0$ is a solution for Eq.~\eqref{eq:positivity-cond-relax} or Eq.~\eqref{eq:decr-cond-relax}.  Let $Z' = Z(\vw^*) - Z_0$. 
        As $\vw^* \neq \vzero$, $Z'$ has a non-zero diagonal element, and since $Z' \succeq 0$, we may also conclude that at least one of the eigenvalues of $Z'$ must be positive. Therefore, $\tupleof{I, Z'} > 0$ as the trace of $Z'$ is the sum of eigenvalues of $Z'$. Thus, $\tupleof{I, Z(\vw)} > \tupleof{I, Z_0} = 1$. Therefore, the optimal solution of at least one of the two problems has to be greater than one. This contradicts our original assumption. \QED
\end{proof}

However, the converse is not true. It is possible for $Z \succeq Z_0$ to be optimal for one of the relaxed conditions, but $Z \not= Z(\vw)$ for any $\vw$.
This happens because (as mentioned earlier) the relaxation drops two key constraints to convexify the conditions: (i) $Z$ has to be a rank one matrix written as $Z:\ \vz \vz^t$, and (ii) there is a $\vw$ such that $\vz$ is the vector of monomials corresponding to $\vw$.

\subsection{Lifting the Counterexamples}
Thus far, we have observed that the relaxed optimization problems (Eqs.~\eqref{eq:positivity-cond-relax} and~\eqref{eq:decr-cond-relax}) yield matrices $Z$ as counterexamples, rather than vectors $\vx$. Furthermore, given a solution $Z$, there is no way for us to extract a corresponding $\vx$ for reasons mentioned above. We solve this issue by ``lifting'' our entire learning loop to work with  observations of the form
\[ O_j:\ \{ (Z_1, \vu_1),\ldots,(Z_{j}, \vu_{j})\} \,,\]
effectively replacing states $\vx_i$ by matrices $Z_i$.

Also, each basis function $g_k(\vx)$ in $\vg$ is now written instead as $\tupleof{G_k, Z}$.
The candidates are therefore, $\sum\limits_{k=1}^r  c_k \tupleof{ G_k, Z}$. Likewise, we write the components of its Lie derivative $\nabla g_k \cdot f_i$ in terms of $Z$ ($\tupleof{G_{ki}, Z}$).
Therefore
\begin{align}\label{eq:relaxed-template}
    \mathcal{V}_\vc = \sum\limits_{k=1}^r c_{k} G_k \ , \ F_{\vc,i} = \sum\limits_{k=1}^r c_{k} G_{ki}\,.
\end{align}
 
\begin{definition}[Relaxed CLF]\label{def:relaxed-CLF}
    A polynomial function $V_\vc(\vx) = \sum\limits_{k=1}^r c_k g_k(\vx)$ defined by $\vc$ is a $D$-relaxed CLF iff $\tupleof{\mathcal{V}_\vc, Z_0} = 0$, and for all
    $Z \not= Z_0$
    \begin{equation}\label{eq:relaxed-clf}
    \begin{array}{l}  \tupleof{\mathcal{V}_\vc, Z} > 0 \ \land
        (\exists \vu \in U) \ \tupleof{F_{\vc,0}, Z} + \sum\limits_{i=1}^m \tupleof{F_{\vc,i}, Z} u_i < 0\,.
    \end{array}
        \end{equation}
\end{definition}

\begin{theorem}\label{thm:relaxed-CLF-vs-CLF}
    A relaxed CLF is a CLF.
\end{theorem}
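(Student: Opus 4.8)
The plan is to prove the implication by a direct specialization argument: every genuine state $\vx$ gives rise to the rank-one moment matrix $Z(\vx) = \vz\vz^t$, and the universally quantified relaxed conditions, being asserted for all $Z \neq Z_0$, must in particular hold at these matrices. Recalling that $Z_0 = Z(\vzero)$, I would first dispatch the equality constraint. Since $\tupleof{\mathcal{V}_\vc, Z(\vx)} = V_\vc(\vx)$ by construction (Eq.~\eqref{eq:relaxed-template}), the hypothesis $\tupleof{\mathcal{V}_\vc, Z_0} = 0$ immediately yields $V_\vc(\vzero) = 0$, the first CLF condition in Eq.~\eqref{eq:clf}.

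Next I would handle the positivity and decrease conditions together by fixing an arbitrary $\vx \neq \vzero$ and setting $Z := Z(\vx)$. The key observation is that $Z(\vx) \neq Z_0$: the monomial vector $\vz$ evaluated at $\vx$ has first entry $1$ and contains the degree-one monomials $x_1,\ldots,x_n$, so when $\vx \neq \vzero$ at least one entry of $Z(\vx)$ is nonzero while the corresponding entry of $Z_0$ vanishes. Hence the relaxed CLF conditions of Eq.~\eqref{eq:relaxed-clf} are genuinely applicable at $Z(\vx)$. The positivity clause then gives $\tupleof{\mathcal{V}_\vc, Z(\vx)} > 0$, i.e. $V_\vc(\vx) > 0$, which is the second CLF condition.

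For the decrease condition I would invoke the control-affine structure of the plant, $f(\vx,\vu) = f_0(\vx) + \sum_{i=1}^m f_i(\vx) u_i$ (Eq.~\eqref{eq:control-affine-plant}). By construction $\tupleof{F_{\vc,i}, Z(\vx)} = \nabla V_\vc \cdot f_i(\vx)$ for each $i$, so the existential clause of Eq.~\eqref{eq:relaxed-clf} evaluated at $Z = Z(\vx)$ reads $(\exists \vu \in U)\ \tupleof{F_{\vc,0}, Z(\vx)} + \sum_{i=1}^m \tupleof{F_{\vc,i}, Z(\vx)}\, u_i < 0$, which is exactly $(\exists \vu \in U)\ \nabla V_\vc \cdot f(\vx,\vu) = \dot V_\vc(\vx,\vu) < 0$. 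This is the third CLF condition, and collecting the three establishes Eq.~\eqref{eq:clf}, so $V_\vc$ is a CLF.

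Because the argument is essentially a substitution, there is no deep obstacle; the only points requiring care are (i) verifying that $\vx \neq \vzero \Rightarrow Z(\vx) \neq Z_0$, so that the universally quantified relaxed conditions truly apply at the moment matrices of real states, and (ii) keeping the translation between the trace-inner-product form and the polynomial/Lie-derivative form consistent through Eq.~\eqref{eq:relaxed-template}. It is worth emphasizing that the proof only uses this ``easy'' direction of specializing to rank-one moment matrices: the converse fails, since (as noted after Lemma~\ref{lem:non-zero-sol}) an arbitrary feasible $Z$ need not be of the form $\vz\vz^t$, so a CLF need not be a relaxed CLF.
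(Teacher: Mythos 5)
Your proof is correct and rests on the same idea as the paper's: the paper argues the contrapositive (if $V_\vc$ is not a CLF, a violating state $\vx$ yields the counterexample $Z := Z(\vx)$ to the relaxed conditions), while you run the same specialization to rank-one moment matrices in the direct direction. The extra care you take in checking $\vx \neq \vzero \Rightarrow Z(\vx) \neq Z_0$ is a point the paper leaves implicit, but the two arguments are logically the same.
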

\begin{proof}
    Suppose that $V_\vc$ is not a CLF. The proof is complete by showing that $V_\vc$ is not a relaxed CLF. If $V_\vc(\vzero) \neq 0$, then $\tupleof{\mathcal{V}_\vc, Z_0} \neq 0$ and $V_\vc$ is not a relaxed CLF. Otherwise, according to Eq.~\eqref{eq:clf} there exists a $\vx \neq \vzero$ s.t.
    \[
    V_\vc(\vx) \leq 0  \ \lor \ (\forall \vu \in U) \ \nabla V_\vc.f(\vx, \vu) \geq 0 \,.
    \]
    Therefore, there exists $\vx \neq \vzero$ s.t.
    \begin{align*}
    \tupleof{\mathcal{V}_\vc, Z(\vx)} \leq 0 \ \lor \ 
    (\forall \vu \in U) \ \tupleof{F_{\vc,0}, Z(\vx)} + \sum_{i=1}^m \tupleof{F_{\vc,i}, Z(\vx)} u_i \geq 0    \,.
    \end{align*}
    Setting $Z:\ Z(\vx)$ shows that $V_\vc$ is not a relaxed CLF, since the negation of Eq.~\eqref{eq:relaxed-clf} holds. \QED
\end{proof}

We lift the overall formal learning framework to work with matrices $Z$ as counterexamples using the following modifications to various parts of the framework:
\begin{enumerate}
\item First, for each $(Z_j, \vu_j)$ in the observation set, $Z_j$ is the feasible solution returned by the SDP solver while solving Eqs.~\eqref{eq:decr-cond-relax} and~\eqref{eq:positivity-cond-relax}.

\item However, the demonstrator $\D$ requires its input to be a state $\vx \in X$.  We define a projection operator $\pi:\ \zeta \mapsto X$ mapping each $Z$ to a state $\vx:\ \pi(Z)$, such that the demonstrator operates over $\pi(Z_j)$ (instead of $\vx_j$). Note that the vector of monomials $\vz$ used to define $Z$ from $\vx$ includes the degree one terms $x_1, \ldots, x_n$. The projection operator simply selects the entries from $Z$ corresponding to these variables. Other more sophisticated projections are also possible, but not considered in this work.

\item The space of all candidates $\C$ remains unaltered except that each basis polynomial is now interpreted as $g_k:\ \tupleof{G_k, Z}$ and similarly for the Lie derivative $(\nabla g_k)\cdot f(\vx, \vu) :\ \tupleof{G_{k0}, Z} + \sum\limits_{i=1}^m\tupleof{G_{ki}, Z}u_i$. Thus, the learner is effectively unaltered.
\end{enumerate}

\begin{definition}[Relaxed Observation Compatibility] \label{def:compatible-data-relaxed}
    A polynomial function $V_\vc(\vx)$ is said to be compatible with a set of $D$-relaxed-observations $O$ iff $V_\vc$ respects the $D$-relaxed CLF conditions (Eq.~\eqref{eq:clf}) for every point in $O$:
    \begin{align*}
    \tupleof{\mathcal{V}_\vc, Z_0} = 0 \ \wedge 
    \bigwedge\limits_{(Z, \vu) \in O}
\left(\begin{array}{c} \tupleof{\mathcal{V}_\vc, Z} > 0\ \land\ \tupleof{F_{\vc,0}, Z} + \sum\limits_{i=1}^m \tupleof{F_{\vc,i}, Z}u_{i} < 0 \end{array}\right)\,.
    \end{align*}
\end{definition}

\begin{definition}[Relaxed Demonstrator Compatibility] \label{def:compatible-dem-relaxed}
    A polynomial function $V_\vc(\vx)$ is said to be compatible with a relaxed-demonstrator $\D \circ \pi$ iff $V_\vc$ respects the $D$-relaxed CLF conditions (Eq.~\eqref{eq:clf}) for every observation that can be generated by the relaxed-demonstrator:
    \begin{align*}
    \tupleof{\mathcal{V}_\vc, Z_0} = 0 \ \wedge &(\forall Z \succeq Z_0, \ Z \neq Z_0)
    \left(\begin{array}{c} \tupleof{\mathcal{V}_\vc, Z} > 0\ \land\ \tupleof{F_{\vc,0}, Z} + \sum\limits_{i=1}^m \tupleof{F_{\vc,i}, Z} \D(\pi(Z))_i < 0 \end{array}\right)\,.    
    \end{align*}
    In other words, $V_\vc$ is a relaxed Lyapunov function for the closed-loop system $\Psi(\P, \D \circ \pi)$.
\end{definition}

\begin{lemma}~\label{lem:no-lam-relaxation}
    Suppose Eq.~\eqref{eq:decr-cond-relax} has a solution $Z \not= Z_0$. Then,
    \begin{align*}
        & (\forall \vu \in U) \ \tupleof{F_{{\vc_j},0}, Z} + \sum_{i=1}^m \tupleof{F_{{\vc_j},i}, Z} u_i \geq 0\,.
    \end{align*}
\end{lemma}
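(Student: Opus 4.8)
The plan is to recognize that this lemma is precisely the \emph{easy} (weak-duality) direction of Farkas' Lemma (Lemma~\ref{lemma:farkas}), transported verbatim into the lifted setting in which the state vector $\vx$ is replaced by the matrix $Z$ and each Lie-derivative term $\nabla V_{\vc_j}\cdot f_i(\vx)$ by its representation $\tupleof{F_{{\vc_j},i},Z}$. The crucial observation is that establishing the desired \emph{inequality} for all $\vu \in U$ requires only a forward substitution of the dual certificate $\Lambda(Z)$, and never needs $Z$ to be rank one or to arise from an actual $\vw$; hence, unlike the converse direction used in Lemma~\ref{lem:control-dual}, this direction survives the SDP relaxation intact.

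First I would recall that $U = \{\vu \mid A\vu \geq \vb\}$, and set $\vlam := \Lambda(Z)$, which by the third constraint of Eq.~\eqref{eq:decr-cond-relax} satisfies $\vlam \geq \vzero$. The first family of constraints supplies $\tupleof{F_{{\vc_j},i},Z} = A_i^t\vlam$ for each $i \in \{1,\ldots,m\}$, and the second supplies $\tupleof{F_{{\vc_j},0},Z} + \vb^t\vlam \geq 0$ (rewriting $\tupleof{-F_{{\vc_j},0},Z}\leq \vb^t\Lambda(Z)$).

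Next, I would fix an arbitrary $\vu \in U$ and substitute, collecting terms:
\begin{align*}
\tupleof{F_{{\vc_j},0},Z} + \sum_{i=1}^m \tupleof{F_{{\vc_j},i},Z}\,u_i
  &= \tupleof{F_{{\vc_j},0},Z} + \sum_{i=1}^m (A_i^t\vlam)\,u_i
   = \tupleof{F_{{\vc_j},0},Z} + \vlam^t A \vu \\
  &\geq \tupleof{F_{{\vc_j},0},Z} + \vlam^t \vb
   \geq 0,
\end{align*}
where the first inequality combines $\vlam \geq \vzero$ with $A\vu \geq \vb$, and the final inequality is the rewritten second constraint. Since $\vu \in U$ was arbitrary, the conclusion follows, and notably the argument holds for \emph{any} feasible $Z$, so the hypothesis $Z \neq Z_0$ is only needed to make the counterexample nontrivial.

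The step that warrants the most care is conceptual rather than computational: I would emphasize that only this one-sided implication of the Farkas equivalence lifts. The reverse direction, which would recover a genuine $\vx$ and thereby exhibit a true violation of the CLF decrease condition, \emph{fails} because a feasible $Z$ need not decompose as $\vz\vz^t$ for any monomial vector. This is exactly why the lemma is phrased as a one-sided guarantee, and why in the subsequent development it is used only to certify (through the projection $\pi$) that the lifted counterexample $Z$ genuinely refutes the candidate in the relaxed decrease condition, without any claim to the converse.
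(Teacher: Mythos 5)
Your proof is correct and takes essentially the same route as the paper: both read off the dual certificate $\vlam = \Lambda(Z)$ from the constraints of Eq.~\eqref{eq:decr-cond-relax} and conclude via the Farkas argument of Lemma~\ref{lem:control-dual}. The only difference is cosmetic --- you write out the easy weak-duality substitution ($\vlam \geq \vzero$, $A\vu \geq \vb$) explicitly instead of citing the full Farkas equivalence, and your closing remark that only this one-sided direction survives the relaxation (since $Z$ need not be of the form $\vz\vz^t$) matches the paper's own discussion.
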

    
\begin{proof}
    While in the relaxed problem, the relation between monomials are lost, each inequality in Eq.~\eqref{eq:decr-cond-relax} holds. Let $\hat{\vlam} = \Lambda(Z)$. Then, we have
    \begin{align*}
        \tupleof{F_{{\vc_j},i}, Z} = A_i^t \hat{\vlam},\ i \in \{1,\ldots, m\}  \, , \,
        \tupleof{-F_{{\vc_j},0}, Z} \leq \vb^t \hat{\vlam} , \ \hat{\vlam} \geq 0\,.
    \end{align*}    
    Similar to Lemma.~\ref{lem:control-dual} (using Farkas Lemma) this is equivalent to
    \begin{equation*}
(\forall \vu \in U) \ \tupleof{F_{{\vc_j},0}, Z} + \sum_{i=1}^m \tupleof{F_{{\vc_j},i}, Z} u_i \geq 0 \,.
    \end{equation*} \QED
\end{proof}

\begin{theorem}\label{thm:robust-relaxed-termination}
    The adapted formal learning framework terminates within a finite number of iterations. The procedure either finds a CLF $V$ or proves that no linear combination of basis functions would yield a CLF, with robust compatibility w.r.t. the (relaxed) demonstrator.
\end{theorem}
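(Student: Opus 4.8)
The plan is to replay the termination argument of Theorem~\ref{thm:termination2} and the correctness dichotomy of Theorem~\ref{thm:clf-or-no-robust-solution}, with the states $\vx_i$ replaced by the matrices $Z_i$ returned by the SDP verifier. The whole scheme hinges on one structural observation: by Eq.~\eqref{eq:relaxed-template}, both $\mathcal{V}_\vc = \sum_k c_k G_k$ and $F_{\vc,i} = \sum_k c_k G_{ki}$ are \emph{linear} in $\vc$, so every relaxed compatibility constraint in Definition~\ref{def:compatible-data-relaxed} is again a strict linear inequality in $\vc$. Hence, exactly as in Lemma~\ref{lemma:cj-convex}, each $\overline{\C_j}$ is a polytope that shrinks by intersection with two half-spaces per iteration, and the same linearity-space (affine hull) reduction used in the non-relaxed analysis applies verbatim. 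This is precisely what lets the maximum-volume-ellipsoid machinery carry over.

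The crucial step, and the one I expect to be the main obstacle, is the lifted analog of part~(1) of Theorem~\ref{thm:formal-learning-thm}: that the candidate $\vc_j$ tested at iteration $j$ is removed from $\C_j$. When the verifier rejects $V_{\vc_j}$, it returns some $Z_j \neq Z_0$ witnessing the failure of one relaxed check. If the positivity problem (Eq.~\eqref{eq:positivity-cond-relax}) fails, then $\tupleof{\mathcal{V}_{\vc_j}, Z_j} \leq 0$, which directly contradicts the positivity clause of Definition~\ref{def:compatible-data-relaxed}, so $\vc_j \not\in \C_j$. The delicate case is the decrease problem (Eq.~\eqref{eq:decr-cond-relax}): here I would invoke Lemma~\ref{lem:no-lam-relaxation} to convert the dual witness $Z_j$ into the statement that $\tupleof{F_{\vc_j,0}, Z_j} + \sum_i \tupleof{F_{\vc_j,i}, Z_j} u_i \geq 0$ for \emph{every} $\vu \in U$, and in particular for the demonstrated input $\vu_j :\ \D(\pi(Z_j))$ obtained after projecting $Z_j$ through $\pi$. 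Adding the observation $(Z_j, \vu_j)$ then forces the conflicting strict inequality, so again $\vc_j \not\in \C_j$. The subtlety is purely that counterexample and demonstrator query now live in different spaces, bridged by $\pi$, and the Farkas-based equivalence of Lemma~\ref{lem:no-lam-relaxation} is exactly what guarantees the demonstrated input lies among the inputs ruled out by the relaxed witness.

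With $\vc_j \not\in \C_j$ and $\C_j$ a polytope, a separating hyperplane $H_j$ through $\vc_j$ exists verbatim as in Lemma~\ref{lemma:cj-half-space}, giving $\C_j \subseteq \C_{j-1} \cap H_j$. Choosing $\vc_j$ as the center of the maximum-volume ellipsoid $\E_{j-1}$ inscribed in $\C_{j-1}$, Theorem~\ref{thm:mve-2} yields $\Vol(\E_j) \leq \tfrac{8}{9}\Vol(\E_{j-1})$, and the counting argument of Theorem~\ref{thm:termination2} bounds the number of iterations by $O(r)$ before the volume drops below the threshold $\gamma\delta^r$ of Eq.~\eqref{eq:termin-cond-2}.

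Finally, the correctness dichotomy follows as in Theorem~\ref{thm:clf-or-no-robust-solution}. If the verifier accepts at some iteration, i.e., both relaxed problems attain optimum $1$, then by Lemma~\ref{lem:non-zero-sol} (equivalently Theorem~\ref{thm:relaxed-CLF-vs-CLF}) the returned $V_{\vc_j}$ is a genuine CLF. If instead termination occurs via the volume condition, then no $\delta$-ball fits inside $\C_k$, so every $\vc$ has a neighbor outside $\C_k$ incompatible with some recorded observation; hence no linear combination of the basis functions is robustly compatible with the relaxed demonstrator $\D \circ \pi$ in the sense of Definition~\ref{def:compatible-dem-relaxed}. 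This yields termination together with the claimed guarantee.
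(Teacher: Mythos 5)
Your proposal is correct and follows essentially the same route as the paper's proof: establish that each $\overline{\C_j}$ remains a polytope by linearity of $\mathcal{V}_\vc$ and $F_{\vc,i}$ in $\vc$, use Lemma~\ref{lem:no-lam-relaxation} to show the lifted witness $Z_j$ together with the demonstrated input $\D(\pi(Z_j))$ eliminates $\vc_j$ from $\C_j$, invoke the MVE volume-reduction argument of Theorem~\ref{thm:termination2} for termination, and conclude the dichotomy via Lemma~\ref{lem:non-zero-sol} and the argument of Theorem~\ref{thm:clf-or-no-robust-solution}. The only cosmetic difference is that you make the separating-hyperplane step (Lemma~\ref{lemma:cj-half-space}) explicit where the paper leaves it implicit.
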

\begin{proof}
    $\C_j$ represents all $\vc$ s.t. $V_\vc$ is compatible with relaxed-observation $O_j$. Still $\mathcal{V}_\vc$ and $F_{\vc,i}$ are linear in $\vc$ (Eq.~\eqref{eq:relaxed-template}), and therefore $\C_{j-1}$, which is the set of all $\vc \in \C$ s.t.
    \begin{equation*}
    \begin{array}{l}
        \tupleof{\mathcal{V}_\vc, Z_0} = 0 \ \wedge
        \bigwedge\limits_{(Z, \vu) \in O_{j-1}}
            \left( \begin{array}{c} \tupleof{\mathcal{V}_\vc, Z} > 0\ \land\ \\ 
                \sum_{i=1}^m \tupleof{F_{\vc,i}, Z}u_{i}
                + \tupleof{F_{\vc,0}, Z} < 0 
            \end{array} \right)
    \end{array} \,,
    \end{equation*}
     is a polytope (similar to Lemma~\ref{lemma:cj-convex}).
    Suppose that at the $j^{th}$ iteration, $V_{\vc_j}(\vx) :\ \vc_j^t \cdot \vg(\vx)$ is generated by the learner. 
    The relaxed verifier solves Eqs.~\eqref{eq:positivity-cond-relax} and~\eqref{eq:decr-cond-relax}. If the optimal solution for these problems are $1$, by Lemma~\ref{lem:non-zero-sol}, $V_{\vc_j}$ is a CLF. Otherwise, it returns a counterexample $Z_j \succeq Z_0$ and $Z_j \neq Z_0$. Furthermore, according to Eqs.~\eqref{eq:positivity-cond-relax} and~\eqref{eq:decr-cond-relax} and Lemma~\ref{lem:no-lam-relaxation}
  \[
  \tupleof{\mathcal{V}_{\vc_j}, Z_j} \leq 0 \ \lor \ (\forall \vu \in U) \ \tupleof{F_{{\vc_j},0}, Z_j} + \sum_{i=1}^m \tupleof{F_{{\vc_j},i}, Z_j} u_i \geq 0\,.
  \]
      In other words, $V_{\vc_j}$ is not a $D$-relaxed CLF. Next, the demonstrator generates a proper feedback for $\pi(Z_j)$ and observation $(Z_j, \D(\pi(Z_j)))$ is added to the set of observations. Notice that $V_{\vc_j}$ does not respect the $D$-relaxed CLF conditions for $(Z_j, \D(\pi(Z_j)))$. I.e.,
    \[
    \tupleof{\mathcal{V}_{\vc_j}, Z_j} \leq 0 \ \lor \ \tupleof{F_{{\vc_j},0}, Z_j} + \sum_{i=1}^m \tupleof{F_{{\vc_j},i}, Z_j} \D(\pi(Z_j))_i \geq 0 \,.
    \]
    Therefore, the new set $\C_{j}$ does not contain $\vc_j$. Now, the learner uses the center of maximum volume ellipsoid to generate the next candidate. This process repeats and the learning procedure terminates in finite iterations (Theorem~\ref{thm:termination2}). 
    When the algorithm returns with no solution, it means that $\Vol(\C_j)$ $\leq \gamma \delta^r$. Similar to Theorem~\ref{thm:clf-or-no-robust-solution}, this guarantees that no ball of radius $\delta$ fits inside $\C_j$, which represents the set of all linear combinations of the basis functions, which are compatible with the relaxed observations. Therefore, no linear combination of basis functions would yield a CLF with robust compatibility with the relaxed observation, and thus, with the relaxed-demonstrator.
    \QED
\end{proof}

\subsection{Other Control Certificates}
While we discussed SDP relaxation to verify CLFs for smooth feedback systems, it is applicable for verification of other control certificates. Recall the formula we wish to solve for smooth feedback systems (Eq.~\eqref{eq:smooth-general}):
\begin{equation*}
    (\exists \vc \in \C)\ (\forall \vx \in X) \ 
    \ \begin{cases}
    \vx \in R_1 \implies \bigvee\limits_{q \in Q} (\exists \vu \in U) \ p_{\vc,1,q}(\vx, \vu) < 0 \\
    \vdots \\
    \vx \in R_l \implies \bigvee\limits_{q \in Q} (\exists \vu \in U)\ p_{\vc,l,q}(\vx,\vu) < 0\,.
    \end{cases}
\end{equation*}
To verify a control certificate ($\vc$ is fixed), one needs to solve $l$ different formulae:
\begin{equation*}
    \begin{array}{rl}
    (1)\: &\vx \in R_1 \land \bigwedge\limits_{q \in Q} (\forall \vu \in U) \ p_{\vc,1,q}(\vx, \vu) \geq 0 \\
    \vdots \\
    (l)\: &\vx \in R_l \land \bigwedge\limits_{q \in Q} (\forall \vu \in U)\ p_{\vc,l,q}(\vx,\vu) \geq 0\,.
    \end{array}
\end{equation*}
Similar to Lemma~\ref{lem:control-dual}, one could use Farkas Lemma to eliminate $\forall \vu$ and get the following formula:
\begin{equation*}
    \begin{array}{rl}
    (1)\: &\vx \in R_1 \land \bigwedge\limits_{q \in Q} \ p'_{\vc,1,q}(\vx, \vlam_q) \geq 0 \\
    \vdots \\
    (l)\: &\vx \in R_l \land \bigwedge\limits_{q \in Q} \ p'_{\vc,l,q}(\vx,\vlam_q) \geq 0\,,
    \end{array}
\end{equation*}
where $p'_{\vc,i,q}$ is polynomial in $\vx$ and $\vlam_q$.

Regarding the switched systems, SDP relaxation is applicable to Eq.~\eqref{eq:witness-generation-rcc-form} as Eq.~\eqref{eq:witness-generation-rcc-form} is essentially conjunction of polynomial inequalities.

\paragraph{Summary:} In this chapter we first discussed related work on constraint-based techniques for finding certificates. We also introduced an inductive learning framework, which uses counterexamples and demonstrations. Then, we provided a candidate selection technique, which guarantees quick termination. Finally, we integrated the SDP relaxation into the framework to reduce the complexity of the verification process.

\chapter{Evaluation}~\label{ch:eval}
In this chapter, we investigate the applicability of the proposed framework. First, we demonstrate that even a simple inductive synthesis framework implemented using SMT solvers is comparable with the state-of-the-art controller synthesis toolboxes. Next, we show how the performance is affected when SDP relaxation is integrated into the inductive synthesis framework. In fact, the SDP relaxation makes robust controller synthesis (control problems involving disturbances), feasible for systems with more state variables. Afterward, by adding a demonstrator, we show the framework becomes powerful enough to address interesting control problems which arise in robotics and autonomous vehicles. Also, we evaluate the behavior and performance of controllers which are extracted from control certificates. Finally, we perform a physical platform experience to show the effectiveness of the automatically designed controllers.

\section{Basic Inductive Synthesis Framework}
In this section, we investigate the performance of CEGIS framework using SMT solvers.
Our approach is implemented as a Python script that wraps around constraint solvers. As discussed in Chapter~\ref{ch:search}, we only consider switched systems to avoid quantifier alternation. For the learner (solving a QFLRA for candidate generation), Z3~\cite{DeMoura-Bjorner-08-Z3} is used. For the verifier (solving a QFNRA for counterexample generation), dReal~\cite{DBLP:conf/cade/GaoKC13} is the chosen solver. 
Recall that \emph{numerical} SMT solvers like dReal~\cite{DBLP:conf/cade/GaoKC13} solve the problem using a numerical threshold $\delta$. dReal either proves a formula is \emph{UNSAT} or returns \emph{$\delta$-SAT}. As such, we consider reach-while-stay (RWS) properties to avoid numerical problems when dealing with stability properties.

We collected $15$ benchmark instances that are used in our evaluation. These benchmarks are taken from many sources and adapted to produce problem instances for our evaluation. We manually formulated a reach-while-stay (RWS) specification where one was not available. More specifically, we define the safe set $S$, initial set $I$, and the goal set $G$. Finally, for now, we do not consider disturbances--- benchmarks with disturbances were modified by setting to nominal values. \emph{A detailed description of each problem instance can be found in the appendix~\ref{ch:benchmark}.} 

The inputs to our procedure consists of (i) a description of the plant model, which includes the dynamics for each mode of the system, (ii) the specification (sets $S$, $I$ and $G$), (iii) a template for the control Lyapunov-barrier function (CLBF), and (iv) a couple of parameters described below.

The safe set $S$ is taken to be a box, while $G$ and $I$ are provided as balls of radius $r_G$ and $r_I$, respectively. These balls are centered at the origin.
For the template, we assume a quadratic form for the CLBF for all benchmarks:
\[
V (\vc, \vx) = \sum_{i=1}^n \sum_{j=1}^i c_{ij} x_i x_j - 1 \,.
\]
We assume that template coefficients belong to a compact set. More specifically  $c_{ij} \in (-\Delta, \Delta)$ and $\Delta$ can be specified by the user. We fixed $\Delta = 100$ for all the experiments. Moreover, we put $\epsilon$, which is used for control design (see Eq.~\eqref{eq:suitable-feedbacks}), directly into the condition of CLBF ($\dot{V} < -\epsilon$).
Finally, to guarantee termination, we choose an $\epsilon_T > 0$ (see Eq.~\eqref{eq:cegis-lra-relaxed}) larger than the numerical threshold used for dReal. More precisely, for each of the three inequalities in Eq.~\eqref{eq:cllf-rws} we choose a different constants:
\begin{align*}
    (\forall \vx \in I) & \ V_\vc(\vx) < {\color{red}-\epsilon_{T_1}} \\
    (\forall \vx \in \partial S) & \ V_\vc(\vx) > {\color{red}\epsilon_{T_2}} \\
    (\forall \vx \in S \setminus \inter{G}) & \ (\exists \vu \in U) \ \nabla V \cdot f_\vu(\vx) < {\color{red}-\epsilon_{T_3}}\,.
\end{align*}
For the experiments we choose $\epsilon_{T_1} = \epsilon_{T_2}$.

\begin{table*}[t]
\caption{Results of running basic CEGIS framework on the
  benchmark suite.}
\label{tab:cegis}

\textbf{Legend}: $n$: \# state variables,
  , $\delta$: dReal precision,
  $itr$ : \# iterations, Tot. T: total computation time, Z3 T: time taken by Z3, dReal T: time taken by dReal,
   \tick: control certificate found, \crossMark: failed, TO: time out ($>$ 3 hours). All timings are in seconds and rounded.
 
\begin{center}
{
\begin{tabular}{||l|l||l|l|l||r|r|r|r|r|c||}
\hline
\multicolumn{2}{||c||}{Problem} & \multicolumn{3}{c||}{Parameters}
& \multicolumn{6}{c||}{dReal Results}
\tabularnewline \hline
System ID & $n$ & $\epsilon$ & $\epsilon_{T_1}$ & $\epsilon_{T_3}$ & $\delta$ & $itr$ & z3 T & dReal T & Tot. T & Status
\tabularnewline \hline
~\ref{sys:harmonic} & 2  & 0.01 & 0.1 & 0.01 & 
10$^{-3}$ & 3 & 0.0 & 1.0 & 1.0 & \tick
\tabularnewline \hline
~\ref{sys:linear-ss-1} & 2  & 0.001 & 0.1 & 0.001 & 
10$^{-3}$ & 1 & 0.0 & 1.2 & 1.3 & \tick
\tabularnewline \hline
~\ref{sys:dc-motor} & 2  & 0.01 & 0.1 & 0.01 & 
10$^{-4}$ & 3 & 0.0 & 0.8 & 0.8 & \tick
\tabularnewline  \hline
~\ref{sys:dc-dc} & 2  & 0.0001 & 0.2 & 0.0001 & 
10$^{-4}$ & 21 & 0.1 & 25.1 & 26.0 & \tick
\tabularnewline  \hline
~\ref{sys:tulip-2d} & 2 &  0.1 & 0.1 & 0.1 & 
10$^{-3}$ & 3 & 0.0 & 1.4 & 1.5 & \tick
\tabularnewline  \hline
~\ref{sys:sliding-motion-2} & 2  & 0.0001 & 0.1 & 0.0001 & 
10$^{-4}$ & 5 & 0.0 & 1.6 & 1.7 & \crossMark
\tabularnewline
 & & & & & 
10$^{-5}$ & 3 & 0.0 & 0.7 & 0.8 & \tick
\tabularnewline  \hline
~\ref{sys:inverted-pendulum} (a) & 2 & 0.01 & 0.1 & 0.01 & 
10$^{-4}$ & 5 & 0.0 & 8.6 & 8.9 & \crossMark
\tabularnewline \hline
~\ref{sys:inverted-pendulum} (b) & 2 & 0.05 & 0.1 & 0.01 & 
10$^{-4}$ & 7 & 0.0 & 23.6 & 23.9 & \tick
\tabularnewline
 & & 0.01 & & & 
10$^{-4}$ & 3 & 0.0 & 7.4 & 7.5 & \tick
\tabularnewline  \hline
~\ref{sys:linear-ss-2} & 3  & 0.0001 & 0.1 & 0.01 & 
10$^{-5}$ & 26 & 6.1 & 83.2 & 90.8 & \tick
\tabularnewline
 &  &  &  &  & 
10$^{-4}$ & 24 & 2.8 & 75.6 & 79.6 & \tick
\tabularnewline \hline
~\ref{sys:linear-ss-3} & 3  & 0.0001 & 0.05 & 0.0001 & 
10$^{-5}$ & 2 & 0.0 & 8.6 & 8.6 & \tick
\tabularnewline  \hline
~\ref{sys:non-equilibrium-stabilization} & 3  & 0.05 & 0.1 & 0.05 & 
10$^{-3}$ & 2 & 0.0 & 5.2 & 5.2 & \tick
\tabularnewline  \hline
~\ref{sys:tulip-pipe-3d} & 3  & 0.01 & 0.1 & 0.1 & 
10$^{-4}$ & 31 & 1.2 & 69.5 & 71.9 & \tick
\tabularnewline  \hline
~\ref{sys:lorenz} & 3 & 0.0001 & 0.1 & 0.01 & 
10$^{-4}$ & 17 & 3.3 & 290.3 & 294.8 & \tick
\tabularnewline  \hline
~\ref{sys:nonholonomic} & 3 & 0.1 & 0.1 & 1.0 & 
10$^{-4}$ & 55 & 208.8 & 44.2 & 255.5& \crossMark
\tabularnewline  \hline
~\ref{sys:LQR} & 4 & 0.01 & 0.1 & 0.01 & 
10$^{-3}$ & 1 & 0.0 & 12.1 & 12.1 & \tick
\tabularnewline  \hline
~\ref{sys:heater} (a) & 3 & 0.001 & 0.1 & 0.001 & 
10$^{-3}$ & 1 & 0.0 & 3.6 & 3.6 & \tick
\tabularnewline \hline
~\ref{sys:heater} (b) & 4 & 0.001 & 0.1 & 0.001 & 
10$^{-3}$ & 1 & 0.0 & 16.1 & 16.2 & \tick
\tabularnewline \hline
~\ref{sys:heater} (c) & 5 & 0.001 & 0.1 & 0.001 & 
10$^{-3}$ & 1 & 0.0 & 403.1 & 403.1 & \tick
\tabularnewline \hline
~\ref{sys:heater} (d) & 6 & 0.001 & 0.1 & 0.001 & 
10$^{-3}$ & 1 & 0.0 & 703.2 & 703.2 & \tick
\tabularnewline \hline
~\ref{sys:heater} (e) & 9 & 0.001 & 0.1 & 0.001 & 
10$^{-3}$ & - & - & - & TO & \tick
\tabularnewline \hline
\end{tabular}\\
}
\end{center}
\end{table*}
\clearpage

All the computations are
performed on a Mac Book Pro with 2.9 GHz Intel Core i7 processor and
16GB of RAM. The results  are summarized in Table~\ref{tab:cegis}. Our approach finds a control certificate for all but two problem instances. For example, our approach fails to find a control certificate for System $13$, a nonholonomic unicycle system for which it is known that no polynomial CLF exists~\cite{brockett1983}. Our technique was successful on some benchmarks with up to six state variables.

\paragraph{A Comparison:}
We also perform an \emph{extensive comparison} with the SCOTS tool~\cite{rungger2016scots}, a recently developed state-of-the-art control synthesizer.
SCOTS uses fixed-point computation to find a (maximum) controllable region. Therefore, one would find out whether the initial set is in the controllable region or not, only after the computation. On the other hand, our method forces the initial condition to be inside the controllable region by definition. 
Therefore, for a fair comparison, we declare success for a fixed-point computation method on a problem instance, if the initial region is a subset of the controllable region.
SCOTS has three main parameters: (i) state discretization parameter $\eta$, (ii) time discretization parameter $\tau$, and (iii) input discretization parameter $\mu$. $\mu$ is defined according to the problem instance. The user should set the other two parameters. To come up with these parameters, roughly speaking, in the first step, we try to use small values for $\tau$ and $\eta$ to get a solution. Afterward, in the quest for faster termination, we try larger values for parameters and check whether the computation remains successful or not.
Tables~\ref{tab:scots-1} and~\ref{tab:scots-2} show the results for SCOTS using different parameters.

\begin{table*}[t!]
\caption{Results of running SCOTS on the
  benchmark suite -- Part I.}
\label{tab:scots-1}

\textbf{Legend}: $n$: \# state variables,
  $itr$ : \# iterations, abs T: abstraction computation time, FP T: fixed-point computation time, Tot. T: total computation time, \tick: control certificate found, \crossMark: failed. TO: time out ($>$ 3 hours). All timings are in seconds and rounded.
 
\begin{center}
{
\begin{tabular}{||l|l||l|l||r|r|r|r|c||r|c||}
\hline
\multicolumn{2}{||c||}{Problem} & \multicolumn{2}{c||}{Parameters}
& \multicolumn{5}{c||}{SCOTS Results} & \multicolumn{2}{c||}{CEGIS Results} 
\tabularnewline \hline
System ID & $n$ & $\eta$ & $\tau$ & $itr$ & abs T & FP T & Tot. T & Status & Tot. T & Status
\tabularnewline \hline
~\ref{sys:harmonic} & 2 & 
0.02 & 0.2 &
26 & 0.3 & 0.4 & 0.7 & \tick & 1.0 & \tick
\tabularnewline
 & & 0.03 & 0.2 &
5 & 0.1 & 0.0 & 0.0 & \crossMark &  & 
\tabularnewline
 & & & 0.3 &
6 & 0.1 & 0.0 & 0.0 & \crossMark &  & 
\tabularnewline \hline
~\ref{sys:linear-ss-1} & 2 &
0.3 & 0.5 &
13 & 0.0 & 0.0 & 0.0 & \tick & 1.3 & \tick
\tabularnewline
 & & 0.4 & 0.5 & 
3 & 0.0 & 0.0 & 0.0 & \crossMark & & 
\tabularnewline
 & & & 1.0 & 
3 & 0.0 & 0.0 & 0.0 & \crossMark & & 
\tabularnewline \hline
~\ref{sys:dc-motor} & 2 & 
0.008 & 0.001 &
153 & 0.9 & 7.7 & 8.6 & \tick & 0.8 & \tick
\tabularnewline
 & & 0.009 & 0.001 & 
13 & 0.7 & 0.3 & 1.0 & \crossMark & & 
\tabularnewline
 & & & 0.002 & 
13 & 0.7 & 0.3 & 1.0 & \crossMark & & 
\tabularnewline \hline
~\ref{sys:dc-dc} & 2 & 
0.02 & 0.75 &
145 & 0.0 & 0.3 & 0.3 & \tick & 26.0 & \tick
\tabularnewline
 & & 0.025 & 0.75 & 
62 & 0.0 & 0.1 & 0.1 & \crossMark & & 
\tabularnewline
 & & & 1.5 & 
28 & 0.0 & 0.0 & 0.0 & \crossMark & & 
\tabularnewline \hline
~\ref{sys:tulip-2d} & 2 &
0.16 & 0.12 &
18 & 0.0 & 0.0 & 0.0 & \tick & 1.5 & \tick
\tabularnewline
 & & 0.17 & 0.12 & 
9 & 0.0 & 0.0 & 0.0 & \crossMark & & 
\tabularnewline
 & &  & 0.15 & 
5 & 0.0 & 0.0 & 0.0 & \crossMark & & 
\tabularnewline \hline
~\ref{sys:sliding-motion-2} & 2 &
0.001 & 0.001 & 
213 & 681.5 & 2501.3 & 3182.8 & \crossMark & 0.8 & \tick
\tabularnewline
 & & & 0.002 &
429 & 692.7 & 7310.4 & 8003.1 & \crossMark &  &
\tabularnewline
 & & & 0.004 &
269 & 718.4 & 4842.7 & 5561.1 & \crossMark &  &
\tabularnewline
 & & 0.002 & 0.001 &
3 & 89.0 & 1.9 & 90.9 & \crossMark &  &
\tabularnewline
 & & & 0.004 &
226 & 91.6 & 643.3 & 734.9 & \crossMark &  &
\tabularnewline
 & & & 0.005 &
205 & 98.2 & 634.4 & 732.6 & \crossMark &  &
\tabularnewline \hline
~\ref{sys:inverted-pendulum} (a) & 2 & 
0.04 & 0.07 & 
55 & 0.1 & 0.3 & 0.4 & \tick & 8.9 & \crossMark
\tabularnewline \hline
~\ref{sys:inverted-pendulum} (b) & 2 & 
0.03 & 0.05 & 
53 & 1.3 & 3.2 & 4.5 & \tick & 7.5 & \tick
\tabularnewline
  & & 0.04 & 0.05 & 
20 & 0.6 & 0.4 & 1.0 & \crossMark &  & 
\tabularnewline
  & & & 0.06 & 
11 & 0.6 & 0.2 & 0.8 & \crossMark &  & 
\tabularnewline \hline
\end{tabular}\\
}
\end{center}
\end{table*}

\begin{table*}[t!]
\caption{Results of running SCOTS on the
  benchmark suite -- Part II.}
\label{tab:scots-2}

\textbf{Legend}: $n$: \# state variables,
  $itr$ : \# iterations, abs T: abstraction computation time, FP T: fixed-point computation time, Tot. T: total computation time, \tick: control certificate found, \crossMark: failed, TO: time out ($>$ 3 hours). All timings are in seconds and rounded.
 
\begin{center}
{
\begin{tabular}{||l|l||l|l||r|r|r|r|c||r|c||}
\hline
\multicolumn{2}{||c||}{Problem} & \multicolumn{2}{c||}{Parameters}
& \multicolumn{5}{c||}{SCOTS Results} & \multicolumn{2}{c||}{CEGIS Results} 
\tabularnewline \hline
System ID & $n$ & $\eta$ & $\tau$ & $itr$ & abs T & FP T & Tot. T & Status & Tot. T & Status
\tabularnewline \hline
~\ref{sys:linear-ss-2} & 3 & 
0.01 & 0.1 & 
53 & 2111.5 & 2566.2 & 4677.7 & \tick & 79.6 & \tick 
\tabularnewline
 & & 0.015 & 0.1 & 
19 & 364.1 & 114.3 & 478.4 & \crossMark & & 
\tabularnewline
 & & & 0.15 & 
21 & 352.6 & 147.7 & 500.2 & \crossMark & & 
\tabularnewline \hline
 ~\ref{sys:linear-ss-3} & 3 &
0.03 & 0.2 & 
46 & 42.7 & 92.7 & 135.4 & \tick & 8.6 & \tick 
\tabularnewline
 & & 0.04 & 0.2 & 
9 & 13.7 & 2.1 & 15.8 & \crossMark & & 
\tabularnewline
 & &  & 0.3 & 
7 & 12.5 & 1.7 & 14.2 & \crossMark & & 
\tabularnewline \hline
 ~\ref{sys:non-equilibrium-stabilization} & 3 &
  0.025 & 0.02 & 
62 & 57.5 & 246.4 & 303.9 & \tick & 5.2 & \tick 
\tabularnewline
 & & 0.03 & 0.02 & 
53 & 62.7 & 78.7 & 141.4 & \crossMark & & 
\tabularnewline
 & & & 0.03 & 
54 & 27.0 & 79.7 & 106.7 & \crossMark & & 
\pagebreak
\tabularnewline \hline
 ~\ref{sys:tulip-pipe-3d} & 3 &
 0.3 & 0.1 & 
18 & 0.5 & 0.2 & 0.7 & \tick & 71.9 & \tick 
\tabularnewline
 & & 0.4 & 0.1 & 
5 & 0.2 & 0.0 & 0.0 & \crossMark & & 
\tabularnewline
 & & 0.4 & 0.2 & 
4 & 0.2 & 0.0 & 0.2 & \crossMark & & 
\tabularnewline \hline
 ~\ref{sys:lorenz} & 3 & 
0.09 & 0.02 & 
88 & 71.1 & 266.8 & 337.9 & \tick & 294.8 & \tick 
\tabularnewline
& & 0.1 & 0.02 & 
23 & 42.3 & 31.2 & 73.5 & \crossMark & & 
\tabularnewline
& & 0.1 & 0.03 & 
68 & 40.0 & 136.8 & 176.8 & \crossMark & & 
\tabularnewline \hline
 ~\ref{sys:nonholonomic} & 3 & 
 0.07 & 0.1 & 
44 & 4.4 & 4.4 & 8.8 & \tick & 255.5 & \crossMark 
\tabularnewline
 & & 0.08 & 0.1 & 
2 & 2.4 & 0.1 & 2.5 & \crossMark & & 
\tabularnewline
 & & & 0.2 & 
4 & 2.4 & 0.1 & 2.5 & \crossMark & & 
\tabularnewline \hline
 ~\ref{sys:LQR} & 4 & 
 0.05 & 0.01 & 
5 & 2931.7 & 53.3 & 2985 & \crossMark & 12.1 & \tick 
\tabularnewline
 & &  & 0.02 & 
6 & 2800.7 & 90.5 & 2891.2 & \crossMark & & 
\tabularnewline
 & &  & 0.03 & 
5 & 2748.6 & 75.4 & 2824.0 & \crossMark & & 
\tabularnewline
 & & 0.04 & 0.03 & 
- & TO & - & TO & \crossMark & & 
\tabularnewline \hline
 ~\ref{sys:heater} (a) & 3 & 
 0.3 & 6.0 & 
23 & 1.5 & 0.6 & 2.1 & \tick & 3.6 & \tick 
\tabularnewline
 & & 0.4 & 6.0 & 
3 & 0.6 & 0.0 & 0.6 & \crossMark & & 
\tabularnewline
 & &  & 8.0 & 
4 & 0.5 & 0.0 & 0.5 & \crossMark & & 
\tabularnewline \hline
 ~\ref{sys:heater} (b) & 4 &  
 0.2 & 4.0 & 
64 & 1305.3 & 1016.0 & 2321.3 & \tick & 16.1 & \tick
\tabularnewline
 & & 0.3 & 4.0 & 
9 & 98.9 & 8.4 & 107.3 & \crossMark & & 
\tabularnewline
 & & & 6.0 & 
9 & 99.9 & 8.4 & 108.3 & \crossMark & & 
\tabularnewline \hline
 ~\ref{sys:heater} (c) & 4 & 0.2 & 4.0 & 
- & TO & -& TO & \crossMark & 403.1 & \tick
\tabularnewline \hline
 ~\ref{sys:heater} (d) & 6 & 0.2 & 4.0 & 
- & TO & -& TO & \crossMark & 703.2 & \tick
\tabularnewline \hline
 ~\ref{sys:heater} (e) & 9 & 0.2 & 4.0 & 
- & TO & -& TO & \crossMark & TO & \crossMark
\tabularnewline \hline
\end{tabular}\\
}
\end{center}
\end{table*}

Tables~\ref{tab:scots-1} and~\ref{tab:scots-2} also compare results for our method and SCOTS for the provided benchmark. We found that SCOTS can solve two problem instances where our method fails. Unlike SCOTS, our method is restricted to systems for which control certificates exist with a simple structure (polynomials), and in that sense, SCOTS is more precise. Our method fails for System~\ref{sys:inverted-pendulum} (a) mainly because the safe set is very small and quadratic CLBFs do not exist. The problem is solved when a bigger safe set is used (System~\ref{sys:inverted-pendulum} (b)). Also, as mentioned before, our method fails for System~\ref{sys:nonholonomic} simply because a smooth feedback law and thus a smooth control certificate does not exist~\cite{brockett1983}.

On the other hand, SCOTS uses abstraction for the system. For an abstraction, all the states in one abstract state (a cell) are treated the same way which yields a less precise transition relation compared to our method. To compensate, usually smaller values for $\eta$ is considered, which in turn, increases the computation cost, especially for higher dimensional problems. As demonstrated in Table~\ref{tab:scots-1} and~\ref{tab:scots-2}, SCOTS performs really well on $2D$ problems. However, our method is relatively faster with $3D$ problems. Also, given a computation time limit, SCOTS fails on five problem instances four of which are problems with four or higher variables. While SCOTS performs well on $2D$ problems, it still fails on System~\ref{sys:sliding-motion-2}, for which our method succeeds. We attribute this to the use of a more precise transition relation in our method.

\clearpage

\subsection{Uninitialized RWS}
In this section, we consider uninitialized RWS problems. Recall that we wish to find a control Lyapunov fixed-barriers function (CLFBF) $V$ with the following conditions to address the uninitialized RWS problem:
\begin{align*}
    (\forall \vx \in S \setminus \inter{G}) \ (\exists \vu \in U) \ \left(
    \nabla V_\vc \cdot f_\vu(\vx) < 0 \land \bigwedge_j \left(
    \dot{p}_{S,j,\vu}(\vx) + \lambda p_{S,j}(\vx) < 0\,,
    \right) \right)\,,
\end{align*}
where $S :\ \{\vx \ | \ p_{S,1}(\vx) \leq 0 \land \ldots \land p_{S,l}(\vx) \leq 0 \}$.
\begin{example}
\label{ex:basic-4D}
    This example is adopted from~\cite{habets2004control}. There are four variables and two control inputs. The dynamics are as follows:
    \begin{align*}
        \dot{x_1} = x_1 + x_2 + 8 + u_1\, , \,
        \dot{x_2} =     -x_2 + x_3 + 1 - u_2\, , \,
        \dot{x_3} =     -2x_3 + 2x_4 + 1 -2u_1\, , \,
        \dot{x_4} =     -3x_4 + 1 + u_2
    \end{align*} 
    The region of interest $S$ is hyber-box $[-1, 1]^4$ and the input belongs to set $[0, 1] \times [0, 2]$. The goal is to reach facet $x_1 = 1$, while staying in $S$ as the safe region. 
    
    First, we discretize the control input to model the system as a switched system. For this purpose, we assume $u_1 \in \{0, 1\}$ and $u_2 \in \{0, 0.5, 1, 1.5, 2\}$. Then, we use a linear template for the CLF ($c_1 x_1 + c_2 x_2 + c_3 x_3 + c_4 x_4$) , $\epsilon = 0.1$, $\lambda = 5$. CEGIS framework finds control certificate $V(\vx):\ -0.13333344 (x_1+x_2+x_3+x_4)$.
\end{example}

\begin{example} \label{ex:obstacle-1} Consider again the system from Example~\ref{ex:basic-control-to-facet}, with the addition of some obstacles~\cite{nilssonincremental}. More precisely, as shown in Fig~\ref{fig:obstacle1-basic}(a), the safe set is $S = S_0 \setminus (O_1 \cup O_2)$. First, the safe set is decomposed into four basic semialgebraic sets, which are shown with $R_0$ to $R_3$ in Figure~\ref{fig:obstacle1-basic}(a).
\begin{figure}[t]
\centering
\begin{subfigure}{0.3\textwidth}
  \includegraphics[width=1\linewidth]{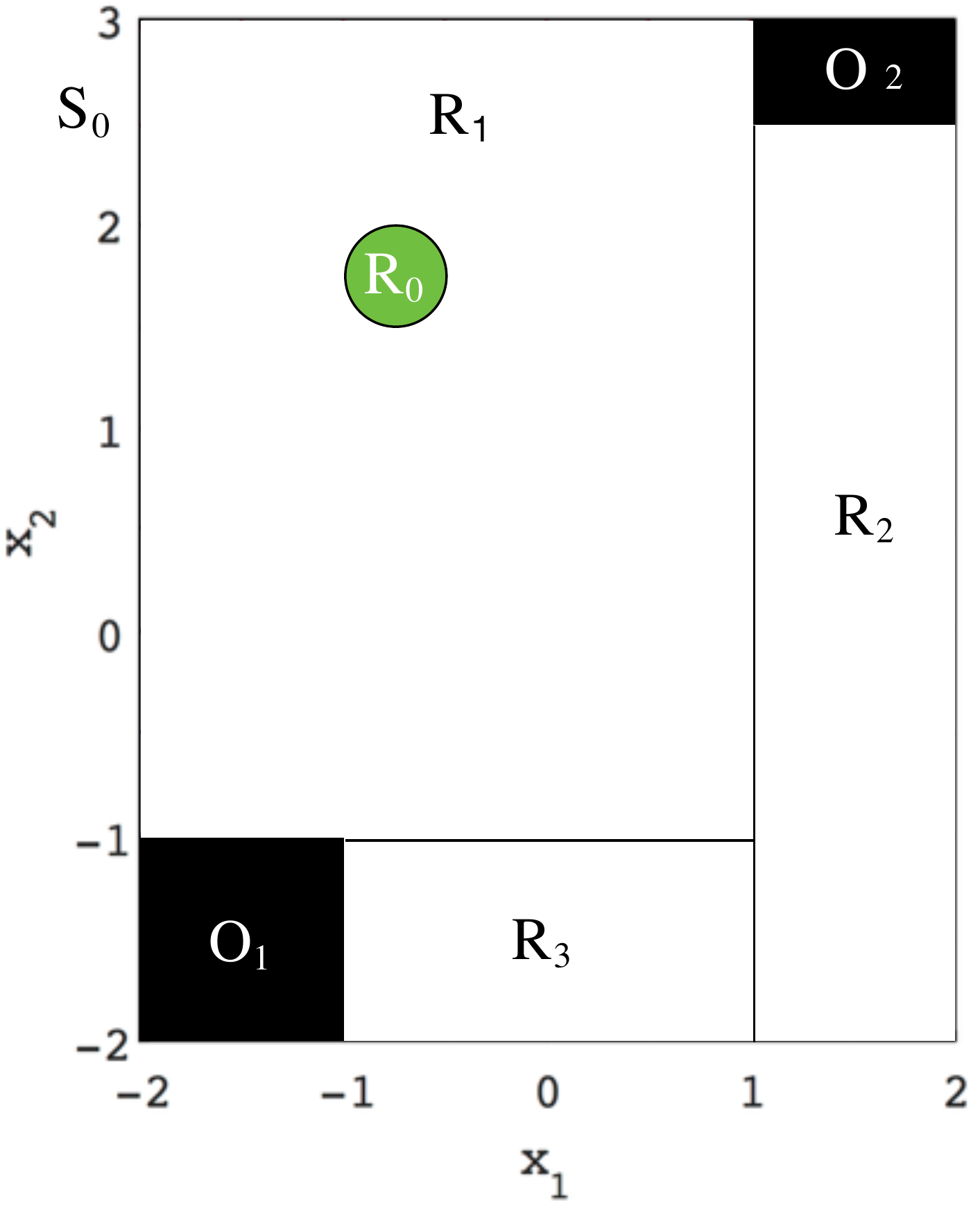}
  \caption{Schematic view of state decomposition}
  \label{fig:obstacle1-basic-a}
  \end{subfigure} \qquad
  \begin{subfigure}{0.3\textwidth}
  \includegraphics[width=1\linewidth]{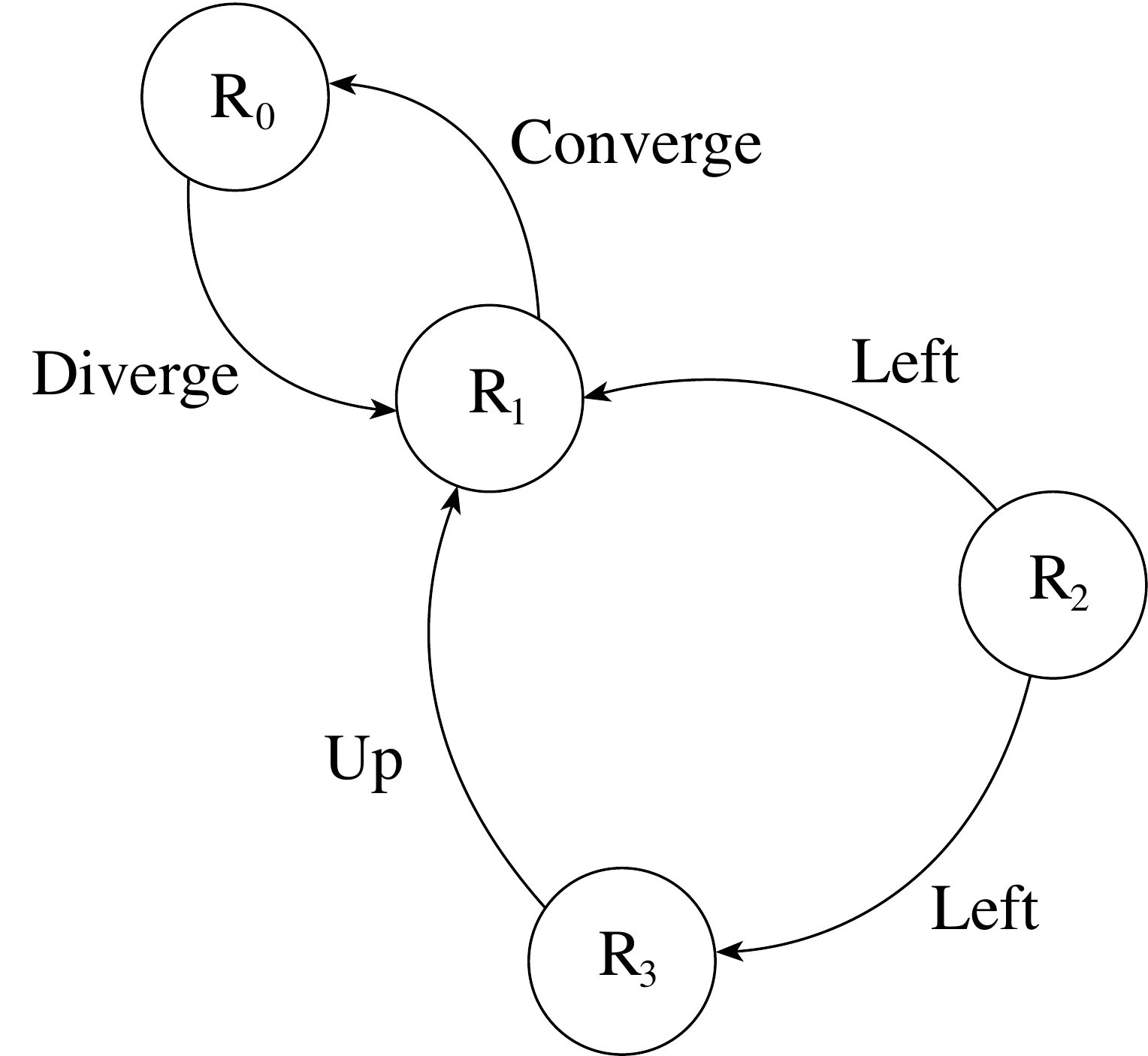} \\ \\ \\
  \caption{Finite abstraction for the original problem}
  \label{fig:obstacle1-basic-b}
  \end{subfigure}
  
\caption{Visualizations for Example~\ref{ex:obstacle-1}.} \label{fig:obstacle1-basic}
\end{figure}

$R_0$ is the target set. Next, we build a transition relation between four abstract states, representing four basic semialgebraic sets. This is done by solving seven RWS problems for basic semialgebraic sets. For the $R_1 \rightarrow R_0$ transition, we use a quadratic template for $V$, and for other problems, we use linear templates. The abstract system is shown in Figure~\ref{fig:obstacle1-basic}(b). Next, the problem is solved for the abstract system. The solution to the abstract system is simple: if the state is in $R_2$, the controller uses the left facet to reach $R_1$ or $R_3$. Otherwise, if the state is in $R_3$, the controller uses the upper facet to reach $R_1$, and finally, if the state is in $R_1$, the controller makes sure the state reaches $R_0$.
\end{example}

\begin{example}\label{ex:unicycle}
    A unicycle~\cite{zamani2012symbolic} has three variables. $x$ and $y$ define the position of the car and $\theta$ is its angle. The dynamics of the system is
    $\dot{x} = u_1 cos(\theta) \,, \ 
        \dot{y} = u_1 sin(\theta) \,,$ and $
        \dot{\theta} = u_2$,
    where $u_1$ and $u_2$ are inputs. Assuming a switched system, we consider $u_1 \in \{-1, 0, 1\}$ and $u_2 \in \{-1, 0, 1\}$. The safe set is $[-1, 1]\times[-1, 1]\times[-\pi, \pi]$ and the target facet is $x = 1$.
    We use a template that is linear in $(x,y)$ and quadratic in $\theta$. Using $\epsilon = 0.1$ and $\lambda = 0.5$, the following CLF is found after $22$ iterations:
    \[ V(\vx) :\ - x - y - 0.5881 \theta +\theta^2 - 0.1956 \theta x + \theta y \,.\]

\paragraph{Problem (a):}
Now we consider a path planning problem for the unicycle~\cite{rungger2016scots}. Projection of the safe set on the $x$-$y$ plane yields a maze. The target set is placed at the right bottom corner of the maze (Figure~\ref{fig:unicycle}).
 Using specification-guided technique, we modeled the system with 53 polyhedra. Each polyhedron is treated as a single state, and a transition relation is built by solving 113 control-to-facet problems. Then, the problem is solved over the finite graph. The total computation took 1484 seconds. The figure also shows a single trajectory of the closed-loop system.
 
 \paragraph{Problem (b):}
     This problem is similar to the previous one, except for the fact that there is no direct control over the angular velocity. More precisely, only the angular acceleration is controllable and the system would have the following dynamics
    $\dot{x} = u_{1} \cos (\theta),\,
    \dot{y} = u_{1} \sin (\theta),\,
    \dot{\theta} = \omega,\,
    \dot{\omega} = u_{2}\,.
    $
    Also, we assume $\omega \in [-1, 1]$. By changing the coordinates one can use $r = \sqrt{x^2 + y^2}$, $z_1 = x\cos (\theta)+y\sin (\theta)$ and $z_2 = y\cos (\theta)-x\sin (\theta)$ to define the position and angle of the car(cf.~\cite{liberzon2012switching} for details). Then, we use the following template $V(x, y, \theta, \omega) = c_1 r^2 + c_2 z_1 + c_3 z_2\omega + c_4 \omega^2$,
    where the origin is located just outside of the target facet. Using this template, we find control certificates for all 113 control-to-facet problems in 5296 seconds.
\end{example}

\begin{figure}[t]
\centering
\includegraphics[width=0.4\textwidth]
    {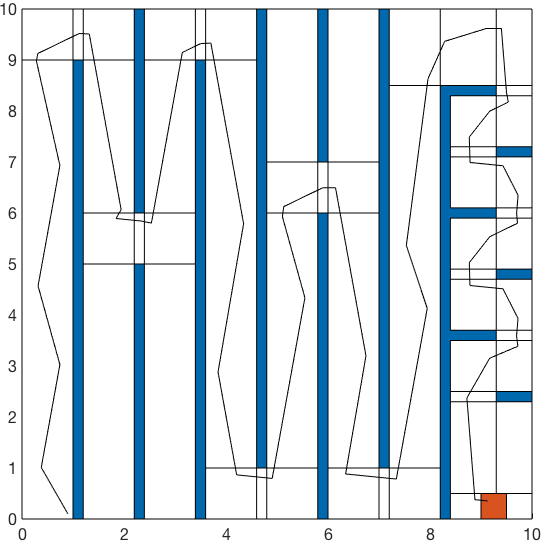}
    \\
    Region $G$ is shown shaded in Orange, and unsafe regions are shown in blue. An execution trace of the car is shown for $x$ and $y$ variables.
\caption{A $x$-$y$ view for Example~\ref{ex:unicycle}.}
\label{fig:unicycle}
\end{figure}

\paragraph{A Comparison:} While SCOTS provides a relatively complete solution, it is computationally expensive. On the other hand, our method is a Lyapunov-based method which uses polynomial templates.  As such, the existence of certificates is not guaranteed, and thus, our approach lacks the general applicability of a fixed-point based synthesis. 
However, our method is relatively more scalable thanks to recent development in SMT solvers. Here, for the sake of completeness, we provide a brief comparison with SCOTS toolbox~\cite{rungger2016scots} for the examples provided in this section.
To compare Example~\ref{ex:basic-4D}, we use ``fat" facet and assume that the target set has a volume (otherwise, because of time discretization, SCOTS cannot find a solution). More precisely, we use target set $[1, 1.2]\times[-1, 1]^3$ instead of $[1,1]\times[-1,1]^3$. 

The results are reported in Table~\ref{tab:scots}. We also note that if we use larger values for SCOTS parameters, SCOTS fails to solve these problems (the initial set is not a subset of the controllable region).  Table~\ref{tab:scots} shows that SCOTS performs better for Example~\ref{ex:obstacle-1}, for which there are only two state variables. For Example~\ref{ex:unicycle}(a), both methods have similar performances, while for Example~\ref{ex:basic-4D} and Example\ref{ex:unicycle}(b) with four state variables, our method is faster.

\begin{table}[t!]
\caption{Results for uninitialized RWS problems.}
\label{tab:scots}

\textbf{Legend}: $n$: \# state variables,
  $itr$ : \# iterations, Time: total computation time, $\eta$: state discretization step, $\tau$: time step. All timings are in seconds and rounded, TO: timeout ($> 10$ hours).
\begin{center}
{
\begin{tabular}{||l|l||l|l|r|r||l|r||}
\hline
\multicolumn{2}{||c||}{Problem} & \multicolumn{4}{c||}{SCOTS}
& \multicolumn{2}{c||}{CEGIS} 
\tabularnewline \hline
ID & $n$ & $\eta$ & $\tau$ & $itr$ & Time  & $\delta$ & Time
\tabularnewline \hline
Example~\ref{ex:obstacle-1} & 2 &
0.16$^2$ & 0.12 &
18 & 0 & 10$^{-4}$ & 3 
\tabularnewline \hline
Example~\ref{ex:unicycle}(a) & 3 &
0.2$^2$$\times$0.1 & 0.3 &
404 & 989 & 10$^{-4}$ & 1484
\tabularnewline \hline
Example~\ref{ex:basic-4D} & 4 &
0.03$\times$0.1$^3$ & 0.005 &
48 & 304 & 10$^{-5}$ & 3
\tabularnewline \hline
Example~\ref{ex:unicycle}(b) & 4 &
0.1$^2\times$0.05$^2$ & 0.3 &
\multicolumn{2}{|c||} {TO} & 10$^{-4}$ & 5296
\tabularnewline \hline
\end{tabular}\\
}
\end{center}
\end{table}

\section{Integrating SDP Relaxation}

In this section, we demonstrate the effectiveness of the SDP relaxation in generating counterexamples. We extend the CEGIS framework to work with SDP relaxation by lifting the counterexamples.
We use the Gloptipoly tool~\cite{henrion2009gloptipoly} for counterexample generation. Gloptipoly, in turn, uses YALMIP~\cite{lofberg2004yalmip}, which is configured to use Mosek~\cite{mosek2010mosek} as the SDP solver.

We use the same set of benchmarks that we used in the previous section and the results are shown (and compared with the SMT solver) in Table~\ref{tab:cegis-sdp}. Compared to SDP relaxation, the SMT solver is much faster for $2D$ problems. For $3D$ problems, these two methods are comparable, and for higher dimensional problems, SDP relaxation clearly wins. 

\begin{table*}[t!]
\caption{Comparing SDP relaxation with SMT solvers on the
  benchmark suite.}
\label{tab:cegis-sdp}

\textbf{Legend}: $n$: \# state variables,
  , $\delta$: dReal precision, $\degr$: SDP relaxation degree,
  $itr$ : \# iterations, T: total computation time, Z3 T: time taken by Z3, dReal T: time taken by dReal, Glp T: time taken by Gloptipoly, St: Status, \tick: control certificate found, \crossMark: failed, TO: time out ($>$ 3 hours). All timings are in seconds and rounded.
 
\begin{center}
{
\begin{tabular}{||l|l||r|r|r|r|r|c||r|r|r|r|r|c||}
\hline
\multicolumn{2}{||c||}{Problem}
& \multicolumn{6}{c||}{Numerical SMT Solver Results} & \multicolumn{6}{c||}{SDP Relaxation Results}
\tabularnewline \hline
ID & $n$ & $\delta$ & $itr$ & z3 T & dReal T & T & St & $\degr$ & $itr$ & z3 T & Glp T & T & St
\tabularnewline \hline
~\ref{sys:harmonic} & 2  & 
10$^{-3}$ & 3 & 0.0 & 1.0 & 1.0 & \tick &
3 & 17 & 0.4 & 41.5 & 42.0 & \tick
\tabularnewline \hline
~\ref{sys:linear-ss-1} & 2  &
10$^{-3}$ & 1 & 0.0 & 1.2 & 1.3 & \tick &
2 & 1 & 0.0 & 8.4 & 8.4 & \tick
\tabularnewline \hline
~\ref{sys:dc-motor} & 2  & 
10$^{-4}$ & 3 & 0.0 & 0.8 & 0.8 & \tick &
3 & 8 & 0.0 & 60.0 & 60.0 & \tick
\tabularnewline
 &  & 
 & & &  &  &  &
4 & 5 & 0.0 & 40.1 & 40.2 & \tick
\tabularnewline  \hline
~\ref{sys:dc-dc} & 2  & 
10$^{-4}$ & 21 & 0.1 & 25.1 & 26.0 & \tick &
2 & 13 & 0.0 & 54.6 & 54.7 & \tick
\tabularnewline  \hline
~\ref{sys:tulip-2d} & 2 &  
10$^{-3}$ & 3 & 0.0 & 1.4 & 1.5 & \tick &
3 & 3 & 0.0 & 12.4 & 12.5 & \tick
\tabularnewline  \hline
~\ref{sys:sliding-motion-2} & 2  &
10$^{-4}$ & 5 & 0.0 & 1.6 & 1.7 & \crossMark &
3 & 7 & 0.0 & 26.2 & 26.3 & \crossMark
\tabularnewline
 & & 
10$^{-5}$ & 3 & 0.0 & 0.7 & 0.8 & \tick &
4 & 3 & 0.0 & 11.8 & 11.9 & \tick
\tabularnewline  \hline
~\ref{sys:inverted-pendulum} (a) & 2 & 
10$^{-4}$ & 5 & 0.0 & 8.6 & 8.9 & \crossMark &
5 & 5 & 0.0 & 27.0 & 27.1 & \crossMark
\tabularnewline \hline
~\ref{sys:inverted-pendulum} (b) & 2 & 
10$^{-4}$ & 7 & 0.0 & 23.6 & 23.9 & \tick &
5 & 22 & 0.4 & 136.2 & 136.8 & \tick
\tabularnewline
 & &
10$^{-4}$ & 3 & 0.0 & 7.4 & 7.5 & \tick &
5 & 3 & 0.0 & 12.5 & 12.5 & \tick
\tabularnewline  \hline
~\ref{sys:linear-ss-2} & 3  & 
10$^{-5}$ & 26 & 6.1 & 83.2 & 90.8 & \tick &
3 & 13 & 0.1 & 67.4 & 67.7 & \tick
\tabularnewline
 &  & 
10$^{-4}$ & 24 & 2.8 & 75.6 & 79.6 & \tick &
 &  &  &  &  & 
\tabularnewline \hline
~\ref{sys:linear-ss-3} & 3  &
10$^{-5}$ & 2 & 0.0 & 8.6 & 8.6 & \tick &
2 & 2 & 0.0 & 13.0 & 13.0 & \tick
\tabularnewline  \hline
~\ref{sys:non-equilibrium-stabilization} & 3  & 
10$^{-3}$ & 2 & 0.0 & 5.2 & 5.2 & \tick &
2 & 2 & 0.0 & 12.2 & 12.2 & \tick
\tabularnewline  \hline
~\ref{sys:tulip-pipe-3d} & 3  & 
10$^{-4}$ & 31 & 1.2 & 69.5 & 71.9 & \tick &
3 & 14 & 0.2 & 50.2 & 50.6 & \tick
\tabularnewline  \hline
~\ref{sys:lorenz} & 3 & 
10$^{-4}$ & 17 & 3.3 & 290.3 & 294.8 & \tick &
4 & 24 & 12.0 & 223.9 & 236.3 & \tick
\tabularnewline  \hline
~\ref{sys:nonholonomic} & 3 &
10$^{-4}$ & 55 & 208.8 & 44.2 & 255.5& \crossMark &
3 & - & - & - & TO & \crossMark
\tabularnewline  \hline
~\ref{sys:LQR} & 4 & 
10$^{-3}$ & 1 & 0.0 & 12.1 & 12.1 & \tick &
3 & 1 & 0.0 & 29.4 & 29.5 & \tick
\tabularnewline  \hline
~\ref{sys:heater} (a) & 3 & 
10$^{-3}$ & 1 & 0.0 & 3.6 & 3.6 & \tick &
2 & 1 & 0.0 & 12.0 & 12.0 & \tick
\tabularnewline \hline
~\ref{sys:heater} (b) & 4 & 
10$^{-3}$ & 1 & 0.0 & 16.1 & 16.2 & \tick &
2 & 1 & 0.0 & 15.3 & 15.4 & \tick
\tabularnewline \hline
~\ref{sys:heater} (c) & 5 & 
10$^{-3}$ & 1 & 0.0 & 403.1 & 403.1 & \tick &
2 & 1 & 0.0 & 24.4 & 24.4 & \tick
\tabularnewline \hline
~\ref{sys:heater} (d) & 6 & 
10$^{-3}$ & 1 & 0.0 & 703.2 & 703.2 & \tick &
2 & 1 & 0.0 & 25.4 & 25.4 & \tick
\tabularnewline \hline
~\ref{sys:heater} (e) & 9 & 
10$^{-3}$ & - & - & - & TO & \crossMark &
2 & 1 & 0.0 & 54.6 & 54.6 & \tick
\tabularnewline \hline
\end{tabular}\\
}
\end{center}
\end{table*}

\section{Handling Disturbances}
We also extended the CEGIS framework implementation to handle disturbances and find \emph{robust} CLBFs (see Section~\ref{sec:dist-1}).
We compare the robust controller synthesis (RS) approach with a simple \emph{Synthesize and Verify Robustness} (SVR) approach that uses a nominal disturbance value (e.g., $\vd = 0$) and checks whether the resulting controller is robust, as the last step.  Specifically, the disturbance-free case uses CLBF for synthesis. In doing so, we also check whether adding a ``margin'' by increasing the value of $\epsilon$ during controller synthesis necessarily makes the resulting design more robust to disturbances.
For comparison, several RWS examples are considered where $\dot{\vx} = f_\vu(\vx) + \vd$ and disturbances $\vd \in D:\ [-\sigma_D, \sigma_D]^n$, with varying values of $\sigma_D$. All times are reported in seconds.

\begin{example}
\label{ex:obstacle-disturbance}
This example considers a DC-motor model with two variables $\omega$ and $i$, described in Appendix~\ref{ch:benchmark} (System~\ref{sys:dc-motor}).

The results for the RS method are shown in Table~\ref{tab:obstacle-disturbance}.  To evaluate the effect of disturbances on the CEGIS procedure, we use different disturbance sizes. The results suggest that bigger disturbances impose harder restrictions on the RCLBF and many more iterations are needed, as the size of disturbance gets bigger.

 \begin{table}[t!]
    \caption{Results for Example~\ref{ex:obstacle-disturbance} using RS method.}
    \label{tab:obstacle-disturbance}
    \begin{center}
    \begin{tabular}{|c|c|c|c|}
    \hline
    $\sigma_D$ & Itr & Time & Status    \tabularnewline \hline
    0.0 & 3 & 36.1 & \tick \tabularnewline \hline
    0.4 & 4 & 54.4 & \tick \tabularnewline \hline
    0.8 & 5 & 115.3 & \tick  \tabularnewline \hline
    1.2 & 8 & 177.5 & \tick  \tabularnewline \hline
    1.4 & 34 & 800.9 & \tick  \tabularnewline \hline
    1.6 & 83 & 1500.0 & \tick  \tabularnewline \hline
    1.7 & 184 & 4367.8 & \tick  \tabularnewline \hline
    1.8 & 494 & 10565.4 & \crossMark  \tabularnewline \hline
    \end{tabular}     
    \end{center}
\end{table}

\begin{table}[t!]
    \caption{Results for Example~\ref{ex:obstacle-disturbance} using SVR method.}
    \label{tab:obstacle-disturbance-svr}
    \begin{center}
    \begin{tabular}{|c|c|c|c|c|}
    \hline
    $\epsilon$ & $\sigma_D$ & Itr & Time & Status    \tabularnewline \hline
    0.2 & 0.1 & 3 & 32.4 & \tick \tabularnewline \hline
    0.3 & 0.2 & 5 & 40.7 & \tick \tabularnewline \hline
    0.3 & 0.3 & 5 & 46.6 & \crossMark \tabularnewline \hline
    0.4 & 0.0 & 536 & 5074.3 & \crossMark \tabularnewline \hline
    \end{tabular}
    \end{center}
\end{table}

On the other hand, using the SVR technique, first, a CLBF is found with preferably higher values for $\epsilon$. The most robust controller is obtained using $\epsilon = 0.03$ and it is verified that this controller can handle disturbances for $r_D = 0.2$ . The results are shown in Table~\ref{tab:obstacle-disturbance-svr}. These results suggest that RS method can provide provably more robust controllers where the SVR approaches fails to synthesize a controller for larger values of $\epsilon$ and fails to verify for larger disturbance values.
\end{example}

\begin{example}\label{ex:ozay-example}
Consider Example~\ref{ex:basic-control-to-facet}, wherein the goal is to reach a region around $(-0.75, 1.75)$ ($G :\ \{(x_1, $ $x_2) | (x_1+0.75)^2 + (x_2-1.75)^2 \leq 0.25^2\}$).
For each method, we check for the biggest disturbance for which the problem can be solved. Using RS method, we were able to solve the problem when $\sigma_D = 0.5$ using $\epsilon = 0.01$. For the SVR method, the most robust controller (obtained by setting $\epsilon = 0.2$) is verified to decrease $V$ when $\sigma_D = 0.03$. Detailed results are shown in Table~\ref{tab:ozay-example}. Again, these results suggest RS method yields more robust controllers.

\begin{table}[t!]
    \caption{Results for Example~\ref{ex:ozay-example} using SVR
    method.} \label{tab:ozay-example} \begin{center} \begin{tabular}{|c|c|c|c|c|} \hline
    $\epsilon$ & $\sigma_D$ & Itr & Time &
    Status \tabularnewline \hline 0.1 & 0.01 & 3 & 30.4
    & \tick \tabularnewline \hline 0.2 & 0.03 & 15 & 87.6
    & \tick \tabularnewline \hline 0.2 & 0.04 & 15 & 87.8
    & \crossMark \tabularnewline \hline 0.3 & 0.0 & 35 & 290.9
    & \crossMark \tabularnewline \hline \end{tabular}

    \end{center}
\end{table}

 \end{example}

\begin{example}
The following example is taken from~\cite{bolzern2004quadratic}. The system has three continuous variables with four different modes. The dynamics are provided in Appendix~\ref{ch:benchmark} (System~\ref{sys:non-equilibrium-stabilization}). Again, the goal is to find the most robust controller. 
The RS method can find a RCLBF with disturbance $\sigma_D = 0.1$ when $\epsilon = 0.01$ is used.
The SVR method failed to synthesize a controller for $\epsilon = 0.3$ and using $\epsilon = 0.2$, it failed to verify the controller for $\sigma_D = 0.009$. The robustness is guaranteed for $\sigma_D = 0.008$.
\end{example}

\begin{table*}[t!]
    \caption{Results for Example~\ref{ex:heater}.}
    \textbf{Legend:} $n$ : \# state variables, $m$: \# modes, OM: Out of Memory.
    \label{tab:heater}
    \begin{center}
    \begin{tabular}{||c|c||c|c|c|c|c||c|c|c|c|c||}
    \hline
    \multicolumn{2}{||c||}{Problem} & \multicolumn{5}{|c||}{RS} & \multicolumn{5}{|c||}{SVR}
    \tabularnewline \hline
    $n$ & $m$ & 
    $\sigma_D$ & $\epsilon$ & Itr & Time & Status & 
    $\sigma_D$ & $\epsilon$ & Itr & Time & Status
    \tabularnewline \hline
    \multirow{3}{*}{3} & \multirow{3}{*}{4} & 
    0.04 & 0.0001 & 1 & 25.1 & \tick &
    0.005 & 0.02 & 4 & 65.6 & \tick 
    \tabularnewline 
     & & & & & & &
     0.006 & 0.02 & 4 & 54.8 & \crossMark 
    \tabularnewline 
     & &  &  & &  &  &
     0.0 & 0.03 & 4 & 49.0 & \crossMark 
    \tabularnewline \hline
    
    \multirow{3}{*}{4} & \multirow{3}{*}{5} & 
    0.02 & 0.0001 & 1 & 155.6 & \tick &
    0.001 & 0.01 & 4 & 237.0 & \tick 
    \tabularnewline 
     & & & & & & &
     0.002 & 0.01 & 4 & 159.5 & \crossMark 
    \tabularnewline 
     & &  &  & &  &  &
     0.0 & 0.02 & 6 & 71.9 & \crossMark 
    \tabularnewline \hline
    
    \multirow{3}{*}{5} & \multirow{3}{*}{6} & 
    0.001 & 0.0001 & 1 & 1500.3 & \tick &
    0.0002 & 0.002 & 4 & 2243.5 & \tick 
    \tabularnewline 
     & & & & & & &
     0.0003 & 0.002 & 4 & 872.5 & \crossMark 
    \tabularnewline 
     & &  &  & &  &  &
     0.0 & 0.003 & 3 & 89.7 & \crossMark 
    \tabularnewline \hline
    
    \multirow{3}{*}{6} & \multirow{3}{*}{4} & 
    0.01 & 0.0001 & 1 & 9224.3 & \tick &
     0.001 & 0.01 & 4 & 11559.4 & \tick  
    \tabularnewline 
     & & & & & & &
     0.002 & 0.01 & 4 & 4237.4 & \crossMark 
    \tabularnewline 
     & &  &  & &  &  &
     0.0 & 0.02 & 4 & 333.5 & \crossMark 
    \tabularnewline \hline
    
    9 & 4 & 
    0.01 & 0.0001 & - & - & OM &
    0.001 & 0.01 & - & - & OM 
    \tabularnewline \hline
    \end{tabular}
    
    \end{center}    
\end{table*}

\begin{example}
\label{ex:heater}
This benchmark includes five problem instances, the details of which are available in Appendix~\ref{ch:benchmark} (System~\ref{sys:heater}). The goal is to keep different rooms of an apartment warm, using few numbers of active heaters. While these examples do not have disturbances, we incorporate disturbances of the form $\dot{\vx} = f_\vu(\vx) + \vd$. We use these problem instances to demonstrate the scalability.
The results for both methods are shown in Table~\ref{tab:heater}. These results demonstrate that our method is scalable to larger problems while dealing with robustness. Notice that both methods fail for the last problem instance as the verification of such a big problem, even when we use the SDP relaxation scheme, is expensive.
\end{example}

\section{Integrating the Demonstrator Oracle}\label{sec:eval-demonstration}
In this section, we investigate the demonstration-based inductive synthesis. For the demonstrator, a nonlinear MPC scheme is used, which is solved using a gradient descent algorithm.  For each benchmark, the following parameters are tuned to obtain the cost function:
\begin{compactenum}
    \item time step $\tau$
    \item number of horizon steps $N$
    \item $Q$, $R$, and $H$ for the cost function:
    \[
    \begin{array}{c}
    \left( \sum_{0 = 1}^{N-1} \vx(i\tau)^t \ Q \ \vx(i\tau) + \vu(i\tau)^t \ R \ \vu(i\tau) \right) + \vx(N\tau)^t \ H \ \vx(N\tau)\,.
    \end{array}
    \]
      \end{compactenum}
As such, an MPC cost function is designed to enforce RWS. 
However, since the approach provides no guarantees, we run hundreds of simulations of the closed-loop system starting from randomly selected initial states to check whether the specifications are met. Failing this, the cost function is adjusted, repeating the testing process.
For the verifier, we use the SDP relaxation. As discussed in the previous chapter, the learner at each iteration returns the center of the maximum volume ellipsoid (MVE) that fits inside the set of candidate solutions (see Theorem~\ref{thm:termination2}). To implement the learner, we use Mosek~\cite{mosek2010mosek} as the SDP solver to find the center of the MVE.

Now, we address the problems discussed in Chapter~\ref{ch:intro}. The reported control certificates are rounded to two decimal points. 

\paragraph{Bicycle Problem:} We wish to solve the RWS for the bicycle model described in Example~\ref{ex:bicycle}. The $\sin$ function appeared in the dynamics is approximated with polynomial of degree one. Next, our framework searches and finds the following CLBF:
\begin{align*}
V(\vx):\ 
&0.37 y^2
+ 0.52 y\theta
+ 3.11 \theta^2
+ 0.98 y\sigma
+ 2.23 \sigma\theta +
4.46 \sigma^2
- 0.36 vy
- 0.29 v\theta
+ 0.95 v\sigma
+ 3.86 v^2\,.
\end{align*}

This CLBF is used to design a controller. Figure~\ref{fig:bicycle-sim} shows the projection of traces onto the $x$-$y$ plane for such controller in red.  The blue traces are generated using the MPC controller. The behaviors of the system for both controllers are similar but not identical. Notice that the initial state in
  Figure~\ref{fig:bicycle-sim}(c) is not in the region of attraction
  (guaranteed region). Nevertheless, the certificate-based controller can still stabilize the system while keeping the system in the safe region. On the other hand, the MPC violates the safety constraints even when the safety constraints are formulated in the MPC scheme. We note that the safety is violated because in the beginning, $\theta$ gets larger than $1$ and it gets close to $\pi/2$ (the vehicle moves almost vertically).

\begin{figure}[t!]
\begin{center}
    \includegraphics[width=0.6\textwidth]{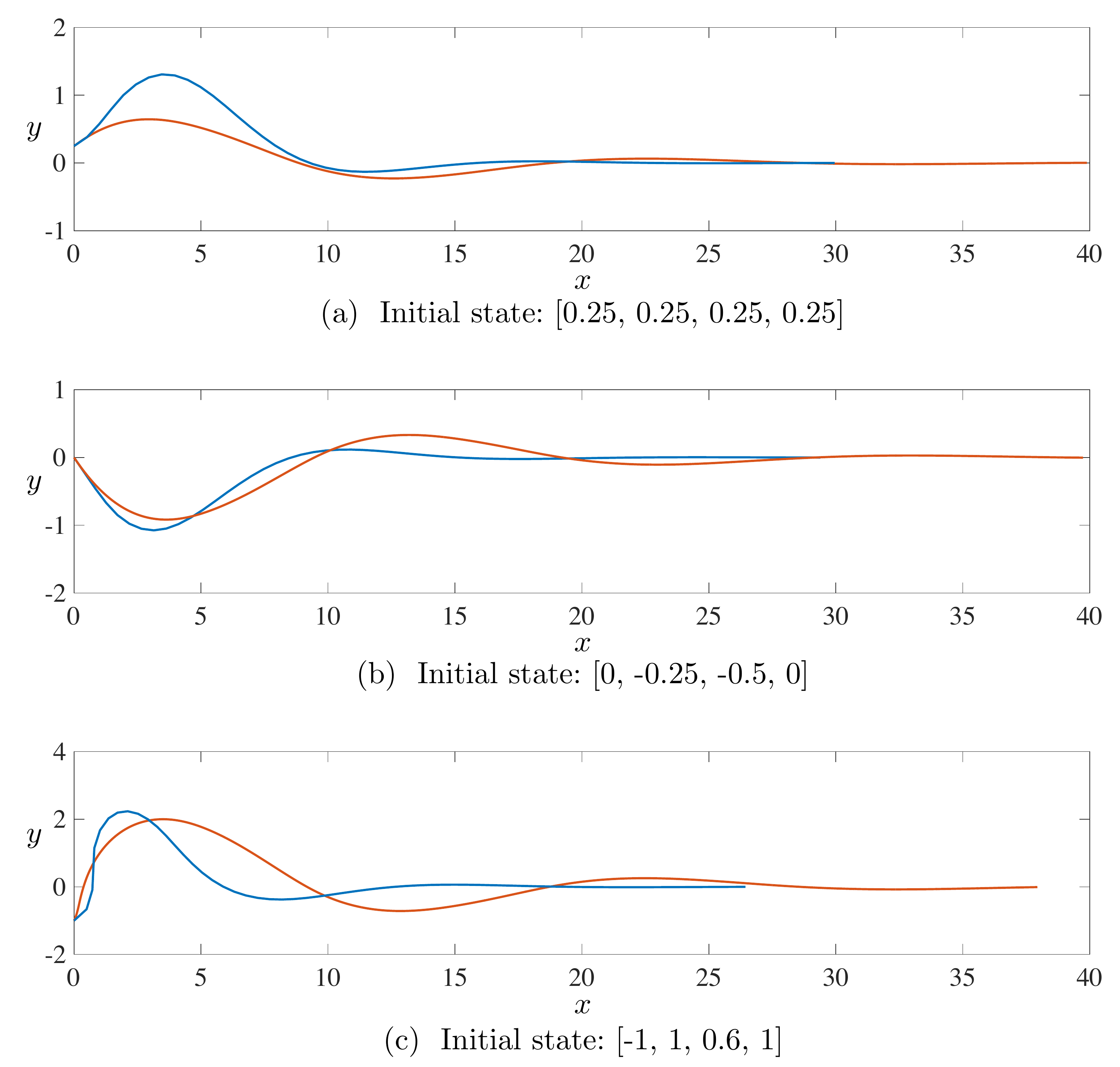}
    \\
     Simulation traces are plotted for three different initial states. Blue (red) traces corresponds to traces of the system for MPC controller (certificate-based controller).
      \end{center}
\caption{Simulation for the bicycle model - Projected on x-y
 plane.}\label{fig:bicycle-sim} 
\end{figure}

\paragraph{Inverted Pendulum Problem:} To solve the safety problem of the inverted pendulum on a cart (Example~\ref{ex:inverted-pendulum}), first, a partial
linearization is performed over the dynamics which results in the following dynamics:
\begin{align*}\label{eq:inverted-pendulum-dyn}
        \ddot{x} = 4u + \frac{4(M+m)g \tan(\theta) - 3mg\sin(\theta)\cos(\theta)}{4(M+m)-3m\cos^2(\theta)} \,, \ \ \ \ \ddot{\theta} = \frac{- 3 u \cos(\theta)}{l} \,.
\end{align*}
Then, the trigonometric and rational functions are approximated with polynomials of degree three.

Figure~\ref{fig:inverted-sim} shows some of the traces of the closed-loop system for the certificate-based controller as well as the MPC controller. Notice that the certificate-based controller can behave differently, Especially in regions where a demonstration is not provided. For example, for Figure~\ref{fig:inverted-sim}(b), the behaviors of these controllers are similar outside the initial set $I$. However, inside $I$ (near the equilibrium) the behavior is different as the demonstrations are only generated for states outside $I$.
The certificate-based controller is designed using the following CLBF generated by the learning framework:
\begin{align*}
V(\vx):\ & 16.37 \dot{\theta}^2 + 50.37 \dot{\theta}\theta
+ 75.16 \theta^2 + 13.51 x\dot{\theta} 
+ 43.26 x\theta + \\
& 10.44 x^2 + 23.30 \dot{\theta}\dot{x} + 38.09 \dot{x}\theta
+ 11.13 \dot{x}x + 9.55 \dot{x}^2 \,.
\end{align*}

\begin{figure*}[t]
\begin{center}
    \includegraphics[width=1\textwidth]{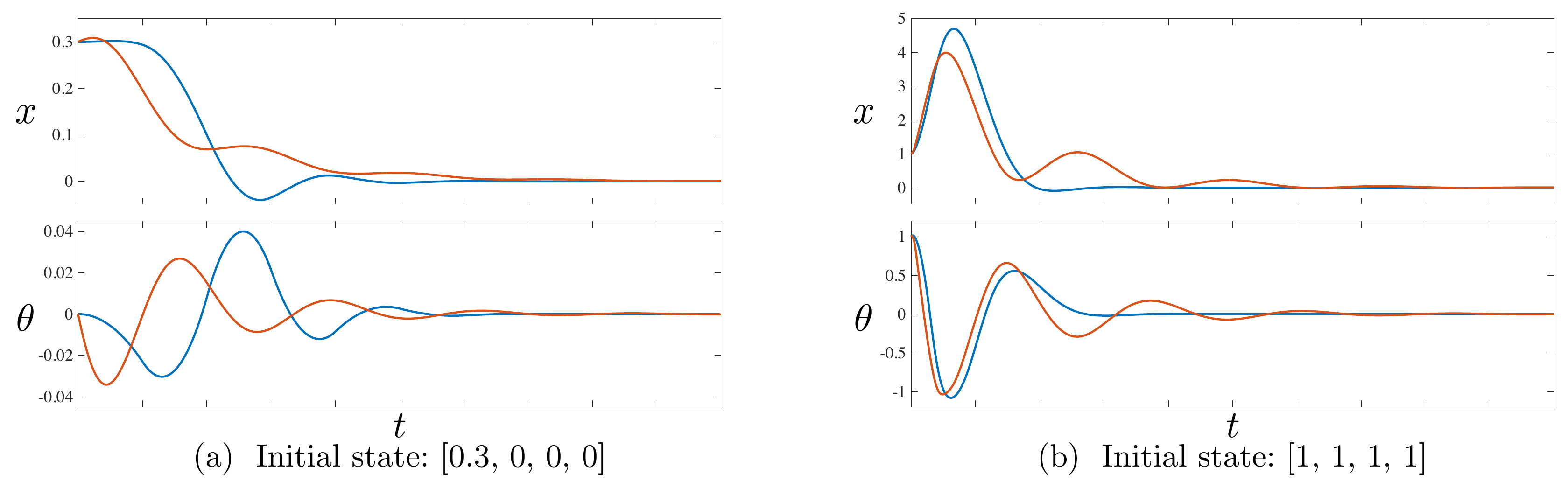}
\\ Simulation traces are plotted 
for two initial states. Red (blue) traces show the simulation traces for the certificate-based (MPC) controller.
\end{center}
\caption{Simulation for the inverted pendulum system.}\label{fig:inverted-sim} 
\end{figure*}

\paragraph{Forward Flight Problem:}
Here we wish to stabilize the Caltech ducted-fan in a forward flight (Example~\ref{ex:ducted-fan-forward}). Our framework is not directly applicable to the problem, because the system is not affine in control. To address this problem, we replace inputs $u$ and $\delta_u$ with $u_s = u \sin(\delta_u)$ and $u_c = u \cos(\delta_u)$:
\begin{align*}
        \dot{v} &= \frac{-D(v, \alpha) - W \sin(\gamma) + u_c \cos(\alpha) -  u_s \sin(\alpha)}{m} & 
        \dot{\gamma} &= \frac{L(v, \alpha) - W \cos(\gamma) + u_c \sin(\alpha) + u_s \cos(\alpha)}{mv} \\
        \dot{\theta} &= q & \dot{q} &= \frac{M(v, \alpha) - l_T u_s}{J}\,.
\end{align*}
Projection of $U$ into the new coordinate will yield a sector of a circle.
Then, $U$ is safely under-approximated by a polytope $\hat{U}$ as shown in Figure~\ref{fig:uhat}.
\begin{figure}
\begin{center}
    \includegraphics[width=0.3\textwidth]{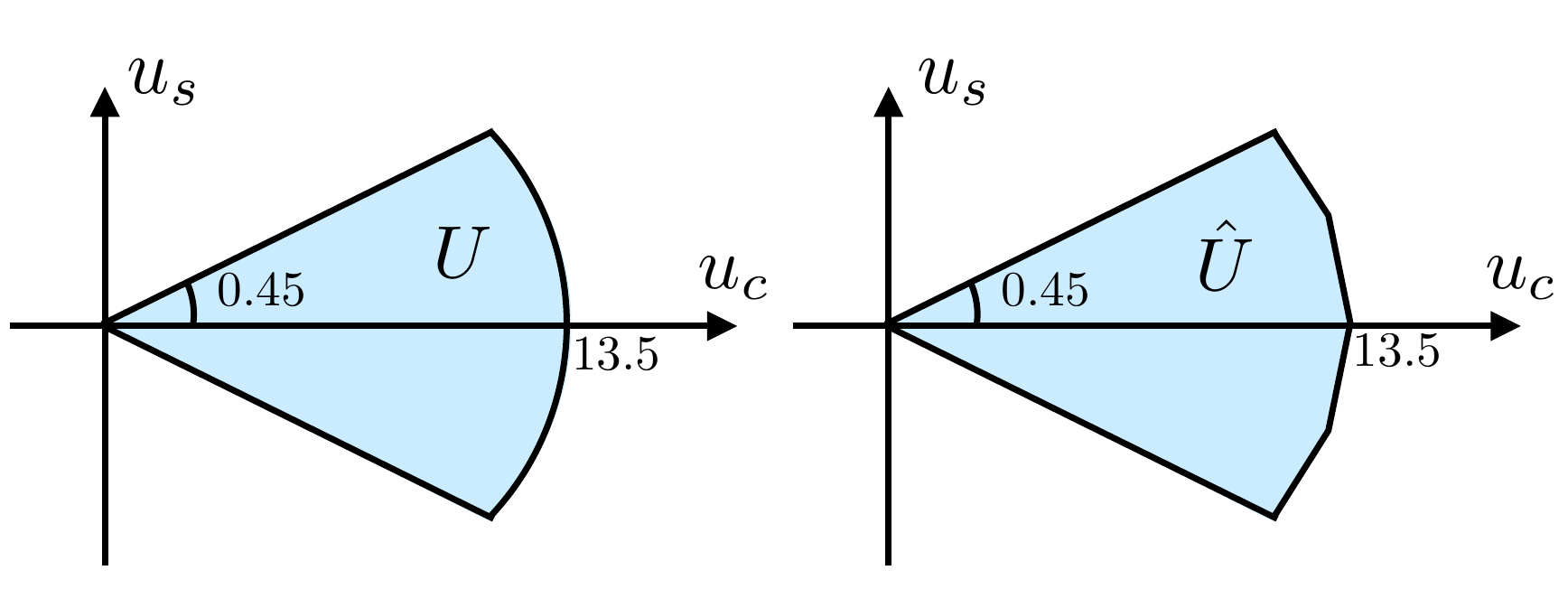}
\end{center}
\caption{Set of inputs $U$ and its under approximation $\hat{U}$ for Example~\ref{ex:ducted-fan-forward}.}\label{fig:uhat} 
\end{figure}
Next, we perform a translation so that the $\vx^*$ ($\vu^*$) is the origin of the  state (input) space in the new coordinate system.
In order to obtain a polynomial dynamics, we approximate $v^{-1}$, $\sin$ and $\cos$ with polynomials of degree one, three, and three, respectively.
These changes yield a polynomial control affine dynamics, which fits the description of the supported model. 

The projection of some of the traces of the system in $x$-$y$ plane is shown in Figure~\ref{fig:forward-sim}. We set $x_0 = y_0 = 0$ and $\dot{x} = v \cos(\gamma) , \ \dot{y} = v \sin(\gamma)$.
The certificate-based controller is designed using the following generated CLBF:
\begin{align*}
V(\vx):\ &3.23 q^2 + 2.17 q\theta
+ 3.90 \theta^2 - 0.2 qv
- 0.45 v\theta 
& + 0.53 v^2 + 1.66 q\gamma - 1.33 \gamma\theta
+ 0.48 v\gamma + 3.90 \gamma^2 \,.
\end{align*}
The traces show that the certificate-based controller stabilizes faster, however, the MPC controller uses the aerodynamics to achieve the same goal with a lower cost.

\begin{figure}[t!]
\begin{center}
    \includegraphics[width=1\textwidth]{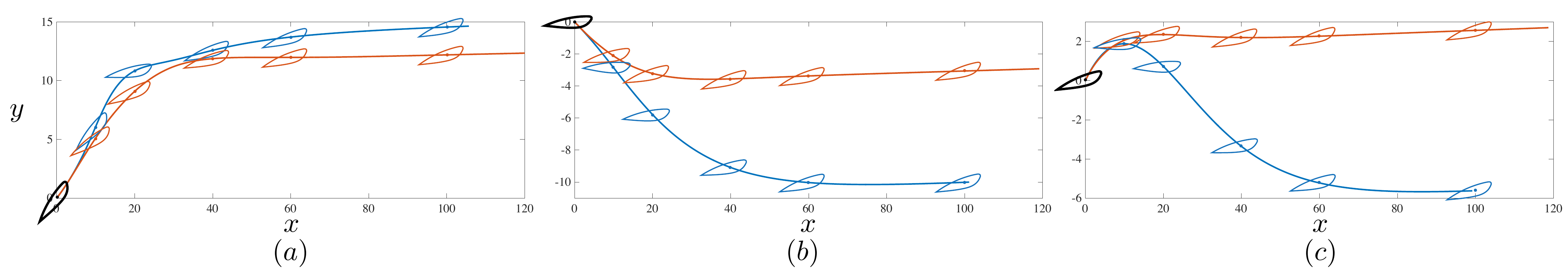}
\\ The initial rotational position is shown with the black ducted-fan. Blue (red) traces are trajectories of the closed loop system with the MPC (certificate-based) controller. The rotational position is shown for some of the states for each trajectory. Initial states are $[2, 0.4, 0.717, 0]$, $[-1, -0.25, -0.133, 0]$, and $[-1, 0.4, 0.177, 0]$ for (a), (b), and (c), respectively.
\end{center}
\caption{Simulation for forward flight of 
Caltech ducted-fan - Projected on x-y
 plane.}\label{fig:forward-sim} 
\end{figure}

\paragraph{Hover Mode Problem:}
To stabilize the Caltech ducted-fan in a hover mode (Example~\ref{ex:ducted-fan-hover}), the trigonometric functions in the dynamics are approximated with degree two polynomials and the procedure finds a quadratic CLBF:
\begin{align*}
V(\vx):\ & 1.64 \dot{\theta}^2 - 0.56 \dot{\theta}\dot{y}
+ 13.53 \dot{y}^2 + 0.07 \dot{\theta}y + 1.15 y\dot{y} +
1.16 y^2 + 1.74 \theta\dot{\theta} + 0.03 \dot{y}\theta - 0.77 y\theta + \\
&4.80 \theta^2 - 4.57 \dot{\theta}\dot{x} + 0.85 \dot{x}\dot{y} + 0.34 y\dot{x} - 8.59 \dot{x}\theta + 12.77 \dot{x}^2 -
0.45 \dot{\theta}x + 0.06 \dot{y}x +  0.51 yx -  \\
&3.71 x\theta + 4.12 x\dot{x} +1.88 x^2 \,.
\end{align*}
Some of the traces are shown in Figure~\ref{fig:hover-sim}. As the simulations suggest, the MPC controller behaves very differently and the certificate-based controller yields solutions with more oscillations. 
Also, once the trace is inside the target region, the certificate-based controller does not guarantee a decrease in $V$ as this fact is intuitively visible in Figure~\ref{fig:hover-sim}(c).

\begin{figure}[t!]
\begin{center}
    \includegraphics[width=1\textwidth]{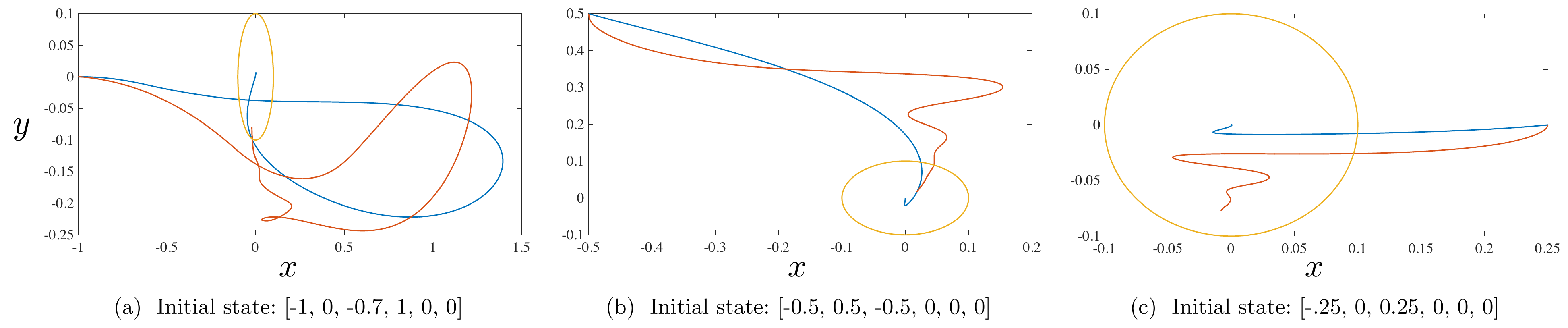}
\\ The trajectories corresponding to the certificate-based (MPC) controller are shown in red (blue) lines. The boundary of the target set $G$ is shown in yellow.
\end{center}
\caption{Simulation for the Caltech ducted-fan in hover mode  - Projected on x-y
 plane.}\label{fig:hover-sim} 
\end{figure}

\subsection{Performance}
As mentioned earlier, the inputs to the learning framework are a plant $\P$, a specification $\varphi$, monomial basis functions $\vg$, a demonstrator $\D$, and the degree of relaxation $\degr$.
At each iteration, first an MVE inscribed inside a polytope is calculated. This task is performed rather efficiently. The MPC scheme used inside the demonstrator is part of the input and we do not consider its computation performance here. Nevertheless, MPC is known to be very effective if it is carefully tuned.
We mention that the MPC parameters used here are selected by a non-expert and usually the time step is very small, while the horizon is very long. Nevertheless, the MPC is used offline and suitable for our framework. Also, costs matrices $Q$, $R$, and $H$ are diagonal:
\[
Q = diag(Q') \ , \ R = diag(R') \ , \ H = N diag(Q') \,,
\]
where $Q' \in \reals^n$ and $R' \in \reals^m$.  Two other important factors determine the performance of the whole learning framework: (i) the time taken by the verifier and (ii) the number of iterations. Table~\ref{tab:result} shows the results of the learning framework for the set of case studies described thus far. For each problem instance, the parameters of the MPC, as well as the degree of the SDP relaxation is provided.
Also, the performance of the learning framework is tabulated.
First, the procedure starts from $\C :\ [-\Delta, \Delta]^r$ and terminates  whenever $\Vol(\E_j) < \gamma \delta^r$. We set $\Delta = 100$ and $\delta = 10^{-3}$. The results demonstrate that the method terminates in few iterations, even for the cases where a compatible control certificate does not exist.

Notice that the number of demonstrations is different from the number of iterations. The trick is as follows. If we can find a counterexample by considering only conditions involving $V$ (and not $\nabla V$), there is no need for a demonstration and the procedure can move to the next iteration.
This optimization is added to speed up the procedure by avoiding expensive calls to the offline MPC.  
As Table~\ref{tab:result} shows, using this trick, the number of demonstrations can be much smaller than the total number of iterations.

At each iteration, several verification problems are solved which involve solving large SDP problems. While the complexity of solving SDP is polynomial in the number of variables, they are still hard to solve. The verification problem is quite expensive when the number of variables as well as the degree of relaxation is large. Nevertheless, as the SDP solvers mature, we believe our method can solve larger problems since the verification procedure is currently the computational bottleneck for the learning framework.  

\begin{table*}[t!]
\caption{Results of running demonstration-based CEGIS.}\label{tab:result} 
\textbf{Legend:} $n$: \# variables, $m$: \# control inputs, $\tau$: MPC time step, $N$: number of horizon steps, $Q'$: defines MPC state cost, $R'$: defines MPC input cost, $D$: SDP relaxation degree bound, \#D : number of demonstrations, \#Itr: number of iterations, VT: total computation time for verification (minutes), T: total computation time (minutes), St: Status, \tick: control certificate found, \crossMark: fail.
\begin{center}
\begin{tabular}{ ||l||c|c|c|c||c||c|c|c|c|c|| } 
 \hline
 \multicolumn{1}{||c||}{Problem} & \multicolumn{4}{c||}{Demonstrator} & Ver. & 
 \multicolumn{5}{c||}{Performance} \\
 \hline
 System Name & $\tau$ & $N$ & $Q'$ & $R'$ & $D$ & \#D & \# Itr & VT & T & St \\
 \hline
 \multirow{2}{*}{TORA} & \multirow{2}{*}{1} & \multirow{2}{*}{30} & \multirow{2}{*}{[1 1 1 1]} & \multirow{2}{*}{[1]} & 3 & 52 & 118 & 7 & 14 & \crossMark \\ 
  &  &  & & & 4 & 19 & 76 & 5 & 8 & \tick \\ 
  \hline
 \multirow{3}{*}{Inv. Pendulum} & \multirow{3}{*}{0.04} & \multirow{3}{*}{50} & \multirow{3}{*}{[10 1 1 1]} & \multirow{3}{*}{[10]} & 3 & 56 & 85 & 7 & 27 & \crossMark \\ 
  &  &  & & & 4 & 53 & 69 & 9 & 25 & \tick \\ 
  &  &  & & & 5 & 34 & 50 & 7 & 19 & \tick \\ 
  \hline
 \multirow{2}{*}{Bicycle} & \multirow{2}{*}{0.4} & \multirow{2}{*}{20} & \multirow{2}{*}{[1 1 1 1]} & \multirow{2}{*}{[1 1]} & 2 & 14 & 32 & 2 & 2 & \crossMark \\ 
  &  &  & & & 3 & 7 & 25 & 1 & 1 & \tick \\ 
  \hline
 \multirow{2}{*}{Bicycle $\times$ 2} & \multirow{2}{*}{0.4} & \multirow{2}{*}{20} & \multirow{2}{*}{[1 1 1 1 1 1 1 1]} & \multirow{2}{*}{[1 1 1 1]} & 2 & 119 & 225 & 77 & 90 & \crossMark \\ 
  &  &  & & & 3 & 30 & 81 & 43 & 46 & \tick\\ 
  \hline
 \multirow{2}{*}{Forward Flight} & \multirow{2}{*}{0.4} & \multirow{2}{*}{40} & \multirow{2}{*}{[1 1 1 1]} & \multirow{2}{*}{[1 1]} & 4 & 14 & 77 & 16 & 18 & \crossMark \\ 
  &  &  & & & 5 & 4 & 64 & 10 & 10 & \tick \\ 
  \hline
 \multirow{3}{*}{Hover Flight} & \multirow{3}{*}{0.4} & \multirow{3}{*}{40} & \multirow{3}{*}{[1 1 1 1 1 1]} & \multirow{3}{*}{[1 1]} & 2 & 57 & 147 & 12 & 40 & \crossMark \\ 
  &  &  & & & 3 & 57 & 124 & 21 & 47 & \tick \\ 
  &  &  & & & 4 & 51 & 116 & 30 & 54 & \tick \\ 
 \hline
\end{tabular}
\end{center}
\end{table*}

In the previous chapter, we discussed that two important factors govern the convergence of the search process: (i) candidate selection, and (ii) counterexample selection. To study the effect of these processes, we investigate different techniques to evaluate their performances. For candidate selection, we consider three different methods.
In the first method, a Chebyshev center of $\C_j$ is used as a candidate. In the second method, the analytic center of constraints defining $\C_j$ is the selected candidate, and redundant constraints are not dropped. Finally, in the last method, the center of
MVE inscribed in $\C_j$ yields the candidate. Also, for each of these methods, we compare the performance for two different cases: (i) a counterexample is generated without any specific property, (ii) the generated counterexample maximizes constraint violations (see Sec.~\ref{sec:cegis}). Table~\ref{tab:selection} shows the performance for each of these six cases, applied to the same set of problems. For each case, the number of iterations, the verification time and the total computation time is reported.
The results demonstrate that selecting good counterexamples would increase the convergence rate (fewer iterations) \emph{for all cases}. Nevertheless, the time it takes to generate these counterexamples (verification time) increases, and therefore, the overall performance degrades. In conclusion, while generating good counterexamples provides a better reduction in the space of candidates, it is computationally expensive, and thus, it seems to be beneficial to only rely on candidate selection for fast termination.
Table~\ref{tab:selection} also suggests that the method based on the Chebyshev center has the worst performance. Also, the MVE-based method performs better (fewer iterations) compared to the method which is based on the analytic center.
\begin{table*}[t!]
\caption{Results on different variations.}\label{tab:selection}
\textbf{Legend:} I: number of iterations, VT: computation time for verification (minutes), T: total computation time (minutes), Simple CE: any counterexample, Max CE: counterexample with maximum violation.
\begin{center}{\tiny
\begin{tabular}{ ||l||rrr|rrr||rrr|rrr||rrr|rrr||} 
 \hline
 \multirow{3}{*}{Problem} & \multicolumn{6}{c||}{Chebyshev Center} & \multicolumn{6}{c||}{Analytic Center} & \multicolumn{6}{c||}{MVE Center} \\
 \cline{2-19}
 & \multicolumn{3}{c|}{Simple CE} & \multicolumn{3}{c||}{Max CE}  & \multicolumn{3}{c|}{Simple CE} & \multicolumn{3}{c||}{Max CE}  & \multicolumn{3}{c|}{Simple CE} & \multicolumn{3}{c||}{Max CE} \\
 \cline{2-19}
  & I & VT & T & I & VT & T & I & VT & T & I & VT & T & I & VT & T & I & VT & T  \\
 \hline
 TORA & 185 & 7 & 10 & 52 & 12 & 15 & 95 & 5 & 9 & 36 & 9 & 11 & 76 & 5 & 8 & 36 & 12 & 14  \\
 Inverted Pend. & 163 & 10 & 23 & 85 & 22 & 30 & 57 & 8 & 20 & 51 & 22 & 32 & 50 & 7 & 19 & 35 & 18 & 25  \\
 Bicycle & 99 & 3 & 3 & 40 & 5 & 5 & 31 & 2 & 2 & 20 & 3 & 3 & 25 & 1 & 2 & 15 & 3 & 3  \\
 Bicycle $\times$ 2 & 759 & 121 & 127 & 438 & 244 & 246 & 96 & 47 & 50 & 77 & 141 & 143 & 81 & 43 & 46 & 66 & 132 & 133  \\
 Forward Flight & 676 & 20 & 21 & 34 & 30 & 31 & 113 & 15 & 16 & 21 & 18 & 19 & 64 & 10 & 10 & 16 & 16 & 16  \\
 Hover Flight & 499 & 65 & 90 & 196 & 113 & 127 & 146 & 36 & 67 & 90 & 92 & 109 & 116 & 30 & 54 & 75 & 69 & 82  \\
 \hline
 \end{tabular}
} \end{center}
 \end{table*}
 
\section{Physical Experiments}
In this section, we address ``RWS with reference tracking" problems for a bicycle model discussed in Example~\ref{ex:obstacle-car}. In particular, in addition to sets $\hat{I}$, $\hat{G}$, and $\hat{S}$, a reference trajectory over a finite interval $t \in [0, T]$ is provided as input: $\sigma_r :\ (\vx_r(t),\vu_r(t))$, wherein $\sigma_r$ is a valid trace. We also assume $\hat{S}$ depends on $t$ : $\hat{S}(t)$ represents the safe set at time $t$.
Next, we replace the time variable $t$ with $\theta$ to parameterize the trajectory with $\theta$, which is controllable with input $u_0$ ($\dot{\theta} = u_0$). Also, for $u_0 \in U_0$, we set $U_0 :\ [-0.9, 9]$. We solve all tracking problems using control funnel functions. However, we use a ``body fixed frame'', wherein the state of the vehicle is given by  $\vz^t :\ [\theta, \vx_R^t]$, and $\vx_R :\ [\alpha_R, x_R, y_R, v_R]^t$. The state variables in the inertial frame $\vx(t) :\ [\alpha(t), x(t), y(t), v(t)]^t$ are written in terms of $\vx_R$ as follows:
\[
\left[ \begin{array}{c}\alpha_R(t) + \alpha_r(\theta(t)) \\ \cos(\alpha_r(\theta(t))) x_R(t) - \sin(\alpha_r(\theta(t))) y_R(t) + x_r(\theta(t)) \\ \sin(\alpha_r(\theta(t))) x_R(t) + \cos(\alpha_r(\theta(t))) y_R(t) + y_r(\theta(t)) \\ v_R(t) + v_r(\theta(t)) \end{array} \right]\,.
\]
In this frame, $y_R$ axis is always aligned to axis of the vehicle in the reference trajectory. We now use $I$, $G$, and $S(\theta)$ to represent sets $\hat{I}$, $\hat{G}$, and $\hat{S}(\theta)$ in terms of $\vx_R$.
We observe that the change of coordinates allows for accurate low order polynomial approximations. Also, our experimental results suggest that the learning framework succeeds in finding a control funnel function of lower degree over the new coordinates when compared to the inertial frame.  For demonstration, we use a nonlinear MPC with $\tau = 0.05$, $N = 24$, $Q' = [1 \ 1 \ 1 \ 0.01]$, and $R' = [0.1 \ 1 \ 1]$ (see Section~\ref{sec:eval-demonstration}).
As an alternative to \emph{Path-Following based Control Funnel} (PF-CF) we compare with \emph{Trajectory Tracking based Control Funnel} (TT-CF) obtained by setting $\dot{\theta} = 1$, and eliminating the control input $u_0$.

\paragraph{Parkour Car:}
\begin{figure}
\vspace{0.2cm}
\centering
    \includegraphics[width=0.4\columnwidth]{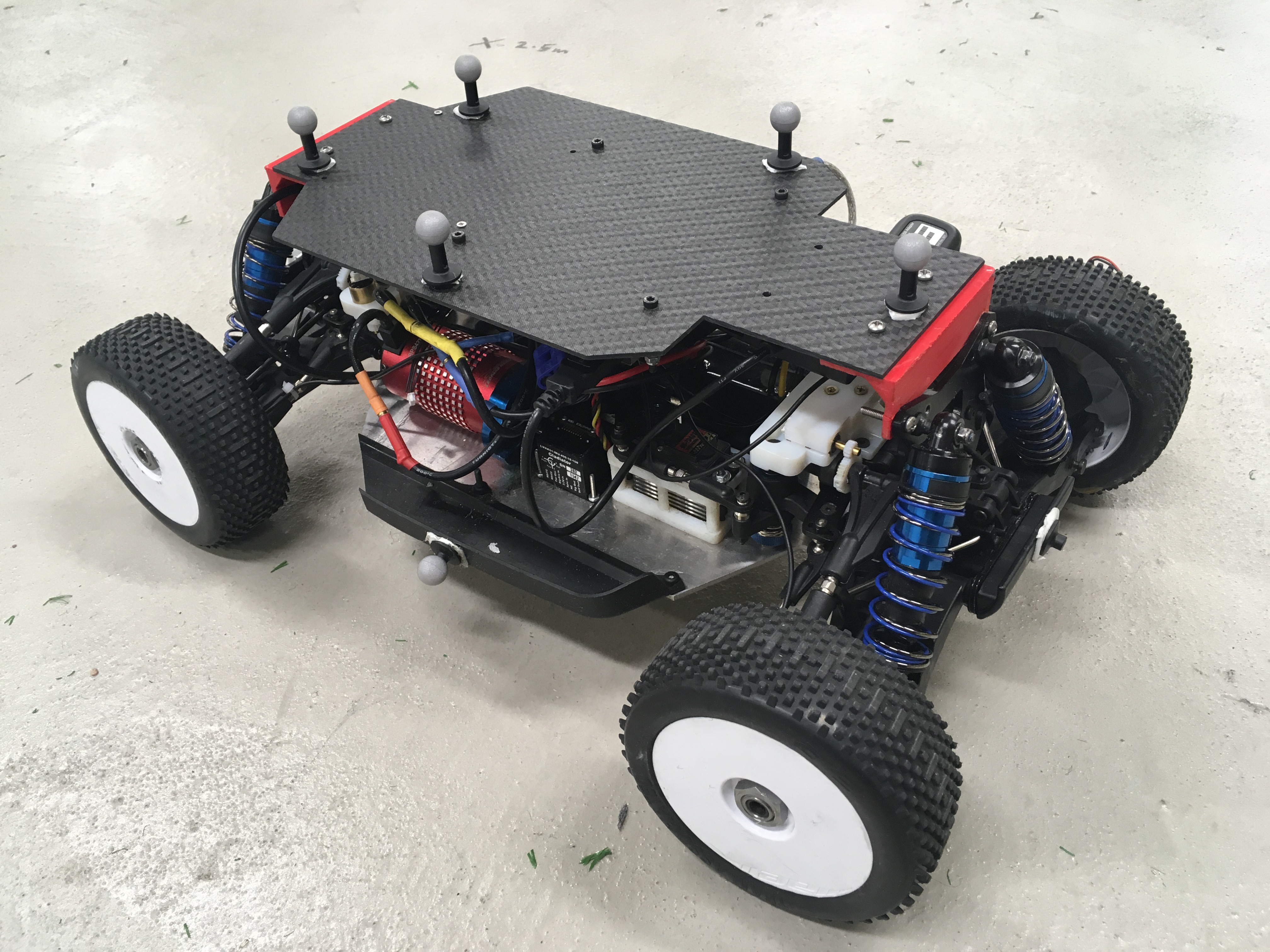}
\caption{Parkour car platform used for experiments.}\label{fig:parkourcar} 
\end{figure}
To verify the functionality of the proposed method we perform experiments on a $\frac{1}{8}^{th}$ scale, four-wheel drive vehicle platform known as Parkour car (Figure\ref{fig:parkourcar}) in a lab environment equipped with an OptiTrack motion capture system~\cite{OptiTrack}. The Parkour car has a wheelbase of $l = 34\text{cm}$ and includes an onboard computer to perform all computation on the vehicle. While in action, the main computer receives a pose update from motion capture system through a WiFi connection, after which new control action is calculated based on the synthesized control law which then gets transmitted to an ECU (Electronic Control Unit). The ECU handles signal conditioning for acceleration and steering motors on the Parkour car. One iteration of this control action calculation can be performed in less than $300\mu\text{s}$ on a single CPU core running at 3.5GHz.

We now investigate the certificate based controller for different paths.

\paragraph{Straight Path:}
\begin{figure*}[t]
\begin{center}
    \includegraphics[width=1\textwidth]{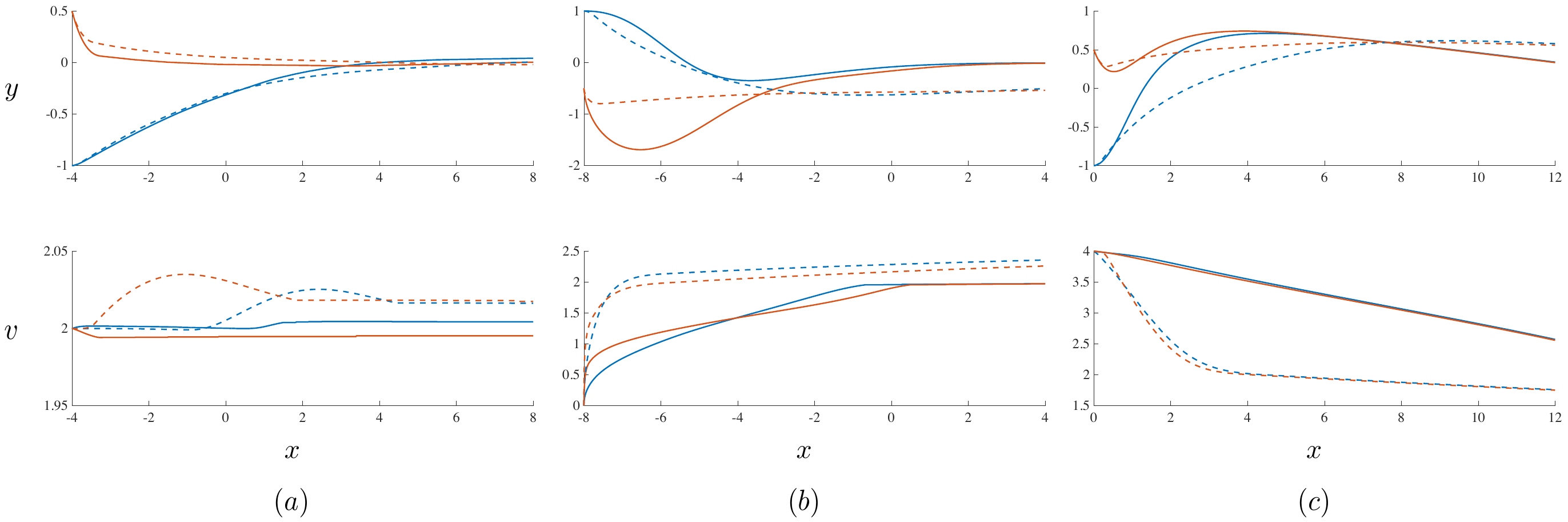}
\end{center}
Solid (dashed) lines are simulation trajectories corresponding to the PF-CF (TT-CF). Blue (red) trajectories start from the same initial condition.
\caption{Simulation results for a straight path.}\label{fig:straight} 
\end{figure*}
In the first experiment, we consider a straight path from $x = -2$ to $x = 2$ and the reference trajectory is $\vx_r(t) :\ [-\pi/2, -2+2t, 0, 2]^t$. The sets are:
\[ S(\theta) :\ [-1, 1]^3 \times [-3, 3] \, , \, I :\ \B_{0.5}(\vzero) \, , \, G :\ \B_{0.5}(\vzero) \,.\]
Then, the learning framework successfully finds a path-following based control funnel (PF-CF). However, the learning framework fails to find a trajectory tracking based control funnel (TT-CF). We note that the verification procedure is not complete and we do not claim that no TT-CF compatible with the demonstrator exists. Nevertheless, even if a TT-CF exists, the solution is fragile compared to the founded PF-CF, and proving its correctness is harder.
Next, we increase the length of the path to $8m$ (from $x=-4$ to $x=4$) to allow discovery of less robust solutions. In this case, the learning framework can find a TT-CF. Figure~\ref{fig:straight} shows simulation trajectories corresponding to the PF-CF and the TT-CF. For comparison, starting from same initial conditions, the simulation is performed until $x$ reaches $x(0) + 12$. Figure~\ref{fig:straight}(a) shows the results for initial states where the initial state is near $I$. The simulations suggest that both methods perform similarly and all trajectories converge to the path ($y$ converges to zero). The simulation time for all cases are similar and around $6s$. 
Also, the velocity of the vehicle is almost constant for both methods. Figure~\ref{fig:straight}(b) shows the results for cases when the initial states are further away from $G$ (it needs more forces/time to reach $G$). In this case, the path-following method takes a longer time to reach $x = 4$ as the speed increases smoothly.
Figure~\ref{fig:straight}(c) considers initial states that are closer to $G$. For these cases, the path-following method takes a shorter time to reach $x = 12$ as the speed decreases smoothly.
The results demonstrate that the path-following method yields a faster convergence to the reference path. Moreover, the velocity changes smoothly while the trajectory tracking method settles the target velocity immediately.

We also investigate the same problem (straight path from $x = -4$ to $x = 4$) where the velocity is more restricted:
\begin{align*}
S(\theta):\ [-1,1]^3 \times [-0.5, 0.5] \,, 
    I :\ G :\ \{[\alpha \ x \ y \ v]^t | 4\alpha^2 + 4x^2 + 4y^2 + 16v^2 \leq 1\} \,.
\end{align*}
Again, under these circumstances, learning TT-CF fails while finding PF-CF is feasible. In other words, in trajectory tracking the change of velocity is crucial for reducing the tracking error.

\paragraph{Circular Path:}
To carry out experiments on the parkour car platform and examine the behavior over long trajectories, we consider a circular path with radius $1.5m$. The vehicle moves with a constant velocity $\frac{\pi}{2} m/s$ and the reference trajectory would be $\vx_r(t) :\ [\frac{\pi}{3}t, 1.5 \cos(t), 1.5 \sin(t), \frac{\pi}{2}]^t$. For the learning process we consider a finite trajectory (moving on the path for one round) where $t \in [0, 6]$. The sets are
\begin{align*}
    S(\theta) :\ [-1, 1] \times [-3, 3], \ I :\ \B_{0.5}(\vzero), \ G :\ \B_{0.5}(\vzero)\,.
\end{align*}

We can find a path-following based control funnel (PF-CF) and a trajectory tracking based control funnel (TT-CF) when we use the ``body fixed frame." However, the learning fails in the differential frame.
Figure~\ref{fig:circular-5rounds} shows trajectories when the controller runs on the parkour car platform. Despite the uncertainties in the measurements and simple modeling, both controllers do a good job of following the reference. Figure~\ref{fig:circular-inits} shows trajectories for different initial states. Figure~\ref{fig:circular-inits}(b) suggest that the trajectory tracking method may take shortcuts to satisfy time constraints.

\begin{figure}[t]
\begin{center}
    \includegraphics[width=0.45\textwidth]{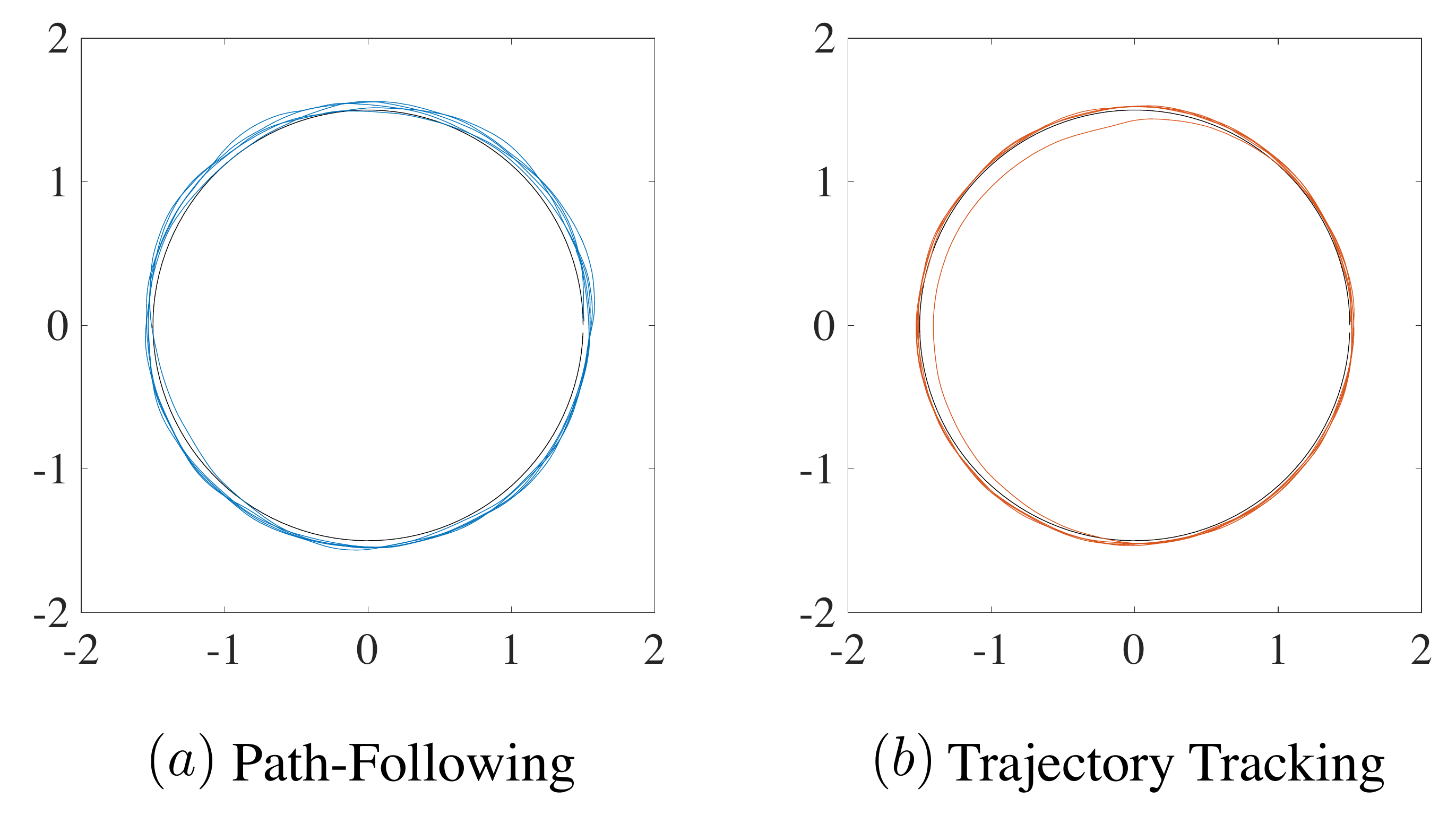}
\\ The parkour car finishes five rounds around the circle. The reference trajectory is shown in black.
\end{center}
\caption{Trajectories of the parkour car platform for the circular path.}\label{fig:circular-5rounds} 
\end{figure}

\begin{figure}[t]
\begin{center}
    \includegraphics[width=0.45\textwidth]{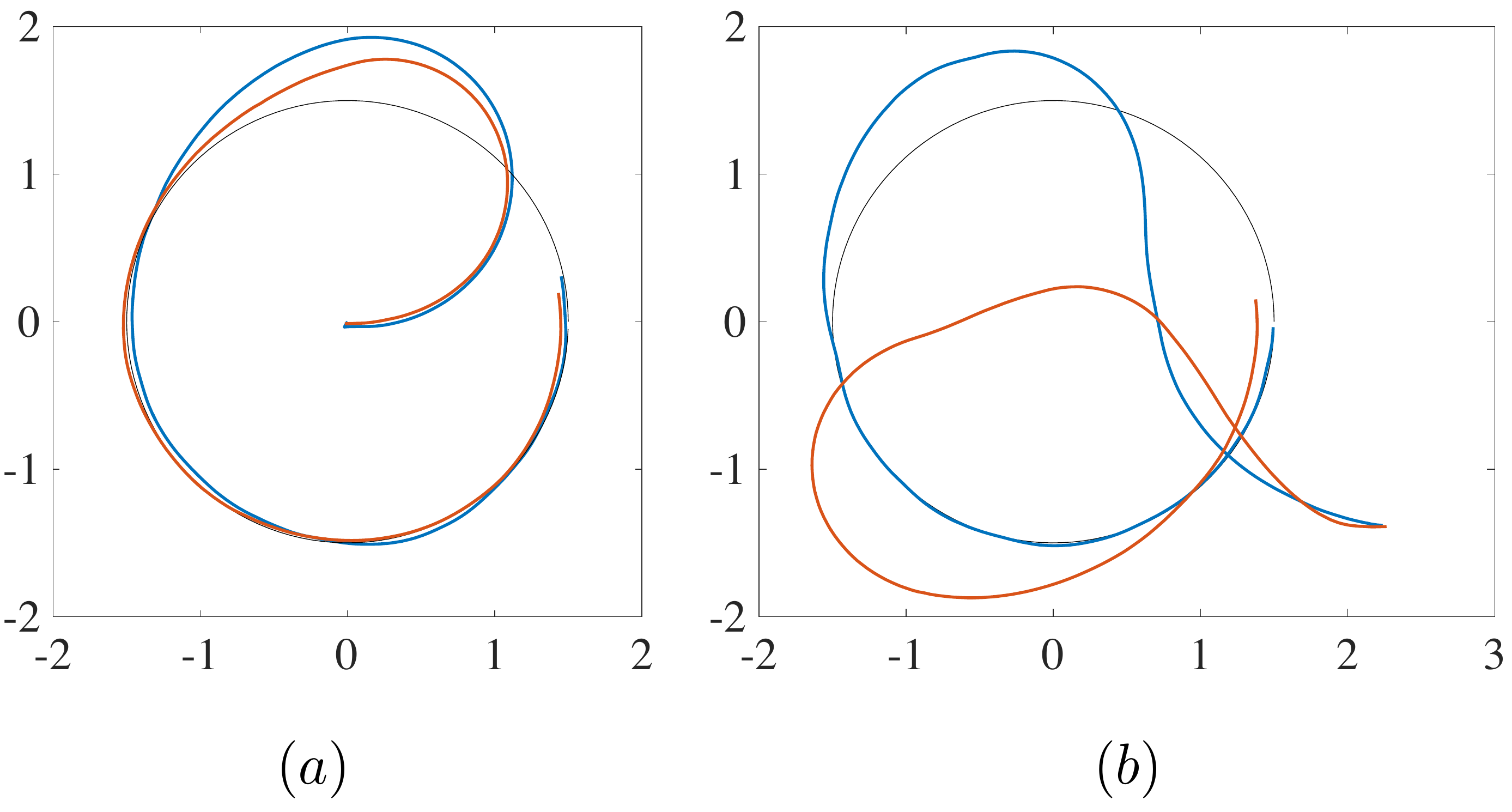}
\\ Blue (red) lines corresponds to the path-following (trajectory tracking) method. The reference trajectory is shown in black. Initial state: (a) $[-\pi/2, 0, 0, 0]$, and (b) $[\pi, 2.25, -1.4, 0]$.
\end{center}
\caption{Trajectories of the parkour car platform, for different initial states.}\label{fig:circular-inits} 
\end{figure}

We also investigated the same problem with higher reference velocity. When the reference velocity is increased to $\pi$ (from $\pi/2$), we could not find a TT-CF. Nevertheless, increasing reference velocity does not seem to affect the process of finding PF-CF, and we can discover solutions even if the reference velocity is $10\pi$.

\paragraph{Oval Path:} Following a circular path is easy as the curvature remains fixed. However, the problem is more challenging when the path is an oval. The goal is to follow an oval path $P :\ \{[x \ y]^t \ | \ \frac{y^2}{1^2} + \frac{x^2}{2^2} = 1\}$. First, a reference trajectory is generated to follow this path closely. As polynomial approximations for the reference path become more challenging, we divide the reference path into two similar parts. Then, we find a funnel for each part and make sure we can concatenate these two funnels. For the first part, the goal is to reach from $\B_{0.5}([2, 0])$ to $\B_{0.5}([-2, 0])$ going in a CCW direction and then reach from $\B_{0.5}([-2,0])$ to $\B_{0.5}([2, 0])$ again in a CCW direction. For both segments we use the following sets:
\begin{align*}
    S(\theta) :\ [-1, 1] \times [-3, 3], \ I :\ \B_{0.5}(\vzero), \ G :\ \B_{0.5}(\vzero)\,.
\end{align*}
Notice that since $G$ for the first segments fits in $I$ for the second segment, we can safely concatenate the funnels. If a trajectory tracking method is being used, the learning procedure fails to find solutions. However, the path following method yields proper control funnels. Figure~\ref{fig:oval-inputs} shows trajectories generated from our experiments using the control-funnel-based controller. The tracking is not precise when the curvature is at its maximum. We believe the main reason is input saturation for the steering, which occurs because of the imprecise model we use (Figure~\ref{fig:oval-inputs}).

\begin{figure}[t!]
\begin{center}
    \includegraphics[width=1\textwidth]{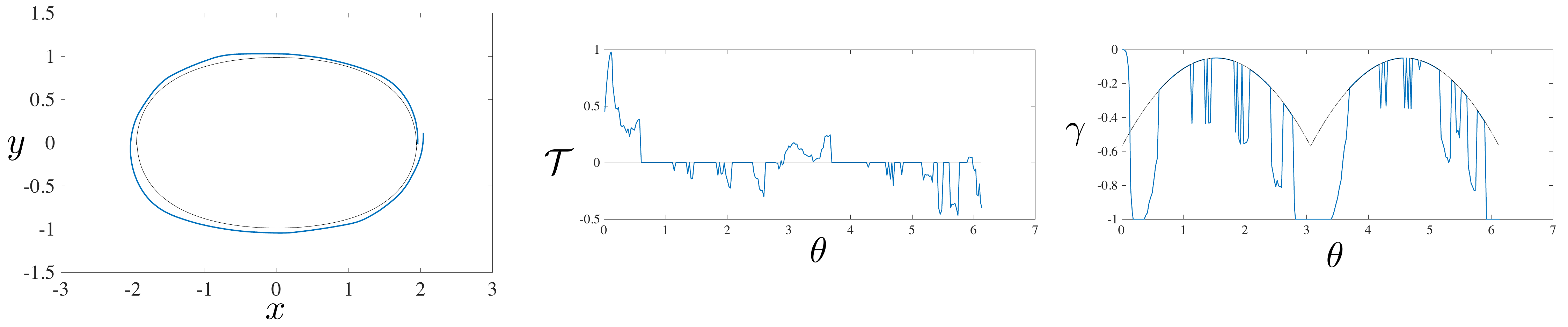}
\\ The reference trajectory is shown in black.
\end{center}
\caption{Trajectories of the parkour car platform for the oval path.}\label{fig:oval-inputs} 
\end{figure}

\paragraph{Obstacle Avoidance:} Going back to the scenario of Example~\ref{ex:obstacle-car}, we wish to find a control funnel to guarantee safety (avoiding the obstacle). Recall that having a reference trajectory, instead of defining $S(\theta)$, we simply define $\hat{S}:\ \{ \vx \ | \ ([x, y] \oplus \B_{0.25}) \cap O = \emptyset \}$. 
We were able to find a solution (only if the path-following method is being used). For the experiment, the parkour car moves toward the obstacle with different initial states and the control-funnel-based controller engages when $x$ difference between the car and the obstacle reaches $1.5m$. Figure~\ref{fig:obs-trace} shows the projection of the funnel on $x$-$y$ plain. We note that if a trajectory starts from the head of the funnel, not only its initial $x$ and $y$, but also its initial $v$ and $\alpha$ should also be inside the funnel. Figure~\ref{fig:obs-trace} (a) shows trajectories where the initial state is inside the head of the funnel. As shown, trajectories remain inside the funnel and reach the tail. However, as demonstrated in Figure~\ref{fig:obs-trace} (b), even if the trajectory starts outside of the funnel head, the whole body of the car may remain in the guaranteed region (blue region). Nevertheless, the safety is not guaranteed any longer as Figure~\ref{fig:obs-trace} (c) shows trajectories where the parkour car leaves the guaranteed region.

\begin{figure}[t!]
\begin{center}
    \includegraphics[width=1\textwidth]{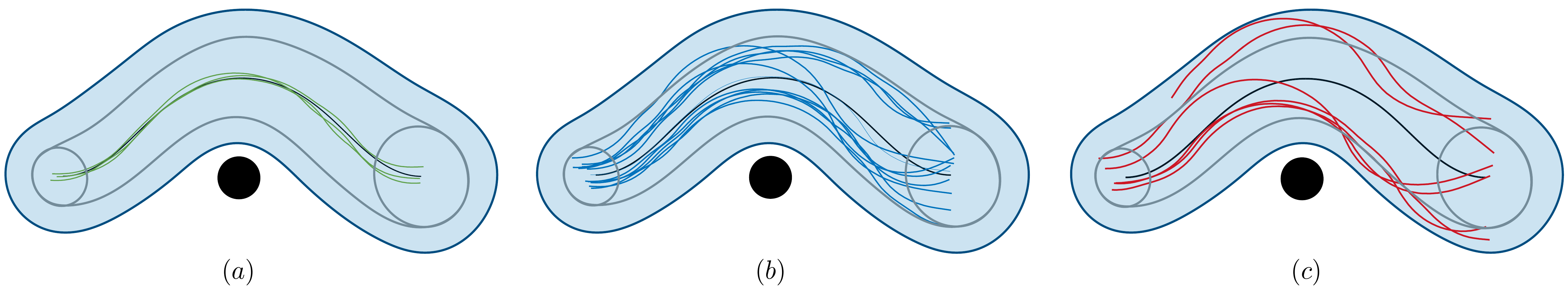}
\\ Funnel boundary is shown in gray and as long as the center of car is in the funnel, the whole body of the car remains in the blue region. (a) guaranteed traces, (b) not guaranteed but safe traces, (c) not guaranteed and unsafe traces.
\end{center}
\caption{Trajectories of the parkour car platform for the obstacle avoidance problem.}\label{fig:obs-trace}
\end{figure}

\chapter{Conclusions and Future Work}
In this thesis, we have introduced different classes of control certificates for smooth and switched feedback systems. We showed how control synthesis problems are reduced to finding control certificates. We have proposed an algorithmic framework for synthesizing these control certificates. Our framework uses different constraint solvers along with a demonstrator oracle to efficiently find a control certificate or prove certain types of control certificates do not exists.
In the rest of this section, we discuss some current limitations as well as possible extensions.

\paragraph{Extensions to Discrete-Time Systems:} Control problems on discrete-time systems have been widely studied. MPC schemes are naturally implemented over such systems, and furthermore, certificate conditions extend quite naturally. As such, our approach can be extended to discrete-time nonlinear systems defined by maps as opposed to ODEs. However, polynomial discrete systems are known to pose computational challenges: when the Lie derivative is replaced by a difference operator, the degree of the resulting polynomial can be larger.
  
 \paragraph{Extensions to Stochastic Systems:} While we have addressed disturbances using worst-case analysis, many control problems consider stochastic disturbances. Our framework is extendable to stochastic systems using stochastic certificates, namely super-martingales~\cite{prajna2007martingale}.
For such extensions, one needs proper templates (cf.~\cite{steinhardt2012}), a formal definition of a counterexample for stochastic systems, and a stochastic demonstrator (e.g.~\cite{todorov2005}).

\paragraph{Optimizing Performance Criteria:} Our framework searches for a feasible solution and stops as soon as a control certificate is discovered. An important extension to our work is finding control certificates so that the resulting controllers optimize some performance metric. 

\paragraph{Other Verifiers:} While in theory, the SDP relaxation addresses
verification problems for polynomial systems, the scalability is still an issue. There are alternative solutions to the SDP relaxation, which promise better scalability. In particular linear programming is attractive for our framework~\cite{ahmadi2014dsos,bensassi2015linear}.

For a highly nonlinear system, the degree of polynomials for the dynamics as well as basis functions get larger. For these systems, the scalability is even more challenging. In future, we wish to explore the use of falsifiers (instead of verifiers) and move towards more scalable solutions~\cite{Abbas+Others/2013/Probabilistic,AnnapureddyLFS11tacas,Donze+Maler/2010/Robust}. While falsifiers would not guarantee correctness, they can be used to find concrete counterexamples.  Furthermore, by dropping formal correctness, a falsifier can replace the verifier in the learning framework.

\paragraph{Beyond Polynomial Control Certificates:} We assumed that the template is a linear combination of some given basis functions. While this model is precise enough to for specific systems~\cite{peet2008polynomial}, there are systems for which a smooth $V$ does not exist. Moreover, for some specifications such as STL~\cite{maler2004monitoring}, the structure of a certificate gets more complicated~\cite{dimitrova2014deductive}. Nevertheless, our framework can also handle nonlinear templates such as Gaussian mixtures or feed-forward neural network models, especially if the verifier is replaced by a falsifier that can be implemented through simulations.
However, there are some serious drawbacks, including more expensive candidate generation, and weaker convergence guarantees.

\paragraph{Beyond MPC-based Demonstrations:}
As mentioned earlier, we use a black-box demonstrator. We have investigated to use MPC as they are easy to design, and can provide smooth feedbacks which in our experiments is the key to find smooth control certificates. 
However, if we employed human demonstrators (for example, an expert who operates the system), the demonstrator may include errors, and we may need to consider approaches that can reject a subset of the given demonstrations~\cite{KHANSARIZADEH2014}. Also, the demonstrations can lead to inconsistent data, wherein nearby queries are handled using different strategies by the demonstrator, leading to no single control certificate that is compatible with the given demonstrations~\cite{chernova2008learning,BREAZEAL2006385}.

In the end, we note that correct-by-construction controllers rely on models and their correctness is guaranteed only w.r.t. the models. From a practical point of view, modeling can be challenging and models may be considerably different from real systems. In this thesis, we demonstrated the applicability of correct-by-construction controllers to real systems \emph{only} through experiments. However, can we make any claim about the correctness of the real closed-loop systems? How does the model compare versus the reality? We need to investigate these issues to achieve \emph{perfectly reliable systems}.

\bibliographystyle{plain}	
\nocite{*}		
\bibliography{refs}		

\appendix
\chapter{Benchmark}\label{ch:benchmark}
The benchmark used in the experiments are examples adopted from the literature. We consider each of these systems as a switched system with RWS as the specification, where the safe set $S$ is a box, and the initial (goal) set is a ball with radius $r_I$ ($r_G$) centered at the origin.

\begin{system} 
\label{sys:harmonic}
This system is adopted from ~\cite{liberzon1999basic}.
There are two continuous variables $x$ and $y$ and the dynamics are
$\dot{x} = y \,, \dot{y} = - x + u$. 
We assume $u \in \{-1, 0, 1\}$ and instead of stability, we consider RWS with region $S:\ [-1 \ \ 1]^2$, $r_G = 0.2$ and $r_I = 0.8$.
\end{system}

\begin{system}
\label{sys:linear-ss-1}
This system is a switched system adopted from ~\cite{greco2005stability} (Example 3.1). There are two continuous variables $x$ and $y$ and five modes ($\vu_1,..., \vu_5$) the dynamics of each mode is described below
\begin{align*}
\vu_1 & \begin{cases} \dot{x} = 0.0403x+0.5689y  \\
                             \dot{y} = 0.6771x-0.2556y
       \end{cases} & 
\vu_2 & \begin{cases} \dot{x} = 0.2617x-0.2747y  \\
                             \dot{y} = 1.2134x-0.1331y
       \end{cases} &
\vu_3 & \begin{cases} \dot{x} = 1.4725x-1.2173y  \\
                             \dot{y} = 0.0557x-0.0412y
       \end{cases}\\
\vu_4 & \begin{cases} \dot{x} = -0.5217x+0.8701y  \\
                             \dot{y} = -1.4320x+0.8075y
       \end{cases}& 
\vu_5 & \begin{cases} \dot{x} = -2.1707x-1.0106y  \\
                             \dot{y} = -0.0592x+0.6145y \,.
       \end{cases} &
\end{align*}
The original specification is stability. We consider RWS with $S :\ [-1 \ \ 1]^2$, $r_I = 0.5$ and $r_G = 0.1$.
\end{system}

\begin{system}
\label{sys:dc-motor}This system (adopted from ~\cite{Mazo+Others/2010/PESSOA}--Section 7.1) is a DC motor system. There are two continuous variables $\omega$ and $i$, and input $u$ is the source voltage:
\begin{align*}
    \dot{\omega} & = - \frac{B}{J}\omega + \frac{k}{J} i \,, &
    \dot{i} & = - \frac{k}{L} \omega - \frac{R}{L} i + \frac{1}{L}u \,,
\end{align*}
where $B = 10^{-4}$, $J = 25\times 10^{-5}$, $k = 0.05$, $R = 0.5$, $L = 15\times 10^{-4}$, and $u \in \{-10, 0, 10\}$. In this example, the goal is to bring $\omega$ close to $20.0$ ($1 9.5$ to $20.5$) and $i$ close to $0$ ($-0.7$ to $0.7$). The safe set $S$ is $[-1, 30]\times[-3, 3]$. For a more challenging problem, we restrict the control input to be in range $u \in \{-3, 3\}$, target set to a ball with radius $0.5$. Since the desired point is $[\omega \ i] = [20 \ 0]$, by a change of basis, the following system is obtained:
\begin{align*}
    \dot{\omega'} & = - \frac{B}{J}(\omega' + 20) + \frac{k}{J} i \,, &
    \dot{i} & = - \frac{k}{L} (\omega' + 20) - \frac{R}{L} i + \frac{1}{L}u \,.
\end{align*}
Also, the specification is originally RS. 
Here, we just consider the RWS with $r_I = 2$ and $r_G = 0.5$.
\end{system}

\begin{system}
\label{sys:dc-dc}
This system is a DCDC converter adopted from ~\cite{mouelhi2013cosyma} with two discrete modes ($\vu_1$, $\vu_2$) and two continuous variables $i$ and $v$ ($\vx :\ [i \ v]^t$. The safe set is $[0.65,1.65]\times[4.95,5.95]$ and the goal set is $[1.1, 1.6]\times[5.4,5.9]$. The dynamics are
\begin{align*}
 \vu_1 & \begin{cases} \dot{i} = 0.0167i + 0.3333  \\ 
                             \dot{v} = - 0.0142v
       \end{cases} &
 \vu_2 & \begin{cases} \dot{i} = - 0.0183i - 0.0663v + 0.3333  \\ 
                             \dot{v} = 0.0711i - 0.0142v \,.
       \end{cases}\\
\end{align*}

The specification is RWS and we choose a new origin ($i=1.25$, $v=5.55$). Then, we set $r_G = 0.15$ to under-approximate the original goal set. For the initial region, we consider $r_I = 0.35$.
\end{system}

\begin{system}
\label{sys:tulip-2d}
This system is adopted from~\cite{nilssonincremental}. There are two continuous variables $x_1$ and $x_2$, and the controller can choose between three different modes ($\vu_1$, $\vu_2$, and $\vu_3$).
Dynamics for these modes are
\begin{align*}
 \vu_1 & \begin{cases} \dot{x_1} = - x_2 -1.5 x_1 - 0.5 x_1^3  \\ 
                             \dot{x_2} = x_1 - x_2^2 + 2
       \end{cases} &
 \vu_2 & \begin{cases} \dot{x_1} = - x_2 -1.5 x_1 - 0.5 x_1^3  \\ 
                             \dot{x_2} = x_1 - x_2
       \end{cases} \\
  \vu_3 & \begin{cases} \dot{x_1} = - x_2 -1.5 x_1 - 0.5 x_1^3 + 2 \\ 
                             \dot{x_2} = x_1 + 10\,.
       \end{cases}
\end{align*}

The safe set is $[-2, 2]\times[-1.5, 3]$ with some obstacles at corners of the safe set. For simplicity, we assume that there are no obstacles. Also, the goal set is $[-1, -0.5]\times[1.5, 2]$. By setting $x_1 = -0.75$ and $x_2 = 1.75$ as the origin, $r_G$ is defined as $0.25$. Furthermore, we consider $r_I = 1.0$.  
\end{system}

\begin{system}
\label{sys:sliding-motion-2}
This system is adopted from ~\cite{perruquetti1996lyapunov}(Example 8). There are two continuous variables $x$ and $y$, and the dynamics are
$\dot{x} = u \,, \dot{y} = y^2x$,
where $u \leq |k|$ for some constant $k$. We assume $k = 4$ and discretize the input ($u \in \{-4, 0, 4\}$). Also, in the original problem, $y$ is the output. However, we consider state feedback problem here. The specification is RWS (instead of stability) with $S :\ [-1 \ \ 1]^2$, $r_G = 0.1$, and $r_I = 0.5$.
\end{system}

\begin{system}
\label{sys:inverted-pendulum} (a) This system (adopted from ~\cite{PESSOA:Website}) is a model of inverted pendulum on a cart. There are two continuous variables $\theta$ (angular position)and $\omega$ (angular velocity), and input $u$ is the applied force to the cart.
\begin{align*}
\dot{\theta} & =\omega\,, &
\dot{\omega} & =\frac{g}{l}sin(\theta)-\frac{h}{ml^2}\omega+\frac{1}{ml}cos(\theta)u \,.
\end{align*}
, where $g = 9.8$, $h = 2$, $l = 2$, $m = 0.125$, and $u \in [-3, 3]$. The specification is \RS \ with region $S :\ \{[\theta \ \ \omega]^t | \theta \in [-1.5 \ \ 1.5], \omega \in [-1 \ \ 1]\}$ and $G = [-0.25, 0.25]^2$. We consider RWS and use $r_G = 0.25$ to under-approximate the target region and $r_I = 0.5$.

(b) We consider the same problem, except for the fact that we enlarge the safe set $S :\ \{[\theta \ \ \omega]^t | \theta \in [-1.5 \ \ 1.5], \omega \in [-4 \ \ 4]\}$ (and also increase the input range $u \in \{-15, 0, 15\}$) to make another instance for the inverted pendulum example.
\end{system}

\begin{system}
\label{sys:linear-ss-2}
The system is a linear switched system, adopted from~\cite{pettersson2001stabilization}. There are three continuous variables $x$, $y$, $z$ in this system and the dynamics for three modes ($\vu_1$, $\vu_2$, and $\vu_3$) are
\begin{align*}
\vu_1 & \begin{cases} \dot{x} = 1.8631x - 0.0053y + 0.9129z  \\ 
                             \dot{y} = 0.2681x - 6.4962y + 0.0370z \\
                             \dot{z} = 2.2497x - 6.7180y + 1.6428z
       \end{cases} &
\vu_2 & \begin{cases} \dot{x} = - 2.4311x - 5.1032y + 0.4565z  \\ 
                             \dot{y} = - 0.0869x + 0.0869y + 0.0185z \\
                             \dot{z} = 0.0369x - 5.9869y + 0.8214z
       \end{cases}\\
\vu_3 & \begin{cases} \dot{x} = 0.0372x - 0.0821y - 2.7388z  \\ 
                             \dot{y} = 0.1941x + 0.2904y - 0.1110z \\
                             \dot{z} =  - 1.0360x + 3.0486y - 4.9284z \,.
       \end{cases}
\end{align*}
The original specification is stability. Here we consider RWS with $S :\ [-1 \ \ 1]^3$, $r_I = 0.7$, and $r_G = 0.1$.
\end{system}

\begin{system}
\label{sys:linear-ss-3}
This system is a switched system adopted from ~\cite{greco2005stability} (Example 3.2). There are three continuous variables $x$, $y$, $z$, and five modes ($\vu_1,..., \vu_5$) the dynamics of each mode is described below
\begin{align*}
\vu_1 & \begin{cases} \dot{x} = 0.1764x + 0.8192y - 0.3179z  \\ 
                             \dot{y} = -1.8379x-0.2346y-0.7963z \\
                             \dot{z} = -1.5023x-1.6316y+0.6908z
       \end{cases}&
\vu_2 & \begin{cases} \dot{x} = -0.0420x-1.0286y+0.6892z  \\ 
                             \dot{y} = 0.3240x+0.0994y+1.8833z \\
                             \dot{z} = 0.5065x-0.1164y+0.3254z
       \end{cases}\\
\vu_3 & \begin{cases} \dot{x} = -0.0952x-1.7313y+0.3868z  \\ 
                             \dot{y} = 0.0312x+0.4788y+0.0540z \\
                             \dot{z} = -0.6138x-0.4478y-0.4861z
       \end{cases}&
\vu_4 & \begin{cases} \dot{x} = 0.2445x+0.1338y+1.1991z  \\ 
                             \dot{y} = 0.7183x-1.0062y-2.5773z \\
                             \dot{z} = 0.1535x+1.3065y-2.0863z
       \end{cases}\\
\vu_5 & \begin{cases} \dot{x} = -1.4132x-1.4928y-0.3459z  \\ 
                             \dot{y} = -0.5918x-0.0867y+0.9863z \\
                             \dot{z} = 0.5189x-0.0126y+0.6433z \,.
       \end{cases}
\end{align*}
The original specification is stability. However, here we consider RWS with the safe region $S :\ [-1 \ \ 1]^3$, $r_G = 0.2$, and $r_I = 0.8$.
\end{system}

\begin{system}
\label{sys:non-equilibrium-stabilization}
This system with three continuous variables and four modes is adopted from~\cite{bolzern2004quadratic} (Example 2). The dynamics are
\begin{align*}
\vu_1 & \begin{cases} \dot{x} = 4.15x - 1.06y - 6.7z + 1  \\ 
                             \dot{y} = 5.74x+4.78y-4.68z -4\\
                             \dot{z} = 26.38x-6.38y-8.29z+1
       \end{cases} &
\vu_2 & \begin{cases} \dot{x} = -3.2x -7.6y -2z +4  \\ 
                             \dot{y} = 0.9x + 1.2y -z -2 \\
                             \dot{z} = x + 6y +5z -1
       \end{cases}\\
\vu_3 & \begin{cases} \dot{x} = 5.75x -16.48y -2.41z -2  \\ 
                             \dot{y} = 9.51x -9.49y +19.55z +1 \\
                             \dot{z} = 16.19x + 4.64y +14.05z -1
       \end{cases} &
\vu_4 & \begin{cases} \dot{x} = -12.38x +18.42y +0.54z -1  \\ 
                             \dot{y} = -11.9x +3.24y -16.32z +2 \\
                             \dot{z} = -26.5x -8.64y -16.6z +1\,.
       \end{cases}
\end{align*}
The original specification is stability, while here we consider RWS as the specification with $S :\ [-1 \ \ 1]^3$, $r_G = 0.2$, and $r_I = 0.8$.
\end{system}

\begin{system}
\label{sys:tulip-pipe-3d}
This system is a radiant system in building adopted from ~\cite{nilssonincremental}, which is a switched linear system with three continuous variables ($T_c$, $T_1$, and $T_2$) and two modes ($\vu_1$, $\vu_2$). 
The dynamics for mode $\vu_1$ is as follows:
 \begin{align*}
C_r \dot{T_c} &= K_{r,1} (T_1-T_c) + K_{r,2}(T_2-T_c) - K_w(T_w-T_c) \\ 
C_1 \dot{T_1} &= K_{r,1}(T_c-T_1)+K_1(7-T_1)+K_{1,2}(T_2-T_1) + p_1 \\
C_2 \dot{T_2} &= K_{r,2}(T_c-T_2)+K_2(7-T_2)+K_{2,1}(T_1-T_2) + p_2 \,,
\end{align*}
where $C_1 = C_2 = 2000$, $C_r = 3500$, $K_{r,1} = K_{r,2} = 7.8740$, $K_1 = K_2 = 0.4651$, $K_w - 16.6667$, $K_{1,2} = K_{2,1} = 5.5556$, $T_w = 18$, $T_a = 31$, and $p_1 = p_2 = 12.8$. For mode , $\vu_2$, $K_{w}$ is zero. These parameters are taken from~\cite{nghiem2013event}. The original specification is \RS, with region $[20, 28]^3$ and target set $[21, 27]\times[22, 25]$. Here we just consider RWS with a smaller target region $[23, 25]\times[22, 24]^2$. More precisely, we consider state $T_c = 24$, $T_1 = T_2 = 23$ as the new origin and $r_G = 1$. Also $r_I = 3$. Additionally, since the changes in dynamics are slow, we enlarge the derivatives $1000$ times. Furthermore, the problem scaled down ten times to avoid numerical issues.
\end{system}

\begin{system}
\label{sys:lorenz}
This system is adopted from ~\cite{saat2011nonlinear}. There are three continuous variables $x$, $y$, $z$, and the dynamics are
\begin{align*}
    \dot{x} &= -10x + 10y + u \,, &
    \dot{y} &=  28x - y -xz \,, &
    \dot{z} &=  xy - 2.6667z \,.
\end{align*}
The original specification is stability with output feedback. Here we consider RWS with state feedback and region $S :\ [-4, 4]^3$, and $r_G = 1$. Also, we limit $u \in \{-100, 0, 100\}$. The initial set is defined by $r_I = 3$. Also, the problem scaled down ten times times to avoid numerical issues.
\end{system}

\begin{system}
\label{sys:nonholonomic}
This system is a uni-cycle example, adopted from~\cite{liberzon2012switching}. There are three variables $x$, $y$, and $z$, with two control input $u$ and $v$. The dynamics are
\begin{align*}
    \dot{x} & = u \,, &
    \dot{y} & = v \,, &
    \dot{z} & = xv - yu \,.
\end{align*}
The original goal is stability. Here, RWS is considered instead with $S :\ [-1, 1]^3$, $r_I = 0.5$, and $r_G = 0.2$.
\end{system}

\begin{system}
\label{sys:LQR}
The original system is a switched control system with continuous input $u$ from~\cite{zhang2009exponential} (Example 7.2). There are four variables ($w$, $x$ ,$y$, and $z$) and four original modes. The dynamics are 
\begin{small}
\begin{align*}
 \vu_1 & \begin{cases} \dot{w} &= -0.693w   -1.099x    +2.197y    +3.296z -7.820u  \\ 
                     \dot{x} &= -1.792x    +2.197y    +4.394z   -8.735u \\
                     \dot{y} &= -1.097x    +1.504y    +2.197z   -2.746u\\
                     \dot{z} &= 0.406z    +3.244u
       \end{cases}\\
 \vu_2 & \begin{cases} \dot{w} &= -1.792w   -1.099x    +2.197y    +1.099z    +6.696u  \\ 
                     \dot{x} &= 0.406x   -2.197y     +4.734u \\
                     \dot{y} &= -0.693y    +2.773u\\
                     \dot{z} &= -2.197w   -1.099x    +2.197y    +1.504z    
                     +4.263u
       \end{cases}
\end{align*}
\begin{align*}
  \vu_3 & \begin{cases} \dot{w} &= 0.406w    +0.811u  \\ 
                     \dot{x} &= 1.099w   -0.144x    +0.549y   -0.549z    +1.910u \\
                     \dot{y} &= 0.549x   -0.144y   -0.549z    +3.871u\\
                     \dot{z} &= 1.099w   -0.693z    +4.970u
       \end{cases}\\
  \vu_4 & \begin{cases} \dot{w} &= -0.693w    +2.000x    +1.863u  \\ 
                     \dot{x} &= -0.693x    +4.159u \\
                     \dot{y} &= -0.693y    +2.773u\\
                     \dot{z} &= 4.000x   -4.000y   -0.693z   -1.069u \,.
       \end{cases}
\end{align*}
\end{small}
, where $u \in \{-1, 0, 1\}$. Instead of stability, we consider RWS with $S :\ [-1, 1]^4$, $r_G = 0.2$, and $r_I = 0.5$.
\end{system}

\begin{system}
\label{sys:heater}
The goal of this benchmark is to keep $R$ rooms warm, given some limited number of heaters.

The first three instances are adopted from~\cite{mouelhi2013cosyma}. Temperature of each room $i$ is shown with $t_i$. For mode, $\vu_0$ the heater is off, and the dynamics are
\begin{align*}
    \dot{t_i} = 0.01 (- 10.5t_{i} + 5t_{(i+1)\%R} + 5t_{(i-1)\%R} + 5)\,.
\end{align*}

If the heater is on in room $i$, the dynamics for room $i$ changes as follows
\begin{equation*}
    \dot{t_i} = 0.01 (- 11.5t_{i} + 5t_{(i+1)\%R} + 5t_{(i-1)\%R} + 55)\,.
\end{equation*}

(a) In this instance, $R = 3$ and there is one heater which can be off or in one of the rooms.

(b) In this instance, $R = 4$ and there is one heater which can be off or in one of the rooms.

(c) In this instance, $R = 5$ and there is one heater which can be off or in one of the rooms. 

(d) In this instance, $R = 6$ and there are two heaters which can be off, or they are both on at the same time. Also, if the first heater is in room $i$, the other heater must be in room $(i+3)\%R$.

(e) In this instance, $R = 9$ and there are three heaters which can be off, or they are all on at the same time. Also, if the first heater is in room $i$, the other two heaters must be in room $(i+3)\%R$ and $(i+6)\%R$.

The original specification is safety with safe set $[20, 22]^R$. Here we consider RWS with target set $[20, 22]^R$, and safe set $[17, 25]^R$. To under-approximate the target region, we consider $t_i = 21$ (for all $i$) to be the origin and $r_G = 1$. Also, $r_I = 3$.
\end{system}

\end{document}